\def\llncs{0}
\def\fullpage{1}
\def\anonymous{0}
\def\authnote{1}
\def\notxfont{0}
\def\submission{0}
\def\cameraready{0}
\def\anonymous{1}
\def\llncs{1}
\def\authnote{0}
\def\submission{1}
\def\llncs{1}
\def\authnote{0}
\def\anonymous{0}
\def\llncs{1}
\def\authnote{0}
\definecolor{darkblue}{rgb}{0,0,0.6}
\definecolor{darkgreen}{rgb}{0,0.5,0}
\definecolor{maroon}{rgb}{0.5,0.1,0.1}
\definecolor{dpurple}{rgb}{0.2,0,0.65}
\newtheoremstyle{thicktheorem}%
{\topsep}
{\topsep}
{\itshape}{}%
{\bfseries}%
{.}
{ }%
{\thmname{#1}\thmnumber{ #2}%
		\thmnote{ (#3)}%
}
\newtheoremstyle{remark}
{\topsep}
{\topsep}
	{}
	{}
	{}
	{.}
	{ }
	{\textit{\thmname{#1}}\thmnumber{ #2}
			\thmnote{ (#3)}%
	}
	\theoremstyle{thicktheorem}
	\newtheorem{theorem}{Theorem}[section]
	\newtheorem{lemma}[theorem]{Lemma}
	\newtheorem{corollary}[theorem]{Corollary}
	\newtheorem{proposition}[theorem]{Proposition}
	\newtheorem{definition}[theorem]{Definition}
	\newtheorem{game}[theorem]{Game}
	\theoremstyle{remark}
	\newtheorem{remark}[theorem]{Remark}
	\crefname{theorem}{Theorem}{Theorems}
	\crefname{assumption}{Assumption}{Assumptions}
	\crefname{construction}{Construction}{Constructions}
	\crefname{corollary}{Corollary}{Corollaries}
	\crefname{conjecture}{Conjecture}{Conjectures}
	\crefname{definition}{Definition}{Definitions}
	\crefname{exmaple}{Example}{Examples}
	\crefname{experiment}{Experiment}{Experiments}
	\crefname{counterexample}{Counterexample}{Counterexamples}
	\crefname{lemma}{Lemma}{Lemmata}
	\crefname{observation}{Observation}{Observations}
	\crefname{proposition}{Proposition}{Propositions}
	\crefname{remark}{Remark}{Remarks}
	\crefname{claim}{Claim}{Claims}
	\crefname{fact}{Fact}{Facts}
	\crefname{note}{Note}{Notes}
 \crefname{appendix}{App.}{Appendices}
 \crefname{section}{Sec.}{Sections}
\renewcommand*{\backref}[1]{}
	\renewcommand*{\backref}[1]{(Cited on page~#1.)}
\newcommand{\taiga}[1]{}
\newcommand{\ryo}[1]{}
\newcommand{\takashi}[1]{}
\newcommand{\fuyuki}[1]{}
\newcommand{\taiga}[1]{$\ll$\textsf{\color{magenta} Taiga: { #1}}$\gg$}
\newcommand{\takashi}[1]{$\ll$\textsf{\color{orange} Takashi: { #1}}$\gg$}
\newcommand{\ryo}[1]{$\ll$\textsf{\color{darkgreen} Ryo: { #1}}$\gg$}
\newcommand{\Unc}{\mathsf{Unclone}}
\newcommand{\unc}{\mathsf{unclone}}
\newcommand{\Univ}{\mathsf{Univ}}
\newcommand{\RE}{\mathsf{RE}}
\newcommand{\Lab}{\mathsf{Lab}}
\newcommand{\lab}{\mathsf{lab}}
\newcommand{\Mint}{\mathsf{Mint}}
\newcommand{\Comb}{\mathsf{Comb}}
\newcommand{\RobComb}{\mathsf{RobComb}}
\newcommand{\StateGen}{\mathsf{StateGen}}
\newcommand{\OWSG}{\mathsf{OWSG}}
\newcommand{\Commit}{\algo{Commit}}
\newcommand{\NCE}{\mathsf{NCE}}
\newcommand{\nce}{\mathsf{nce}}
\newcommand{\Fake}{\algo{Fake}}
\newcommand{\Reveal}{\algo{Reveal}}
\newcommand{\Sim}{\algo{Sim}}
\newcommand{\NP}{\compclass{NP}}
\newcommand{\la}{\leftarrow}
\newcommand{\ra}{\rightarrow}
\newcommand{\seteq}{\coloneqq}
\newcommand{\cA}{\mathcal{A}}
\newcommand{\cB}{\mathcal{B}}
\newcommand{\cC}{\mathcal{C}}
\newcommand{\cD}{\mathcal{D}}
\newcommand{\cF}{\mathcal{F}}
\newcommand{\cG}{\mathcal{G}}
\newcommand{\cI}{\mathcal{I}}
\newcommand{\cM}{\mathcal{M}}
\newcommand{\cO}{\mathcal{O}}
\newcommand{\cT}{\mathcal{T}}
\newcommand{\cW}{\mathcal{W}}
\newcommand{\cZ}{\mathcal{Z}}
\def\makeuppercase#1{
\expandafter\newcommand\csname tl#1\endcsname{\widetilde{#1}}
}
\def\makelowercase#1{
\expandafter\newcommand\csname tl#1\endcsname{\widetilde{#1}}
}
\newcommand{\N}{\mathbb{N}}
\newcommand{\R}{\mathbb{R}}
\newcommand{\Ms}{\mathcal{M}}
\newcommand{\sep}{\lambda}
\newcommand{\secp}{\lambda}
\newcommand{\aux}{\mathsf{aux}}
\newcommand*{\sk}{\keys{sk}}
\newcommand*{\pk}{\keys{pk}}
\newcommand{\ct}{\keys{CT}}
\newcommand*{\MSK}{\keys{MSK}}
\newcommand*{\keys}[1]{\mathsf{#1}}
\newcommand*{\algo}[1]{\ensuremath{\mathsf{#1}}}
\newcommand{\compclass}[1]{\textbf{\textrm{#1}}}
\newenvironment{boxfig}[2]{\begin{figure}[#1]\fbox{\begin{minipage}{0.97\linewidth}
                        \vspace{0.2em}
                        \makebox[0.025\linewidth]{}
                        \begin{minipage}{0.95\linewidth}
            {{
                        #2 }}
                        \end{minipage}
                        \vspace{0.2em}
                        \end{minipage}}}{\end{figure}}
\newcommand{\bit}{\{0,1\}}
\newcommand{\fail}{\mathtt{fail}}
\newcommand{\Setup}{\algo{Setup}}
\newcommand{\keygen}{\algo{KeyGen}}
\newcommand{\Enc}{\algo{Enc}}
\newcommand{\Dec}{\algo{Dec}}
\newcommand{\Vrfy}{\algo{Vrfy}}
\newcommand\SKE{\algo{SKE}}
\newcommand{\ske}{\mathsf{ske}}
\newcommand\PKE{\algo{PKE}}
\newcommand{\pke}{\mathsf{pke}}
\newcommand{\negl}{{\mathsf{negl}}}
\newcommand{\poly}{{\mathrm{poly}}}
\DeclareMathOperator*{\Exp}{\mathbb{E}}
\let\oldvec\vec
\let\vec\oldvec
\renewcommand*\l@author[2]{}
\renewcommand*\l@title[2]{}
\theoremstyle{remark}
\title{
\textbf{Robust Combiners and Universal Constructions for Quantum Cryptography}
}
\begin{document}

\ifnum\anonymous=1
\author{\empty}\institute{\empty}
\else
%
%
\ifnum\llncs=1
\index{Taiga, Hiroka}
\author{
	Taiga Hiroka\inst{1} 
}
\institute{
	Yukawa Institute for Theoretical Physics, Kyoto University, Japan  \and NTT Corporation, Tokyo, Japan
}
\else
%
%
%

\author[$\star$]{Taiga Hiroka}
\author[$\dagger$$\diamondsuit$]{Fuyuki Kitagawa}
\author[$\dagger$$\diamondsuit$]{Ryo Nishimaki}
\author[$\dagger$$\diamondsuit$$\star$]{Takashi Yamakawa}
\affil[$\star$]{{\small Yukawa Institute for Theoretical Physics, Kyoto University, Japan}\authorcr{\small taiga.hiroka@yukawa.kyoto-u.ac.jp}}
\affil[$\dagger$]{{\small NTT Social Informatics Laboratories, Tokyo, Japan}\authorcr{\small \{fuyuki.kitagawa,ryo.nishimaki,takashi.yamakawa\}@ntt.com}}
\affil[$\diamondsuit$]{{\small NTT Research Center for Theoretical Quantum Infomration}}
\renewcommand\Authands{, }

\fi 
\fi

\ifnum\llncs=1
\date{}
\else
\date{\today}
\fi

\maketitle
\thispagestyle{fancy}
\rhead{YITP-23-137}

\begin{abstract}
A robust combiner combines many candidates for a  cryptographic primitive and generates a new candidate for the same primitive.
Its correctness and security hold as long as one of the original candidates satisfies correctness and security.
A universal construction is a closely related notion to a robust combiner.
A universal construction for a primitive is an explicit construction of the primitive that is correct and secure as long as the primitive exists.
It is known that a universal construction for a primitive can be constructed from a robust combiner for the primitive in many cases.

Although robust combiners and universal constructions for classical cryptography are widely studied, robust combiners and universal constructions for quantum cryptography have not been explored so far.
In this work, we define robust combiners and universal constructions for several quantum cryptographic primitives including one-way state generators, public-key quantum money, quantum bit commitments, and unclonable encryption, and provide constructions of them.

On a different note, it was an open problem how to expand the plaintext length of unclonable encryption.
In one of our universal constructions for unclonable encryption, we can expand the plaintext length, which resolves the open problem.
\end{abstract}

\if0
\begin{abstract}

\end{abstract}
\fi

\ifnum\cameraready=1
\else
\ifnum\llncs=1
\else
\newpage
  \setcounter{tocdepth}{2}      
  \setcounter{secnumdepth}{2}   
  \setcounter{page}{0}          
  \tableofcontents
  \thispagestyle{empty}
  \clearpage
\fi
\fi


\section{Introduction}\label{Sec:Introduction}
\subsection{Background}

The ultimate goal of theoretical cryptography is to construct interesting cryptographic primitives unconditionally.
Over the past years, many computational assumptions have been proposed, and many interesting cryptographic primitives have been constructed under the computational assumptions.
However, none of the computational assumptions are proven.
Indeed, we do not even know how to prove $\compclass{P}\neq\NP$ while it is a necessary condition to construct interesting classical cryptographic primitives unconditionally.
Moreover, given many candidates for a primitive, we cannot often decide which candidate is the most secure one.
For example, we can construct public-key encryption (PKE) from decisional Diffie-Hellman (DDH)~\cite{DifHel76,ElGamal85} or learning with errors (LWE)~\cite{STOC:Regev05}, but currently, we do not know which computational assumption is the weaker assumption.
This causes the problem in the following realistic scenario.
Suppose we have two candidates for PKE, where one is based on DDH and the other is based on LWE, and we want to decide more secure candidate to use. 
Unfortunately, in the current knowledge, we cannot decide which candidate is the more secure one.

A robust cryptographic combiner~\cite{TC:AA05,EC:HKNRR05}  was introduced to resolve this issue.
Given many candidates for a primitive, a cryptographic combiner combines these candidates and produces a new candidate for the same primitive.
The new candidate is correct and secure as long as at least one of the original candidates satisfies correctness and security.
For example, a robust PKE combiner takes two candidates for PKE, where one's security relies on DDH and the other's security relies on LWE, and produces a new candidate for PKE.
The new candidate is correct and secure as long as the DDH or LWE assumption holds.
\if0
A robust combiner is a generalized notion of a combiner. \takashi{I'm not sure if this explanation of the difference between "cryptographic combiner" and "robust combiner" is common. I just thought that both are just called a robust combiner (or simply combiner). Where did you find this explanation? Anyway, I feel that we can simply introduce robust combiners from the beginning.}
The difference is that the original candidates are not promised to satisfy even correctness other than one of them.
In other words, a robust cryptographic combiner takes many candidates for a primitive and produces a new candidate construction for the primitive.
The new candidate is correct and secure as long as one of them is correct and secure.
\fi
Robust combiner is a well-studied topic in classical cryptography.
In fact, robust combiners for many fundamental classical cryptographic primitives such as one-way functions, public-key encryption, and functional encryption are shown to exist~\cite{EC:HKNRR05,C:AJNSY16,EC:AJS17,TCC:ABJMS19,EC:JaiManSah20}.

A closely related notion to a robust combiner is a universal construction~\cite{STOC:Levin85}.
A universal construction for a primitive, say OWFs, is an explicit construction of OWFs that is correct and secure as long as OWFs exist.
The adversary must be able to break all OWF candidates to break a universal construction. 
In this sense, a universal construction for OWFs is the most secure one among all possible OWF candidates. 
\if0
Furthermore, a universal construction fills the gap between the non-constructive and constructive proof.
For example, suppose that someone gives the existence proof for OWFs in the future. 
In that case, it is natural that we want to use unconditionally secure OWFs.
However, if the existence proof for OWFs is non-constructive proof, we may not be able to use an unconditionally secure one because we do not know how to construct an unconditionally secure one. 
A universal construction for OWFs fills this gap since it gives an explicit construction of secure OWFs as long as OWFs exist.
\fi
In classical cryptography, universal constructions are well-studied topic and are known to exist for many fundamental primitives.
First, the pioneering work by Levin introduces a notion of universal construction and shows how to construct a universal construction for OWFs~\cite{STOC:Levin85}.
After decades, Harnik, Kilian, Naor, Reingold, and Rosen~\cite{EC:HKNRR05} give a universal construction for PKE and they show how to construct a universal construction for a primitive using a robust combiner for the same primitive. 
Goldwasser and Kalai cast questions about universal constructions for cryptographic primitives related to obfuscation~\cite{TCC:GK16}. 
The following sequence of works~\cite{C:AJNSY16,EC:AJS17,TCC:ABJMS19} gives universal constructions for functional encryption under some assumptions, and \cite{EC:JaiManSah20} gives it unconditionally.

Although robust combiners and universal constructions are widely studied topics in classical cryptography, those in the quantum world have not been studied so far, where each party can generate, process, and communicate quantum information.
It is well known that, even in the quantum world, information-theoretical security is impossible to achieve for many interesting quantum cryptographic primitives~\cite{LC97,Mayers97,STOC:Aaronson18}, and currently, many interesting quantum cryptographic primitives are constructed under computational assumptions.
For example, public-key quantum money is one of the most interesting quantum cryptographic primitives, and many candidate constructions are proposed relying on computational assumptions~\cite{STOC:AC12,ITCS:FGHLS12,arXiv:Kane18,EC:Zhandry19b,kane2022quantum,Euro:JHM23,Zhandry23}.
However, none of them have been proven so far, and moreover, we cannot even decide which assumptions are the weakest assumptions.
This inability leads to the problem that we cannot decide the most secure one to use.

If there exists a robust public-key quantum money combiner, then we can combine them and produce a new candidate for public-key quantum money, which is secure as long as at least one of the original candidates is secure.
Therefore, it is natural to ask the following first question:
\begin{center}
\textit{
Is it possible to construct robust combiners for fundamental quantum cryptographic primitives?
}
\end{center}

On a different note, recent works show the possibility that quantum cryptography exists even if classical cryptography does not.
A pseudo-random state generator (PRSG) is a quantum analog of a pseudo-random generator~\cite{C:JiLiuSon18}, and Kretchmer shows the possibility that PRSGs exist even if $\compclass{BQP} = \compclass{QMA}$~\cite{TQC:Kre21}.
Many interesting quantum cryptographic primitives are shown to be constructed from PRSGs~\cite{Crypto:MY22,Eprint:MY22, Crypto:AQY22,TCC:AGQY22,ITCS:BCQ23}.
Among them, one-way state generators (OWSGs) and quantum bit commitments (equivalent to EFI~\cite{Asia:Yan22,ITCS:BCQ23}) are considered to be candidates for the necessary assumptions for the existence of quantum cryptography.
In the case of classical cryptography, many fundamental primitives have the nice feature of the existence of universal constructions.
It is natural to wonder whether quantum cryptographic primitives have universal constructions or not.
In fact, some researchers believe that the existence of universal constructions is a nice feature for fundamental cryptographic primitives~\cite{Talk:Zha23}.
Therefore, we ask the following second question:
\begin{center}
\textit{
Is it possible to construct universal constructions for fundamental quantum cryptographic primitives?
}
\end{center}

\if0
\color{red}
So far, we even do not know whether the existence of OWSGs with mixed states and quantum commitments implies $\mathbf{P}\neq\mathbf{PSPACE}$ and a very recent work~\cite{arXiv:LMW23} shows the possibility quantum cryptography exists even if $\mathbf{P}\neq\mathbf{PSPACE}$. 
Therefore, someone might give the existence proof of OWSGs or quantum commitment without proving $\mathbf{P}\neq\mathbf{PSPACE}$ in the near future.
In that case, it is natural that we want to use unconditionally secure OWSGs or quantum commitment.
However, if the existence proof is non-constructive, then we may not be able to use an unconditionally secure one because we do not know how to construct it.
A universal construction for OWSGs and quantum commitment solves the situation because the universal construction for OWSGs and quantum commitment are explicit constructions of OWSGs and quantum commitment secure as long as OWSGs and quantum commitment exist, respectively.
\color{black}
\fi

\subsection{Our Results}
We solve the two questions above affirmatively for several cryptographic primitives.
Our contributions to the field are as follows:
\begin{enumerate}
    \item We formally define robust combiners and universal constructions for many quantum cryptographic primitives including OWSGs, public-key quantum money, quantum bit commitments, and unclonable encryption.
    \item We construct a robust combiner and a universal construction for OWSGs without any assumptions.
    A universal construction is secure as long as there exist OWSGs.
    In other words, the adversary of a universal construction must be able to break all OWSG candidates.
    In this sense, our construction for OWSG is the most secure one among all possible OWSG candidates.
    Before this work, the candidate constructions for OWSGs were based on OWFs, average-case hardness of semi-classical quantum statistical difference~\cite{Eprint:CX23} or random quantum circuits~\cite{Crypto:AQY22,ITCS:BCQ23}
    \footnote{As discussed in the previous works~\cite{Crypto:AQY22,ITCS:BCQ23}, it is a folklore that a random quantum circuit is PRSGs although there exists no theoretical evidence so far.
    Since we can construct OWSGs from PRSGs~\cite{Crypto:MY22,Eprint:MY22}, we can also construct OWSGs based on random quantum circuits if a random quantum circuit is PRSGs.}.
    \item We construct a robust combiner and a universal construction for public-key quantum money without any assumptions.
    In particular, in this work, we consider the public-key quantum money mini-scheme introduced in~\cite{STOC:AC12}, which can be generically upgraded into full-fledged public-key quantum money by additionally using digital signatures. 
    A universal construction for a public-key quantum money mini-scheme satisfies security as long as a public-key quantum money mini-scheme exists.
    In other words, the adversary of a universal construction must be able to break all candidates for a public-key quantum money mini-scheme.
    In this sense, our construction is the most secure one among all possible public-key quantum money mini-scheme candidates.
    Before this work, many candidate constructions are proposed~\cite{STOC:AC12,ITCS:FGHLS12,arXiv:Kane18,EC:Zhandry19b,kane2022quantum,Euro:JHM23,Zhandry23}.
    \item We construct a robust combiner and a universal construction for quantum bit commitment without any assumptions.
    Note that our results also imply that we can construct a robust combiner and a universal construction for EFI, oblivious transfer, and multi-party computation, which are equivalent to quantum bit commitments~\cite{ITCS:BCQ23}.
    In our robust combiner, given $ n$-candidates of quantum bit commitments, we can construct a new quantum bit commitment that satisfies statistical binding and computational hiding at least one of $n$-candidates satisfies computational hiding and computational binding at the same time.
    A universal construction for quantum bit commitment is secure as long as there exists a quantum bit commitment.
    In other words, the adversary for a universal construction must be able to break all candidates for quantum bit commitment.
    In this sense, our construction for quantum bit commitment is the most secure one among all possible quantum bit commitment candidates.
    Before this work, candidate constructions of quantum bit commitments were based on OWFs, classical oracle~\cite{STOC:KQST23}, or random quantum circuits~\cite{Crypto:AQY22,ITCS:BCQ23}
    \footnote{It is a folklore that a random quantum circuit is PRSGs although there exists no theoretical evidence so far.
    Since we can construct quantum bit commitments from PRSGs~\cite{Crypto:MY22, Crypto:AQY22}, we can also construct quantum bit commitments based on random quantum circuits if a random quantum circuit is PRSGs.}.
    \item We construct robust combiners and universal constructions for various kinds of unclonable encryption as follows:
    \begin{itemize}
    \item We construct robust combiners for (one-time) unclonable secret-key encryption (SKE) and unclonable public-key encryption (PKE) without any computational assumptions.
    \item By using robust combiners, we construct universal constructions for (one-time) unclonable SKE and unclonable PKE without any computational assumptions.
    \end{itemize}
    Although the previous work~\cite{C:AKLLZ23} gives a construction of one-time unclonable SKE with unclonable IND-CPA security in the quantum random oracle model (QROM), it was an open problem to construct it in the standard model.
    Our universal constructions for (one-time) unclonable SKE (resp. PKE) is the first construction of (one-time) unclonable SKE (resp. PKE) that achieves unclonable IND-CPA security in the standard model, where the security relies on the existence of (one-time) unclonable SKE (resp. PKE) with unclonable IND-CPA security.

\item We give another construction of universal construction for one-time unclonable SKE by additionally using the decomposable quantum randomized encoding~\cite{Stoc:BY22}.
Although this construction additionally uses decomposable quantum randomized encoding, it has the following nice three properties that the universal construction via a robust combiner does not have:
\begin{itemize}
    \item 
    It was an open problem whether unclonable encryption with single-bit plaintexts implies unclonable encryption with multi-bit plaintexts because standard transformation via bit-wise encryption does not work as pointed out in \cite{C:AKLLZ23}.
    In our universal construction, we can expand the plaintext length of one-time unclonable SKE by additionally using decomposable quantum randomized encoding. This resolves the open problem left by \cite{C:AKLLZ23}.
    Note that this result implies that reusable unclonable SKE and unclonable PKE can expand plaintext length without any additional assumptions because reusable unclonable SKE and unclonable PKE imply decomposable quantum randomized encoding.
    \item A universal construction via a robust combiner needs to emulate all possible algorithms, and thus a huge constant is included in the running time.
    Therefore, it may not be executed in a meaningful amount of time if we want reasonable concrete security.
    On the other hand, universal construction via decomposable quantum randomized encoding does not emulate all possible algorithms and thus avoids the “galactic inefficiency” tied to such approaches. 
    \item 
    In a universal construction via a robust combiner, the security relies on the existence of one-time unclonable SKE scheme $\Sigma=(\keygen,\Enc,\Dec)$, where $(\keygen,\Enc,\Dec)$ are uniform QPT algorithms.
    On the other hand, in a universal construction via decomposable quantum randomized encoding, the security still holds even if the underlying one-time unclonable SKE $(\keygen,\Enc,\Dec)$ are non-uniform algorithms.
\end{itemize}
\end{enumerate}

\subsection{More on Related Work}
\paragraph{Fundamental Quantum Cryptographic Primitives.}
Ji, Liu, and Song~\cite{C:JiLiuSon18} introduce a notion of PRSGs, and show that it can be constructed from OWFs.
Morimae and Yamakawa~\cite{Crypto:MY22} introduce the notion of OWSGs, and show how to construct them from PRSGs.
In the first definition of OWSGs, the output quantum states are restricted to pure states, and its definition is generalized to mixed states by \cite{Eprint:MY22}.
In this work, we focus on the mixed-state version.

Bennett and Brassard~\cite{BEN84} initiate the study of quantum bit commitment.
Unfortunately, it turns out that statistically secure quantum bit commitments are impossible to achieve~\cite{LC97,Mayers97}.
Therefore, later works study a quantum bit commitment with computational security~\cite{EC:DumMaySal00,EC:CreLegSal01,Asia:Yan22,Crypto:MY22,Eprint:MY22, Crypto:AQY22,TCC:AGQY22,ITCS:BCQ23,EC:HMY23}.
It was shown that quantum bit commitments can be constructed from PRSGs by \cite{Crypto:MY22,Crypto:AQY22}, and that quantum bit commitments are equivalent to EFI, oblivious transfer, and multi-party computation~\cite{EC:GLSV21,C:BCKM21a,Asia:Yan22,ITCS:BCQ23}.

Recently, Khurana and Tomer~\cite{arxiv:KT23} showed that quantum bit commitments can be constructed from OWSGs with pure state.
Although their main result is not a combiner for quantum bit commitment, they construct some sort of a combiner for quantum bit commitments as an intermediate tool for achieving their result.
In their construction, they construct a uniform quantum bit commitment from a non-uniform one.
At this step, they combine quantum bit commitments in the following sense.
In their construction, they combine $(n+1)$-quantum bit commitments and generate a new quantum bit commitment.
Its hiding and binding property holds as long as one of the original candidates satisfies hiding and binding at the same time and other $ n$ candidates also satisfy either hiding or binding.
Compared to their technique, our robust combiner does not need to assume other $n$ candidates satisfy hiding or binding.
Therefore, our robust combiner can be applied in a more general setting than their technique.
Though our construction partially shares a similarity with theirs, we rely on additional ideas to deal with candidate schemes that do not satisfy either binding or hiding.

\paragraph{Unclonable Encryption.} 
Broadbent and Lord~\cite{TQC:BL20} introduced a notion of unclonable encryption.
They considered two security definitions for unclonable encryption.
One is one-wayness against cloning attacks and they achieve information-theoretic one-wayness by using BB84 states.
The other is indistinguishability against cloning attacks (indistinguishable-secure unclonable encryption).
However, they did not achieve it. They constructed indistinguishable-secure unclonable encryption only in a very restricted model by using PRFs.
Ananth, Kaleoglu, Li, Liu, and Zhandry~\cite{C:AKLLZ23} proposed the first indistinguishable-secure
unclonable encryption in the QROM. 
\if0
\takashi{Also mention https://arxiv.org/pdf/2302.01874.pdf?}\taiga{Of what result should I explain? The simple construction for unclonable encryption?}
\fi
Ananth and Kaleoglu~\cite{TCC:AK21} construct unclonable PKE from unclonable encryption and PKE with ``classical'' ciphertexts.
Note that it is unclear how to apply their technique for PKE with quantum ciphertexts.
The technique of \cite{Asia:HMNY21} can be used to construct unclonable PKE from unclonable encryption and PKE with quantum ciphertexts, which we use in this work.

\paragraph{Combiner for Classical Cryptography.}
It is known that robust combiners are known to exist for many fundamental classical cryptographic primitives.
Oblivious transfer (OT) is an example of exceptions.
It is an open problem how to construct a robust combiner for classical OT and some black-box impossibilities are known~\cite{EC:HKNRR05}.
Interestingly, our result implies that a robust combiner for quantum OT exists although a robust combiner for classical OT is still an open problem.
\if0
In our work, we construct a robust combiner for quantum commitment, which implies that we can also construct a robust combiner for quantum OT.
If we want to combine quantum OT, we first transform them into quantum commitment, obtain a new quantum commitment by using a robust combiner for quantum commitment, and reconstruct quantum OT from it.
\fi

\if0
\color{red}
\subsection{Discussions and Open Questions}
In this work, we construct robust combiners and universal constructions for several quantum cryptographic primitives.
We believe that the same technique can be applied to construct a robust combiner and universal constructions for a lot of quantum cryptographic primitives including????. \taiga{We need to write this part.}

The exception is the pseudorandom state generators, and 
\color{black}
\fi

\subsection{Organization}
In \cref{sec:tech}, we give a technical overview.
In \cref{sec:pre}, we define the notations and preliminaries that we require in this work.
In \cref{sec:owsg}, we define the notions of robust OWSG combiners and a universal construction for OWSGs and provide constructions.
We provide some proof in \cref{sec:padding}.
In \cref{sec:money}, we define the notions of a robust combiner and a universal construction for public-key quantum money mini-scheme and provide constructions.
We provide some proof in \cref{sec:app_money}.
In \cref{sec:commit}, we define the notions of a robust canonical quantum bit commitment combiner and a universal construction for canonical quantum bit commitment and provide constructions.
We provide some proof in \cref{sec:convert}.
In \cref{sec:unc}, we define the notions of robust combiners for unclonable encryption and universal constructions for unclonable encryption and provide constructions.
We provide some proof in \cref{sec:app_unclone_comb,sec:app_unc_pke}.
In \cref{Sec:expanstion_UE}, we provide another universal construction for unclonable encryption.
We provide some proof in \cref{sec:app_perfect_unc}.
In this construction, we can expand the plaintext length of unclonable encryption.

\if0
\subsection{Technical Overview: Robust Combiners}
First of all, let us recall the definitions of combiner and robust combiner.
A combiner for a primitive $P$ is a deterministic classical polynomial-time Turing machine $\Comb.\cM_{P}$ that takes as input $n$-candidates $\{\Sigma[i]\}_{i\in[n]}$ for $P$, and produces a new candidate $\Sigma$ for $P$. 
$\Sigma$ is correct and secure as long as all of the candidates $\{\Sigma[i]\}_{i\in[n]}$ satisfy correctness and one of the candidates $\{\Sigma[i]\}_{i\in[n]}$ satisfies security.
A robust combiner is a generalized notion of a combiner.
The difference is that the candidates $\{\Sigma[i]\}_{i\in[n]}$ are not promised to satisfy correctness.
More formally, a robust combiner for a primitive $P$ is a deterministic classical polynomial-time Turing machine $\RobComb.\cM_P$ that takes as input $n$-candidates $\{\Sigma[i]\}_{i\in[n]}$ for $P$, and produces a new candidate $\Sigma$ for $P$. 
$\Sigma$ is correct and secure as long as one of the candidates $\{\Sigma[i]\}_{i\in[n]}$ for $P$ is correct and secure.
In the following, we will explain the case where only two candidates $\Sigma[1]$ and $\Sigma[2]$ are given for simplicity.
Remark that the same argument goes through in the general case, where $n$ candidates $\{\Sigma[i]\}_{i\in[n]}$ are given.

\paragraph{Combiner for One-Way Functions.}
First, let us recall a combiner for OWFs.
Given two candidates of OWFs $f_1$ and $f_2$, a combiner for OWFs $\Comb.\cM_{\mathsf{OWF}}$ outputs $f$, which simply runs $f_1$ and $f_2$ in parallel.
In other words, $f$ takes as input $x_1$ and $x_2$ as input, and outputs $f_1(x_1)||f_2(x_2)$.
If either $f_1$ or $f_2$ satisfies security, then $f$ satisfies security.
This is because the adversary given $f_1(x_1)||f_2(x_2)$ needs to find both $x_1^*$ and $ x_2^*$ such that $f_1(x_1^*)||f_2(x_2^*)=f_1(x_1)||f(x_2)$, but either $x_1^*$ or $x_2^*$ cannot be obtained from the security of $f_1$ or $f_2$.

\paragraph{Robust Combiner for One-Way State Generators.}
Now, we consider how to obtain a robust combiner for OWSGs.
First, let us recall the definition of OWSGs.
An OWSG is a quantum generalization of OWFs and consists of a tuple of quantum polynomial-time algorithms $\Sigma_{\OWSG}\seteq(\keygen,\StateGen,\Vrfy)$.
The $\keygen$ algorithm takes as input a security parameter $1^\secp$, and generates a classical key $k$,
the $\StateGen$ algorithm takes as input a classical secret-key $k$ and outputs a quantum state $\psi_k$,
and the $\Vrfy$ algorithm takes as input a security parameter $k$ and a quantum state $\psi_k$ and outputs $\top$ indicating acceptance or $\bot$ indicating rejection.
We require that OWSGs $\Sigma$ satisfy correctness and security.
The correctness guarantees that $\Vrfy(k,\psi_k)$ outputs $\top$ with overwhelming probability, where $k\la\keygen(1^\secp)$ and $\psi_k\la\StateGen(k)$.
The security guarantees that no QPT adversaries given polynomially many copies of $\psi_k$ cannot generate $k^*$ such that $\top\la\Vrfy(k^*,\psi_k)$, where $k\la\keygen(1^\secp)$ and $\psi_k\la\StateGen(k)$.

We observe that an OWSGs combiner $\Comb.\cM_{\OWSG}$ can be constructed in the same way as in an OWFs combiner.
\fi

\section{Technical Overview}\label{sec:tech}
First of all, let us recall the definition of robust combiner.
A robust combiner for a primitive $P$ is a deterministic classical polynomial-time Turing machine $\RobComb.\cM_P$ that takes as input $n$-candidates $\{\Sigma[i]\}_{i\in[n]}$ for $P$, and produces a new candidate $\Sigma$ for $P$. 
$\Sigma$ is correct and secure as long as at least one of the candidates $\{\Sigma[i]\}_{i\in[n]}$ for $P$ is correct and secure.
Here, the point is that $\{\Sigma[i]\}_{i\in[n]}$ are not promised to satisfy even correctness other than one of them.
In the following, we will explain the case where only two candidates $\Sigma[1]$ and $\Sigma[2]$ are given for simplicity.
Remark that the same argument goes through in the general case, where $n$ candidates $\{\Sigma[i]\}_{i\in[n]}$ are given.

\subsection{Robust Combiner for One-Way State Generators and Public-Key Quantum Money}
In this section, we explain a robust combiner for OWSGs.
A robust combiner for public-key quantum money can be constructed by partially using the technique by \cite{EC:HKNRR05}. 
\paragraph{Definition of One-Way State Generators.}

OWSG is a quantum generalization of OWFs and consists of a tuple of quantum polynomial-time algorithms $\Sigma_{\OWSG}\seteq(\keygen,\StateGen,\Vrfy)$.
The $\keygen$ algorithm takes as input a security parameter $1^\secp$, and generates a classical key $k$,
the $\StateGen$ algorithm takes as input a classical key $k$ and outputs a quantum state $\psi_k$,
and the $\Vrfy$ algorithm takes as input a classical key $k$ and a quantum state $\psi_k$ and outputs $1$ indicating acceptance or $0$ indicating rejection.
We require that OWSG $\Sigma$ satisfies correctness and security.
The correctness guarantees that $\Vrfy(k,\psi_k)$ outputs $1$ indicating acceptance with overwhelming probability, where $k\la\keygen(1^\secp)$ and $\psi_k\la\StateGen(k)$.
The security guarantees that no QPT adversaries given polynomially many copies of $\psi_k$ cannot generate $k^*$ such that $1\la\Vrfy(k^*,\psi_k)$, where $k\la\keygen(1^\secp)$ and $\psi_k\la\StateGen(k)$.

\paragraph{Robust Combiner.}
First, we consider the simpler case, where given OWSG candidates $\Sigma_{\OWSG}[1]=(\keygen[1],\allowbreak\StateGen[1],\Vrfy[1])$ and $\Sigma_{\OWSG}[2]=(\keygen[2],\StateGen[2],\Vrfy[2])$ are promised to satisfy at least correctness.
In this case, we can construct a combiner for OWSGs in the same way as OWFs.
Namely, a combined protocol $\Comb.\Sigma_{\OWSG}=(\keygen,\StateGen,\Vrfy)$ simply runs $\Sigma[1]$ and $\Sigma[2]$ in parallel.
\if0
\begin{itemize}
    \item $\keygen$ algorithm with input $1^\secp$ runs $k[1]\la\keygen[1](1^\secp)$ and $k[2]\la\keygen[2](1^\secp)$ and outputs a key $k=(k[1],k[2])$.
    \item $\StateGen$ algorithm takes as input $k=(k[1],k[2])$, runs $\psi_{k[1]}\la\StateGen[1](k[1])$ and $\psi_{k[2]}\la\StateGen[2](k[2])$ and outputs $\psi_k\seteq \psi_{k[1]}\otimes \psi_{k[2]}$.
    \item $\Vrfy$ takes $k=(k[1],k[2])$ and $\psi_k=\psi_{k[1]}\otimes\psi_{k[2]}$ as input, runs $\Vrfy[1](k[1],\psi_{k[1]})$ and $\Vrfy[2](k[2],\psi_{k[2]})$ in parallel, and outputs $1$ if both of them output $1$, and outputs $0$ otherwise.
\end{itemize}
Clearly, $\Comb.\Sigma_{\OWSG}$ satisfies correctness as long as $\Sigma_{\OWSG}[1]$ and $\Sigma_{\OWSG}[2]$ satisfy correctness. 
Furthermore, the security of $\Comb.\Sigma_{\OWSG}$ follows as long as either $\Sigma_{\OWSG}[1]$ or $\Sigma_{\OWSG}[2]$ satisfies security because the adversary of $\Comb.\Sigma_{\OWSG}$ needs to break both security of $\Sigma_{\OWSG}[1]$ and $\Sigma_{\OWSG}[2]$.
\fi

Does the same strategy work for the general setting, where original candidates are not promised to satisfy correctness?
Unfortunately, the simple parallel protocol works only when both $\Sigma_\OWSG[1]$ and $\Sigma_\OWSG[2]$ satisfy correctness because $\Comb.\Sigma_{\OWSG}$ does not satisfy correctness otherwise.
We observe that given an OWSG candidate $\Sigma_{\OWSG}$, we can construct $\Sigma_{\OWSG}^*$ with the following properties:
\begin{itemize}
    \item $\Sigma_{\OWSG}^*$ satisfies correctness regardless of $\Sigma_{\OWSG}$.
    \item $\Sigma_{\OWSG}^*$ satisfies security as long as $\Sigma_{\OWSG}$ satisfies correctness and security.
\end{itemize}
Once we have obtained such a transformation, we can construct a robust OWSG combiner $\RobComb.\cM_{\OWSG}$ as follows.
Given two OWSGs candidates $\Sigma_{\OWSG}[1]$ and $\Sigma_{\OWSG}[2]$, our robust OWSG combiner $\RobComb.\cM_{\OWSG}$ first transforms them into $\Sigma_{\OWSG}[1]^*$ and $\Sigma_{\OWSG}[2]^*$, respectively, and then outputs $\Comb.\Sigma_{\OWSG}$ which runs $\Sigma_{\OWSG}[1]^*$ and $\Sigma_{\OWSG}[2]^*$ in parallel.
$\Comb.\Sigma_{\OWSG}$ satisfies correctness because $\Sigma_{\OWSG}[1]^*$ and $\Sigma_{\OWSG}[2]^*$ satisfies correctness no matter what $\Sigma_{\OWSG}[1]$ and $\Sigma_{\OWSG}[2]$ are.
$\Comb.\Sigma_{\OWSG}$ satisfies security as long as either $\Sigma_{\OWSG}[1]$ or $\Sigma_{\OWSG}[2]$ satisfy correctness and security because either $\Sigma_{\OWSG}[1]^*$ or $\Sigma_{\OWSG}[2]^*$ satisfies security as long as either $\Sigma_{\OWSG}[1]$ or $\Sigma_{\OWSG}[2]$ satisfies correctness and security.

\paragraph{Transform Incorrect Candidate into Correct One.}
Now, we consider how to obtain such a transformation.
In the previous work \cite{EC:HKNRR05}, it was shown that we can transform PKE $\Sigma_{\PKE}$ into $\Sigma_\PKE^*$ that satisfies correctness regardless of $\Sigma_\PKE$ and satisfies security as long as $\Sigma_\PKE$ satisfies correctness and security.
In the same way as \cite{EC:HKNRR05}, we can obtain such transformation for OWSGs.
However, in this work, we take a different approach because the technique by~\cite{EC:HKNRR05} does not work for unclonable encryption
\footnote{The technique we introduce here cannot be applied to public-key quantum money. For public-key quantum money, we apply the technique introduced by \cite{EC:HKNRR05} in order to transform an incorrect candidate into a correct one.
The idea of transformation is first checking the correctness of a public-key quantum money candidate $\Sigma=(\Mint,\Vrfy)$.
If the candidate $\Sigma$ satisfies the correctness, then we amplify the correctness by parallel repetition.
Otherwise, we use the scheme $\Sigma^*=(\Mint^*,\Vrfy^*)$, where $\Vrfy^*$ algorithm always outputs $\top$. 
For details, please see \cref{sec:app_money}.}.

First, we observe that without loss of generality, $\Vrfy(k,\psi)$ can be considered working as follows: It appends $\ket{0 }\bra{0}$ to $\psi$, applies $U_{k}$ to $\psi\otimes \ket{0}\bra{0}$, measures the first qubit of $U_{k} (\psi\otimes \ket{0}\bra{0})U_{k}^{\dagger}$, and outputs the measurement outcome.
Now, we describe $\Sigma_{\OWSG}^*=(\keygen^*,\StateGen^*,\Vrfy^*)$.
$\keygen^*$ is the same as the original $\keygen$.
$\StateGen^*(k)$ first runs $\psi_k\la\StateGen(k)$, then measures the first qubit of $U_{k} (\psi_k\otimes \ket{0}\bra{0})U_{k}^{\dagger}$ in the computational basis, and obtains $b$.
If $b=1$, $\StateGen^*(k)$ rewinds its register and outputs the register as $\psi_k^*$.
Otherwise, output $\psi_k^*=\bot$, where $\bot$ is a special symbol.
$\Vrfy^*(k,\psi)$ first checks the form of $\psi$.
If $\psi=\bot$, $\Vrfy^*(k,\psi)$ outputs $1$.
Otherwise, $\Vrfy^*(k,\psi)$ applies $U_{k}$ to $\psi$, then measures the first qubit of $U_{k}\psi U_{k}^{\dagger}$, and finally outputs the measurement outcome.
We can see that $\Sigma^*$ satisfies correctness.
If $\StateGen^*(k)$ outputs $\psi_k^*=\bot$, then $\Vrfy^*$ always outputs $1$.
On the other hand, if $\psi^*\neq \bot$, then $\StateGen^*(k)$ outputs $\psi_k^*$ with the form $U_k^{\dagger}(\ket{1}\bra{1}\otimes \rho) U_{k}$ for some quantum state $\rho$.
Therefore, $\Vrfy^{*}(k,\psi_k^*)$ outputs $1$ since $U_{k}\psi_k^{*}U_{k}^{\dagger}=\ket{1}\bra{1}\otimes \rho$.
Moreover, we can see that $\Sigma^*$ satisfies security as long as $\Sigma$ satisfies correctness and security.
As long as $\Sigma$ satisfies correctness, if we measure the first qubits of $U_{k}(\psi_k\otimes \ket{0}\bra{0})U_{k}^{\dagger}$ in the computational basis, then the measurement result is $1$ with overwhelming probability, where $k\la\keygen(1^\secp)$ and $\psi_k\la\StateGen(k)$.
This indicates that the measurement does not disturb the quantum state $U_{k}(\psi_k\otimes \ket{0}\bra{0})U_{k}^{\dagger}$ from gentle measurement lemma.
Therefore, $\psi_k^*$ is statistically close to $\psi_k\otimes\ket{0}\bra{0}$ as long as $\Sigma$ satisfies correctness.
In particular, this implies that we can reduce the security of $\Sigma^*$ to that of $\Sigma$ as long as $\Sigma$ satisfies correctness.

\if0
\paragraph{On Public-Key Quantum Money}
We can construct a robust combiner for public-key quantum money in a similar way.
The difference is that we apply the technique by \cite{EC:HKNRR05} in order to transform a possibly incorrect candidate into a correct one.

To explain the transformation, let us explain the definition of public-key quantum money. The public-key quantum money consists of a pair of quantum algorithms $\Sigma=(\Mint,\Vrfy)$. 
\fi

\subsection{Robust Combiner for Unclonable Encryption}
In this section, we explain how to obtain a robust combiner for unclonable SKE.
As a corollary, we can obtain a robust combiner for unclonable PKE.
This is because we can construct unclonable PKE from unclonable SKE and PKE with quantum ciphertexts~\cite{Asia:HMNY21,TCC:AK21}, and a robust combiner for PKE with quantum ciphertexts can be constructed in the same way as the classical ciphertexts case~\cite{EC:HKNRR05}.

\paragraph{Definition of Unclonable SKE.}
First of all, we explain the definition of unclonable SKE.
Unclonable SKE $\Sigma_{\unc}$ is the same as standard SKE $\Sigma_{\SKE}$ except that the ciphertext of unclonable SKE is a quantum state and it satisfies unclonable IND-CPA security in addition to standard IND-CPA security.
In unclonable IND-CPA security, the cloning adversary $\cA$ with oracle $\Enc(\sk,\cdot)$ first sends the challenge plaintext $(m_0,m_1)$, then receives a ciphertext $\ct_b$, where $\ct_b\la\Enc(\sk,m_b)$, and finally generates a quantum state $\rho_{\cB,\cC}$ over the $\cB$ and $\cC$ registers.
The adversary $\cB$ (resp. $\cC$) receives the $\cB$ register (resp. the $\cC$ register) and the secret-key $\sk$, and outputs $b_\cB$ (resp. $b_\cC$) which is a guess of $b$.
The unclonable IND-CPA security guarantees that for any QPT adversaries $(\cA,\cB,\cC)$, we have
\begin{align}
    \Pr[b=b_\cB=b_\cC]\leq \frac{1}{2}+\negl(\secp).
\end{align}

\paragraph{Robust Combiner.}
First, we consider the simpler case, where given candidates $\Sigma_{\unc}[1]=(\keygen[1],\Enc[1],\Dec[1])$ and $\Sigma_{\unc}[2]=(\keygen[2],\Enc[2],\Dec[2])$ are promised to satisfy at least correctness.
In that case, a combined unclonable SKE scheme $\Comb.\Sigma_\unc=(\keygen,\Enc,\Dec)$ simply runs $\Sigma_{\unc}[1]$ and $\Sigma_\unc[2]$ by using X-OR secret sharing.
In other words, for encrypting bit $b$, $\Comb.\Sigma_\unc$ first samples $r[1]$ and $r[2]$ such that $r[1]+r[2]=b$, and encrypts $r[1]$ by using $\Sigma_{\unc}[1]$ and $r[2]$ by using $\Sigma_\unc[2]$.
Clearly, $\Comb.\Sigma_\unc$ satisfies correctness and security as long as both $\Sigma_\unc[1]$ and $\Sigma_{\unc}[2]$ satisfy correctness and either $\Sigma_\unc[1]$ or $\Sigma_{\unc}[2]$ satisfies security.

\if0
More formally, $\Comb.\Sigma_\unc$ works as follows:
\begin{itemize}
    \item $\keygen(1^\secp)$ algorithm outputs $\sk=(\sk[1],\sk[2])$, where $\sk[1]\la\keygen[1](1^\secp)$ and $\sk[2]\la\keygen[2](1^\secp)$.
    \item $\Enc\left(\left(\sk[1],\sk[2]\right),m\right)$ algorithm samples $r[1]$ and $r[2]$ such that $r[1]+r[2]=m$, and outputs $(\ct[1]_{r[1]},\ct[2]_{r[2]})$, where $\ct[1]_{r[1]}\la\Enc[1](\sk[1],r[1])$ and $\ct[2]_{r[2]}\la\Enc[2](\sk[2],r[2])$.
    \item $\Dec((\sk[1],\sk[2]),(\ct[1],\ct[2]))$ algorithm runs $r[1]\la\Dec[1](\sk[1],\ct[1])$ and $r[2]\la\Dec[2](\sk[2],\ct[2])$, and outputs $r[1]+r[2]$.
\end{itemize}
Clearly, $\Comb.\Sigma_{\unc}$ satisfies correctness as long as both $\Sigma_{\unc}[1]$ and $\Sigma_{\unc}[2]$ satisfies correctness.
Furthermore, $\Comb.\Sigma_{\unc}$ satisfies security as long as either $\Sigma_{\unc}[1]$ or $\Sigma_{\unc}[2]$ satisfies security.
This is because the adversary of $\Comb.\Sigma_{\unc}$ needs to know both $r[1]$ and $r[2]$ to obtain $b$, but the adversary cannot do it because either  $\Sigma_{\unc}[1]$ or $\Sigma_{\unc}[2]$ satisfies security.
\fi

Does the same strategy work for the general setting, where original candidates are not promised to satisfy even correctness?
Unfortunately, the simple X-OR protocol above works only when both $\Sigma_\unc[1]$ and $\Sigma_\unc[2]$ satisfy correctness because $\Comb.\Sigma_\unc$ does not satisfy correctness otherwise.
Our key observation is that given a candidate of unclonable SKE $\Sigma_{\unc}$ we can construct a new candidate $\Sigma_{\unc}^*$ with the following properties:
\begin{itemize}
    \item $\Sigma_{\unc}^*$ satisfies correctness regardless of $\Sigma_{\unc}$.
    \item $\Sigma_{\unc}^*$ satisfies security as long as $\Sigma$ satisfies correctness and security.
\end{itemize}
Once we have obtained such a transformation, we can construct a robust combiner for unclonable SKE as follows.
Given two unclonable SKE candidates $\Sigma_{\unc}[1]$ and $\Sigma_{\unc}[2]$, a robust combiner for unclonable SKE first transforms $\Sigma_{\unc}[1]$ and $\Sigma_{\unc}[2]$ into $\Sigma_{\unc}[1]^*$ and $\Sigma_{\unc}[2]^*$, respectively, and then outputs $\Comb.\Sigma_{\unc}$ which runs $\Sigma_{\unc}[1]^*$ and $\Sigma_{\unc}[2]^*$ by using X-OR secret sharing.
$\Comb.\Sigma_{\unc}$ satisfies correctness because $\Sigma_{\unc}[1]^*$ and $\Sigma_{\unc}[2]^*$ satisfy correctness no matter what $\Sigma_{\unc}[1]$ and $\Sigma_{\unc}[2]$ are.
Moreover, $\Comb.\Sigma_{\unc}$ satisfies security as long as either $\Sigma_{\unc}[1]$ or $\Sigma_{\unc}[2]$ satisfies correctness and security.
This is because either $\Sigma_{\unc}^*[1]$ or $\Sigma_{\unc}[2]^*$ satisfies security as long as either $\Sigma_{\unc}[1]$ or $\Sigma_{\unc}[2]$ satisfies correctness and security.

\paragraph{Transform Incorrect Candidate into Correct One.}
Now, we consider how to obtain such a transformation.
It is known that we can obtain such a transformation for PKE~\cite{EC:HKNRR05}.
In their technique, they use parallel repetition to amplify correctness.
We emphasize that we cannot apply their technique for unclonable encryption because correctness amplification via parallel repetition does not work for unclonable encryption.
Therefore, we take a different approach, whose idea is the same as OWSGs.
Without loss of generality, we can assume that $\Dec(\sk,\ct)$ first appends $\ket{0 }\bra{0}$ to $\ct$, applies $U_{\sk}$ to $\ct\otimes \ket{0}\bra{0}$, measures the first $\abs{m}$-bit of $U_{\sk} (\ct\otimes \ket{0}\bra{0})U_{\sk}^{\dagger}$, and outputs the measurement outcome.
Now, we describe $\Sigma_{\unc}^*=(\keygen^*,\Enc^*,\Dec^*)$.
$\keygen^*$ is the same as the original $\keygen$.
$\Enc^*(\sk,m)$ first runs $\ct\la\Enc(\sk,m)$, then measures the first $\abs{m}$-bit of $U_{\sk} (\ct\otimes \ket{0}\bra{0})U_{\sk}^{\dagger}$ in the computational basis, obtains $m^*$, and checks whether $m=m^*$.
If $m=m^*$, $\Enc^*(\sk,\ct)$ rewinds its register and outputs the register as the quantum ciphertext $\ct^*$.
Otherwise, output $\ct^*=(\bot,m)$, where $\bot$ is a special symbol.
$\Dec^*(\sk,\ct^*)$ first checks the form of $\ct^*$, and outputs $m$ if $\ct^*$ is of the form $(\bot,m)$.
Otherwise, $\Dec^*(\sk,\ct^*)$ applies $U_{\sk}$ to $\ct^*$, and outputs the measurement outcome of first $\abs{m}$-qubits of $U_{\sk}\ct^*U_{\sk}^{\dagger}$.
Clearly, the new construction $\Sigma_\unc^*$ satisfies correctness in the same reason as OWSG.
Furthermore, $\Sigma_\unc^*$ satisfies security as long as $\Sigma_\unc$ satisfies correctness and security.
This is because $\ct^*$ is statistically close to $\ct\otimes\ket{0}\bra{0}$ as long as $\Sigma_\unc$ satisfies correctness, and thus we can reduce the security of $\Sigma_\unc^*$ to that of $\Sigma_\unc$.

\if0
$\Dec^*(\sk,\ct^*)$ first checks the form of $\ct^*$.
If $\ct^*=(\bot,m)$, $\Dec^*$ output $m$.
Otherwise, $\Dec^*(\sk,\ct^*)$ applies $U_{\sk}$ to $\ct^*$, then measures the first $\abs{m}$-qubit of $U_{\sk}\ct^*U_{\sk}^{\dagger}$, and finally outputs the measurement outcome.
We can see that $\Sigma^*$ satisfies correctness.
If $\Enc^*(\sk,m)$ outputs $\ct^*=(\bot,m)$, then $\Dec^*$ always outputs $m$.
On the other hand, if $\ct^*\neq (\bot,m)$, then $\Enc^*(\sk,m)$ outputs $\ct^*=U_\sk^{\dagger}(\ket{m}\bra{m}\otimes \rho) U_{\sk}$ for some quantum state $\rho$.
Therefore, $\Dec^{*}(\sk,\ct^*)$ outputs $m$ since $U_{\sk}\ct^{*}U_{\sk}^{\dagger}=\ket{m}\bra{m}\otimes \rho$.
Moreover, we can see that $\Sigma^*$ satisfies security as long as $\Sigma$ satisfies correctness and security.
As long as $\Sigma$ satisfies correctness, if we measure the first $\abs{m}$-qubits of $U_{\sk}(\ct\otimes \ket{0}\bra{0})U_{\sk}^{\dagger}$ in the computational basis, then the measurement result is $m$ with overwhelming probability, where $\sk\la\keygen(1^\secp)$ and $\ct\la\Enc(\sk,m)$.
This indicates that the measurement does not disturb the quantum state $U_{\sk}(\ct\otimes \ket{0}\bra{0})U_{\sk}^{\dagger}$ from gentle measurement lemma, and thus $\ct^*$ is statistically close to $\ct\otimes\ket{0}\bra{0}$ as long as $\Sigma$ satisfies correctness.
Therefore, we can reduce the security of $\Sigma^*$ to that of $\Sigma$ as long as $\Sigma$ satisfies correctness.
\fi

\if0
\taiga{The following explanation may be redundant.}\takashi{I agree that we can remove this.}
Now, we explain the reason why the transformation for PKE~~\cite{EC:HKNRR05} cannot be applied to unclonable encryption.
Roughly speaking, they transform incorrect PKE into correct one by first checking the correctness of the candidate $\Sigma_{\PKE}$ and then amplifying its correctness by parallel repetition.
We consider following their strategy in the unclonable encryption case.
Actually, in the same way as the PKE case, we can check whether unclonable SKE candidate $\Sigma_{\unc}=(\keygen,\Enc,\Dec)$ satisfies the following correctness property
\begin{align}
    \Pr\left[b\la \Dec(\sk,\ct):
          \sk\la\keygen(1^\secp),
          \ct\la\Enc(\sk,b)
    \right]\geq 1-1/\poly(\secp)
\end{align}
for all $b\in\bit$.
(Here, for simplicity, we consider unclonable SKE with single-bit plaintexts.)
We need to amplify its correctness so that a candidate satisfies $(1-\negl(\secp))$-correctness.
In the PKE case~\cite{EC:HKNRR05}, they amplify its correctness by parallel repetition.
However, we cannot amplify the correctness of unclonable SKE by parallel repetition because if we send many copies of ciphertexts, its unclonable IND-CPA security is broken.
If we consider amplifying the correctness of $\SKE_\unc$ by parallel repetition, then we run $\Sigma_{\unc}$ as follows.
For encrypting $b$, we run $\keygen(1^\secp)$ many times, say $\secp$-times, generates $\{\sk_i\}_{i\in[\secp]}$ as the secret-key, and then runs $\ct_i(b)\la\Enc(\sk_i,b)$ for $i\in[\secp]$, and outputs $\{\ct_i(b)\}_{i\in[\secp]}$ as the ciphertext of $b$.
For a decryption, the receiver receives the secret key $\{\sk_i\}_{i\in[\secp]}$ and the ciphertext $\ct^*(b)=\{\ct_i(b)\}_{i\in[\secp]}$, runs $b_i\la\Dec(\sk_i,\ct_i(b))$ for $i\in[\secp]$, and outputs the most frequent one.
Now, we can see that such a correctness amplification loses the unclonable IND-CPA security.
In the unclonable IND-CPA security experiment, the cloning adversary $\cA$ first receives the ciphertext $\{\ct_i(b)\}_{i\in[\secp]}$ for $b$, and produces some quantum states $\rho_{\cB,\cC}$ over the $\cB$ and $\cC$ registers.
Suppose that $\cA$ prepares $\{\ct_i(b)\}_{i\in[\secp/2]}$ in the $\cB$ register and $\{\ct_i(b)\}_{i\in\{\secp/2+1,\cdots,\secp\}}$ in the $\cC$ register.
In the next stage, the adversary $\cB$ (resp. $\cC$) receives the secret-key $\{\sk_i\}_{i\in[\secp]}$ the $\cB$ (resp. $\cC$) register.
It is obvious that $\cB$ (resp. $\cC$) can obtain the plaintext information $b$ because $\cB$ (resp. $\cC$) receives $\{\sk_i\}_{i\in[\secp]}$ and $\{\ct_i(b)\}_{i\in[\secp/2]}$ (resp. $\{\ct_i(b)\}_{i\in\{\secp/2+1,\cdots,\secp\}}$).
Therefore, we cannot use parallel repetition to amplify the correctness of unclonable encryption.
\fi


\subsection{Robust Combiner for Quantum Bit Commitment}
\paragraph{Definition of Quantum Bit Commitment.}
In the following, we consider a robust combiner for quantum bit commitment.
In this work, we consider a canonical quantum bit commitment.
Any quantum bit commitment can be written in the following canonical form~\cite{Asia:Yan22}.
A canonical quantum bit commitment scheme is a pair of unitaries $(Q_0,Q_1)$ acting on the registers $\mathbf{C}$ called the commitment register and $\mathbf{R}$ called the reveal register, and works as follows.
\begin{itemize}
    \item[{\bf Commit Phase}:] A sender runs $Q_b\ket{0}_{\mathbf{C},\mathbf{R}}$ and sends the $\mathbf{C}$ to a receiver for committing a bit $b\in\bit$.
    \item[{\bf Reveal Phase}:]
    For revealing the committed bit $b$, the sender sends $b$ and the $\mathbf{R}$ register to the receiver. 
    The receiver applies $Q_b^{\dagger}$ to the $\mathbf{C}$ and $\mathbf{R}$ register and measures both registers in the computational basis.
    The receiver accepts if the measurement outcomes are all $0$, and rejects otherwise.
\end{itemize}
We require that a canonical quantum bit commitment satisfies hiding and binding. 
The computational (resp. statistical) hiding requires that no quantum polynomial-time (resp. unbounded) adversaries distinguish  $Q_0\ket{0}_{\mathbf{C},\mathbf{R}}$ from $Q_1\ket{0}_{\mathbf{C},\mathbf{R}}$ without touching the $\mathbf{R}$ register with non-negligible probability.
\if0
In particular, $c$-statistical hiding requires that no unbounded adversary distinguish $Q_0\ket{0}_{\mathbf{C},\mathbf{R}}$ from $Q_1\ket{0}_{\mathbf{C},\mathbf{R}}$ without touching the $\mathbf{R}$ register with probability greater than $c$.
\fi

The binding requires that no adversaries can map an honestly generated quantum bit commitment of $0$ (i.e. $Q_0\ket{0}_{\mathbf{C,R}}$) to that of $1$ (i.e. $Q_1\ket{0}_{\mathbf{C,R}}$) without touching $\mathbf{C}$ registers.
More formally, computational (resp. statistical) binding requires that for any quantum polynomial-time (resp. unbounded) unitary $U_{\mathbf{R},\mathbf{Z}}$ acting on the $\mathbf{R}$ and $\mathbf{Z}$ register and any quantum state $\ket{\tau}_{\mathbf{Z}}$ on $\mathbf{Z}$ register, we have
\begin{align}
    \norm{(Q_1\ket{0}\bra{0}Q_1^{\dagger})_{\mathbf{C,R}}(I_{\mathbf{C}}\otimes U_{\mathbf{R,Z}})(Q_0\ket{0}_{\mathbf{C},\mathbf{R}}\ket{\tau}_{\mathbf{Z}})}\leq \negl(\secp).
\end{align}
\if0
In particular, $c$-statistical binding requires that for any unbounded unitary $U_{\mathbf{R,Z}}$ acting on the 
$\mathbf{R}$ and $\mathbf{Z}$ registers and any quantum state $\ket{\tau}_{\mathbf{Z}}$ on the $\mathbf{Z}$ register, we have
\begin{align}
    \norm{(Q_1\ket{0}\bra{0}Q_1^{\dagger})_{\mathbf{C,R}}(I_{\mathbf{C}}\otimes U_{\mathbf{R,Z}})(Q_0\ket{0}_{\mathbf{C},\mathbf{R}}\ket{\tau}_{\mathbf{Z}})}\leq c.
\end{align}
\fi

It was shown that we can change the flavor of quantum bit commitment~\cite{Asia:Yan22,EC:HMY23}.
More formally, if we have a canonical quantum bit commitment $(Q_0,Q_1)$ that satisfies $X$-hiding and $Y$-binding, then we can construct a canonical quantum bit commitment $(\widetilde{Q_0},\widetilde{Q_1})$ that satisfies $X$-binding and $Y$-hiding for $X,Y\in$ $\{$statistical, computational$\}$.

\paragraph{Robust Combiner.}
First, let us clarify our final goal.
Given two candidates of canonical quantum bit commitments $(Q_0[1],Q_1[1])$ and $(Q_0[2],Q_1[2])$, our robust combiner $\RobComb.\cM_\Commit$ generates a new candidate $(\Comb.Q_0,\Comb.Q_1)$ that satisfies hiding and binding as long as either $(Q_0[1],Q_1[1])$ or $(Q_0[2],Q_1[2])$ satisfies hiding and binding.
More formally, our robust combiner $\RobComb.\cM_\Commit$ outputs $(\Comb.Q_0,\Comb.Q_1)$ with the following properties:
\begin{itemize}
    \item $(\Comb.Q_0,\Comb.Q_1)$ satisfies statistical binding regardless of $(Q_0[1],Q_1[1])$ and $(Q_0[2],Q_1[2])$.
    \item $(\Comb.Q_0,\Comb.Q_1)$ satisfies computational hiding as long as either $(Q_0[1],Q_1[1])$ or $(Q_{0}[2],Q_1[2])$ satisfies computational hiding and computational binding.
\end{itemize}
\if0
One of $(Q_0[1], Q_1[1])$ and $(Q_0[2], Q_1[2])$ satisfies either (statistical binding, computational hiding ) or (statistical hiding, computational binding).
We want to construct a new canonical quantum bit commitment $(Q_0, Q_1)$ that satisfies statistical binding and computational hiding as long as one of $(Q_0[1], Q_1[1])$ and $(Q_0[2], Q_1[2])$ satisfies either (statistical binding, computational hiding ) or (statistical hiding, computational binding).
\fi

To achieve this final goal, let us consider the following simpler goal first, where both candidates $(Q_0[1], Q_1[1])$ and $(Q_0[2], Q_1[2])$ satisfy at least statistical binding.
More formally, given candidates $(Q_0[1],Q_1[1])$ and $(Q_0[2],Q_1[2])$, we consider constructing a new candidate $(\Comb.Q_0,\Comb.Q_1)$ with the following properties:
\begin{itemize}
    \item $(\Comb.Q_0,\Comb.Q_1)$ satisfies statistical binding as long as both $(Q_0[1],Q_1[1])$ and $(Q_0[2],Q_1[2])$ satisfies statistical binding.
    \item $(\Comb.Q_0,\Comb.Q_1)$ satisfies computational hiding as long as either $(Q_0[1],Q_1[1])$ or $(Q_0[2],Q_1[2])$ satisfies computational hiding.
\end{itemize}

We can construct such $(\Comb.Q_0,\Comb.Q_1)$ by simply using X-OR secret sharing.
More formally, for $b\in\bit$, $\Comb.Q_b$ first samples $r[1]$ and $r[2]$ conditioned on $r[1]+r[2]=b$, and then commits $r[1]$ by using $(Q_0[1],Q_1[1])$ and commits $r[2]$ by using $(Q_0[2],Q_1[2])$.
Our construction satisfies statistical binding as long as both $(Q_0[1],Q_1[1])$ and $(Q_0[2],Q_1[2])$ satisfy statistical binding.
The intuitive reason is that the adversary of $(\Comb.Q_0,\Comb.Q_1)$ needs to change $r[1]$ or $r[2]$ after sending the commitment register to break binding of $(\Comb.Q_0,\Comb.Q_1)$, but the adversary cannot do this because both $(Q_0[1],Q_1[1])$ and $(Q_0[2],Q_1[2])$ satisfy statistical binding.
Furthermore, $(\Comb.Q_0,\Comb.Q_1)$ satisfies computational hiding as long as either $(Q_0[1],Q_1[1])$ or $(Q_0[2],Q_1[2])$ satisfies computational hiding.
The intuitive reason is that the adversary of $(\Comb.Q_0,\Comb.Q_1)$ needs to obtain both $r[1]$ and $r[2]$ from the commitment register of $(Q_0[1],Q_1[1])$ and $(Q_0[2],Q_1[2])$, but the adversary cannot do this because either $(Q_0[1],Q_1[1])$ and $(Q_0[2],Q_1[2])$ satisfies computational hiding.

\if0
Given $(Q_0[1],Q_1[1])$ and $(Q_0[2],Q_1[2])$, we can construct such $(Q_0,Q_1)$ by simply using X-OR secret sharing while running $(Q_0[1],Q_1[1])$ and $(Q_0[2],Q_1[2])$.
More formally, $(Q_0,Q_1)$ does the following in a superposition way.
For $b\in\bit$, $Q_b$ first samples $r[1]$ in the $\mathbf{D[1]}$ register and $r[2]$ in the $\mathbf{D[2]}$ register so that $r[1]+r[2]=b$, then runs $Q_{r[1]}[1]\ket{0}_{\mathbf{C[1]},\mathbf{R[1]}}$ and $Q_{r[2]}[2]\ket{0}_{\mathbf{C[2]},\mathbf{R[2]}}$, and uses $\mathbf{C}=(\mathbf{C[1]},\mathbf{C[2]})$ as the commitment register and $\mathbf{R}=(\mathbf{R[1]},\mathbf{R[2]},\mathbf{D[1]},\mathbf{D[2]})$ as the reveal register.
Note that we have
\begin{align}
    Q_b\ket{0}_{\mathbf{C,R}}\seteq
    \frac{1}{\sqrt{2}}\sum_{\{r[1],r[2]:r[1]+r[2]=b\}}\bigotimes_{i\in[2]}(Q_{r[i]}[i]\ket{0}_{\mathbf{C[i],R[i]}}\otimes \ket{r[i]}_{\mathbf{D[i]}}).
\end{align}
We can see that $(Q_0,Q_1)$ satisfies statistical binding as long as both $(Q_0[1],Q_1[1])$ and $(Q_0[2],Q_1[2])$ satisfies statistical binding.
The intuitive reason is that the adversary of $(Q_0,Q_1)$ needs to change $r[1]$ or $r[2]$ after sending $\mathbf{C}$ register to break binding of $(Q_0,Q_1)$, but the adversary cannot do this because both $(Q_0[1],Q_1[1])$ and $(Q_0[2],Q_1[2])$ satisfies statistical binding.
Moreover, $(Q_0,Q_1)$ satisfies computational hiding as long as either $(Q_0[1],Q_1[1])$ or $(Q_0[2],Q_1[2])$ satisfies computational hiding.
The intuitive reason is that the adversary of $(Q_0,Q_1)$ needs to obtain both $r[1]$ and $r[2]$ from the $\mathbf{C}$ register of $Q_b\ket{0}_{\mathbf{C,R}}$, but the adversary cannot do this because either $(Q_0[1],Q_1[1])$ and $(Q_0[2],Q_1[2])$ satisfies computational hiding.
\fi

Does the same strategy work for a robust quantum bit commitment combiner $\RobComb.\cM_{\Commit}$?
Unfortunately, the simple X-OR protocol above works only when both $(Q_0[1],Q_1[1])$ and $(Q_0[2],Q_1[2])$ satisfy statistical binding because $(\Comb.Q_0,\Comb.Q_1)$ does not satisfy statistical binding otherwise.
Our key observation is that, given a candidate of canonical quantum bit commitment $(Q_0,Q_1)$, we can construct a new candidate $ (Q_0^*,Q_1^*)$ that satisfies at least statistical binding regardless of $(Q_0,Q_1)$.
More formally, we can construct $(Q_0^*,Q_1^*)$ with the following properties:
\begin{itemize}
    \item $(Q_0^*,Q_1^*)$ satisfies statistical binding regardless of $(Q_0,Q_1)$.
    \item $(Q_0^*,Q_1^*)$ satisfies computational hiding if $(Q_0,Q_1)$ satisfies computational hiding and computational binding.
\end{itemize}
Once we have obtained such a transformation, we can construct a robust quantum bit commitment combiner $\RobComb.\cM_\Commit$.
Given two candidates of canonical quantum bit commitment $(Q_0[1],Q_1[1])$ and $(Q_0[2],Q_1[2])$, $\RobComb.\cM_\Commit$ first transforms $(Q_0[1],Q_1[1])$ and $(Q_0[2],Q_1[2])$ into $(Q_0[1]^*,Q_1[1]^*)$ and $(Q_0[2]^*,Q_1[2]^*)$, respectively and then outputs $(\Comb.Q_0,\Comb.Q_1)$, which runs $(Q_0[1]^*,Q_1[1]^*)$ and $(Q_0[2]^*,Q_1[2]^*)$ by using X-OR secret sharing.
Clearly, $(\Comb.Q_0,\Comb.Q_1)$ satisfies statistical binding.
Moreover, $(\Comb.Q_0,\Comb.Q_1)$ satisfies computational hiding as long as either $(Q_0[1],Q_1[1])$ or $(Q_0[2],Q_1[2])$ satisfies computational hiding and computational binding.

\paragraph{Transform Candidate without Statistical Binding into One with Statistical Binding.}
Now, we consider how to obtain such a transformation.
Our first observation is that either $(Q_0,Q_1)$ or $(\widetilde{Q_0},\widetilde{Q_1})$, which is the flavor conversion of $(Q_0, Q_1)$ obtained by \cite{EC:HMY23}, satisfies statistical binding in a possibly weak sense.
To see this let us denote $\rho_b\seteq\Tr_{\mathbf{R}}(Q_b\ket{0}_{\mathbf{C,R}})$.
Then, there exists some constant $f$ such that
\begin{align}
    F(\rho_0,\rho_1)=f,
\end{align}
where $F(\rho_0,\rho_1)$ is the fidelity between $\rho_0$ and $\rho_1$.
If $f$ is small, then $(Q_0, Q_1)$ satisfies statistical binding in a possibly weak sense from Uhlmann's theorem.
On the other hand, if $f$ is large, then $(Q_0,Q_1)$ does not satisfy statistical binding, but $(\widetilde{Q_0},\widetilde{Q_1})$ satisfies statistical binding instead.
This is because if $f$ is large, then $(Q_0,Q_1)$ satisfies statistical hiding, and thus $(\widetilde{Q_0},\widetilde{Q_1})$ satisfies statistical binding.
Therefore, either $(Q_0,Q_1)$ or $(\widetilde{Q_0},\widetilde{Q_1})$ satisfies statistical binding in a possibly weak sense regardless of $(Q_0,Q_1)$.
Furthermore, we observe that such a possibly weak binding property can be amplified to a strong one by parallel repetition.

Based on these observations, we construct our transformation.
Given a candidate of canonical quantum bit commitment $(Q_0,Q_1)$, our transformation outputs a new candidate $(Q_0^*,Q_1^*)$ working as follows.
\begin{itemize}
    \item If we write $\mathbf{C}$ and $\mathbf{R}$ to mean the commitment register and the reveal register of $(Q_0,Q_1)$, and write
    $\mathbf{\widetilde{C}}$ and $\mathbf{\widetilde{R}}$ to mean the commitment and the reveal register of $(\widetilde{Q_0},\widetilde{Q_1})$, then
    the commitment register $\mathbf{C^*}$ of $(Q_0^*,Q_1^*)$ is $(\mathbf{C}^{\otimes \secp},\mathbf{\widetilde{C}}^{\otimes \secp})$, and the reveal register $\mathbf{R^*}$ of $(Q_0^*,Q_1^*)$ is $(\mathbf{R}^{\otimes \secp},\mathbf{\widetilde{R}}^{\otimes \secp})$.
    \item For $b\in\bit$, $Q_b^*$ works as follows:
    \begin{align}
        Q_b^*\seteq (Q_b\otimes \widetilde{Q_b})^{\otimes \secp}.
    \end{align}
    Note that we have 
    \begin{align}
        Q_b^*\ket{0}_{\mathbf{C^*,R^*}}= (Q_b\ket{0}_{\mathbf{C,R}})^{\otimes \secp}\otimes (\widetilde{Q_b}\ket{0}_{\mathbf{\widetilde{C},\widetilde{R}}})^{\otimes \secp} .
    \end{align}
\end{itemize}

We can see that $(Q_0^*,Q_1^*)$ satisfies statistical binding regardless of $(Q_0,Q_1)$.
If we write $\rho_b\seteq\Tr_{\mathbf{R}}(Q_b\ket{0}_{\mathbf{C,R}})$, there exists some constant $0\leq f\leq 1$ such that 
\begin{align}
    F(\rho_0,\rho_1)=f.
\end{align}
If we write $\widetilde{\rho_b}\seteq\Tr_{\mathbf{\widetilde{R}}}(\widetilde{Q_b}\ket{0}_{\mathbf{\widetilde{C},\widetilde{R}}})$, then we can show that
\begin{align}
    F(\widetilde{\rho_0},\widetilde{\rho_1})\leq (1-f)^{1/2}
\end{align}
by using the technique by \cite{EC:HMY23}.
Therefore, if we write $\rho_b^*\seteq \Tr_{\mathbf{R^*}}(Q_b^*\ket{0}_{\mathbf{C^*,R^*}})$, we have
\begin{align}
    F(\rho_0^*,\rho_1^*)= F( (\rho_0\otimes\widetilde{\rho_0})^{\otimes \secp},(\rho_1\otimes\widetilde{\rho_1})^{\otimes \secp} )\leq F(\rho_0,\rho_1)^\secp F(\widetilde{\rho_0},\widetilde{\rho_1})^{\secp}\leq f^\secp(1-f)^{\secp/2}\leq 2^{-\secp/2}.
\end{align}
This implies that 
$(Q_0^*,Q_1^*)$ satisfies statistical binding regardless of $(Q_0,Q_1)$ from Uhlmann's Theorem.

\if0
Furthermore, we have
\begin{align}
    \mathsf{TD}(\rho_0,\rho_1)\leq \sqrt{1-F(\rho_0,\rho_1)}=(1-f)^{1/2},
\end{align}
and this implies that $(Q_0,Q_1)$ satisfies $(1-f)^{1/2}$-statistical hiding.
From the flavor conversion technique in previous work~\cite{EC:HMY23}, we can show that $(\widetilde{Q_0},\widetilde{Q_1})$ satisfies $(1-f)^{1/4}$-statistical binding if $(Q_0,Q_1)$ satisfies $(1-f)^{1/2}$-statistical hiding.
In particular, this implies that
\begin{align}
    F(\widetilde{\rho_0},\widetilde{\rho_1})\leq (1-f)^{\frac{1}{2}},
\end{align}
where $\widetilde{\rho_b}=\Tr_{\mathbf{\widetilde{R}}}(\widetilde{Q_b}\ket{0}_{\mathbf{\widetilde{C},\widetilde{R}}})$.
Let us denote $\rho_b^*\seteq \Tr_{\mathbf{R^*}}(Q_b^*\ket{0}_{\mathbf{C^*,R^*}})$.
Then, we have
\begin{align}
    F(\rho_0^*,\rho_1^*)= F( (\rho_0\otimes\widetilde{\rho_0})^{\otimes \secp},(\rho_1\otimes\widetilde{\rho_1})^{\otimes \secp} )\leq F(\rho_0,\rho_1)^\secp F(\widetilde{\rho_0},\widetilde{\rho_1})^{\secp}\leq f^\secp(1-f)^{\secp/2}\leq 2^{-\secp/2}.
\end{align}
This implies that 
$(Q_0^*,Q_1^*)$ satisfies statistical binding regardless of $(Q_0,Q_1)$ from Uhlmann's Theorem.
\fi

Moreover, we can see that $(Q_0^*,Q_1^*)$ satisfies computational hiding as long as $(Q_0,Q_1)$ satisfies computational hiding and computational binding.
The hiding QPT adversary of $(Q_0^*,Q_1^*)$ needs to obtain $b$ from $\rho_b^*=(\rho_b\otimes \widetilde{\rho_b})^{\otimes \secp}$.
For that, the adversary needs to obtain $b$ from $\rho_b$ or $\widetilde{\rho_b}$.
Because $(Q_0,Q_1)$ satisfies computational hiding, the QPT adversary cannot obtain $b$ from $\rho_b$.
Furthermore, $(\widetilde{Q_0},\widetilde{Q_1})$ also satisfies computational hiding because $(\widetilde{Q_0},\widetilde{Q_1})$ is a flavor conversion of $(Q_0,Q_1)$.
Therefore, the QPT adversary cannot obtain $b$ from $\widetilde{\rho_b}$.

\if0
In other words, $Q_b$ is a quantum circuit consisting of $\mathbf{C}=(\mathbf{C[1]},\mathbf{C[2]})$ register and $\mathbf{R}=(\mathbf{R[1]},\mathbf{R[2]},\mathbf{D[2]},\mathbf{D[2]})$ register, where $\mathbf{C[i]}$ is the commitment register and $\mathbf{R[i]}$ is the reveal register of $(Q_0[i],Q_1[i])$ and $\mathbf{D[i]}$ is one-qubit register for $i\in[2]$.
For $b\in\bit$, $Q_b$ is a unitary such that
\begin{align}
    Q_b\ket{0}_{\mathbf{C,R}}\seteq
    \frac{1}{\sqrt{2}}\sum_{\{r[1],r[2]:r[1]+r[2]=b\}}\bigotimes_{i\in[2]}(Q_{r[i]}\ket{0}_{\mathbf{C[i],R[i]}}\otimes \ket{r[i]}_{\mathbf{D[i]}}).
\end{align}
Note that if we write $\rho_b[i] \seteq \Tr_{\mathbf{R[i]}}(Q_b[i]\ket{0}_{\mathbf{C[i],R[i]}})$ for $b\in\bit$ and $i\in\{1,2\}$, we have
\begin{align}
    \rho_b\seteq \Tr(Q_b\ket{0}_{\mathbf{C,R}})=\frac{1}{2}
    \sum_{\{r:r[1]+r[2]=b\}}\rho_{r[1]}[1]\otimes \rho_{r[2]}[2].
\end{align}
From the construction of $(Q_0,Q_1)$,
a sender needs to breaks the binding of $(Q_0[1],Q_1[1])$ or $(Q_0[2],Q_1[2])$ for breaking the binding of $(Q_0,Q_1)$.
\fi

\subsection{Universal Constructions}
Let us recall the definition of universal construction.
A universal construction for a primitive $P$ is an explicit construction of $P$, which satisfies correctness and security as long as $P$ exists.
In this section, we explain how to provide universal constructions via robust combiners.
In particular, we explain how to construct a universal construction for OWSGs by using a robust OWSG combiner $\RobComb.\cM_{\OWSG}$.
We can give universal constructions for other cryptographic primitives in the same way.

In a nutshell, the idea of universal construction via robust combiner~\cite{EC:HKNRR05} is to think of all descriptions of algorithms as OWSG candidates and combine them.
For a set of classical Turing machines $\cM=(x,y,z)$, we write $(\keygen[x],\StateGen[y],\Vrfy[z])$ to mean a OWSG candidate described by $(x,y,z)$.
For simplicity, we assume that $(\keygen[x],\StateGen[y],\Vrfy[z])$ are efficient for all $x,y,z\in\N$.
The universal construction $(\keygen_{\mathsf{Univ}}(1^\secp),\StateGen_{\mathsf{Univ}}(k),\Vrfy_{\mathsf{Univ}}(k,\psi_k))$ works as follows:

\if0
To guarantee efficiency, let us introduce an OWSG scheme $(\widetilde{\keygen}[x],\widetilde{\StateGen}[y],\widetilde{\Vrfy}[z])$ working as follows:
\begin{itemize}
    \item $\keygen[x](1^\secp)$ first runs a classical Turing machine $x$ on $1^\secp$.
    It obtains a classical string $C_\secp[x^*]$ describing a quantum circuit, and outputs the output of $C_{\secp}[x^*]$.
    While running $x(1^\secp)$, if it does not halt within $\secp^2$ steps, then $\keygen[x](1^\secp)$ outputs $\bot$.
    \item $\StateGen[y^*](k)$ first runs a classical Turing machine $y^*$ on $1^\secp$, obtains a classical string $C_{\secp}[y^*]$ describing a quantum circuit, and output the output of $C_\secp[y^*](k)$.
    While running $y(1^\secp)$, if it does not halt within $\secp^2$ steps, then $\StateGen[x](1^\secp)$ outputs $\bot$.
    \item $\Vrfy[z^*](k,\psi)$ first runs a classical Turing machine $z^*$ on $1^\secp$, obtains a classical string $C_{\secp}[z^*]$ describing a quantum circuit, and output the output of $C_k[z^*](k,\psi)$.
\end{itemize}
\fi

\begin{itemize}
    \item $\keygen_{\mathsf{Univ}}(1^\secp)$ first runs \begin{align}(\keygen_\secp,\StateGen_\secp,\Vrfy_\secp)\la\RobComb.\cM_{\OWSG}(\{\keygen[x],\StateGen[y],\Vrfy[z]\}_{x,y,z\in[\secp]}),
    \end{align}
    where $[\secp]=\{1,\cdots,\secp\}$.
    Then, $\keygen_{\mathsf{Univ}}(1^\secp)$ runs $k\la\keygen_\secp(1^\secp)$, and outputs $k$.
    \item $\StateGen_{\mathsf{Univ}}(k)$ runs $\psi_k\la\StateGen_{\secp}(1^\secp,k)$, and outputs $\psi_k$.
    \item $\Vrfy_{\mathsf{Univ}}(k,\psi_k)$ runs $\Vrfy_{\secp}(1^\secp,k,\psi_k)$, and outputs its output.
\end{itemize}
Assume that there exist OWSGs, then there also exists a set of classical Turing machine $\cM^*=(x^*,y^*,z^*)$ such that the OWSG scheme $(\keygen[x^*],\StateGen[y^*],\Vrfy[z^*])$ satisfies correctness and security.
For all sufficiently large $\secp\in\N$, one of $\{\keygen[x],\StateGen[y],\Vrfy[z]\}_{x,y,z\in[\secp]}$ includes a correct and secure OWSG scheme $(\keygen[x^*],\StateGen[y^*],\Vrfy[z^*])$ as long as OWSGs exist.
Therefore, $(\keygen_\secp,\StateGen_\secp,\Vrfy_\secp)$ satisfies correctness and security for all sufficiently large $\secp\in\N$ as long as OWSGs exist. Because $(\keygen_{\mathsf{Univ}},\StateGen_{\mathsf{Univ}},\Vrfy_{\mathsf{Univ}})$ emulates $(\keygen_\secp,\StateGen_\secp,\Vrfy_\secp)$, it also satisfies correctness and security
\if0
\footnote{Strictly speaking, there is a subtle issue in the folklore universal construction~\cite{EC:HKNRR05} described above, which comes from the definition of a robust combiner. 
To avoid the issue, we re-introduce the definition of robust OWSG combiner. For the details, please see~\cref{sec:univ_owsg}}.
\fi

\if0
Strictly speaking, there is a subtle issue in the folklore construction above.
To see this, we need to carefully consider what the ``security'' of the OWSG $(\keygen_\secp,\StateGen_\secp,\Vrfy_\secp)$ guarantees.
The security of OWSG $(\keygen_\secp,\StateGen_\secp,\Vrfy_\secp)$ guarantees that for any constant $c$ and any QPT adversary $\cA$ and any polynomial $t$, we have
\begin{align}
\Pr[\top\la\Vrfy_\secp(k^*,\psi_k):k\la\keygen_\secp(\color{red}1^{\widetilde{\secp}}\color{black}),\psi_k\la\StateGen_{\secp}(k),k^*\la\cA(\psi_k^{\otimes t(\widetilde{\secp})})]\leq 1/\color{red}\widetilde{\secp}^c\color{black}.
\end{align}
for all sufficiently large security parameters $\widetilde{\secp}\in\N$, which is independent from the $\secp$ of $(\keygen_\secp,\StateGen_\secp,\Vrfy_\secp)$.
Here, the important point is that the inequality above holds only for sufficiently large security parameters $\widetilde{\secp}\in\N$, which is independent of $\secp$.
Therefore, even if $(\keygen_\secp,\StateGen_\secp,\Vrfy_\secp)$ satisfies security, the following may hold 
\begin{align}
\Pr[\top\la\Vrfy_\secp(k,\psi_k):k\la\keygen_\secp(1^{\secp}),\psi_k\la\StateGen_{\secp}(k),k^*\la\cA(\psi_k^{\otimes t(\secp)})]=1,
\end{align}
which is useless to show the security of the folklore universal construction $(\keygen_{\mathsf{Univ}},\StateGen_{\mathsf{Univ}},\Vrfy_{\mathsf{Univ}})$.

To avoid the issue, we carefully reintroduce another definition of robust OWSG combiner, which we call a robust OWSG combiner for universal construction, and show that we can construct a universal construction for OWSG by using the new definition of a robust OWSG combiner.
\fi

\subsection{Universal Plaintext Expansion for Unclonable Encryption}
We give another universal construction for one-time unclonable SKE assuming decomposable quantum randomized encoding whose construction is inspired by \cite{Eprint:WW23}.
Although we additionally use a decomposable quantum randomized encoding for this construction, we can expand the plaintext of one-time unclonable SKE.
Note that it was an open problem to expand the plaintext of unclonable encryption since a standard transformation via bit-wise encryption does not work as pointed out in \cite{C:AKLLZ23}.

First, let us recall the decomposable quantum randomized encoding $\Sigma_\RE=\RE.(\Enc,\Dec)$ given in \cite{Stoc:BY22}.
In their decomposable quantum randomized encoding, $\RE.\Enc$ takes as input a quantum circuit $F$,  $\secp$-length possibly quantum input $q$ and $\secp$-length classical input $x$, and outputs $\widehat{F(q,x)}$. 
Let $q[i]$ and $x[i]$ be the $i$-th qubit and bit of $q$ and $x$, respectively.
Decomposability guarantees that $\widehat{F(q,x)}$ can be separated into the offline encoding part $\widehat{F}_{\mathsf{off}}$ and online encoding parts $\left(\{\lab_i(q[i])\}_{i\in\{1,\cdots,\secp\}},\{\lab_{i+\secp}(x[i])\}_{i\in\{1,\cdots,\secp\}} \right)$ as follows:
\begin{align}
    \widehat{F(q,x)}\seteq
    \left(
    \widehat{F}_{\mathsf{off}},\lab_1(q[1]),\cdots, \lab_\secp(q[\secp]), \lab_{\secp+1}(x[1]),\cdots,\lab_{2\secp}(x[\secp])
    \right),
\end{align}
where $\widehat{F}_{\mathsf{off}}$ does not depend on $q$ and $x$, $\lab_i(q[i])$ depends on only $q[i]$ for $i\in[\secp]$ and $\lab_{i+\secp}(x[i])$ depends on only $x[i]$ for $i\in[\secp]$.
$\RE.\Dec$ takes as input $\widehat{F(q,x)}$ and outputs $F(q,x)$.
The security roughly guarantees that for any quantum circuits $F_1,F_2$ with the same size, and any quantum and classical inputs $(\{q_1,x_1\}, \{q_2,x_2\})$ such that $F_1(q_1,x_1)=F_2(q_2,x_2)$,
$
    \widehat{F_1(q_1,x_1)}
$
is computationally indistinguishable from
$
    \widehat{F_2(q_2,x_2)}.
$
 
Now, we describe our one-time unclonable SKE $\Sigma_{\Univ}=(\keygen_{\Univ},\Enc_{\Univ},\Dec_{\Univ})$:
\begin{itemize}
    \item[$\keygen_{\Univ}(1^\secp)$:]Our key generation algorithm $\keygen_{\Univ}(1^\secp)$ first samples $x\la\bit^{\secp}$.
    Then, it samples $R[i]\la\bit^{\ell(\secp)}$ for $i\in[\secp]$, and outputs $\sk\seteq (x,\{R[i]\}_{i\in[\secp]})$.
    Here, $\ell(\secp)$ is the size of online encoding of $\RE.\Enc$.
    \item[$\Enc_{\Univ}(\sk,m)$:]
    Our encryption algorithm $\Enc_{\Univ}(\sk,m)$ first generates a quantum circuit $C[m]$ that outputs $m$ for any inputs, where
    the quantum circuit is padded to an appropriate size, which we will specify later.
    Then, $\Enc_{\Univ}(\sk,m)$ computes $\widehat{C[m]}_{\mathsf{off}}$, which is the offline encoding of $C[m]$.
    Next, it computes $\lab_{i}(0)$ for $i\in[\secp]$
    and $\lab_{\secp+i}(b)$ for $i\in[\secp]$ and $b\in\bit$.
    Finally, it samples $S[i]\la\bit^{\secp}$, and computes $\Lab.\ct[i,x[i]]=R[i]+ \lab_{\secp+i}(x[i])$ and
    $\Lab.\ct[i,1-x[i]]=S[i]+ \lab_{\secp+i}(1-x[i])$ for all $i\in[\secp]$.
    The ciphertext of $\Enc_{\Univ}(\sk,m)$ is 
    \begin{align}
    \widehat{C[m]}_{\mathsf{off}},\{\lab_i(0)\}_{i\in[\secp]}, \{\Lab.\ct[i,b]\}_{i\in[\secp],b\in\bit}.
    \end{align}
    \item[$\Dec_{\Univ}(\sk,\ct)$:]
    Our decryption algorithm $\Dec_{\Univ}(\sk,\ct)$ works as follows.
    First, let $\sk=(x,\{R[i]\}_{i\in[\secp]})$ and $\ct=
    \left(\widehat{C[m]}_{\mathsf{off}},\{\lab_i(0)\}_{i\in[\secp]}, \{\Lab.\ct[i,b]\}_{i\in[\secp],b\in\bit}\right)$.
    $\Dec_{\Univ}(\sk,\ct)$ first computes $ $
    $\lab_{\secp+i}(x[i])=R[i]+\Lab.\ct[i,x[i]]$ for all $i\in[\secp]$, and runs
    $\RE.\Dec(\widehat{C[m]}_{\mathsf{off}}, \{\lab_i(0)\}_{i\in[\secp]}, \{\lab_{i+\secp}(x[i])\}_{i\in [\secp]})$. 
\end{itemize}
Clearly, our encryption algorithm can encrypt arbitrary-length plaintext.
We can see that our construction satisfies correctness.
More formally, $\Dec_{\Univ}(\sk,\ct_m)$ outputs $m$ with high probability if $\sk\la\keygen_{\Univ}(1^\secp)$ and $\ct_m\la\Enc_{\Univ}(\sk,m)$.
From our construction, $\Dec_{\Univ}(\sk,\ct_m)$ outputs the output of $\RE.\Dec(\widehat{C[m]}_{\mathsf{off}}, \{\lab_i(0)\}_{i\in[\secp]},\allowbreak \{\lab_{i+\secp}(x[i])\}_{i\in[\secp]})$, where $\Big(\widehat{C[m]}_{\mathsf{off}}, \{\lab_i(0)\}_{i\in[\secp]},\allowbreak \{\lab_{i+\secp}(x[i])\}_{i\in[\secp]}\Big)\la\RE.\Enc(C,0^{\secp},x)$.
From the correctness of decomposable quantum randomized encoding, $\RE.\Dec(\widehat{C[m]}_{\mathsf{off}}, \{\lab_i(0)\}_{i\in[\secp]}, \{\lab_{i+\secp}(x[i])\}_{i\in[\secp]})$ outputs $C[m](0^{\secp},x)$, which is equal to $m$.

Furthermore, our construction $\Sigma_{\Univ}$ satisfies unclonable IND-CPA security as long as the underlying decomposable quantum randomized encoding $\Sigma_{\RE}$ satisfies security and there exists a one-time unclonable SKE for single-bit plaintexts.
To see this, we introduce some notations and observations.
We write $\Sigma_{\unc}=\Unc.(\keygen,\Enc,\Dec)$ to mean a one-time unclonable SKE for single-bit plaintexts, which we assume to exist.
Without loss of generality, we can assume that the secret key $\sk$ generated by $\Unc.\keygen(1^\secp)$ is uniformly randomly sampled and $\abs{\sk}=\abs{\ct}=\secp$ for all security parameters $\secp$.
Moreover, we can assume that for a security parameter $\secp$, $\Unc.\Dec(\sk,\ct)$ is a quantum algorithm that runs some quantum circuit $\Unc.\Dec_{\secp}$ on $\ct$ and $\sk$, and outputs its output.
We introduce a quantum circuit $D_\secp[m_0,m_1]$ that takes as input $\ct$ and $\sk$, and runs the quantum circuit $\Unc.\Dec_{\secp}$ on $\ct$ and $\sk$, obtains $b$ and outputs $m_b$.
The size of $C[m]$ is padded so that its size is equal to $D_\secp[m_0,m_1]$.

Now, we can see that our construction $\Sigma_{\Univ}$ satisfies one-time unclonable IND-CPA security.
In the first step of the proof,
 we switch the following real ciphertext for message $m_b$
\begin{align}
    \ct_b=\left(\widehat{C[m_b]}_{\mathsf{off}},\{\lab_i(0)\}_{i\in[\secp]}, \{\Lab.\ct[i,\beta]\}_{i\in[\secp],\beta\in\bit}\right)
\end{align}
to the following modified ciphertext
\begin{align}
    \widetilde{\ct_b}= \left(\widehat{D[m_0,m_1]}_{\mathsf{off}}, \{\lab_i(\unc.\ct_b[i])\}_{i\in[\secp]}, \{\Lab.\ct[i,\beta]\}_{i\in[\secp],\beta\in\bit}\right),
\end{align}
where $\unc.\ct_b\la\Unc.\Enc(x,b)$ and $\unc.\ct_b[i]$ is the $i$-th qubit of $\unc.\ct_b$ and $x\la\bit^{\secp}$.
This change does not affect the output of the security experiment because $\Sigma_\RE $ satisfies security and we have
\begin{align}
    D[m_0,m_1](\unc.\ct_b,x)=C[m_b](0^{\secp},x)=m_b.
\end{align}
In the next step, we can reduce the security of our construction $\Sigma_{\Univ}$ to that of one-time unclonable SKE for single-bit plaintexts $\Sigma_{\unc}$.
This is because the adversary $(\cA,\cB,\cC)$ of $\Sigma_\unc$ can simulate the challenger of $\Sigma_\Univ$ sicne $(\cA,\cB,\cC)$ can simulate $\widetilde{\ct_b}$ by using $\unc.\ct_b$.

\if0
Now, we can see that our construction $\Sigma_{\Univ}$ satisfies unclonable IND-CPA security.
For contradiction, assume that there exists a set of adversaries $(\cA,\cB,\cC)$ that breaks unclonable IND-CPA security of $\Sigma_{\Univ}$, then we can construct a set of adversaries $(\widetilde{\cA},\widetilde{\cB},\widetilde{\cC})$ that breaks unclonable IND-CPA security of $\Sigma_{\unc}$ as follows.
\begin{enumerate}
    \item $\widetilde{\cA}$ receives $(m_0,m_1)$ from $\cA$.
    \item $\widetilde{\cA}$ receives $\unc.\ct_b$ from the challenger of $\Sigma_{\unc}$, where $b$ is a uniformly randomly sampled bit, and $\unc.\ct_b\la\Unc.\Enc(x,b)$ with $x\la\bit^{\secp}$.
    \item $\widetilde{\cA}$ generates a quantum circuit $D[m_0,m_1]$ instead of $C[m_b]$.
    $\widetilde{\cA}$ generates
    \begin{align}
        \widetilde{\ct_b}\seteq \left(\widehat{D[m_0,m_1]}_{\mathsf{off}}, \{\lab_i(\unc.\ct_b[i])\}_{i\in[\secp]}, \{\Lab.\ct[i,\beta]\}_{i\in[\secp],\beta\in\bit}\right),
    \end{align}
    where $\Lab.\ct[i,\beta]=\lab_{i+\secp}(\beta)+R[i,\beta]$ and where $R[i,\beta]\la\bit^{\ell(\secp)}$.
    \item $\widetilde{\cA}$ runs $\cA$ on $\widetilde{\ct_b}$, and obtains $\rho_{\cB,\cC}$, and sends $\{R[i,\beta]\}_{i\in[\secp],\beta\in\bit}$ and the $\cB$ register (resp. $\cC$ register) to $\widetilde{\cB}$ (resp. $\widetilde{\cC}$). 
    \item $\widetilde{\cB}$ (resp. $\widetilde{\cC}$) receives $x$ from the challenger of $\Sigma_\Unc$, and sends $\{R[i,x]\}_{i\in[\secp]}$ and the $\cB$ register (resp. $\cC$ register) to $\cB$ (resp. $\cC$).
    \item $\cB$ and $\cC$ outputs $b_\cB$ and $b_\cC$, respectively.
    The experiment outputs $1$ if $b_\cB=b_\cC=b$, and outputs $0$ otherwise.
\end{enumerate}
From the construction of $(\widetilde{\cA},\widetilde{\cB},\widetilde{\cC})$, it simulates the challenger of unclonable IND-CPA security experiment of $\Sigma_{\Univ}$ except that
$(\widetilde{\cA},\widetilde{\cB},\widetilde{\cC})$ uses
\begin{align}
    \widetilde{\ct_b}= \left(\widehat{D[m_0,m_1]}_{\mathsf{off}}, \{\lab_i(\unc.\ct_b[i])\}_{i\in[q(\secp)]}, \{\Lab.\ct[i,\beta]\}_{i\in[\secp],\beta\in\bit}\right)
\end{align}
instead of 
\begin{align}
\ct_b=\left(\widehat{C[m_b]}_{\mathsf{off}},\{\lab_i(0)\}_{i\in[q(\secp)]}, \{\Lab.\ct[i,\beta]\}_{i\in[\secp],\beta\in\bit}\right).
\end{align}
However, the difference does not affect the experiment outcome.
This is because we have
\begin{align}
    D[m_0,m_1](\unc.\ct_b,x)=C[m_b](0^{q(\secp)},x)=m_b
\end{align}
and thus no QPT adversaries can distinguish $\widetilde{\ct_b}$ from $\ct_b$ from the security of $\Sigma_{\RE}$.
Therefore, if $(\cA,\cB,\cC)$ breaks the unclonable IND-CPA security of our construction $\Sigma_{\Univ}$, we can break the unclonable IND-CPA security of $\Sigma_{\unc}$, which is a contradiction.
Therefore, our construction satisfies unclonable IND-CPA security.
\fi

\section{Preliminaries}\label{sec:pre}
\subsection{Notations}

Here we introduce basic notations we will use in this paper.
$x\la X$ denotes selecting an element $x$ from a finite set $X$ uniformly at random, and $y \la \cA(x)$ denotes assigning to $y$ the output of a quantum or probabilistic or deterministic algorithm $\cA$ on an input $x$.
When we explicitly write that $\cA$  uses randomness $r$, we write $y \la \cA(x;r)$.
Let $[n]\seteq \{1,\cdots,n\}$.
For $x\in\bit^n$ and $i\in[n]$, $x_i$ and $x[i]$ are the $i$-th bit value of $x$.
For an $n$-qubit state $\rho$ and $i\in[n]$, we write $\rho_i$ and $\rho[i]$ to mean a quantum state that traces out all states other than the $i$-th qubit of $\rho$.
QPT stands for quantum polynomial time. 
A function $f : \N \ra \R$ is a negligible function if, for any constant $c$, there exists $\secp_0 \in \N$ such that for any $\secp>\secp_0$, $f(\secp) < 1/\secp^c$.
We write $f(\secp) \leq \negl(\secp)$ to denote $f(\secp)$ being a negligible function.

For simplicity, we often write $\ket{0}$ to mean $\ket{0\cdots 0}$.
For any two quantum states $\rho_1$ and $\rho_2$, $F(\rho_1,\rho_2)$ is the fidelity between them, and $\mathsf{TD}(\rho_1,\rho_2)$ is the trace distance between them.

For a quantum algorithm $\cA$, and quantum states $\rho$ and $\sigma$, we say that $\cA$ distinguishes $\rho$ from $\sigma$ with advantage $\Delta$ if
\begin{align}
    \abs{\Pr[1\la\cA(\rho)]-\Pr[1\la\cA(\sigma)]}=\Delta.
\end{align}
We say that $\rho$ is $c$-computationally indistinguishable (resp. $c$-statistically indistinguishable) from $\sigma$ if no QPT algorithms (resp. unbounded algorithms) can distinguish $\rho$ from $\sigma$ with advantage greater than $c$.

\if0
\paragraph{Circuits Topology}
A quantum circuit topology is a tuple of $\cT\seteq(\cB,\cI,\cO,\cW,\cZ,\cF)$ such that
\begin{enumerate}
    \item $G_\cT=(\cB,\cI,\cO,\cW)$ forms a directed acyclic graph, where the vertex set consists of the union of disjoint sets $\cB$, $\cI$, $\cO$, and the edge set is $\cW$.
    \item The set of $\cW$ are called wires of $\cT$.
    \item The set $\cI$ is called a set of input vertex that has in-degree $0$, and out-degree $1$.
    \item The set of $\cO$ is called a set of output vertex that has in-degree $1$, and out-degree $0$.
    \item The set of $\cB$ is called a set of gate vertex, where in-degree and out-degree are equal for all $g\in\cB$.
    \item The set $\cZ$ is a subset of $\cI$ that fixes to zero qubits. (i.e. auxiliary qubits initialized to $\ket{0}$).
    \item The set $\cF$ is a subset of $\cF$ that is discarded. (i.e. output qubits to trace out).
\end{enumerate}
\fi

\paragraph{Quantum Circuits}
For convenience,
we assume that all quantum circuits use gates from the universal gate set $\{I,H, CNOT, T\}$. 
A unitary quantum circuit is one that consists only of gates from this gate set.
A general quantum circuit is a quantum circuit that can additionally have non-unitary gates that (a) introduce new qubits initialized in the zero state, (b) trace them out, or (c) measure them in the computational basis.
We say that a general quantum circuit has size $s$ if the total number of gates is at most $s$.

\begin{definition}[Uniform Quantum Polynomial Time Algorithm]
    We say that an algorithm $\cA$ is a uniform quantum polynomial time (QPT) algorithm if $\cA$ works as follows:
    For any pair of classical and quantum input $(x,\rho)$, $\cA$ runs some deterministic classical polynomial-time Turing machine $\cM$ on $(x,\abs{\rho})$, and obtains a general quantum circuit $C_{x,\abs{\rho}}$ within $\poly(\abs{x},\abs{\rho})$ steps, and outputs the output of $C_{x,\abs{\rho}}(\rho)$.

    We say that the sequence of unitaries $\{U_\secp\}_{\secp\in\N}$ is a uniform QPT unitary if $U_\secp$ is the output of $\cM(1^\secp)$ for all $\secp\in\N$, where $\cM$ is a classical Turing machine that halts within $\poly(\secp)$ steps for any input $\secp\in\N$.
\end{definition}

\begin{remark}
    We consider many algorithms as uniform QPT algorithms, and thus an algorithm $\mathsf{Alg}$ is represented as a classical Turing machine that generates general quantum circuits.
    If $x\in\bit^*$ is a classical Turing machine that represents $\mathsf{Alg}$, then we sometimes explicitly write $\mathsf{Alg}[x]$.
\end{remark}

\begin{definition}[Non-Uniform Quantum Polynomial Time Algorithm]
    We say that an algorithm $\cA$ is a non-uniform quantum polynomial time algorithm if $\cA$ works as follows:
    For any pair of classical and quantum input $(x,\rho)$, $\cA$ runs a general quantum circuit $C$ with size $\poly(\abs{x},\abs{\rho})$ on $(x,\rho)$ and a quantum advice $\psi$ with size $\poly(\abs{x},\abs{\rho})$, and outputs its output.
\end{definition}

\begin{remark}
    Throughout this work, we model adversaries as non-uniform QPT algorithms.
    Note that all results except for \cref{Sec:expanstion_UE} hold in the uniform adversary setting with appropriate modifications.
\end{remark}

\if0
\color{red}
\paragraph{On Padding Quantum Circuits}
\begin{proposition}
    Let $\cG\seteq\{I,H,T,CNOT\}$, and let $\secp\in\N$ and $p$ be some polynomial.
    
    There exists a general quantum circuit $U_{\secp,p}$ with depth-$\poly(p(\secp))$ and size-$\poly(p(\secp))$ that takes as input $\poly(p)$-length $c$ classical string representing quantum circuit $C$, which takes as input
    and $p(\secp)$ quantum input $\rho$ such that $C(\rho)=U(c,\rho)$.
\end{proposition}
\color{black}
\fi

\paragraph{Other Notions:}
\if0
\begin{lemma}\label{lem:differ}
Let $A$, $B$, $F$ be events defined in some probability distribution, and suppose
$\Pr[A\wedge \overline{F}]=\Pr[B\wedge \overline{F}]$.
Then, $\abs{\Pr[A]-\Pr[B]}\leq \Pr[F]$.
\end{lemma}
\fi

\begin{lemma}[Gentle Measurement Lemma]\label{lem:gentle}
 Let $\rho$ be a mixed state, and let $E$ be a measurement operator.
 Suppose that $\Tr(E\rho)\geq 1-\epsilon$, where $0<\epsilon\leq 1$.
 Then, the post-measurement quantum state $\rho'\seteq \frac{\sqrt{E}\rho\sqrt{E}}{\Tr(E\rho)}$ satisfies:
 \begin{align}
     \norm{\rho-\rho'}_1\leq 2\sqrt{\epsilon}.
 \end{align}
\end{lemma}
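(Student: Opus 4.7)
My plan is to reduce to the case of a pure state via purification and then apply the Fuchs--van de Graaf identity for pure states, using monotonicity of the trace distance under partial trace. This is the standard Winter-style argument, and the main work is in bounding an inner product via operator inequalities.

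First, I would introduce an auxiliary system $\mathbf{B}$ and a purification $\ket{\Psi}_{\mathbf{AB}}$ of $\rho$ so that $\Tr_{\mathbf{B}}(\ket{\Psi}\bra{\Psi}) = \rho$. Define
\begin{align}
\ket{\Psi'}_{\mathbf{AB}} \seteq \frac{(\sqrt{E}\otimes I)\ket{\Psi}_{\mathbf{AB}}}{\sqrt{\Tr(E\rho)}}.
\end{align}
Since $\sqrt{E}$ is Hermitian, tracing out $\mathbf{B}$ from $\ket{\Psi'}\bra{\Psi'}$ yields exactly $\rho' = \sqrt{E}\rho\sqrt{E}/\Tr(E\rho)$, so $\ket{\Psi'}$ is a purification of $\rho'$. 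By monotonicity of the trace distance under partial trace, it therefore suffices to show $\norm{\ket{\Psi}\bra{\Psi} - \ket{\Psi'}\bra{\Psi'}}_1 \leq 2\sqrt{\epsilon}$.

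Second, I would compute the inner product and lower-bound it. A direct calculation gives
\begin{align}
\innerp{\Psi}{\Psi'} = \frac{\langle\Psi|(\sqrt{E}\otimes I)|\Psi\rangle}{\sqrt{\Tr(E\rho)}} = \frac{\Tr(\sqrt{E}\rho)}{\sqrt{\Tr(E\rho)}}.
\end{align}
The key operator inequality is that $0 \leq E \leq I$ implies $\sqrt{E} \geq E$ in the Loewner order (apply functional calculus: for every eigenvalue $\lambda \in [0,1]$ of $E$, $\sqrt{\lambda} \geq \lambda$). Hence $\Tr(\sqrt{E}\rho) \geq \Tr(E\rho) \geq 1-\epsilon$, giving $|\innerp{\Psi}{\Psi'}|^2 \geq \Tr(E\rho) \geq 1-\epsilon$.

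Finally, I would invoke the pure-state case of Fuchs--van de Graaf, which is an identity rather than an inequality: for any two unit vectors $\ket{\phi},\ket{\phi'}$, $\norm{\ket{\phi}\bra{\phi}-\ket{\phi'}\bra{\phi'}}_1 = 2\sqrt{1-|\innerp{\phi}{\phi'}|^2}$. Applying this to $\ket{\Psi},\ket{\Psi'}$ and combining with the bound on the inner product yields $\norm{\ket{\Psi}\bra{\Psi}-\ket{\Psi'}\bra{\Psi'}}_1 \leq 2\sqrt{\epsilon}$, and monotonicity then transports this to $\norm{\rho-\rho'}_1 \leq 2\sqrt{\epsilon}$. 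No single step is really an obstacle, but the one place a careless proof could go wrong is the operator inequality $\sqrt{E} \geq E$; without it one only gets the weaker bound $|\innerp{\Psi}{\Psi'}| \geq \Tr(E\rho)$, which leads to the suboptimal $2\sqrt{\epsilon(2-\epsilon)}$ rather than $2\sqrt{\epsilon}$.
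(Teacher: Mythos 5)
Your proof is correct. The paper states \cref{lem:gentle} as a standard fact from the literature and does not include a proof of its own, so there is nothing to compare against; for the record, the purification-plus-monotonicity argument you give is the standard one, and your use of the operator inequality $\sqrt{E}\geq E$ (valid since $0\leq E\leq I$) is exactly what gets the sharp constant $2\sqrt{\epsilon}$ rather than the weaker bound in Winter's original statement. Every step checks out: $\ket{\Psi'}$ is a normalized purification of $\rho'$, the inner product $\innerp{\Psi}{\Psi'}=\Tr(\sqrt{E}\rho)/\sqrt{\Tr(E\rho)}$ is real and nonnegative so taking $|\cdot|^2$ is harmless, and the pure-state trace-norm identity $\norm{\ket{\phi}\bra{\phi}-\ket{\phi'}\bra{\phi'}}_1=2\sqrt{1-|\innerp{\phi}{\phi'}|^2}$ together with monotonicity of the trace norm under the partial trace closes the argument.
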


\begin{theorem}[Uhlmann's Theorem]\label{lem:uhlmann}
    Let $\ket{\psi}_{\mathbf{C,R}}$ and $\ket{\phi}_{\mathbf{C,R}}$ be quantum states over the $\mathbf{C}$ and $\mathbf{R}$ registers.
    Then, for any unitary $U_{\mathbf{R}}$ acting over $\mathbf{R}$ register, we have
    \begin{align}
        F(\rho,\sigma)=\abs{\bra{\psi}_{\mathbf{C,R}}(I_{\mathbf{C}}\otimes  U_{\mathbf{R}})\ket{\phi}_{\mathbf{C,R}}}^2,
    \end{align}
    where $\rho=\Tr_{\mathbf{R}}(\ket{\psi}\bra{\psi}_{\mathbf{C,R}})$ and $\sigma=\Tr_{\mathbf{R}}(\ket{\phi}\bra{\phi}_{\mathbf{C,R}})$.
\end{theorem}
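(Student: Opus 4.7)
The plan is to proceed via the canonical purification and polar decomposition, the standard textbook route. First I would reduce to the case where $\mathbf{R}$ has dimension at least that of $\mathbf{C}$ (which is implicit in the setup), and observe that any purification of a density matrix $\rho$ on $\mathbf{C}$ may be written as $(\sqrt{\rho}_{\mathbf{C}}\otimes V_{\mathbf{R}})\ket{\Omega}_{\mathbf{C},\mathbf{R}}$, where $\ket{\Omega}_{\mathbf{C},\mathbf{R}}=\sum_i \ket{i}_{\mathbf{C}}\ket{i}_{\mathbf{R}}$ is the (unnormalized) maximally entangled state and $V$ is some isometry/unitary on $\mathbf{R}$. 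Accordingly, I would write $\ket{\psi}_{\mathbf{C},\mathbf{R}}=(\sqrt{\rho}\otimes V)\ket{\Omega}$ and $\ket{\phi}_{\mathbf{C},\mathbf{R}}=(\sqrt{\sigma}\otimes W)\ket{\Omega}$ for suitable unitaries $V,W$ on $\mathbf{R}$.

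Next I would compute the inner product using the ``ricochet'' identity $\langle\Omega|A\otimes B|\Omega\rangle=\Tr(A B^{\top})$. Concretely,
\begin{align}
\bra{\psi}(I_{\mathbf{C}}\otimes U_{\mathbf{R}})\ket{\phi}
&=\bra{\Omega}\bigl(\sqrt{\rho}\sqrt{\sigma}\otimes V^{\dagger}UW\bigr)\ket{\Omega}
=\Tr\!\left(\sqrt{\rho}\sqrt{\sigma}\cdot X\right),
\end{align}
where $X\defeq (V^{\dagger}UW)^{\top}$ is a unitary on a space of the appropriate dimension. Thus proving Uhlmann's equality reduces to understanding $|\Tr(\sqrt{\rho}\sqrt{\sigma}\cdot X)|$ as $X$ ranges over unitaries (equivalently, as $U$ ranges over unitaries on $\mathbf{R}$, since $V,W$ are fixed by the choices of purifications).

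From the polar decomposition $\sqrt{\rho}\sqrt{\sigma}=|\sqrt{\sigma}\sqrt{\rho}|\,P$ for a partial isometry $P$, one obtains the general bound $|\Tr(\sqrt{\rho}\sqrt{\sigma}\,X)|\le \|\sqrt{\rho}\sqrt{\sigma}\|_{1}=\Tr\!\sqrt{\sqrt{\sigma}\rho\sqrt{\sigma}}=\sqrt{F(\rho,\sigma)}$, with equality when $X=P^{\dagger}$ (times a phase). Squaring yields the claimed identity. I would also note that the freedom to choose $U$ is precisely the freedom to choose $X$, so the unitary $U$ that witnesses equality is the one corresponding (through $V$ and $W$) to the polar-decomposition unitary above. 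Hence the statement should be read as the existence of an optimal $U$ realizing the equation; for arbitrary $U$ the right-hand side is upper-bounded by $F(\rho,\sigma)$.

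The main obstacle is really just a presentational one: the theorem as typeset asserts the equality ``for any unitary $U_{\mathbf{R}}$,'' whereas the correct content is the maximum characterization $F(\rho,\sigma)=\max_{U_{\mathbf{R}}}|\bra{\psi}(I_{\mathbf{C}}\otimes U_{\mathbf{R}})\ket{\phi}|^{2}$, with the maximum attained. I would handle this by proving the bound $|\bra{\psi}(I\otimes U)\ket{\phi}|^{2}\le F(\rho,\sigma)$ for \emph{every} $U$ and exhibiting an explicit $U^{\ast}$ (pulled back from the polar-decomposition unitary of $\sqrt{\sigma}\sqrt{\rho}$) for which equality holds. The routine steps are the application of the ricochet identity and the polar decomposition; the only care needed is in ensuring the dimensions of $\mathbf{R}$ are large enough to support a genuine unitary $X$ (rather than a partial isometry), which follows from the assumption that $\ket{\psi},\ket{\phi}$ are purifications.
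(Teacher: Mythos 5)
Your proof is the standard one and it is correct; the paper states Uhlmann's theorem without proof in the preliminaries, so there is no internal proof to compare against. Your use of the canonical purification $(\sqrt{\rho}\otimes V)\ket{\Omega}$, the ricochet identity $\bra{\Omega}A\otimes B\ket{\Omega}=\Tr(AB^{\top})$, and the polar decomposition to reduce to the trace-norm characterization $\|\sqrt{\rho}\sqrt{\sigma}\|_{1}=\sqrt{F(\rho,\sigma)}$ are all in order, and the dimension caveat you flag is handled correctly by the purification assumption.

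More importantly, your observation about the phrasing is on point: as typeset, ``for any unitary $U_{\mathbf{R}}$'' is wrong, and the correct reading is $F(\rho,\sigma)=\max_{U_{\mathbf{R}}}\abs{\bra{\psi}(I_{\mathbf{C}}\otimes U_{\mathbf{R}})\ket{\phi}}^{2}$. The paper in fact implicitly uses both directions of the max characterization: in the proof of Lemma 5.5 it invokes the \emph{existence} of an optimizing $U$ (``there exists some unitary $U_{\widetilde{\mathbf{R}}}$ such that $F(\widetilde{\rho_0},\widetilde{\rho_1})=\abs{\cdots}^{2}$''), while in the proof of Lemma 6.5 it uses the \emph{upper bound} for all $U$ (small fidelity implies the binding norm is small for every adversarial unitary). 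Your two-part plan — prove $\abs{\bra{\psi}(I\otimes U)\ket{\phi}}^{2}\le F(\rho,\sigma)$ for every $U$, then exhibit the $U^{\ast}$ pulled back from the polar unitary of $\sqrt{\sigma}\sqrt{\rho}$ — is exactly what is needed to support both of those uses. No gaps.
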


\if0
\takashi{I guess we don't need to explicitly define universal circuits.}
\begin{definition}[Universal Quantum Circuit]\label{def:universal_circuit}
A family of quantum circuits $\{U_{n,d}\}_{n,d\in\N}$ is universal if \takashi{I don't see why we need to parameterize $U$ by $d$ even though we don't require the depth universality. Instead, I think this should be parameterized by the size of the circuit.}
    \begin{enumerate}
        \item $U_{n,d}$ takes $n+q(n,d)$-length quantum states as input for some polynomial $q$.
        \item Let $\cC_{n,d}$ be a family of quantum circuit \takashi{circuits} that acts on $n$-qubits, has depth $d$, and consists of gates from $\{I,H,CNOT,T\}$.
        For all circuit $C\in\cC_{n,d}$, there exists a string $y\in\bit^{q(n,d)}$ such that
        \begin{align}
            U_{n,d}(\ket{\psi}\otimes \ket{y})=C\ket{\psi}\otimes \ket{y}
        \end{align}
        for any $n$-qubit state $\ket{\psi}$. \takashi{I'm not sure if this is good enough since this doesn't require that $y$ is efficiently computable from the description of $C$. Can't we simply use the description of $C$ instead of $y$?}
        \item $U_{n,d}$ only uses gates drawn from $\{I,H,CNOT,T\}$.
        \item The size and the depth of $U_{n,d}$ is at most $\poly(n,d)$. \takashi{The depth requirement looks redundant since the depth must be smaller than the size.}
    \end{enumerate}
\end{definition}
\begin{theorem}~\cite{BFGH09}
    There exists a universal quantum circuit $\{U_{n,d}\}_{n,d\in\N}$.
\end{theorem}
\fi

\if0
\begin{lemma}\label{lem:hoefd}
    Let $X_i\in\bit$ be a two-outcome independent random variable,
    and let $S_n\seteq\sum_{i\in[n]} X_i$.
    Then, we have
    \begin{align}
        \Pr[\abs{S_n-\Exp[S_n]}\geq t]\leq 2\exp(-2t^2/n).
    \end{align}
\end{lemma}
\fi

\subsection{Cryptographic Tools}
In this section, we introduce cryptographic tools which we will use.
\paragraph{One-Way State Generators.}
In this work, we consider the mixed-state output version of one-way state generators introduced in \cite{Eprint:MY22}.

\begin{definition}[One-way state generators(OWSGs)]\label{def:OWSG}
A one-way state generator (OWSG) candidate is a set of algorithms $\Sigma\seteq(\keygen,\StateGen,\Vrfy)$ such that:
\begin{itemize}
    \item[$\keygen(1^\secp)$:] It takes a security parameter $1^\secp$, and outputs a classical string $k$.
    \item[$\StateGen(1^\secp,k)$:] It takes a security parameter $1^\secp$ and $k$, and outputs a quantum state $\psi_k$.
    \item[$\Vrfy(1^\secp,k,\psi_k)$:]It takes a security parameter $1^\secp$, $k$ and $\psi_k$, and outputs $\top$ or $\bot$. 
\end{itemize}
We say that a candidate $\Sigma$ is a OWSG scheme if $\Sigma$ satisfies the following efficiency, correctness, and security properties.
\paragraph{Efficiency.}
The algorithms $(\keygen,\StateGen,\Vrfy)$ are uniform QPT algorithms.
\paragraph{Correctness.}
We have
\begin{align}
\Pr[\top\la\Vrfy(1^\secp,k,\psi_k):k\la\keygen(1^\secp),\psi_k\la\StateGen(1^\secp,\psi_k)]\geq 1-\negl(\secp).
\end{align}

\paragraph{Security.}
For any non-uniform QPT algorithm $\cA$ and any polynomial $t(\cdot)$,
\begin{align}
\Pr[\top\la\Vrfy(1^\secp,k^*,\psi_k):k\la\keygen(1^\secp),\psi_k\la\StateGen(1^\secp,k),k^*\la\cA(\psi_k^{\otimes t(\secp)})]\leq \negl(\secp).
\end{align}
\end{definition}
\begin{remark}
    If a OWSG scheme $(\keygen,\StateGen,\Vrfy)$ satisfies
\begin{align}
\Pr[\top\la\Vrfy(1^\secp,k,\psi_k):k\la\keygen(1^\secp),\psi_k\la\StateGen(1^\secp,\psi_k)]=1    
\end{align}
for all security parameters $\secp\in\N$, then we say that the OWSG scheme satisfies perfect correctness.
\end{remark}

\paragraph{Public-Key Quantum Money Mini-Scheme.}
In this work, we consider public-key quantum money mini-scheme.
\begin{definition}[Public-Key Quantum Money Mini-Scheme~\cite{STOC:AC12}]\label{def:money}
    A public-key quantum money mini-scheme candidate is a set of algorithms $\Sigma\seteq(\Mint,\Vrfy)$ such that:
\begin{itemize}
    \item[$\Mint(1^\secp)$:] It takes a security parameter $1^\secp$, and outputs a serial number $s$ and a quantum state $\rho_s$.
    \item[$\Vrfy(1^\secp,s,\rho_s)$:] It takes a security parameter $1^\secp$, $s$, and $\rho_s$, and outputs $\top$ or $\bot$.
\end{itemize}

We say that a candidate $\Sigma$ is a public-key quantum money mini-scheme if it satisfies the following efficiency, correctness, and security properties.

\paragraph{Efficiency.}
The algorithms $(\Mint,\Vrfy)$ are uniform QPT algorithms.

\paragraph{Correctness.}
We have
\begin{align}
    \Pr[\top\la\Vrfy(1^\secp,s,\rho_s):(s,\rho_s)\la\Mint(1^\secp)]\geq 1-\negl(\secp).
\end{align}

\paragraph{Security.}
Given a public-key quantum money mini-scheme $\Sigma$, we consider the security experiment $\mathsf{Exp}_{\Sigma,\cA}^{\mathsf{unc}}(\secp)$ against $\cA$.
\begin{enumerate}
    \item The challenger first runs $(s,\rho_s)\la\Mint(1^\secp)$, and sends $(s,\rho_s)$ to $\cA$.
    \item $\cA$ outputs $\sigma_{R[1],R[2]}$ over the $R[1]$ register and $R[2]$ register, and sends it to the challenger.
    \item For $i\in\{1,2\}$, the challenger runs $\Vrfy(s,\cdot)$ on the $R[i]$ register and obtains $b[i]$.
    \item The experiment outputs $1$ if $b[1]=b[2]=\top$.
\end{enumerate}
We say that $\Sigma$ satisfies security if for all non-uniform QPT adversaries $\cA$, we have
\begin{align}
    \Pr[\mathsf{Exp}_{\Sigma,\cA}^{\mathsf{unc}}(\secp)=1]\leq \negl(\secp).
\end{align}

\if0
\paragraph{Security.}
For any non-uniform QPT algorithm $\cA$ and any polynomial $t(\cdot)$,
\begin{align}
    \Pr[\top\la\Vrfy(1^\secp,s,\sigma[1])\wedge \top\la\Vrfy(1^\secp,s,\sigma[2]) :(s,\rho_s)\la\Mint(1^\secp),\sigma\la\cA(\rho_s)]\leq\negl(\secp),
\end{align}
where $\sigma$ is a quantum state on $2$ registers, $R_1,R_2$ each of which is of $\abs{\rho_s}$ qubits and where $\sigma[1]\seteq \Tr_{R[2]}(\sigma)$ and $\sigma\seteq \Tr_{R[1]}(\sigma)$ .
\fi
\end{definition}
\if0
\begin{remark}
    If a public-key quantum money mini-scheme $(\Mint,\Vrfy)$ satisfies
\begin{align}
\Pr[\top\la\Vrfy(1^\secp,s,\rho_k):(s,\rho_s)\la\Mint(1^\secp)]=1    
\end{align}
for all security parameters $\secp\in\N$, then we say that the public-key quantum money mini-scheme satisfies perfect correctness.
\end{remark}
\fi

We note that a public-key quantum money mini-scheme can be upgraded into a full-fledged public-key quantum money additionally using standard digital signatures~\cite{STOC:AC12}.

\if0
\begin{remark}
    Note that, throughout this work, we consider public-key quantum money mini-scheme $\Sigma\seteq(\Mint,\Vrfy)$ as uniform QPT algorithms, and thus $\Mint$ and $\Vrfy$ are represented as classical Turing machines that generate general quantum circuits.
    If $x\in\bit^*$, and $y\in\bit^*$ are classical Turing machines which represent, $\Mint$, and $\Vrfy$, respectively, then we sometimes explicitly write $\Mint[x]$, and $\Vrfy[y]$.
\end{remark}
\fi

\paragraph{Canonical Quantum Bit Commitment.} 
\begin{definition}[Canonical Quantum Bit Commitment~\cite{Asia:Yan22}]\label{def:quantum_commitment}
    A candidate for canonical quantum bit commitment is a set of uniform QPT unitaries $\{Q_0(\secp),Q_1(\secp) \}_{\secp\in\N}$ acting on the register $\mathbf{C}$ and $\mathbf{R}$.
    We consider the following two properties.
\paragraph{Hiding.}
    We say that a candidate for canonical quantum bit commitment $\{Q_0(\secp),Q_1(\secp)\}_{\secp\in\N}$ satisfies $c$-statistical hiding (resp. $c$-computational hiding) if $\Tr_{\mathbf{R}}(Q_0(\secp)\ket{0}_{\mathbf{CR}})$ is $c$-statistically indistinguishable (resp. $c$-computationally indistinguishable) from $\Tr_{\mathbf{R}} (Q_1(\secp)\ket{0}_{\mathbf{CR}})$ for all sufficiently large $\secp\in\N$.
    
    If a candidate for canonical quantum bit commitment satisfies $\negl(\secp)$-statistical hiding (resp. $\negl(\secp)$-computational hiding), then we say that the candidate satisfies statistical hiding (resp. computational hiding).

\paragraph{Binding.}
    We say that a candidate for canonical quantum bit commitment $\{Q_0(\secp),Q_1(\secp)\}_{\secp\in\N}$ satisfies $c$-statistical binding (resp. $c$-computational binding) if for all sufficiently large security parameters $\secp\in\N$, any unbounded-time (resp. QPT) unitary $U$ over $\mathbf{R}$ and an additional register $\mathbf{Z}$ and any polynomial-size $\ket{\tau}$, it holds that
    \begin{align}
        \norm{(\bra{0}Q_1^{\dagger}(\secp))_{\mathbf{C,R}}(I_{\mathbf{C}}\otimes U_{\mathbf{R},\mathbf{Z}})((Q_0(\secp)\ket{0}_{\mathbf{C,R}})\ket{\tau}_{\mathbf{Z}})}\leq c.
    \end{align} 

    If a candidate for canonical quantum bit commitment satisfies $\negl(\secp)$-statistical binding (resp. $\negl(\secp)$-computational binding), then we say that the candidate satisfies statistical binding (resp. computational binding).
\end{definition}

It was shown that we can convert the flavor of quantum bit commitment as follows.
\begin{lemma}[Converting Flavors:\cite{EC:HMY23}]\label{lem:converting_flavor}
Let $\{Q_0(\secp),Q_1(\secp)\}_{\secp\in\N}$ be a candidate of canonical quantum bit commitment.
Let $\{\widetilde{Q_{0}}(\secp),\widetilde{Q_{1}}(\secp)\}_{\secp\in\N}$ be a candidate of canonical quantum bit commitment described as follows:
\begin{itemize}
    \item The role of commitment and reveal registers are swapped from $(Q_{0}(\secp),Q_{1}(\secp))$ and the commitment register is augmented by an additional one-qubit register which we denote $\mathbf{D}$.
    In other words, if $\mathbf{C}$ and $\mathbf{R}$ are the commitment and reveal registers of $(Q_{0}(\secp),Q_{1}(\secp))$, then the commitment and reveal registers of $(\widetilde{Q_{0}}(\secp),\widetilde{Q_{1}}(\secp))$ are defined as $\widetilde{\mathbf{C}}\seteq (\mathbf{R},\mathbf{D})$ and $\widetilde{\mathbf{R}}\seteq \mathbf{C}$, where $\mathbf{D}$ is an additional one-qubit register.
    \item For $b\in\bit$, the unitary $\widetilde{Q_b}(\secp)$ is defined as follows:
    \begin{align}
        \widetilde{Q_{b}}(\secp)\seteq \left(Q_{0}(\secp)\otimes \ket{0}\bra{0}_{\mathbf{D}}+Q_{1}(\secp)\otimes \ket{1}\bra{1}_{\mathbf{D}}\right)\left(I_{\mathbf{RC}}\otimes Z^b_{\mathbf{D}}H_{\mathbf{D}}\right).
    \end{align}
\end{itemize}
The following holds for $X,Y\in\{\mbox{statistical, computational}\}$.
\begin{enumerate}
\item 
If $\{Q_{0}(\secp),Q_{1}(\secp)\}_{\secp\in\N}$ satisfies $c$-$X$ hiding, then $\{\widetilde{Q_{0}}(\secp),\widetilde{Q_{1}}(\secp)\}_{\secp\in\N}$ satisfies $\sqrt{c}$-$X$ binding.
\item
If $\{Q_{0}(\secp),Q_{1}(\secp)\}_{\secp\in\N}$ satisfies $\negl(\secp)$-$Y$ binding, then $\{\widetilde{Q_{0}}(\secp),\widetilde{Q_{1}}(\secp)\}_{\secp\in\N}$ satisfies $\negl(\secp)$-$Y$ hiding.
\end{enumerate}
\end{lemma}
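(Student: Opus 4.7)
The plan is to compute the canonical purifications of $\widetilde{\rho_b}$ explicitly and then handle each direction separately. Since $Z^b H\ket{0}_{\mathbf D} = \frac{1}{\sqrt{2}}(\ket{0}_{\mathbf D} + (-1)^b \ket{1}_{\mathbf D})$, one obtains
$$\widetilde{Q_b}\ket{0}_{\widetilde{\mathbf C}\widetilde{\mathbf R}} = \tfrac{1}{\sqrt{2}}\left(\ket{\psi_0}_{\mathbf{CR}}\ket{0}_{\mathbf D} + (-1)^b \ket{\psi_1}_{\mathbf{CR}}\ket{1}_{\mathbf D}\right),$$
where $\ket{\psi_b} := Q_b\ket{0}_{\mathbf{CR}}$. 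Hence $\widetilde{Q_0}\ket{0}$ and $\widetilde{Q_1}\ket{0}$ differ only by a $Z$ on $\mathbf D$; but $\mathbf D$ lies inside $\widetilde{\mathbf C}$ while $\widetilde{\mathbf R} = \mathbf C$, so a binding attacker on the converted scheme can only hope to simulate $Z_{\mathbf D}$ through operations on $\mathbf C$ --- this asymmetry drives both directions.

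For Part~1, by Uhlmann's theorem (\cref{lem:uhlmann}) the binding advantage equals $\sqrt{F(\widetilde{\rho_0},\widetilde{\rho_1})}$, and plugging in the formula above (using that $\mathbf D$ is untouched) collapses this quantity to
$$\max_{U,\tau}\ \tfrac{1}{2}\left\|\bra{\psi_0}U_{\mathbf{CE}}\ket{\psi_0}\ket{\tau}_{\mathbf E} - \bra{\psi_1}U_{\mathbf{CE}}\ket{\psi_1}\ket{\tau}_{\mathbf E}\right\|,$$
where each inner expression is an $\mathbf E$-valued vector of the form $\mathrm{Tr}_{\mathbf C}(M\rho_b^{\mathbf C})$ for an operator $M = M(\tau,U)$ of operator norm at most one. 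In the statistical case, a H\"older-type bound then controls the difference by $\|\rho_0^{\mathbf C} - \rho_1^{\mathbf C}\|_1 \le 2c$, yielding $\sqrt{F(\widetilde{\rho_0},\widetilde{\rho_1})} \le \sqrt{c}$ after taking the square root in the binding definition. In the computational case the same identity is used as a reduction: any efficient binding adversary $(U,\tau)$ with advantage $>\sqrt{c}$ can be turned, via a Hadamard-test gadget that estimates each expectation in superposition using the uniform QPT description of $Q_0,Q_1$, into a QPT distinguisher of advantage $>c$ between $\rho_0^{\mathbf C}$ and $\rho_1^{\mathbf C}$, contradicting $c$-computational hiding.

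For Part~2 I would argue the contrapositive. The reduced states $\widetilde{\rho_0}$ and $\widetilde{\rho_1}$ differ only in the sign of their $\mathbf D$-off-diagonal blocks $\pm\mathrm{Tr}_{\mathbf C}\ket{\psi_0}\bra{\psi_1}$, so a distinguisher between them with non-negligible advantage essentially extracts a $Z_{\mathbf D}$-like signal from the $\mathbf C$-correlations. A standard Hadamard-test / amplification construction promotes such a distinguisher to a coherent unitary on $\mathbf C$ plus ancilla that implements $Z_{\mathbf D}$ conditioned on $\mathbf C$ up to negligible error; applying this unitary to $\widetilde{Q_0}\ket{0}$ produces a state close to $\widetilde{Q_1}\ket{0}$, and restricting attention to the $\mathbf C$-part (which is the new reveal register) yields a unitary on $\mathbf C$ that maps $Q_0\ket{0}_{\mathbf{CR}}$ close to $Q_1\ket{0}_{\mathbf{CR}}$, i.e., a binding attack contradicting $\negl$-binding of the original. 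The statistical case runs the same analysis without the efficiency constraint.

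The hardest part will be the extraction in the computational reduction of Part~1: the adversary provides a single unitary $U$, yet the reduction needs useful numerical access to the complex-valued expectations $\bra{\psi_b}U\ket{\psi_b}\ket{\tau}$ while preserving the $\sqrt{c}$-versus-$c$ advantage gap. The natural tool is a controlled-$U$ Hadamard test on coherent preparations $Q_b\ket{0}$, but care must be taken that the sampling error does not eat the gap --- this is why one has to exploit the uniformity of $Q_0,Q_1$ and avoid any quadratic loss beyond the one already implicit in the $\sqrt{c}\to c$ conversion.
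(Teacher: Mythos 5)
Your statistical argument for Part~1 is correct, and in fact tighter than the lemma requires: with no ancilla Uhlmann gives $F(\widetilde{\rho_0},\widetilde{\rho_1})=\max_{U_{\mathbf C}}\tfrac14\abs{\Tr\bigl(U_{\mathbf C}(\rho_0-\rho_1)\bigr)}^2\le\tfrac14\norm{\rho_0-\rho_1}_1^2\le c^2$, so the unbounded binding advantage is $\le c$, not merely $\le\sqrt{c}$. (The $\sqrt{c}$ in the lemma is the loss inherited from the computational-case machinery.)

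The genuine gap is exactly the step you flagged as the ``hardest part,'' and it is not a detail you can postpone. For the computational case the equality ``binding advantage $=\sqrt{F}$'' is false --- it holds only for the optimal unbounded $(U,\tau)$, so you cannot use it as an identity in a reduction. What you actually need is a \emph{uniform} conversion: any (efficient) binding attack $(U,\tau)$ on $(\widetilde{Q_0},\widetilde{Q_1})$ with advantage $d$ must yield an (efficient) distinguisher between $Q_0\ket{0}$ and $Q_1\ket{0}$, acting only on $\mathbf C$, with advantage at least $d^2$. A bare Hadamard test does not deliver this: it estimates scalar quantities, while your expression is the Euclidean norm of an ancilla-valued expectation, and naive estimation incurs sampling error that need not respect the $\sqrt{c}$-vs-$c$ quadratic gap. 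The precise statement you need is the equivalence between swapping and distinguishing of~\cite{AAS20,EC:HMY23}, stated as \cref{thm:swap_distinguish} in the paper: given $\Gamma=\norm{(\bra{y}\otimes I)U\ket{x}\ket{\tau}+(\bra{x}\otimes I)U\ket{y}\ket{\tau}}_1$, there is a distinguisher of advantage $\Gamma^2/4$ between $(\ket{x}+\ket{y})/\sqrt{2}$ and $(\ket{x}-\ket{y})/\sqrt{2}$ that touches only the registers $U$ touches. The paper's proof (\cref{sec:convert}) sets $\ket{x}=\widetilde{Q_0}\ket{0}$, $\ket{y}=\widetilde{Q_1}\ket{0}$, observes that because $U$ never acts on $\mathbf D$ the two terms in $\Gamma$ are in fact \emph{equal} (so a binding advantage $d\ge\sqrt{c}$ gives $\Gamma=2d$), and reads off a distinguisher of advantage $d^2\ge c$ on $\mathbf C$ alone, contradicting $c$-hiding. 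Your sketch gestures at the same object but does not supply the $\Gamma^2/4$ bound, and without it the argument does not close. Part~2 has the same dependence (on the converse direction of the swap--distinguish equivalence), though the paper does not re-prove Part~2 and simply cites~\cite{EC:HMY23}.
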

\begin{remark}
The previous work \cite{EC:HMY23} considered the case where the original commitment $\{Q_0(\secp),Q_1(\secp)\}_{\secp\in\N}$ satisfies $\negl(\secp)$-$X$ hiding. 
However, for our purpose, we need to analyze the case where the original commitment $\{Q_0(\secp),Q_1(\secp)\}_{\secp\in\N}$ satisfies $c$-X hiding for some constant $c$ instead of $\negl(\secp)$-X hiding.
For the reader's convenience, we describe the proof in \cref{sec:convert}.
Remark that the proof is the same as the previous work.
\end{remark}

\if0
\begin{remark}
    Note that, throughout this work, we consider canonical quantum commitment $\{Q_0(\secp),Q_1(\secp)\}$ as uniform QPT unitaries, and thus $\{Q_0(\secp),Q_1(\secp)\} $ can be considered as the output of some classical Turing machine $\cM(1^\secp)$, which halts within $\poly(\secp)$ steps.
    If $x\in\bit^*$, and $y\in\bit^*$ are classical Turing machine which generates $Q_0(\secp) $, and $Q_1(\secp)$, respectively, then we sometimes explicitly write $Q_0[x](\secp)$ and $Q_1[y](\secp)$.
\end{remark}
\fi

\paragraph{Unclonable Encryption.}
In this work, we consider unclonable encryption with unclonable IND-CPA security.
\if0
\begin{definition}[Unclonable Public-Key Encryption]\label{def:unc_pke}
    An unclonable public-key encryption is a set of algorithms $\Sigma\seteq(\keygen,\Enc,\Dec)$ such that:
    \begin{itemize}
        \item[$\keygen(1^\secp)$:] It takes as input a security parameter $1^\secp$, and outputs a classical key pair $(\sk,\pk)$.
        \item[$\Enc(1^\secp,\pk,m)$:] It takes as input a security parameter $1^\secp$, $\pk$ and $m\in\Ms_{\secp}$, and outputs a quantum ciphertext $\ct$, where $\Ms_\secp$ is a plaintext space.
        \item[$\Dec(1^\secp,\sk,\ct)$:] It takes as input a security parameter $1^\secp$, $\sk$ and $\ct$, and outputs $m$. 
    \end{itemize}
    We require the following three properties:
    \paragraph{Efficiency.}
    The algorithms $(\keygen,\Enc,\Dec)$ are uniform QPT algorithms.
    \paragraph{Correctness.}
    For all security parameters $\secp\in\N$,
    \begin{align}
        \Pr[m\la\Dec(1^\secp,\sk,\ct):(\sk,\pk)\la\keygen(1^\secp),\ct\la\Enc(1^\secp,\pk,m)]\geq 1-\negl(\secp).
    \end{align}
    \paragraph{Unclonable IND-CPA Security.}
    We require that $\Sigma$ satisfies standard IND-CPA security.
    In addition to the standard IND-CPA security, we require that $\Sigma$ satisfies the unclonable IND-CPA security defined below.
    Given an unclonable encryption $\Sigma$, we consider a security experiment $\mathsf{Exp_{\Sigma,(\cA,\cB,\cC)}^{unclone}}(\secp)$ against $(\cA,\cB,\cC)$.
    \begin{enumerate}
        \item The challenger runs $(\sk,\pk)\la\keygen(1^\secp)$ and sends $\pk$ to $\cA$.
        \item $\cA$ sends $(m_0,m_1)$ to the challenger.
        \item The challenger samples $b\la\bit$, runs $\ct_b\la\Enc(1^\secp,\pk,m_b)$, and sends $\ct_b$ to $\cA$.
        \item $\cA$ produces $\rho_{\cB,\cC}$ and sends corresponding registers to $\cB$ and $\cC$.
        \item $\cB$ and $\cC$ receives $\sk$ and outputs $b_\cB$ and $b_\cC$.
        \item The experiment outputs $1$ indicating win if $b_{\cB}=b_{\cC}=b$, and otherwise $0$.
    \end{enumerate}
We say that $\Sigma$ is unclonable IND-CPA secure if for all sufficiently large security parameters $\secp\in\N$, for all non-uniform QPT adversaries $(\cA,\cB,\cC)$,
\begin{align}
    \Pr[\mathsf{Exp_{\Sigma,(\cA,\cB,\cC)}^{unclone}}(\secp)=1]\leq \frac{1}{2}+\negl(\secp).
\end{align}

\end{definition}
\begin{remark}
    We say that unclonable public-key encryption $\Sigma$ is unclonable public-key encryption for single-bit plaintext spaces if $\Ms_\secp\seteq\bit$ for all security parameters $\secp\in\N$.
    For some polynomial $\ell$, we say that unclonable public-key encryption $\Sigma$ is unclonable public-key encryption for $\ell(\secp)$-plaintext spaces if $\Ms_\secp\seteq\bit^{\ell(\secp)}$ for all security parameters $\secp\in\N$.
    Note that in the previous work, it is not known how to expand the plaintext space, which we solve in this work.
\end{remark}
\begin{remark}
    Note that, throughout this work, we consider unclonable encryption $\Sigma\seteq(\keygen,\Enc,\Dec)$ as uniform QPT algorithms, and thus the algorithms $\keygen$, $\Enc$, and $\Dec$ are represented as classical Turing machine that generates general quantum circuits.
    If $x\in\bit^*$, $y\in\bit^*$, and $z\in\bit^*$ are classical Turing machine which represents $\keygen$, $\Enc$, and $\Dec$, respectively, then we sometimes explicitly write $\keygen[x]$, $\Enc[y]$, and $\Dec[z]$.
\end{remark}
\fi

\begin{definition}[Unclonable Secret-Key Encryption~\cite{TQC:BL20}]\label{def:unc_one_ske}
A candidate for unclonable secret-key encryption for $n(\secp)$-bit plaintexts is a set of algorithms $\Sigma\seteq(\keygen,\Enc,\Dec)$ such that:
    \begin{itemize}
        \item[$\keygen(1^\secp)$:] It takes as input a security parameter $1^\secp$, and outputs a classical secret-key $\sk$.
        \item[$\Enc(1^\secp,\sk,m)$:] It takes as input a security parameter $1^\secp$, $\sk$ and $m\in\bit^{n(\secp)}$, and outputs a quantum ciphertext $\ct$.
        \item[$\Dec(1^\secp,\sk,\ct)$:] It takes as input a security parameter $1^\secp$, $\sk$ and $\ct$, and outputs $m$. 
    \end{itemize}
    We say that a candidate $\Sigma$ is an unclonable SKE scheme if it satisfies the following efficiency, correctness, IND-CPA security, and unclonable IND-CPA security.
    \paragraph{Efficiency.}
    The algorithms $(\keygen,\Enc,\Dec)$ are uniform QPT algorithms.
    \paragraph{Correctness.}
We have
    \begin{align}
        \Pr[m\la\Dec(1^\secp,\sk,\ct):\sk\la\keygen(1^\secp),\ct\la\Enc(1^\secp,\sk,m)]\geq 1-\negl(\secp).
    \end{align}
\paragraph{Unclonable IND-CPA Security.}
    We require that $\Sigma$ satisfies standard IND-CPA security.
    In addition to the standard IND-CPA security, we require that $\Sigma$ satisfies the unclonable IND-CPA security defined below.
    Given an unclonable encryption $\Sigma$, we consider the unclonable IND-CPA security experiment $\mathsf{Exp_{\Sigma,(\cA,\cB,\cC)}^{unclone}}(\secp)$ against $(\cA,\cB,\cC)$.
    \begin{enumerate}
        \item The challenger runs $\sk\la\keygen(1^\secp)$.
        \item $\cA$ can query $\Enc(1^\secp,\sk,\cdot)$ polynomially many times.
        \item $\cA$ sends $(m_0,m_1)$ to the challenger.
        \item The challenger samples $b\la\bit$, runs $\ct_b\la\Enc(1^\secp,\sk,m_b)$, and sends $\ct_b$ to $\cA$.
        \item $\cA$ produces $\rho_{\cB,\cC}$ and sends the corresponding registers to $\cB$ and $\cC$.
        \item $\cB$ and $\cC$ receive $\sk$ and output $b_\cB$ and $b_\cC$.
        \item The experiment outputs $1$ indicating win if $b_{\cB}=b_{\cC}=b$, and otherwise $0$.
    \end{enumerate}
We say that $\Sigma$ is unclonable IND-CPA secure if for all sufficiently large security parameters $\secp\in\N$, for all non-uniform QPT adversaries $(\cA,\cB,\cC)$,
\begin{align}
    \Pr[\mathsf{Exp_{\Sigma,(\cA,\cB,\cC)}^{unclone}}(\secp)=1]\leq \frac{1}{2}+\negl(\secp).
\end{align}
\end{definition}
\begin{remark}
We also consider one-time unclonable secret-key encryption.
It is the same as unclonable secret-key encryption except that it satisfies one-time IND-CPA security and one-time unclonable IND-CPA security instead of IND-CPA security and unclonable IND-CPA security.
The one-time unclonable IND-CPA security is the same as unclonable IND-CPA security except that the adversary is not allowed to query the encryption oracle.

\end{remark}

\begin{remark}
    If an unclonable SKE scheme $(\keygen,\Enc,\Dec)$ satisfies
\begin{align}
\Pr[m\la\Dec(1^\secp,\sk,\ct):\sk\la\keygen(1^\secp),\ct\la\Enc(1^\secp,\sk,m)]=1    
\end{align}
for all security parameters $\secp\in\N$ and all $m\in\cM_\secp$, then we say that the unclonable SKE scheme satisfies perfect correctness.
\end{remark}

We also consider unclonable PKE.
For clarity, we describe unclonable PKE with unclonable IND-CPA security.
\begin{definition}[Unclonable Public-Key Encryption\cite{TCC:AK21}]
    A candidate for unclonable public-key encryption for $n(\secp)$-bit plaintexts is a set of algorithms $\Sigma\seteq(\keygen,\Enc,\Dec)$ such that:
    \begin{itemize}
        \item[$\keygen(1^\secp)$:] It takes as input a security parameter $1^\secp$, and outputs a classical secret-key $\sk$ and a classical public-key $\pk$.
        \item[$\Enc(1^\secp,\pk,m)$:] It takes as input a security parameter $1^\secp$, $\pk$ and $m\in\bit^{n(\secp)}$, and outputs a quantum ciphertext $\ct$.
        \item[$\Dec(1^\secp,\sk,\ct)$:] It takes as input a security parameter $1^\secp$, $\sk$ and $\ct$, and outputs $m$. 
    \end{itemize}
    We say that a candidate $\Sigma$ satisfies efficiency, correctness, IND-CPA security, and unclonable IND-CPA security, respectively if $\Sigma$ satisfies the following efficiency, correctness, IND-CPA security, and unclonable IND-CPA security property, respectively.
    \paragraph{Efficiency.}
    The algorithms $(\keygen,\Enc,\Dec)$ are uniform QPT algorithms.
    \paragraph{Correctness.}
We have
    \begin{align}
        \Pr[m\la\Dec(1^\secp,\sk,\ct):(\sk,\pk)\la\keygen(1^\secp),\ct\la\Enc(1^\secp,\pk,m)]\geq 1-\negl(\secp).
    \end{align}
\paragraph{Unclonable IND-CPA Security.}
    We require that $\Sigma$ satisfies standard IND-CPA security.
    In addition to the standard IND-CPA security, we require that $\Sigma$ satisfies the unclonable IND-CPA security defined below.
    Given an unclonable encryption $\Sigma$, we consider the unclonable IND-CPA security experiment $\mathsf{Exp_{\Sigma,(\cA,\cB,\cC)}^{unclone}}(\secp)$ against $(\cA,\cB,\cC)$.
    \begin{enumerate}
        \item The challenger runs $(\sk,\pk)\la\keygen(1^\secp)$, and sends $\pk$ to $\cA$.
        \item $\cA$ sends $(m_0,m_1)$ to the challenger.
        \item The challenger samples $b\la\bit$, runs $\ct_b\la\Enc(1^\secp,\pk,m_b)$, and sends $\ct_b$ to $\cA$.
        \item $\cA$ produces $\rho_{\cB,\cC}$ and sends the corresponding registers to $\cB$ and $\cC$.
        \item $\cB$ and $\cC$ receive $\sk$ and output $b_\cB$ and $b_\cC$.
        \item The experiment outputs $1$ indicating win if $b_{\cB}=b_{\cC}=b$, and otherwise $0$.
    \end{enumerate}
We say that $\Sigma$ is unclonable IND-CPA secure if for all sufficiently large security parameters $\secp\in\N$, for all non-uniform QPT adversaries $(\cA,\cB,\cC)$,
\begin{align}
    \Pr[\mathsf{Exp_{\Sigma,(\cA,\cB,\cC)}^{unclone}}(\secp)=1]\leq \frac{1}{2}+\negl(\secp).
\end{align}
\end{definition}

\begin{remark}
    We say that (one-time) unclonable SKE (resp. PKE) $\Sigma$ is unclonable SKE (resp. SKE) for single-bit plaintexts if a plaintext space $\Ms_\secp$ is $\Ms_\secp\seteq\bit$ for all security parameters $\secp\in\N$.
    Note that we cannot expand the plaintext space by bit-wise encryption. 
\end{remark}
\if0
\begin{remark}
    Note that, throughout this work, we consider unclonable secret-key encryption $\Sigma\seteq(\keygen,\Enc,\Dec)$ as uniform QPT algorithms, and thus the algorithms $\keygen$, $\Enc$, and $\Dec$ are represented as classical Turing machines that generate general quantum circuits.
    If $x\in\bit^*$, $y\in\bit^*$, and $z\in\bit^*$ are classical Turing machines which represent $\keygen$, $\Enc$, and $\Dec$, respectively, then we sometimes explicitly write $\keygen[x]$, $\Enc[y]$, and $\Dec[z]$.
\end{remark}
\fi

\paragraph{Decomposable Quantum Randomized Encoding.}
\begin{definition}[Decomposable Quantum Randomized Encoding(DQRE)~\cite{Stoc:BY22}]
    A DQRE scheme is a tuple of algorithms $(\Enc,\Dec)$ such that:    
    \begin{itemize}
        \item[$\Enc(1^\secp,F,x)$:]
        It takes $1^\secp$ with $\secp\in\N$, a general quantum circuit $F$ and a possibly quantum input $x$ as inputs, and outputs $\widehat{F(x)}$.
        \item[$\Dec(1^\secp,\widehat{F(x)})$:]
        It takes as input $1^\secp$, and $\widehat{F(x)}$, and outputs $F(x)$.
    \end{itemize}
    We require the following four properties:
    \paragraph{Efficiency.}
    $(\Enc,\Dec)$ are uniform QPT algorithms.
    \paragraph{Correctness.}
    For all quantum states $(x,q)$ and randomness $r$, it holds that $(F(x),q)=(\Dec(1^\secp,\widehat{F}(x;r)),q)$, where $\widehat{F}(x;r)$ is an output of $\mathsf{Enc}(1^{\secp},F,x;r)$.
    \paragraph{Security.}
    There exists a uniform QPT algorithm $\Sim$ such that
    for all quantum states $(x,q)$ and non-uniform QPT adversary $\cA$, there exists some negligible function $\negl$ that satisfies,
    \begin{align}
        \abs{\Pr[1\la\cA(\widehat{F}(x;r),q)] -\Pr[1\la\cA(\Sim(1^\secp,\abs{F},F(x)),q)]}\leq \negl(\secp),
    \end{align}
    where the state on the left-hand side is averaged over $r$ and $\abs{F}$ is the size of the general quantum circuit $F$.
    \begin{remark}
    In the security of the original paper~\cite{Stoc:BY22}, the simulator $\Sim$ takes the topology of $F$ as input.
    Without loss of generality, we can replace the topology of $F$ with the size of $F$ because we can hide the topology of $F$ by using a universal quantum circuit.
    \end{remark}
    \paragraph{Decomposability.}
    There exists a quantum state $e$ (called the resource state of the encoding), and operation $\widehat{F}_{\mathsf{off}}$ (called the offline part of the encoding) and a collection of input encoding operations $\widehat{F_1},\cdots,\widehat{F_n}$ such that for all inputs $x=(x_1,\cdots,x_n)$,
    \begin{align}
        \widehat{F}(x;r)=\left(\widehat{F}_{\mathsf{off}},
        \widehat{F}_1,\widehat{F_2},\cdots,\widehat{F_n}\right)(x,r,e)
    \end{align}
    where the functions $\widehat{F}_{\mathsf{off}},\widehat{F_1},\cdots,\widehat{F_n}$ act on disjoint subsets of qubits from $e,x$ (but can depend on all bits of $r$), each $\widehat{F_i}$ acts on a single qubit $x_i$ and $\widehat{F}_{\mathsf{off}}$ does not act on any of the qubits of $x$.
    \paragraph{Classical Labels.}
    If $x_i$ is a classical bit, then $\widehat{F_i}(x_i,r)$ is a classical string as well.
\end{definition}

\begin{theorem}[\cite{Stoc:BY22}]\label{thm:quantum_garble}
    Decomposable quantum randomized encoding exists if OWFs exist.
\end{theorem}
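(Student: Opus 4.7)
The plan is to adapt classical Yao-style garbled circuits (which exist from OWFs) to the quantum setting by combining them with the Clifford+T gate teleportation technique of Broadbent--Fitzsimons--Kashefi, so that the only dependence on the quantum input is a per-qubit teleportation measurement. First I would fix a normal form: express $F$ as a sequence of gates from $\{H, \mathsf{CNOT}, T\}$, and preprocess $F$ so that every input qubit $x_i$ is immediately teleported onto one half of a fresh EPR pair $\ket{\Phi^+}_{A_i B_i}$. Concretely, $\widehat{F_i}$ on qubit $x_i$ will be: append the $A_i$ half of an EPR pair, perform a Bell measurement on $(x_i, A_i)$, and output the two classical bits $(a_i, b_i)$ — possibly one-time-pad encrypted under keys chosen by $\widehat{F}_{\mathsf{off}}$. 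This step is manifestly decomposable (acts on a single input qubit), and for classical input bits we instead perform a computational-basis measurement followed by a classical garbled-circuit label lookup, giving the ``classical labels'' property.

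The offline encoding $\widehat{F}_{\mathsf{off}}$ then holds the $B_i$ halves of all EPR pairs and simulates $F$ gate-by-gate on those halves, while a classical garbled circuit $\widetilde{C}_F$ tracks the accumulated Pauli corrections induced by the teleportations. For Clifford gates this is easy: $H$ and $\mathsf{CNOT}$ conjugate Paulis to Paulis, so the garbled circuit just updates the symbolic Pauli frame classically. The hard part, and the step I would spend the most care on, is handling $T$ gates, because $T X T^\dagger = e^{-i\pi/4} SX$, so a pending $X$-correction before a $T$ induces an unknown $S$-correction afterwards that depends on a secret bit. The standard fix is to apply $T$ anyway and then apply a conditional $S$ via another teleportation-based gadget that consumes a magic-state-like resource prepared inside $\widehat{F}_{\mathsf{off}}$; the classical garbled circuit $\widetilde{C}_F$ then reveals the correct $S$-or-not correction bit as a function of the teleportation outcomes $\{(a_i,b_i)\}_i$. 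Labels of $\widetilde{C}_F$ are attached to the teleportation bits released in $\widehat{F_i}$ (one-time-padded by keys hard-coded in $\widehat{F}_{\mathsf{off}}$), so the online parts stay independent per qubit.

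For correctness I would argue inductively over the gate sequence that after processing the whole circuit $\widehat{F}_{\mathsf{off}}$ holds $F(x)$ up to a Pauli that $\widetilde{C}_F$ discloses; applying that Pauli in $\mathsf{Dec}$ recovers $F(x)$. For security I would compose two standard simulators: the classical garbled-circuit simulator (from OWFs) produces labels consistent only with the revealed output Pauli frame, and the quantum teleportation ``half-EPR'' state is perfectly mixed to any party without the garbled labels, so the joint distribution can be simulated from $F(x)$ and $|F|$ alone. Decomposability is by construction since $\widehat{F_i}$ touches only qubit $x_i$ plus freshly-prepared ancillas, and $\widehat{F}_{\mathsf{off}}$ touches none of the $x_i$'s. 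The main obstacle is cleanly interleaving the $T$-gate gadget with the garbled circuit so that (i) no qubit of $x$ is accessed after the initial teleportation, and (ii) the garbled circuit's inputs remain exactly the teleportation outcome bits — this forces a careful layering where each magic-state gadget's classical correction is just another wire of $\widetilde{C}_F$ rather than something requiring interaction with $x$.
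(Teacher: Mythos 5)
The paper does not prove this theorem; it simply imports it as a citation of Brakerski and Yogev's quantum garbled circuits result~\cite{Stoc:BY22}, so there is no ``paper's own proof'' to compare against. Your sketch is a faithful high-level reconstruction of the cited construction: teleportation-based per-qubit input encoding (the quantum one-time pad in its EPR/Bell-measurement guise), an information-theoretic randomizing encoding for the Clifford part, a magic-state $T$-gadget to handle the $S$-correction that a pending $X$ induces under $T$, and a classical Yao garbled circuit (hence the OWF assumption) that propagates the Pauli frame and releases the conditional-$S$ keys; correctness by induction over gates and security by composing the classical GC simulator with the maximal mixedness of the local half of an EPR pair. If you were to flesh this out you would mostly be re-deriving~\cite{Stoc:BY22}, so citing is the right call here.
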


\begin{proposition}\label{prop:ind_RE}
    Let $\Sigma\seteq(\Enc,\Dec)$ be a decomposable quantum randomized encoding. Then, for any quantum circuits $F_0,F_1$ with the same size, for any possibly quantum input $x_0$ and $x_1$ such that $F_0(x_0)=F_1(x_1)$, 
    $\widehat{F_0}(x_0;r_0)$ is computationally indistinguishable from $\widehat{F_1}(x_1;r_1)$,
    where both quantum states are averaged over the randomness $r_0$ and $r_1$. 
\end{proposition}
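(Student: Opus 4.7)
The plan is a standard three-hybrid argument that replaces each real encoding with its simulation, and then collapses the two simulated hybrids using the hypothesis $F_0(x_0)=F_1(x_1)$. Concretely, I will define the hybrid states $H_0 = \widehat{F_0}(x_0;r_0)$ (averaged over $r_0$), $H_1 = \Sim(1^\secp,\abs{F_0},F_0(x_0))$, $H_2 = \Sim(1^\secp,\abs{F_1},F_1(x_1))$, and $H_3 = \widehat{F_1}(x_1;r_1)$ (averaged over $r_1$), and argue $H_0 \cind H_1 = H_2 \cind H_3$, which by the triangle inequality for computational indistinguishability yields the desired $H_0 \cind H_3$.

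For the first transition $H_0 \cind H_1$, I will directly invoke the simulation-based security of $\Sigma$ with $F:=F_0$ and $x:=x_0$: any QPT distinguisher separating $H_0$ from $H_1$ would by definition break the security guarantee of the encoding. A symmetric invocation with $F:=F_1$ and $x:=x_1$ gives $H_2 \cind H_3$. The middle identity $H_1 = H_2$ is immediate from the hypotheses: the two circuits have the same size, so $\abs{F_0}=\abs{F_1}$, and $F_0(x_0) = F_1(x_1)$ as quantum states, so $\Sim$ is handed identical inputs in both hybrids and therefore produces the same output mixed state. Chaining the three steps finishes the argument.

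The only genuine subtlety, and the part I expect to require the most care, is the treatment of quantum inputs $x_0, x_1$ that may be entangled with an external register of the distinguisher. The right way to handle this is to carry any such external register as the auxiliary state $q$ in the security game: the reduction feeds the underlying encoding attacker the pair $(\widehat{F_b}(x_b;r_b),q)$ on one side and $(\Sim(1^\secp,\abs{F_b},F_b(x_b)),q)$ on the other, so that the security axiom applies verbatim. With this reading, the hypothesis $F_0(x_0)=F_1(x_1)$ must be interpreted as equality of the joint state on the output register together with whatever reference register $q$ the distinguisher holds; granting this, $H_1$ and $H_2$ are genuinely equal as joint density matrices, and no further obstacle arises. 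All other work is absorbed into the simulation-based security definition itself, so the proof is essentially a bookkeeping exercise once the hybrids are set up correctly.
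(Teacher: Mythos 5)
Your proof is correct and is exactly the ``standard hybrid argument'' the paper alludes to but omits: the chain $\widehat{F_0}(x_0;r_0) \cind \Sim(1^\secp,\abs{F_0},F_0(x_0)) = \Sim(1^\secp,\abs{F_1},F_1(x_1)) \cind \widehat{F_1}(x_1;r_1)$, with the two outer steps given by simulation security and the middle equality given by $\abs{F_0}=\abs{F_1}$ and $F_0(x_0)=F_1(x_1)$. Your caveat about carrying the distinguisher's side information as the auxiliary register $q$ is the right reading of the definition, and since the proposition (and its uses in the paper) hands $x_0,x_1$ as free-standing inputs rather than halves of an entangled state, the middle equality goes through without further ado.
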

This can be shown by a standard hybrid argument, and thus we omit the proof.

\if0
\paragraph{Decomposable Quantum Randomized Encoding.}
\begin{definition}[Decomposable Quantum Randomized Encoding(DQRE.)]
    Let $\cC$ be a class of general quantum circuits. A DQRE scheme for the class $\cC$ is a tuple of algorithms $(\Enc,\Dec)$ such that given a circuit $F\in\mathcal{C}$,    
    \paragraph{$\cdot$ Efficiency.}
    The algorithms $(\Enc,\Dec)$ are uniform QPT algorithms.
    \paragraph{$\cdot$ Correctness.}For all quantum states $(x,q)$, it holds that $(F(x),q)=(\Dec(1^\secp,\cT,\widehat{F}(x;r)),q)$, where $\widehat{F}(x;r))$ is an output of $\mathsf{Enc}(1^{\secp},F,x;r)$, and $c$ denotes the topology of $F$.
    \paragraph{$\cdot$ Security.}
    There exists a uniform QPT algorithm $\Sim$ such that
    for all quantum states $(x,q)$ and non-uniform QPT adversary $\cA$, there exists some negligible function $\negl$ that satisfies,
    \begin{align}
        \abs{\Pr[1\la\cA(\widehat{F}(x;r),q)] -\Pr[1\la\cA(\Sim(1^\secp,\cT,F(x)),q)]}\leq \negl(\secp),
    \end{align}
    where the state on the left-hand side is averaged over $r$ and $\cT$ is the topology of the circuit $F$.
    \paragraph{$\cdot$ Decomposable.}
    If there exists a quantum state $e$ (called the resource state of the encoding), and operation $\widehat{F}_{\mathsf{off}}$ (called the offline part of the encoding) and a collection of input encoding operations $\widehat{F_1},\cdots,\widehat{F_n}$ such that for all inputs $x=(x_1,\cdots,x_n)$,
    \begin{align}
        \widehat{F}(x;r)=\left(\widehat{F}_{\mathsf{off}},
        \widehat{F}_1,\widehat{F_2},\cdots,\widehat{F_n}\right)(x,r,e)
    \end{align}
    where the functions $\widehat{F}_{\mathsf{off}},\widehat{F_1},\cdots,\widehat{F_n}$ act on disjoint subsets of qubits from $e,x$ (but can depend on all bits of $r$), each $\widehat{F_i}$ acts on a single qubit $x_i$ and $\widehat{F}_{\mathsf{off}}$ does not act on any of the qubits of $x$.
    Furthermore, if $x_i$ is a classical string $\widehat{F_i}(x_i;r)$ is also classical string.
\end{definition}

\begin{remark}
    Note that, throughout this work, we think of DQRE $\Sigma\seteq(\Enc,\Dec)$ as uniform QPT algorithms, and thus $\Enc$, and $\Dec$ are represented as classical Turing machine that generates general quantum circuits.
    If $x\in\bit^*$, and $y\in\bit^*$ are classical Turing machine which represents $\Enc$, and $\Dec$ respectively, then we explicitly write $\Enc[x]$, and $\Dec[y]$.
\end{remark}

\begin{proposition}\label{prop:ind_RE}
    Let $\Sigma\seteq(\Enc,\Dec,\Sim)$ be a decomposable quantum randomized encoding for circuit class $\cC$. Then, for any circuits $F_0,F_1\in\cC$ with the same topology, for any possibly quantum input $x_0$ and $x_1$ such that $F_0(x_0)=F_1(x_1)$, we have
    \begin{align}
        \widehat{F_0}(x_0,r_0,e)\approx_c \widehat{F_1}(x_1,r_1,e),
    \end{align}
    where both quantum states are averaged over the randomness $r$.
\end{proposition}
This can be shown by a standard hybrid argument.
\fi

\if0
\paragraph{Receiver Non-Committing Encryption}
In this section, we describe the receiver non-committing encryption with quantum ciphertexts.
\begin{definition}[Receiver Non-Committing Encryption with Quantum Ciphertexts.]
    An receiver non-committing encryption is a set of algorithms $\Sigma\seteq(\Setup,\keygen,\Enc,\Dec,\Fake,\Reveal)$ such that:
\begin{itemize}
    \item[$\Setup(1^\secp)$:]It takes $1^\secp$, and outputs a classical key pair $(\pk,\MSK)$.
    \item[$\keygen(\MSK):$]It takes $\MSK$, and outputs a classical key $\sk$.
    \item[$\Enc(\pk,m)$:]It takes $\pk$ and $m$, and outputs a quantum ciphertext $\ct$.
    \item[$\Dec(\sk,\ct)$:] It takes $\sk$ and $\ct$, and outputs $m$.
    \item[$\Fake(\pk)$:]It takes $\pk$, and outputs a fake quantum ciphertext $\widetilde{\ct}$ and an auxiliary state $\aux$.
    \item[$\Reveal(\pk,\MSK,\aux,m)$:]
    It takes $\pk$, $\MSK$, $\aux$, and $m$, and outputs a secret key $\widetilde{\sk}$.
\end{itemize}
\paragraph{Efficiency.}
The algorithms $(\Setup,\keygen,\Enc,\Dec,\Fake,\Reveal)$ are uniform QPT algorithms.
\paragraph{Correctness.}
For all security parameters $\secp\in\N$,
\begin{align}
    \Pr[m\la\Dec(\sk,\ct):(\pk,\MSK)\la\Setup(1^{\secp}), \sk\la\keygen(\MSK),\ct\la\Enc(\pk,m)]\geq 1-\negl(\secp).
\end{align}
\paragraph{Security.}
Given a receiver non-committing encryption $\Sigma$, we consider a security experiment $\mathsf{Exp_{\Sigma,\cA}^{rec\mbox{-}nc}}(\secp)$ against $\cA$.
\begin{enumerate}
    \item The challenger samples $b\la\bit$.
    \item The challenger runs $(\pk,\MSK)\la\Setup(1^{\secp})$ and sends $\pk$ to $\cA$.
    \item $\cA$ sends $m$ to the challenger.
    \item The challenger does the following:
    \begin{itemize}
        \item If $b=0$, the challenger generates $\ct\la\Enc(\pk,m)$ and $\sk\la\keygen(\MSK)$, and sends $(\ct,\sk)$ to $\cA$.
        \item If $b=1$, the challenger generates $(\widetilde{\ct},\aux)\la\Fake(\pk)$ and $\widetilde{\sk}\la\Reveal(\pk,\MSK,\aux,m)$, and sends $(\widetilde{\ct},\widetilde{\sk})$ to $\cA$.
    \end{itemize}
    \item $\cA$ outputs $b'\in\bit$, and the experiment outputs $1$ if $b'=b$.
\end{enumerate}
We say that $\Sigma$ is RNC secure if for all sufficiently large security parameters $\secp\in\N$, for any QPT adversary $\cA$, it holds that
\begin{align}
    \Pr[\mathsf{Exp_{\Sigma,\cA}^{rec\mbox{-}nc}}(\secp)=1]\leq \negl(\secp).
\end{align}
\end{definition}
\fi

\if0
\paragraph{Other Notions:}

\begin{lemma}\label{lem:hoefd}
    Let $X_i\in\bit$ be a two-outcome independent random variable,
    and let $S_n\seteq\sum_{i\in[n]} X_i$.
    Then, we have
    \begin{align}
        \Pr[\abs{S_n-\Exp[S_n]}\geq t]\leq 2\exp(-2t^2/n).
    \end{align}
\end{lemma}
\taiga{See the Wikipedia of Hoeffding's inequality for the lemma above.}

\begin{lemma}[Gentle Measurement Lemma]\label{lem:gentle}
 Let $\rho$ be a mixed state, and let $E$ be a measurement operator.
 Suppose that $\Tr(E\rho)\geq 1-\epsilon$, where $1\leq\epsilon>0$.
 Then, the post-measurement quantum state $\rho'\seteq \frac{\sqrt{E}\rho\sqrt{E}}{\Tr(E\rho)}$ satisfies:
 \begin{align}
     \norm{\rho-\rho'}_1\leq 2\sqrt{\epsilon}.
 \end{align}
\end{lemma}

\fi

\section{Robust OWSGs Combiner}\label{sec:owsg}
\begin{definition}[Robust OWSGs Combiner]\label{def:robust_OWSG_comb}
    A robust OWSGs combiner is a deterministic classical polynomial-time Turing machine $\cM$ with the following properties:
    \begin{itemize}
        \item $\cM$ takes as input $1^n$ with $n\in\N$ and $n$-candidates OWSGs $\{\Sigma_i\seteq(\keygen_i,\StateGen_i,\Vrfy_i)\}_{i\in[n]}$ promised that all candidates satisfy efficiency, and outputs a single set of algorithms $\Sigma\seteq(\keygen,\StateGen,\Vrfy)$.
        \item If all of $\{\Sigma_i\}_{i\in[n]}$ satisfy efficiency and at least one of $\{\Sigma_i\}_{i\in[n]}$ satisfies both correctness and security, then $\Sigma$ is an OWSG scheme that satisfies efficiency, correctness, and security.
    \end{itemize}
\end{definition}
\begin{remark}
    In the previous work~\cite{EC:HKNRR05}, they define robust combiners in a similar way where $n$ is treated as an arbitrary function in the security parameter.
    However, it is unclear what is meant by the definition where $n$ is a super-constant.
    This is because the security parameter for the scheme $\Sigma$ obtained by a robust combiner is an arbitrary non-negative integer after combining $n$ candidates $\{\Sigma_i\}_{i\in[n]}$.
    \if0
    \footnote{They claim~\cite{EC:HKNRR05} that universal construction can be constructed if a robust combiner exists where $n$ is a super constant.
    However, it is unclear what is meant by $n$ is super constant.
    }.
    \fi
    Therefore, in the definition above, we consider $n$ as a constant in $\secp$.
    On the other hand, \cref{def:robust_OWSG_comb} is not sufficient to construct universal construction since $n$ is constant in $\secp$.
    Therefore, we also introduce another definition (\cref{def:robust_owsg_comb_univ}) of a robust combiner, where $n$ can be dependent on $\secp$.
    Although our construction actually satisfies \cref{def:robust_owsg_comb_univ}, here we consider \cref{def:robust_OWSG_comb} for simplicity.
\end{remark}

\if0
\begin{definition}[Robust OWSGs Combiner]
    A (1,n)-robust OWSG combiner $\Comb.\Sigma$ consists of three algorithms $(\Comb.\keygen,\Comb.\StateGen,\Comb.\Vrfy)$, where $n$ is some polynomial.
    For $i\in\N$, let $\Sigma_i$ be a candidate of OWSGs promised that $\Sigma_i$ satisfies efficiency.
    A (1,n)-robust OWSG combiner $(\Comb.\keygen,\Comb.\StateGen,\Comb.\Vrfy)$ has the following syntax:
    \begin{itemize}
    \item[$\Comb.\keygen(1^\secp,\{\Sigma_i\}_{i\in[n(\secp)]})$:]$ $
    It takes as input a security parameter $\secp$ and $n(\secp)$ candidates of OWSGs $\{\Sigma_i\}_{i\in[n(\secp)]}$ and outputs a classical key $k$.
    \item[$\Comb.\StateGen(1^\secp,k,\{\Sigma_i\}_{i\in[n(\secp)]})$:]$ $
    It takes as input a security parameter $1^\secp$, $k$ and $\{\Sigma_i\}_{i\in[n(\secp)]}$, and outputs a quantum state $\psi_k$.
    \item[$\Comb.\Vrfy(1^\secp,k,\psi_k,\{\Sigma_i\}_{i\in[n(\secp)]})$:]$ $
    It takes as input a security parameter $1^\secp$, $k$, $\psi_k$, and $\{\Sigma_i\}_{i\in[n(\secp)]}$, and outputs $\top$ or $\bot$.
    \end{itemize}
\paragraph{Efficiency.}
The algorithms $(\Comb.\keygen,\Comb.\StateGen,\Comb.\Vrfy)$ are uniform QPT algorithms.

\paragraph{Correctness.}
For all security parameters $\secp\in\N$,
\begin{align}
\Pr\left[
\top\la\Comb.\Vrfy(1^\secp,\psi_k,\{\Sigma_i\}_{i\in[n(\secp)]})
\ \middle |
\begin{array}{ll}
k\la\Comb.\keygen(1^\secp,\{\Sigma_i\}_{i\in[n(\secp)]})\\
\psi_k\la\Comb.\StateGen(1^\secp,k,\{\Sigma_i\}_{i\in[n(\secp)]})
\end{array}
\right] 
\geq
1-\negl(\secp).
\end{align}

\paragraph{Security.}
If at least one of the OWSG candidates $\{\Sigma_i\}_{i\in[n(\secp)]}$ satisfies correctness and security, then for all sufficiently large security parameters $\secp\in\N$, $\Comb.\Sigma$ also satisfies 
\begin{align}
\Pr\left[\top\la\Comb.\Vrfy(1^\secp,k^*,\psi_k,\{\Sigma_i\}_{i\in[n(\secp)]})
\ \middle |
\begin{array}{ll}
k\la\Comb.\keygen(1^\secp,\{\Sigma_i\}_{i\in[n(\secp)]})\\
\psi_k\la\Comb.\StateGen(1^\secp,k,\{\Sigma_i\}_{i\in[n(\secp)]})\\
k^*\la\cA(\psi_k^{\otimes t(\secp)})
\end{array}
\right]
\leq \negl(\secp).
\end{align}
for all non-uniform QPT adversaries $\cA$ and all polynomials $t$.
\end{definition}
\fi

\if0
\begin{definition}[Robust OWSGs Combiner]
Let $\{\Sigma_i\seteq(\keygen_i,\StateGen_i,\Vrfy_i)\}_{i\in[n]}$ be $n$-candidates of OWSGs promised that all candidates $\{\Sigma_i\}_{i\in[n]}$ satisfy efficiency.
A robust OWSG combiner $\Sigma$ consists of three algorithms $(\keygen,\StateGen,\Vrfy)$ defined as follows:
\begin{itemize}
    \item[$\keygen(1^\secp,\{\Sigma_i\}_{i\in[n]})$:]$ $
    It takes as input a security parameter $\secp$ and $\{\Sigma_i\}_{i\in[n]}$ and outputs a classical key $k$.
    \item[$\StateGen(1^\secp,k,\{\Sigma_i\}_{i\in[n]})$:]$ $
    It takes as input a security parameter $1^\secp$, $k$ and $\{\Sigma_i\}_{i\in[n]}$, and outputs a quantum state $\psi_k$.
    \item[$\Vrfy(1^\secp,k,\psi_k,\{\Sigma_i\}_{i\in[n]})$:]$ $
    It takes as input a security parameter $1^\secp$, $k$, $\psi_k$, and $\{\Sigma_i\}_{i\in[n]}$, and outputs $\top$ or $\bot$.
\end{itemize}
We require the following three properties:
\paragraph{Efficiency.}
The algorithms $(\keygen,\StateGen,\Vrfy)$ are uniform QPT algorithms.

\paragraph{Correctness.}
For all security parameters $\secp\in\N$,
\begin{align}
    \Pr[\top\la\Vrfy(1^\secp,k,\psi_k,\{\Sigma_i\}_{i\in[n]}):k\la\keygen(1^\secp,\{\Sigma_i\}_{i\in[n]}),\psi_k\la\StateGen(1^\secp,k,\{\Sigma_i\}_{i\in[n]})]\geq 1-\negl(\secp).
\end{align}

\paragraph{Security}
If at least one of the OWSG candidates $\{\Sigma_i\}_{i\in[n]}$ satisfies correctness and security, then for all sufficiently large security parameters $\secp\in\N$, $\Sigma$ also satisfies 
\begin{align}
\Pr[\top\la\Vrfy(1^\secp,k^*,\psi_k):k\la\keygen(1^\secp),\psi_k\la\StateGen(1^\secp,k),k^*\la\cA(\psi_k^{\otimes t(\secp)})]\leq \negl(\secp)
\end{align}
for all non-uniform QPT adversaries $\cA$ and all polynomials $t$.
\end{definition}
\fi

\begin{theorem}\label{thm:OWSG_combiner}
   A robust OWSGs combiner exists.
\end{theorem}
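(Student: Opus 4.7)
The plan is to follow the two-step blueprint from the technical overview: first design a compiler that turns an arbitrary OWSG candidate $\Sigma$ into a candidate $\Sigma^{*}$ that is unconditionally correct and that inherits security whenever $\Sigma$ was both correct and secure, then feed $\Sigma_{1}^{*},\dots,\Sigma_{n}^{*}$ into the naive parallel combiner. The Turing machine $\cM$ will, on input $(1^{n},\{\Sigma_{i}\}_{i\in[n]})$, output the scheme $\Sigma=(\keygen,\StateGen,\Vrfy)$ where $\keygen$ runs each $\keygen_{i}^{*}$ and concatenates keys, $\StateGen$ runs each $\StateGen_{i}^{*}$ on the respective sub-key and tensors the outputs, and $\Vrfy$ runs each $\Vrfy_{i}^{*}$ in parallel and accepts iff all accept. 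Efficiency of $\cM$ and of $\Sigma$ is immediate, since $n$ is a constant and all underlying algorithms are uniform QPT by assumption.

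The heart of the proof is the construction and analysis of $\Sigma^{*}=(\keygen^{*},\StateGen^{*},\Vrfy^{*})$. Using the standard purification trick, I will first assume WLOG that the original $\Vrfy(k,\psi)$ appends $\ket{0}\bra{0}$, applies a unitary $U_{k}$, and measures the first qubit. Set $\keygen^{*}=\keygen$. Define $\StateGen^{*}(k)$ to first compute $\psi_{k}\la\StateGen(k)$, measure the first qubit of $U_{k}(\psi_{k}\otimes\ket{0}\bra{0})U_{k}^{\dagger}$; if the outcome is $1$, uncompute $U_{k}$ and output the resulting register as $\psi_{k}^{*}$; otherwise output a dedicated flag $\bot$. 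Define $\Vrfy^{*}(k,\psi)$ to output $1$ whenever $\psi=\bot$, and otherwise apply $U_{k}$, measure the first qubit, and output the outcome. Correctness of $\Sigma^{*}$ then holds unconditionally: on the $\bot$ branch it is definitional, and on the other branch the post-measurement state lies in the $+1$ eigenspace of the measured projector, so $\Vrfy^{*}$ accepts with probability $1$.

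For security of $\Sigma^{*}$ under the hypothesis that $\Sigma$ satisfies both correctness and security, I will use the gentle measurement lemma (\cref{lem:gentle}). The correctness of $\Sigma$ says that the measurement outcome on $U_{k}(\psi_{k}\otimes\ket{0}\bra{0})U_{k}^{\dagger}$ is $1$ with overwhelming probability over $k\la\keygen(1^{\lambda})$, so the post-measurement state is negligibly close in trace distance to $\psi_{k}\otimes\ket{0}\bra{0}$, and therefore $\psi_{k}^{*}$ is itself negligibly close to $\psi_{k}\otimes\ket{0}\bra{0}$ (and in particular the $\bot$ branch is negligible). Any non-uniform QPT inverter for $\Sigma^{*}$ given $t(\lambda)$ copies of $\psi_{k}^{*}$ can therefore be simulated, up to negligible error, using $t(\lambda)$ copies of $\psi_{k}$, and its output $k^{*}$ passes $\Vrfy^{*}$ iff it passes $\Vrfy$, giving a reduction to the security of $\Sigma$.

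Finally I will argue correctness and security of the parallel combination $\Sigma$. Correctness holds unconditionally because each $\Sigma_{i}^{*}$ is correct. For security, assume some $\Sigma_{i_{0}}$ satisfies both correctness and security; by the previous paragraph $\Sigma_{i_{0}}^{*}$ is secure. Given a QPT adversary $\cA$ breaking $\Sigma$ with $t$ copies of $\bigotimes_{i}\psi_{k_{i}}^{*}$ and producing $(k_{1}^{*},\dots,k_{n}^{*})$ that is jointly accepted, I will build an adversary against $\Sigma_{i_{0}}^{*}$ that honestly samples $k_{j}$ and $\psi_{k_{j}}^{*}$ for $j\neq i_{0}$, embeds the challenge copies into the $i_{0}$-th slot, runs $\cA$, and outputs $k_{i_{0}}^{*}$; this succeeds with the same non-negligible probability, contradicting security of $\Sigma_{i_{0}}^{*}$. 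The main technical obstacle is the analysis of $\Sigma^{*}$ when $\Sigma$ is arbitrary (possibly incorrect and insecure): I must verify that the $\bot$ branch never degrades correctness, and that the gentle-measurement argument goes through even though it is conditioned on the (possibly unknown) correctness of the candidate; this is exactly where the dichotomy between ``$\Sigma$ correct'' and ``$\Sigma$ not needed'' is used, and care is required to ensure the reduction does not rely on knowing which case holds.
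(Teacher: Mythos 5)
Your proposal is correct and follows essentially the same route as the paper: you build the correctness-upgrading transform $\Sigma\mapsto\Sigma^{*}$ (the paper's \cref{lem:amp_OWSG_cor}) by measuring the verification predicate inside $\StateGen^{*}$, rewinding, and flagging the outcome, prove unconditional correctness case-by-case and security via the gentle measurement lemma, and then feed the $\Sigma_{i}^{*}$ into the naive parallel combiner with a standard hybrid reduction. The only deviations from the paper are cosmetic — you mark the failure branch with a distinguished symbol $\bot$ whereas the paper appends a flag qubit $\ket{b}\bra{b}$ to the output register — and these make no difference to the argument.
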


For proving \cref{thm:OWSG_combiner}, we introduce the following \cref{lem:amp_OWSG_cor}.
\begin{lemma}\label{lem:amp_OWSG_cor}
    Let $\Sigma=(\keygen,\StateGen,\Vrfy)$ be a candidate of OWSG.
    From $\Sigma$, we can construct a OWSG scheme $\Sigma^*$ with the following properties:
    \begin{enumerate}
        \item If $\Sigma$ is uniform QPT algorithm, $\Sigma^*$ is uniform QPT algorithm.
        \item $\Sigma^*$ satisfies perfect correctness.
        \item If $\Sigma$ is a uniform QPT algorithm and satisfies correctness and security, then $\Sigma^*$ satisfies security.
    \end{enumerate}
\end{lemma}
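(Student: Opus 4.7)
The plan is to follow the construction sketched in the technical overview and formalize it. Without loss of generality, assume that $\Vrfy(k,\psi)$ is implemented by appending an ancilla $\ket{0}\bra{0}$, applying a uniform QPT unitary $U_k$, measuring the first qubit in the computational basis, and outputting $1$ iff the outcome is $1$. I will define $\Sigma^*=(\keygen^*,\StateGen^*,\Vrfy^*)$ as follows. Set $\keygen^*\coloneqq\keygen$. Let $\StateGen^*(k)$ first run $\psi_k\la\StateGen(k)$, append $\ket{0}\bra{0}$, apply $U_k$, and measure the first qubit. If the outcome is $1$, apply $U_k^\dagger$ to the post-measurement register and output it as $\psi_k^*$; otherwise output the special symbol $\bot$. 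Let $\Vrfy^*(k,\psi)$ output $1$ if $\psi=\bot$, and otherwise apply $U_k$, measure the first qubit and output the outcome. Efficiency of $\Sigma^*$ is immediate from efficiency of $\Sigma$.

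For perfect correctness, consider two cases according to the output of $\StateGen^*(k)$. If $\StateGen^*(k)=\bot$, then $\Vrfy^*(k,\bot)=1$ by definition. Otherwise, the post-measurement state (after renormalization) has the form $U_k^\dagger(\ket{1}\bra{1}\otimes\rho)U_k$ for some state $\rho$ on the remaining registers, so applying $U_k$ and measuring the first qubit yields $1$ with probability $1$. Hence $\Vrfy^*(k,\psi_k^*)=1$ holds with probability $1$, regardless of any property of $\Sigma$.

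For security, assume $\Sigma$ satisfies both correctness and security and suppose for contradiction that a non-uniform QPT adversary $\cA$ and a polynomial $t$ break $\Sigma^*$. The key step is to observe that the projective measurement inside $\StateGen^*(k)$ corresponds to the positive operator $E_k$ used by $\Vrfy$, and by correctness of $\Sigma$ we have $\Tr\!\bigl(E_k\,(\psi_k\otimes\ket{0}\bra{0})\bigr)\geq 1-\negl(\secp)$ on average over $k\la\keygen(1^\secp)$ and $\psi_k\la\StateGen(k)$. By the Gentle Measurement Lemma (\cref{lem:gentle}), the normalized post-measurement state conditioned on outcome $1$ is statistically close to $\psi_k\otimes\ket{0}\bra{0}$, and applying $U_k^\dagger$ is a unitary operation that preserves trace distance. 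Hence $\psi_k^*$ is negligibly close to $\psi_k\otimes\ket{0}\bra{0}$, and by a standard hybrid over the $t(\secp)$ copies, $(\psi_k^*)^{\otimes t(\secp)}$ is negligibly close to $(\psi_k\otimes\ket{0}\bra{0})^{\otimes t(\secp)}$.

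From here I would build a reduction $\cB$ against $\Sigma$: on input $\psi_k^{\otimes t(\secp)}$, $\cB$ appends $t(\secp)$ copies of $\ket{0}\bra{0}$, runs $\cA$ to obtain $k^*$, and outputs $k^*$. Because $\Vrfy^*(k^*,\psi_k^*)=1$ iff $\Vrfy(k^*,\psi_k)=1$ (by the same unitary-then-measure structure), the statistical closeness above implies that $\cB$ wins the $\Sigma$-security game with essentially the same probability as $\cA$ wins against $\Sigma^*$, contradicting the security of $\Sigma$. The main obstacle will be carefully tracking the averaging over $(k,\psi_k)$ inside the Gentle Measurement step—one needs that the expected trace distance, not merely the typical one, is negligible—which follows from Jensen's inequality applied to $\sqrt{\cdot}$ together with the overwhelming correctness bound.
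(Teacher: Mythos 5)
Your proposal is correct and follows essentially the same approach as the paper: identical decomposition of $\Vrfy$ into a unitary plus a measurement, the same re-encryption-then-rewind construction for $\StateGen^*$ (the paper appends a classical flag qubit rather than your special symbol $\bot$, a purely cosmetic difference), the same Gentle Measurement argument to show $\psi_k^*$ is negligibly close to $\psi_k\otimes\ket{0}\bra{0}$ on average over $k$, and the same reduction $\cB$ that feeds $(\psi_k\otimes\ket{0}\bra{0}\otimes\text{flag})^{\otimes t}$ to $\cA$. Your concluding remark about needing Jensen's inequality to pass from per-key to expected trace distance is a legitimate clarification of a step the paper's writeup leaves implicit.
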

\begin{proof}[Proof of \cref{lem:amp_OWSG_cor}]
    Without loss of generality, $\Vrfy(1^\secp,k,\psi)$ can be considered as the algorithm working in the following way:

For input $(1^\secp,k,\psi)$, run a classical Turing machine $\cM$ on $(1^\secp,k,\abs{\psi})$, obtain $U_{\Vrfy,k}$, append auxiliary state $\ket{0\cdots0}\bra{0\cdots 0}$ to $\psi$, apply a unitary $U_{\Vrfy,k}$ on $\psi\otimes\ket{0\cdots 0}\bra{0\cdots 0} $, and measure the first qubit of $U_{\Vrfy,k}(\psi\otimes\ket{0\cdots 0}\bra{0\cdots 0})U_{\Vrfy,k}^{\dagger}$ with the computational basis and output $\top$ if the measurement result is $1$ and $\bot$ otherwise.

We describe the $\Sigma^*\seteq(\keygen^*,\StateGen^*,\Vrfy^*)$.
\begin{description}
    \item[$\keygen^*(1^\secp):$]$ $
    \begin{itemize}
        \item Run $k\la\keygen(1^\secp)$.
        \item Output $k^*\seteq k$.
    \end{itemize}
    \item[$\StateGen^*(1^\secp,k^*)$:]$ $
    \begin{itemize}
        \item Parse $k^*=k$. 
        \item Run $ \psi_k\la \StateGen(1^\secp,k)$.
        \item Run $U_{\Vrfy,k}$ on $\psi_k\otimes\ket{0\cdots 0}\bra{0\cdots 0}$, and measures the first qubit of $ U_{\Vrfy,k}\left(\psi_{k} \otimes \ket{0\cdots 0}\bra{0\cdots 0}\right)U_{\Vrfy,k}^{\dagger}$,  in the computational basis, and obtains the measurement result $b$ and post-measurement quantum state $\rho_{b,k}$.
        \begin{itemize}
            \item If the measurement result is $1$, then output $\psi_k^*\seteq U_{\Vrfy,k}^{\dagger}(\ket{1}\bra{1}\otimes \rho_{1,k})U_{\Vrfy,k}\otimes \ket{1}\bra{1}$.
            \item If the measurement result is $0$, then output $\psi_k^*\seteq U_{\Vrfy,k}^{\dagger}(\ket{0}\bra{0}\otimes \rho_{0,k})U_{\Vrfy,k}\otimes \ket{0}\bra{0}$.
        \end{itemize}
    \end{itemize}
    \item[$\Vrfy^*(1^\secp,k^*,\psi^*)$:]$ $
    \begin{itemize}
        \item Parse $k^*=k$ and $\psi^*\seteq \rho\otimes \ket{b}\bra{b}$.
        \item Measure the last bit of $\psi^*$ in the computational basis.
        \begin{itemize}
            \item If $1$ is obtained, then measure the first qubit of $U_{\Vrfy,k}\rho U_{\Vrfy,k}^{\dagger} $ in the computational basis, and output $\top$ if the measurement outcome is $1$ and $\bot$ otherwise.
            \item If $0$ is obtained, then output $\top$.
        \end{itemize}
    \end{itemize}
\end{description}
The first item and the second item straightforwardly follow, and thus we skip the proof.

\paragraph{Proof of the third item.}
Assume that $\Sigma^*$ is not secure for contradiction.
More formally, assume that there exists a QPT adversary $\cA$ such that the following probability is non-negligible
\begin{align}
    \Pr[\top\la\Vrfy^*(1^\secp,k',\psi_k^*):
    \begin{array}{ll}
         k\la\keygen(1^\secp)  \\
         \psi_k^*\la\StateGen^*(1^\secp,k)\\
         k'\la\cA(\psi_k^{*\otimes t(\secp)})
    \end{array}
    ].
\end{align}
Then, construct $\cB$ that breaks the security of $\Sigma$ as follows.
\begin{enumerate}
    \item $\cB$ receives $\psi_k^{\otimes t(\secp)}$ from $\cC$ which is the challenger of $\Sigma$.
    \item $\cB$ sends $\left(\psi_k\otimes \ket{0\cdots 0}\bra{0\cdots 0}\otimes \ket{1}\bra{1}\right)^{\otimes t}$ to $\cA$.
    \item $\cB$ receives $k'$ from $\cA$.
    \item $\cB$ sends $k'$ to $\cC$.
\end{enumerate}
From the construction of $\cB$, $\cB$ simulates the security experiment of $\Sigma^*$ except that it uses $\psi_k\otimes \ket{0\cdots 0}\bra{0\cdots 0}\otimes \ket{1}\bra{1}$ instead of $\psi_{k}^*$.
Because we assume that $\Sigma$ satisfies correctness, we have
\begin{align}
    U_{\Vrfy,k}(\psi_k\otimes \ket{0\cdots 0}\bra{0\cdots 0}) U_{\Vrfy,k}^{\dagger} =\negl(\secp)\ket{0}\bra{0}\otimes\rho_{0,k}+(1-\negl(\secp))\ket{1}\bra{1}\otimes \rho_{1,k}
\end{align}
where $k\la\keygen(1^\secp)$, $\psi_k\la\StateGen(1^\secp,k)$, and $\rho_{0,\sk}$ and $\rho_{1,\sk}$ are some appropriate quantum state.

From the gentle measurement lemma (\cref{lem:gentle}), we have
\begin{align}
\norm{U_{\Vrfy,k}(\psi_k\otimes \ket{0\cdots 0}\bra{0\cdots 0}) U_{\Vrfy,k}^{\dagger}-\ket{1}\bra{1}\otimes\rho_{1,k}}_1\leq  \negl(\secp).
\end{align}  
In particular, this implies that
\begin{align}
    \norm{\psi_{k}\otimes \ket{0\cdots 0}\bra{0\cdots 0}\otimes \ket{1}\bra{1}-\psi_{k}^*}_1\leq\negl(\secp).
\end{align}
Therefore, we have
\begin{align}
    &\Pr[\top\la\Vrfy(1^\secp,k',\psi_k):
    \begin{array}{ll}
         k\la\keygen(1^\secp)  \\
         \psi_k\la\StateGen(k)\\
         k'\la\cB\left(\psi_k^{\otimes t(\secp)}\right)
    \end{array}
    ]\\
    &=
    \Pr[\top\la\Vrfy^*(1^\secp,k',\psi_k\otimes\ket{0\cdots0}\bra{0\cdots0}\otimes \ket{1}\bra{1}):
    \begin{array}{ll}
         k\la\keygen(1^\secp)  \\
         \psi_k\la\StateGen(k)\\
         k'\la\cA\left((\psi_k\otimes \ket{0\cdots0}\bra{0\cdots0}\otimes \ket{1}\bra{1})^{\otimes t(\secp)}\right)
    \end{array}
    ]\\
    &\geq\Pr[\top\la\Vrfy^*(1^\secp,k',\psi_k^*):
    \begin{array}{ll}
         k\la\keygen(1^\secp)  \\
         \psi_{k}^*\la\StateGen^*(k)\\
         k'\la\cA(\psi_k^{*\otimes t(\secp)})
    \end{array}
    ]-\negl(\secp)\\
    &\geq 1/\secp^c-\negl(\secp),
\end{align} 
where in the first equation we have used
\begin{align}
    \Pr[\top\la\Vrfy(1^\secp,k',\psi)]=\Pr[\top\la\Vrfy^*(1^\secp,k',\psi\otimes \ket{0\cdots0}\bra{0\cdots0}\otimes \ket{1}\bra{1})]
\end{align}
for any $\secp\in\N$, $k^*$, and $\psi$,
and in the second inequality, we have used that $\norm{\psi_{k}\otimes \ket{0\cdots 0}\bra{0\cdots 0}\otimes \ket{1}\bra{1}-\psi_{k}^*}_1\leq\negl(\secp)$.
This contradicts that $\Sigma$ satisfies security, and thus $\Sigma^*$ satisfies security.

\end{proof}

\begin{proof}[Proof of \cref{thm:OWSG_combiner}]
Below, we consider a fixed constant $n$. 
Let us introduce some notations.

\paragraph{Notations:}
\begin{itemize}
    \item Let $\Sigma_i\seteq(\keygen_i,\StateGen_i,\Vrfy_i)$ be a candidate of OWSG for $i\in[n]$.
    \item For a candidate of OWSG $\Sigma_i$, let $\Sigma_i^*\seteq(\keygen_i^*,\StateGen_i^*,\Vrfy_i^*)$ be a candidate of OWSG derived from \cref{lem:amp_OWSG_cor} with the following properties:
    \begin{itemize}
        \item If $\Sigma_i$ satisfies efficiency, then $\Sigma_i^*$ satisfies efficiency.
        \item $\Sigma_i^*$ satisfies perfect correctness.
        \item If $\Sigma_i$ satisfies efficiency, correctness and security, then $\Sigma_i^*$ satisfies security.
    \end{itemize}
\end{itemize}

\paragraph{Construction of Robust OWSG Combiner:}
A robust combiner $\cM$ is a classical Turing machine that takes as input $1^n$ and $\{\Sigma_i\}_{i\in[n]}$, and outputs $\Sigma=(\keygen,\StateGen,\Vrfy)$ working in the following way.
\begin{description}
    \item[$\keygen(1^\secp)$:]$ $ 
    \begin{itemize}
        \item For all $i\in[n]$, run $k_i^*\la\keygen_i^*(1^\secp)$.
        \item Output $k\seteq\{k_i^*\}_{i\in[n]}$.
    \end{itemize}
    \item[$\StateGen(1^\secp,k)$:]$ $
    \begin{itemize}
        \item Parse $k=k_1^*||\cdots ||k_{n}^*$.
        \item For all $i\in[n]$, run $\psi_{k_i^*}\la\StateGen_i^*(k_i^*)$.
        \item Output $\psi_{k}\seteq \bigotimes_{i\in[n]}\psi_{k_i^*}$.
    \end{itemize}
    \item[$\Vrfy(1^\secp,k,\psi_k)$:]$ $
    \begin{itemize}
        \item Parse $k=k_1||\cdots ||k_{n}$ and $\psi_k=\bigotimes_{i\in[n]}\psi_{k_i}$.
        \item For all $i\in[n]$, run $\Vrfy_i^*(k_i,\psi_{k_i})$.
        If $\top\la\Vrfy_i^*(k_i^*,\psi_{k_i^*})$ for all $i\in[n]$, output $\top$. Otherwise, output $\bot$.
    \end{itemize}
\end{description}
\cref{thm:OWSG_combiner} follows from the following \cref{lem:OWSG_eff,lem:OWSG_cor,lem:OWSG_sec}.

\begin{lemma}\label{lem:OWSG_eff}
If all of $\{\Sigma_i\}_{i\in[n]}$ satisfies efficiency, then $\Sigma$ satisfies efficiency.
\end{lemma}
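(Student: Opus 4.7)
The plan is to trace efficiency through the two-step construction: first from each candidate $\Sigma_i$ to its correctness-amplified version $\Sigma_i^*$, and then from $\{\Sigma_i^*\}_{i \in [n]}$ to the combined scheme $\Sigma$. Since $n$ is treated as a fixed constant in \cref{def:robust_OWSG_comb}, no issue with polynomial blow-up will arise from parallel repetition over the $n$ candidates.

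First I would invoke item 1 of \cref{lem:amp_OWSG_cor}: for each $i \in [n]$, since $\Sigma_i = (\keygen_i, \StateGen_i, \Vrfy_i)$ is promised to satisfy efficiency (that is, each of its algorithms is a uniform QPT algorithm), the derived scheme $\Sigma_i^* = (\keygen_i^*, \StateGen_i^*, \Vrfy_i^*)$ is also a uniform QPT algorithm. Concretely, the transformation in \cref{lem:amp_OWSG_cor} only augments $\Sigma_i$ by one extra application of the unitary $U_{\Vrfy_i, k}$ and a single-qubit measurement, both of which are efficient operations described by a uniform classical Turing machine.

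Next I would argue the combined scheme $\Sigma = (\keygen, \StateGen, \Vrfy)$ is uniform QPT by inspecting its three algorithms individually. The key generation $\keygen(1^\secp)$ runs $\keygen_i^*(1^\secp)$ for all $i \in [n]$ and concatenates the outputs; since $n$ is constant and each $\keygen_i^*$ is uniform QPT, this is a uniform QPT algorithm. The same argument applies to $\StateGen(1^\secp, k)$, which runs each $\StateGen_i^*$ in parallel on the corresponding subkey and tensors the outputs, and to $\Vrfy(1^\secp, k, \psi_k)$, which runs each $\Vrfy_i^*$ in parallel and outputs the AND of their decisions. In all three cases, a single uniform classical Turing machine can, on input $1^\secp$, produce the general quantum circuit that realizes the composite operation, and its running time is a polynomial (in $\secp$) sum of the polynomial running times of the $n$ constituent algorithms.

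I do not anticipate any obstacle: the lemma is essentially a composition statement asserting that the constant-fold parallel composition of uniform QPT algorithms is uniform QPT, combined with the already-established efficiency preservation in \cref{lem:amp_OWSG_cor}. The only point worth being explicit about is that $n$ is a constant, so we do not need to worry about a growing number of parallel branches; this is exactly the setting of \cref{def:robust_OWSG_comb} as discussed in the remark following it.
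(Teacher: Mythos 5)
Your proposal is correct and matches the paper's (implicit) reasoning: the paper simply states that \cref{lem:OWSG_eff} ``trivially follows,'' and your argument—invoking item 1 of \cref{lem:amp_OWSG_cor} to get that each $\Sigma_i^*$ is uniform QPT, then observing that a constant-fold parallel composition of uniform QPT algorithms is uniform QPT—is exactly the justification being elided. Your explicit remark that $n$ is treated as a constant in \cref{def:robust_OWSG_comb}, so there is no polynomial blow-up from parallel branches, is the right point to flag.
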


\begin{lemma}\label{lem:OWSG_cor}
    $\Sigma$ satisfies perfect correctness. 
\end{lemma}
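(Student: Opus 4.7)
The plan is to reduce perfect correctness of $\Sigma$ directly to the perfect correctness of each $\Sigma_i^*$ guaranteed by \cref{lem:amp_OWSG_cor}. Recall that by the second item of \cref{lem:amp_OWSG_cor}, each candidate $\Sigma_i^*$ is constructed so as to satisfy perfect correctness regardless of whether the original $\Sigma_i$ is correct or even secure. Thus, for every $i\in[n]$ and every security parameter $\secp$, we have
\begin{align}
\Pr[\top\la\Vrfy_i^*(1^\secp,k_i^*,\psi_{k_i^*}):k_i^*\la\keygen_i^*(1^\secp),\psi_{k_i^*}\la\StateGen_i^*(1^\secp,k_i^*)] = 1.
\end{align}

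Next, I would unfold the construction of $\Sigma$: an honest execution of $(\keygen,\StateGen,\Vrfy)$ on security parameter $\secp$ samples $k_i^*\la\keygen_i^*(1^\secp)$ and $\psi_{k_i^*}\la\StateGen_i^*(1^\secp,k_i^*)$ independently for each $i\in[n]$, and the verification algorithm $\Vrfy$ outputs $\top$ on the tensor product state precisely when $\Vrfy_i^*(1^\secp,k_i^*,\psi_{k_i^*})$ outputs $\top$ for every $i\in[n]$. Since the $n$ executions of the underlying schemes are performed on disjoint registers and use independent randomness, the events $\{\Vrfy_i^*\text{ accepts}\}_{i\in[n]}$ are mutually independent.

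Combining the two observations, the acceptance probability of $\Vrfy$ in the honest execution equals $\prod_{i\in[n]}1 = 1$, which establishes perfect correctness of $\Sigma$. The only subtlety worth checking is that $n$ is treated as a fixed constant (as per the remark following \cref{def:robust_OWSG_comb}), so the product over $i\in[n]$ is well-defined and finite; apart from that, the argument is entirely routine and requires no further assumption on the candidate schemes $\{\Sigma_i\}_{i\in[n]}$.
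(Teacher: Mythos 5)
Your proof is correct and takes the same route as the paper: perfect correctness of $\Sigma$ follows directly from the perfect correctness of each $\Sigma_i^*$ guaranteed by the second item of \cref{lem:amp_OWSG_cor}, since $\Vrfy$ accepts exactly when every $\Vrfy_i^*$ accepts on its own honestly generated instance. The paper states this in one line; you have simply filled in the (correct) independence and product-of-ones reasoning.
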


\begin{lemma}\label{lem:OWSG_sec}
    If all of $\{\Sigma_i\}_{i\in[n]}$ satisfies efficiency and at least one of $\{\Sigma_i\}_{i\in[n]}$ satisfies correctness and security, then $\Sigma$ satisfies security.
\end{lemma}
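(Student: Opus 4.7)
The plan is to prove \cref{lem:OWSG_sec} by contradiction, reducing to the security of the transformed candidate $\Sigma_{i^*}^*$ corresponding to the ``good'' index $i^*$. Let $i^* \in [n]$ be an index such that $\Sigma_{i^*}$ satisfies both correctness and security (it also satisfies efficiency by assumption on all candidates). By \cref{lem:amp_OWSG_cor} applied to $\Sigma_{i^*}$, the transformed scheme $\Sigma_{i^*}^*$ is an OWSG scheme; in particular, it satisfies efficiency and security. The other transformed schemes $\{\Sigma_i^*\}_{i \neq i^*}$ still satisfy efficiency (by the first item of \cref{lem:amp_OWSG_cor}), so their $\keygen_i^*$ and $\StateGen_i^*$ algorithms are uniform QPT, even if $\Sigma_i^*$ provides no security guarantee.

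Assume toward contradiction that there exist a non-uniform QPT adversary $\cA$ and a polynomial $t(\cdot)$ such that
\begin{align}
\Pr\!\left[\top \la \Vrfy(1^\secp, k', \psi_k) :
\begin{array}{l}
k \la \keygen(1^\secp),\\
\psi_k \la \StateGen(1^\secp, k),\\
k' \la \cA(\psi_k^{\otimes t(\secp)})
\end{array}
\right]
\end{align}
is non-negligible in $\secp$. I will construct a non-uniform QPT adversary $\cB$ against the security of $\Sigma_{i^*}^*$ as follows. On input $\psi_{k_{i^*}^*}^{\otimes t(\secp)}$ from the challenger of $\Sigma_{i^*}^*$, the adversary $\cB$ internally simulates the other coordinates: for each $i \in [n]\setminus\{i^*\}$, it runs $k_i^* \la \keygen_i^*(1^\secp)$ and prepares $t(\secp)$ independent copies $\psi_{k_i^*}^{\otimes t(\secp)}$ using $\StateGen_i^*$. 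It then forms $k = (k_1^*,\ldots,k_n^*)$ (with the $i^*$-th coordinate implicit from the challenge) and assembles $\psi_k^{\otimes t(\secp)} = \bigotimes_{i\in[n]} \psi_{k_i^*}^{\otimes t(\secp)}$, which is distributed exactly as in the security game of $\Sigma$. It runs $\cA$ on this state, obtains $k' = (k_1', \ldots, k_n')$, and outputs $k_{i^*}'$ as its forgery for $\Sigma_{i^*}^*$.

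By construction, $\cB$ perfectly simulates the security experiment of $\Sigma$ for $\cA$. Whenever $\cA$ succeeds, i.e., $\Vrfy(1^\secp, k', \psi_k) = \top$, the definition of $\Vrfy$ forces $\Vrfy_i^*(1^\secp, k_i', \psi_{k_i^*}) = \top$ for every $i \in [n]$, and in particular for $i = i^*$. Hence $\cB$ succeeds against $\Sigma_{i^*}^*$ with exactly the same probability as $\cA$ against $\Sigma$, which is non-negligible, contradicting the security of $\Sigma_{i^*}^*$ established via \cref{lem:amp_OWSG_cor}.

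The argument is essentially routine; the only conceptual point worth highlighting is that $\cB$ is allowed to simulate the coordinates $i \neq i^*$ precisely because the efficiency of all $\Sigma_i$ (and hence all $\Sigma_i^*$) is assumed unconditionally, independent of any correctness or security guarantee. Since $n$ is a fixed constant, the running time of $\cB$ is polynomial. This establishes the lemma.
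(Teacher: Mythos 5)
Your proof is correct and matches the reduction the paper has in mind (it explicitly states that \cref{lem:OWSG_sec} follows by a standard hybrid argument and omits the details). You correctly isolate the good index $i^*$, invoke \cref{lem:amp_OWSG_cor} to get security of $\Sigma_{i^*}^*$, simulate all other coordinates using only efficiency, and observe that acceptance of the combined $\Vrfy$ forces acceptance of $\Vrfy_{i^*}^*$ on the extracted subkey.
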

\cref{lem:OWSG_eff} trivially follows.
\cref{lem:OWSG_cor} follows because $\Sigma_i^*$ satisfies correctness for all $i\in[n]$.
The proof of \cref{lem:OWSG_sec} is a standard hybrid argument, and thus we skip the proof.

\if0
\begin{proof}[Proof of \cref{lem:OWSG_sec}]
    Suppose that $\Sigma_x$ satisfies both correctness and security.
    Then, $\Sigma_x^*$ satisfies security.

    Assume that there exists a QPT adversary and polynomial $t$ such that 
    \begin{align}
    \Pr\left[\top\la\Comb.\Vrfy(1^\secp,k^*,\psi_k,\{\Sigma_i\}_{i\in[n]})
    \ \middle |
    \begin{array}{ll}
    k\la\Comb.\keygen(1^\secp,\{\Sigma_i\}_{i\in[n]})\\
    \psi_k\la\Comb.\StateGen(1^\secp,k,\{\Sigma_i\}_{i\in[n]})\\
    k^*\la\cA(\psi_k^{\otimes t(\secp)})
    \end{array}
    \right]
    \leq 1/p(\secp).
    \end{align}
    for all sufficiently large $\secp\in\N$.
    Then, construct a QPT adversary $\cB$ that breaks the security of $\Sigma_x^*$ as follows.
    
    \begin{enumerate}
        \item $\cB_x$ receives $\psi_{k_x^*}^{\otimes t}$ from the challenger of $\Sigma_x^*$.
        \item $\cB_x$ runs $k_i^*\la\keygen_i^*(1^\secp)$, $\psi_{k_i^*}\la\StateGen_i^*(1^\secp,k_i^*)$ $t$-times for all $i\in [n]\backslash x$, and sends $\biotimes_{i\in[n]}(\psi_{k_i}^{\otimes t})$ to $\cA$.
        \item $\cB_x$ receives $k'\seteq k'_1||\cdots||k'_{n}$, and sends $k'_x$ to the challenger.
    \end{enumerate}
\end{proof}
From the construction $\cB_x$ simulates the challenger of $\Comb.\Sigma$, and thus if $\cA$ breaks the security of $\Comb.\Sigma$, then $\cB_x$ breaks the security of $\Sigma_x^*$.

\fi
\end{proof}

\subsection{Universal Construction}\label{sec:univ_owsg}
\begin{definition}
    We say that a set of uniform QPT algorithms $\Sigma_{\mathsf{Univ}}=(\keygen,\StateGen,\Vrfy)$ is a universal construction of OWSG if $\Sigma_{\mathsf{Univ}}$ is an OWSG scheme as long as there exists an OWSG.
\end{definition}

\begin{theorem}\label{thm:univ_owsg}
    There exists a universal construction of OWSG.
\end{theorem}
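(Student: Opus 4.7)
The plan is to instantiate the folklore enumeration strategy sketched in the technical overview, using the robust OWSG combiner $\cM$ from \cref{thm:OWSG_combiner} as the main tool. Conceptually, think of every triple of classical Turing-machine descriptions as a candidate OWSG, and combine enough of them so that, whenever OWSGs exist at all, some correct and secure candidate is included and the robust combiner turns it into an actual scheme.

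First, I would set up the candidates. For every triple $(x,y,z) \in \N^3$, interpret $x,y,z$ as descriptions of classical Turing machines that, on input $1^\secp$, are supposed to output the code of general quantum circuits implementing $\keygen$, $\StateGen$, $\Vrfy$ respectively. Since the precondition of $\cM$ is efficiency (not correctness), I would define $\Sigma[x,y,z]$ to \emph{force} efficiency by timing out: each of $\keygen[x], \StateGen[y], \Vrfy[z]$ simulates the corresponding Turing machine for at most $\secp^{\log\secp}$ steps (say), and outputs a default value like $0$ if the machine has not halted or if the produced circuit is too large to evaluate in polynomial time. This padding ensures $\Sigma[x,y,z]$ satisfies the efficiency requirement of \cref{def:robust_OWSG_comb} unconditionally, while preserving correctness and security for any triple $(x^*,y^*,z^*)$ that already describes a polynomial-time OWSG scheme (for all sufficiently large $\secp$, the padding never triggers).

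Next, I would define $\Sigma_{\mathsf{Univ}} = (\keygen_{\mathsf{Univ}}, \StateGen_{\mathsf{Univ}}, \Vrfy_{\mathsf{Univ}})$ as in the technical overview: on input $1^\secp$, run $\cM$ on $1^{\secp^3}$ together with the enumeration $\{\Sigma[x,y,z]\}_{x,y,z\in[\secp]}$ to obtain a combined candidate $(\keygen_\secp,\StateGen_\secp,\Vrfy_\secp)$, and invoke its algorithms on $1^\secp$. Since $\cM$ is a deterministic classical polynomial-time Turing machine and the number of candidates is $\secp^3$, each algorithm of $\Sigma_{\mathsf{Univ}}$ runs in uniform QPT. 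For the correctness and security analysis, assume some OWSG scheme exists. Then there exist fixed indices $(x^*,y^*,z^*)\in\N^3$ whose induced scheme $\Sigma[x^*,y^*,z^*]$ satisfies both correctness and security (for all large enough $\secp$, padding is a no-op on the honest scheme). For every $\secp \geq \max(x^*,y^*,z^*)$, the triple $(x^*,y^*,z^*)$ belongs to $[\secp]^3$, so by the guarantee of $\cM$ the combined candidate $(\keygen_\secp,\StateGen_\secp,\Vrfy_\secp)$ is a correct and secure OWSG scheme. Since $\Sigma_{\mathsf{Univ}}$ just executes it, $\Sigma_{\mathsf{Univ}}$ inherits both properties, proving the theorem.

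The main obstacle, flagged in the remark after \cref{def:robust_OWSG_comb}, is that \cref{def:robust_OWSG_comb} fixes $n$ as a constant, whereas here the number of combined candidates $n=\secp^3$ grows with the security parameter fed to the resulting scheme. To handle this cleanly I would invoke the stronger robust-combiner notion \cref{def:robust_owsg_comb_univ} referenced in the same remark, where $n$ may depend on $\secp$, and verify that the construction used in the proof of \cref{thm:OWSG_combiner} in fact satisfies that stronger definition (the parallel composition with perfect-correctness-amplification of \cref{lem:amp_OWSG_cor} depends on $n$ only polynomially, and the security reduction is a standard hybrid over the $n$ slots, which goes through uniformly in $n$ as long as one slot is correct and secure). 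Once that stronger combiner is available, the universal construction above works without change.
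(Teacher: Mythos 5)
Your overall architecture matches the paper's: enumerate Turing-machine triples, force efficiency via a hard timeout, combine with the robust combiner, and use the variable-$n$ notion (\cref{def:robust_owsg_comb_univ}) rather than the constant-$n$ one. You also correctly identified that the default output inserted on timeout is irrelevant, since the combiner's internal correctness-forcing transformation (\cref{lem:amp_OWSG_cor}) overrides any incorrectness in the input candidates. That part is fine.

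The gap is in the timeout. A cutoff of $\secp^{\log\secp}$ is quasi-polynomial, so simulating each enumerated Turing machine for that many steps makes $\Sigma[x,y,z]$ \emph{not} a uniform QPT algorithm, which violates the efficiency precondition of the combiner (it is promised that all input candidates are efficient). You therefore must pick a \emph{fixed} polynomial cutoff, say $\secp^3$ as the paper does. But with a fixed polynomial cutoff, the claim ``for all sufficiently large $\secp$, the padding never triggers'' for the honest triple $(x^*,y^*,z^*)$ is false in general: a correct and secure OWSG whose key generation takes $\secp^{100}$ steps will always be chopped off by the $\secp^3$ cutoff, and the timeout output then replaces the honest behaviour, so the honest scheme is never actually covered by the enumeration. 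This is precisely what the paper's \cref{prop:owsg_time} is for: it shows via a padding trick (rescaling the security parameter so that $\keygen[x^*](1^\secp)$ internally runs $\keygen[x](1^\kappa)$ with $\kappa \approx \secp^{1/c_x}$) that \emph{whenever any OWSG exists, there is also one whose Turing machines halt within $\secp^3$ steps}. Without proving something equivalent to \cref{prop:owsg_time}, you cannot escape the dilemma that a polynomial timeout misses slow-but-polynomial schemes while a super-polynomial timeout destroys efficiency. The rest of your argument goes through once this padding step is added.
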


For showing \cref{thm:univ_owsg}, the robust OWSGs combiner of \cref{def:robust_OWSG_comb} is not adequate to construct universal construction for OWSGs. 
Therefore, we reintroduce a definition of robust OWSGs combiner, which we call robust OWSGs combiner for universal construction.

\begin{definition}[Robust OWSGs Combiner for universal construction]\label{def:robust_owsg_comb_univ}
    A $(1,n)$-robust OWSGs combiner for universal construction $\Comb.\Sigma$ consists of three algorithms $(\Comb.\keygen,\Comb.\StateGen,\Comb.\Vrfy)$, where $n$ is some polynomial. 
    A $(1,n)$-robust OWSG combiner $(\Comb.\keygen,\Comb.\StateGen,\Comb.\Vrfy)$ has the following syntax:
    \begin{itemize}
    \item[$\Comb.\keygen(1^\secp,\{\Sigma_i\}_{i\in[n(\secp)]})$:]$ $
    It takes as input a security parameter $\secp$ and $n(\secp)$ candidates of OWSGs $\{\Sigma_i\}_{i\in[n(\secp)]}$ and outputs a classical key $k$.
    \item[$\Comb.\StateGen(1^\secp,k,\{\Sigma_i\}_{i\in[n(\secp)]})$:]$ $
    It takes as input a security parameter $1^\secp$, $k$ and $\{\Sigma_i\}_{i\in[n(\secp)]}$, and outputs a quantum state $\psi_k$.
    \item[$\Comb.\Vrfy(1^\secp,k,\psi_k,\{\Sigma_i\}_{i\in[n(\secp)]})$:]$ $
    It takes as input a security parameter $1^\secp$, $k$, $\psi_k$, and $\{\Sigma_i\}_{i\in[n(\secp)]}$, and outputs $\top$ or $\bot$.
    \end{itemize}
\paragraph{Efficiency.}
The algorithms $(\Comb.\keygen,\Comb.\StateGen,\Comb.\Vrfy)$ are uniform QPT algorithms.

\paragraph{Correctness.}
For all $n$ candidates $\{\Sigma_i\}_{i\in[n(\secp)]}$,  
\begin{align}
\Pr\left[
\top\la\Comb.\Vrfy(1^\secp,k,\psi_k,\{\Sigma_i\}_{i\in[n(\secp)]})
\  :
\begin{array}{ll}
k\la\Comb.\keygen(1^\secp,\{\Sigma_i\}_{i\in[n(\secp)]})\\
\psi_k\la\Comb.\StateGen(1^\secp,k,\{\Sigma_i\}_{i\in[n(\secp)]})
\end{array}
\right] 
\geq
1-\negl(\secp).
\end{align}

\paragraph{Security.}
Let $\{\Sigma_i\}_{i\in\N}$ be a sequence of candidates of OWSGs promised that $\Sigma_i$ satisfies efficiency for all $i\in\N$.
If there exists $i^*\in\N$ such that $\Sigma_{i^*}$ satisfies correctness and security and $i^*<n(\secp)$ for all sufficiently large security parameters $\secp\in\N$, then for all non-uniform QPT adversaries $\cA$ and all polynomials $t$, we have
\begin{align}
\Pr\left[\top\la\Comb.\Vrfy(1^\secp,k^*,\psi_k,\{\Sigma_i\}_{i\in[n(\secp)]})
\ :
\begin{array}{ll}
k\la\Comb.\keygen(1^\secp,\{\Sigma_i\}_{i\in[n(\secp)]})\\
\psi_k\la\Comb.\StateGen(1^\secp,k,\{\Sigma_i\}_{i\in[n(\secp)]})\\
k^*\la\cA(\psi_k^{\otimes t(\secp)})
\end{array}
\right]
\leq \negl(\secp).
\end{align}
\end{definition}
\if0
\begin{remark}\label{rem:comb_owsg}
    It is a folklore that we can construct a universal construction if there exists a robust combiner $\RobComb.\cM$ in the sense of \cref{def:robust_OWSG_comb}~\cite{EC:HKNRR05}.
    However, there is a subtle issue in the folklore construction, and thus
    we introduce \cref{def:robust_owsg_comb_univ}.
    We describe the reason why the folklore construction does not work in the following, where we do not care about the efficiency of algorithms for simplicity.
    Let $\Sigma_i$ be the lexicographically $i$-th candidate of OWSG.
    Assume that OWSGs exist, then there exists some $i^*\in\N$ such that $\Sigma_{i^*}$ is correct and secure OWSG.
    Therefore, if we write $\widetilde{\Sigma}_\secp\la\RobComb\cM(1^\secp,\{\Sigma_i\}_{i\in[\secp]})$, \takashi{I think you defined the notation $\Sigma[\secp]$ to mean the TM that is reperesented by $\secp$. But the meaning here seems different?}
    \taiga{I should use different notations here. So, I replace $\Sigma[\secp]$ with $\widetilde{\Sigma}_\secp$.}
    then $\widetilde{\Sigma}_\secp$ satisfies correctness and security for all sufficiently large security parameters $\secp\in\N$ as long as OWSGs exist.
    The folklore universal construction $\Sigma_{\Univ}=(\keygen_{\mathsf{Univ}},\StateGen_{\mathsf{Univ}},\Vrfy_{\mathsf{Univ}})$ is to simulate $\widetilde{\Sigma}_\secp\seteq(\keygen_\secp,\StateGen_\secp,\Vrfy_\secp)$.
    More formally, for each security parameter $\secp\in\N$, $(\keygen_{\mathsf{Univ}}(1^\secp),\StateGen_{\mathsf{Univ}}(1^\secp,k),\Vrfy_{\mathsf{Univ}}(1^\secp,k,\psi_k))$ simulates $(\keygen[\secp](1^\secp),\StateGen[\secp](1^\secp,k),\Vrfy[\secp](1^\secp,k,\psi_k))$, respevtively.
    $\Sigma_{\Univ}$ seems to satisfy security because $\widetilde{\Sigma}_\secp$ satisfies security for all sufficiently large $\secp\in\N$. However, it is incorrect.
    Recall what the ``security'' of $\widetilde{\Sigma}_\secp$ guarantees.
    The ``security'' guarantees that, for all polynomial $t$, and QPT adversary $\cA$, there is a negligible function $\negl_\secp$ such that
    \begin{align}
        \Pr\left[\top\la\Vrfy[\secp](1^{ \color{red}\secp^*\color{black}},k^*,\psi_k):
        \begin{array}{ll}
              k\la\keygen[\secp](1^{\color{red}\secp^*\color{black}}) \\
              \psi_k\la\StateGen[\secp](1^{\color{red}\secp^*\color{black}},k)\\
              k^*\la\cA(\psi_k^{\otimes t(\color{red}\secp^*\color{black})})
        \end{array}
        \right]
        \leq
        \negl_{\secp}(\color{red}\secp^*\color{black})
    \end{align}
    for all sufficiently large $\color{red}\secp^*\color{black}\in\N$.
    However, by the definition of a negligible function, it might be the case that $\negl_\secp(\secp)=1$ for all $\secp\in\N$.
    Therefore, we cannot prove the security of $\Sigma_\Univ$.
    Therefore, we choose to introduce a new definition of robust combiner, which is sufficient to construct a universal construction.
\end{remark}
\fi

\begin{theorem}\label{thm:owsg_comb_univ}
    There exists a $(1,n)$-robust OWSG combiner for universal construction for all polynomial $n$.
\end{theorem}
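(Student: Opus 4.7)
The plan is to follow the structure of the proof of \cref{thm:OWSG_combiner}, extending it to handle the two generalizations present in \cref{def:robust_owsg_comb_univ}: the number $n(\secp)$ of candidates may depend on $\secp$, and the candidates handed to the combiner are not promised to be efficient (the correctness clause is universally quantified over the inputs). As a consequence, the two main building blocks of the previous proof --- the transformation of \cref{lem:amp_OWSG_cor} and the parallel combination --- both need to be adapted, but the underlying ideas carry over.

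First, I would generalize \cref{lem:amp_OWSG_cor} so that it applies to arbitrary (possibly inefficient) candidates. Given any candidate $\Sigma = (\keygen,\StateGen,\Vrfy)$, I wrap each of its algorithms with a time-out: $\keygen^{\dagger}(1^\secp)$ simulates $\keygen(1^\secp)$ for at most $t(\secp)$ steps and outputs a canonical default symbol if the simulation does not halt, and similarly for $\StateGen^{\dagger}$ and $\Vrfy^{\dagger}$, where $t$ is a polynomial fixed as part of the combiner's specification. By construction, $\Sigma^{\dagger}$ is uniform QPT regardless of $\Sigma$. Moreover, if $\Sigma$ is a valid OWSG scheme whose runtime polynomial is dominated by $t$ for all sufficiently large $\secp$, then $\Sigma^{\dagger}$ is functionally identical to $\Sigma$ for those $\secp$, and therefore inherits both correctness and security. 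Applying the construction of \cref{lem:amp_OWSG_cor} to $\Sigma^{\dagger}$ then yields $\Sigma^{*}$ with perfect correctness, while preserving security under the same condition.

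Second, the combiner $\cM$ takes $(1^\secp, \{\Sigma_i\}_{i\in[n(\secp)]})$ as input, produces $\Sigma_i^{*}$ from each $\Sigma_i$ via the above transformation, and outputs the parallel combination of the $\Sigma_i^{*}$ exactly as in the construction of \cref{thm:OWSG_combiner}: $\keygen$ runs each $\keygen_i^{*}(1^\secp)$ and concatenates the outputs, $\StateGen$ runs each $\StateGen_i^{*}$ on the corresponding component of the key to produce a tensor product, and $\Vrfy$ runs each $\Vrfy_i^{*}$ and accepts iff all accept. Efficiency holds because $n(\secp)$ is polynomial in $\secp$ and each $\Sigma_i^{*}$ is uniform QPT by the previous step. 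Perfect correctness follows directly from the perfect correctness of each $\Sigma_i^{*}$. For security, I would argue as in the hybrid sketch of \cref{lem:OWSG_sec}: given a fixed $i^* \in \N$ for which $\Sigma_{i^*}$ is correct and secure, for all sufficiently large $\secp$ we have $i^* < n(\secp)$, so $\Sigma_{i^*}^{*}$ is a secure OWSG sitting among the combined components, and any adversary against the combined scheme is converted to an adversary against $\Sigma_{i^*}^{*}$ by internally simulating the other $n(\secp) - 1$ components (whose descriptions are available to the reduction).

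The main obstacle is the bookkeeping around the truncation polynomial $t$: it must be polynomial in $\secp$ to preserve the combiner's efficiency, yet it must eventually dominate the runtime polynomial of the honest but a priori unknown $\Sigma_{i^*}$. Since $\Sigma_{i^*}$ is a fixed OWSG scheme, its runtime is some fixed polynomial, and the required domination can be arranged by a suitable choice of $t$ --- or equivalently by a padding argument that lets the combiner absorb any fixed polynomial runtime for $\Sigma_{i^*}$ into the security parameter it feeds to the candidates. Apart from this bookkeeping, everything reduces to the already-proved arguments underlying \cref{lem:amp_OWSG_cor} and \cref{thm:OWSG_combiner}.
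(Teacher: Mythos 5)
Your proposal diverges from the paper's argument in one structurally important place: you add a truncation layer \emph{inside} the combiner, while the paper does not. The paper's proof of \cref{thm:owsg_comb_univ} (which the text defers entirely to \cref{thm:OWSG_combiner}) is literally the same construction: apply the transformation of \cref{lem:amp_OWSG_cor} to each candidate as given, then run the resulting schemes in parallel. The time-out you propose appears one layer higher, in the universal construction of \cref{thm:univ_owsg}, where the candidates fed to the combiner are the wrapped schemes $\widetilde{\Sigma}[\cM]$ with a built-in $\secp^3$-step cutoff, and \cref{prop:owsg_time} guarantees that whenever a secure OWSG exists, some $\widetilde{\Sigma}[\cM^*]$ in that wrapped family is simultaneously within the cutoff and correct and secure. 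The truncation polynomial is thus chosen at the point where the candidate family is constructed, not inside the combiner; the combiner itself only ever receives efficient candidates (the security clause of \cref{def:robust_owsg_comb_univ} makes this promise explicit, and the first bullet of \cref{def:robust_OWSG_comb} does as well).

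Moving the truncation inside the combiner creates a genuine gap that your last paragraph flags but does not repair. The security clause of \cref{def:robust_owsg_comb_univ} quantifies over \emph{all} fixed sequences $\{\Sigma_i\}_{i\in\N}$ of efficient candidates and over all good indices $i^*$; your combiner, being a single Turing machine, must commit to a fixed polynomial $t$, and nothing prevents $\Sigma_{i^*}$ from being an efficient OWSG scheme whose running time eventually exceeds $t(\secp)$. For such a sequence, $\Sigma_{i^*}^{\dagger}$ collapses to the default output and carries no security, so the combined scheme is insecure --- contradicting the definition you are trying to satisfy. Neither of the fixes you sketch repairs this: you cannot choose $t$ after seeing $\Sigma_{i^*}$, and feeding a larger security parameter to a candidate only lengthens its running time, so padding cannot compress a slow candidate under a fixed cutoff. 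The correct move is the paper's: leave the combiner exactly as in \cref{thm:OWSG_combiner}, take efficiency of the inputs as a promise, and perform the cutoff together with the padding argument (via \cref{prop:owsg_time}) at the universal-construction level, before the candidates are handed to the combiner.
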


We can show \cref{thm:owsg_comb_univ} in the same way as \cref{thm:OWSG_combiner}, and thus we skip the proof.
For proving \cref{thm:univ_owsg}, let us introduce the following \cref{prop:owsg_time}.

\begin{proposition}\label{prop:owsg_time}
Assume that there exist OWSGs.
Then, there exists a set of classical polynomial-time Turing machine $\cM^*\seteq(x^*,y^*,z^*)$ such that 
\begin{itemize}
    \item $\Sigma[\cM^*]\seteq (\keygen[x^*],\StateGen[y^*],\Vrfy[z^*])$ is a OWSG scheme that satisfies correctness and security.
    \item $x^*(1^\secp)$ halts within $\secp^3$ steps for all sufficiently large $\secp\in\N$.
    \item $y^*(1^\secp,k)$ halts within $\secp^3$ steps for all sufficiently large $\secp\in\N$, where $k\la\keygen[x^*](1^\secp)$.
    \item $z^*(1^\secp,k,\abs{\psi_k})$ halts within $\secp^3$ steps for all sufficiently large $\secp\in\N$, where $k\la\keygen[x^*](1^\secp)$ and $\psi_k\la\StateGen[y^*](1^\secp,k,\psi_k)$.
\end{itemize}
\end{proposition}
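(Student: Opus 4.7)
The plan is a standard padding argument. By the hypothesis, at least one OWSG scheme exists; fix any such scheme $\Sigma[\cM] = (\keygen[x], \StateGen[y], \Vrfy[z])$ described by classical Turing machines $(x, y, z)$. Because $(\keygen[x], \StateGen[y], \Vrfy[z])$ are uniform QPT algorithms, there exists some constant $c$ such that $x(1^\secp)$, $y(1^\secp, k)$ on any key $k$ output by $\keygen[x](1^\secp)$, and $z(1^\secp, k, \abs{\psi_k})$ on the corresponding $\psi_k$, each halt within $\secp^c$ steps for all sufficiently large $\secp$. The issue is that $c$ could be much larger than $3$, so I need to rescale.

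First I would define new Turing machines $x^*, y^*, z^*$ that implement the same scheme but with a rescaled security parameter. On input $1^\secp$, each machine first computes $\widetilde{\secp} := \lfloor \secp^{3/(c+1)} \rfloor$ (which can be carried out in time $\poly(\log \secp)$ on a unary input), writes $1^{\widetilde{\secp}}$ on the work tape, and then simulates $x$, $y$, or $z$ on the smaller security parameter $\widetilde{\secp}$. Each padded machine therefore halts within $O(\secp) + \widetilde{\secp}^{\,c} \le \secp^3$ steps for all sufficiently large $\secp$, establishing the running-time bounds in the statement. The exponent $3/(c+1)$ is chosen conservatively so that preprocessing and emulation overhead land strictly below $\secp^3$, not merely within $O(\secp^3)$.

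Next I would verify that $\Sigma[\cM^*] := (\keygen[x^*], \StateGen[y^*], \Vrfy[z^*])$ remains a valid OWSG scheme. Efficiency is immediate by construction. For correctness, an execution of $\Sigma[\cM^*]$ at security parameter $\secp$ is, by design, exactly an execution of $\Sigma[\cM]$ at security parameter $\widetilde{\secp}$, so the acceptance probability is at least $1 - \negl(\widetilde{\secp}) \ge 1 - \negl(\secp)$, using the fact that $\widetilde{\secp}$ grows polynomially in $\secp$ so any function negligible in $\widetilde{\secp}$ is negligible in $\secp$. For security, any non-uniform QPT adversary $\cA$ against $\Sigma[\cM^*]$ at parameter $\secp$ yields a non-uniform QPT adversary $\cB$ against $\Sigma[\cM]$ at parameter $\widetilde{\secp}$ by direct forwarding of the challenge; since $\secp$ and $\widetilde{\secp}$ are polynomially related, $\cB$ is still QPT as a function of $\widetilde{\secp}$, and its winning probability is $\negl(\widetilde{\secp}) = \negl(\secp)$.

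There is no real conceptual obstacle: the argument is purely bookkeeping, and the same padding trick would work to realize any fixed polynomial bound in place of $\secp^3$. The only point requiring care is matching the preprocessing/emulation cost against the strict $\secp^3$ budget, which is why the statement is phrased with ``for all sufficiently large $\secp \in \N$.''
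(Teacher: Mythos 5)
Your padding argument is essentially the paper's proof (\cref{sec:padding}): both emulate the existing OWSG $\Sigma[\cM]$ at a shrunken security parameter chosen so that the emulation plus overhead fits the $\secp^3$ budget, and both use the polynomial relation between the two parameters to transfer correctness and security, including the reduction-by-forwarding for security. The one slip is the parenthetical claim that $\lfloor\secp^{3/(c+1)}\rfloor$ can be computed in $\poly(\log\secp)$ time from the unary input $1^\secp$ (merely reading the input costs $\Theta(\secp)$), but since any $\poly(\secp)$ preprocessing of degree strictly below $3$ stays within budget for large $\secp$, this does not affect the conclusion.
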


This can be shown by a standard padding trick.
For the reader's convenience, we describe the proof in \cref{sec:padding}.

\begin{proof}[Proof of \cref{thm:univ_owsg}]
First, let us describe some notations:
\paragraph{Notations.}
\begin{itemize}
    \item For a set of classical Turing machines $\cM\seteq x||y||z$, we write $\Sigma[\cM]\seteq(\keygen[x],\StateGen[y],\Vrfy[z])$ to mean the candidate of OWSG that works as follows:
    \begin{itemize}
        \item $\keygen[x](1^\secp)$ runs $x(1^\secp)$, obtains a general quantum circuit $C_\secp[x]$, runs $C_\secp[x]$, and outputs its output.
        \item $\StateGen[y](1^\secp,k)$ runs $y(1^\secp,k)$, obtains a general quantum circuit $C_{\secp,k}[y]$, runs $C_{\secp,k}[y]$, and outputs its output.
        \item $\Vrfy[z](1^\secp,k,\psi_k)$ runs $z(1^\secp,k,\abs{\psi_k})$, obtains a general quantum circuit $C_{\secp,k,\abs{\psi_k}}[z]$, runs $C_{\secp,k,\abs{\psi_k}}[z]$ on input $\psi_k$, and outputs its output.
    \end{itemize}
    \item For a set of classical Turing machines $\cM\seteq x||y||z$, we write $\widetilde{\Sigma}[\cM]\seteq(\widetilde{\keygen}[x],\widetilde{\StateGen}[y],\widetilde{\Vrfy}[z])$ to mean the candidate of OWSGs that works as follows:
        \begin{itemize}
        \item $\widetilde{\keygen}[x](1^\secp)$ runs $x(1^\secp)$.
        If $x$ does not halt within $\secp^3$ steps, $\widetilde{\keygen}[x]$ outputs $\top$. 
        Otherwise, obtains a general quantum circuit $C_\secp[x]$, runs $C_\secp[x]$, and outputs its output.
        \item $\widetilde{\StateGen}[y](1^\secp,k)$ outputs $\top$ if $k=\top$.
        Otherwise, $\widetilde{\StateGen}[y](1^\secp,k)$ runs $y(1^\secp,k)$.
        If $y$ does not halt within $\secp^3$ steps, $\widetilde{\StateGen}[y]$ outputs $\top$.
        Otherwise, obtains a general quantum circuit $C_{\secp,k}[y]$, runs $C_{\secp,k}[y]$, and outputs its output.
    \item $\widetilde{\Vrfy}[z](1^\secp,k,\psi_k)$ outputs $\top$ if $k=\top$ or $\psi_k=\top$. 
    Otherwise, $\widetilde{\Vrfy}[z]$ runs $z(1^\secp,k,\abs{\psi_k})$.
    If it does not halt within $\secp^3$ steps, $\widetilde{\Vrfy}[z]$ outputs $\top$.
    Otherwise, obtains a general quantum circuit $C_{\secp,k,\abs{\psi_k}}$, runs $C_{\secp,k,\abs{\psi_k}}(\psi_k)$ on input $\psi_k$, and outputs its output.
    \end{itemize}
    \item For any $\secp\in\N$,
    we write $\{\widetilde{\Sigma}[\cM]\}_{x,y,z\in[\secp]}$ to mean
    \begin{align}
        \{\widetilde{\keygen[x]},\widetilde{\StateGen[y]},\widetilde{\Vrfy[z]}\}_{x,y,z\in[\secp]}.
    \end{align}
    \item 
    We consider a polynomial $n$ such that $n(\secp)=\secp^3$ for all $\secp\in\N$ since we combine $\secp^3$-OWSG candidates.
    We write $\Comb.\Sigma\seteq \Comb.(\keygen,\StateGen,\Vrfy)$ to mean a $(1,n)$-robust OWSGs combiner for universal construction.
\end{itemize}
\paragraph{Construction.}
We give a description of $\Sigma_{\mathsf{Univ}}\seteq (\keygen_{\mathsf{Univ}},\StateGen_{\mathsf{Univ}},\Vrfy_{\mathsf{Univ}})$.
\begin{description}
    \item[$\keygen_{\mathsf{Univ}}(1^\secp)$:]$ $
    \begin{itemize}
        \item Output $k\la \Comb.\keygen(1^\secp, \{(\widetilde{\keygen}[x],\widetilde{\StateGen}[y],\widetilde{\Vrfy}[z])\}_{x,y,z\in[\secp]} )$.
    \end{itemize}
    \item[$\StateGen_{\mathsf{Univ}}(1^\secp,k)$:]$ $
    \begin{itemize}
        \item Output $\psi_k\la\Comb.\StateGen(1^\secp,k,\{(\widetilde{\keygen}[x],\widetilde{\StateGen}[y],\widetilde{\Vrfy}[z])\}_{x,y,z\in[\secp]})$, and output its output.
    \end{itemize}
    \item[$\Vrfy_{\Univ}(1^\secp,k,\psi_k)$:]$ $
    \begin{itemize}
        \item Output $\top/\bot\la\Comb.\Vrfy(1^\secp,k,\psi_k,\{(\widetilde{\keygen}[x],\widetilde{\StateGen}[y],\widetilde{\Vrfy}[z])\}_{x,y,z\in[\secp]})$.
    \end{itemize}
\end{description}

\cref{thm:univ_owsg} follows from the following \cref{lem:univ_owsg_eff,lem:univ_owsg_cor,lem:univ_owsg_sec}.
\begin{lemma}\label{lem:univ_owsg_eff}
    $\Sigma_{\mathsf{Univ}}$ satisfies efficiency.
\end{lemma}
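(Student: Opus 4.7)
The plan is to trace through the definition of $\Sigma_{\mathsf{Univ}}$ and verify, stage by stage, that every sub-computation runs in time $\poly(\secp)$ on a uniform QPT machine, so that the composition is itself a uniform QPT algorithm.

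First, I would handle the candidate schemes. The key point is the built-in $\secp^3$ timeout baked into the definition of $\widetilde{\Sigma}[\cM]$: for every $\cM = (x,y,z) \in [\secp]^3$, each of $\widetilde{\keygen}[x](1^\secp)$, $\widetilde{\StateGen}[y](1^\secp,k)$, and $\widetilde{\Vrfy}[z](1^\secp,k,\psi_k)$ either aborts and outputs $\top$ within $\secp^3$ classical steps, or otherwise obtains a description of a general quantum circuit of size at most $\secp^3$ that is then executed. In either branch the total cost is $\poly(\secp)$, so each $\widetilde{\Sigma}[\cM]$ is uniformly efficient, independently of what the underlying Turing machine $\cM$ does.

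Second, I would observe that the list $\{(\widetilde{\keygen}[x],\widetilde{\StateGen}[y],\widetilde{\Vrfy}[z])\}_{x,y,z\in[\secp]}$ consists of $n(\secp)=\secp^3$ candidate descriptions, each of bit-length $O(\log \secp)$ (since $x,y,z \in [\secp]$ and the wrapping code that implements the timeout is independent of $\secp$). So enumerating and writing down the entire list can be done by a uniform classical polynomial-time Turing machine, with total output length $\poly(\secp)$.

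Finally, I would invoke \cref{thm:owsg_comb_univ}: the $(1,n)$-robust OWSGs combiner $\Comb.\Sigma=\Comb.(\keygen,\StateGen,\Vrfy)$ for universal construction, with $n(\secp)=\secp^3$, is a uniform QPT algorithm when regarded as a procedure whose inputs are the candidate list together with $(1^\secp,k,\psi_k)$ as appropriate. Composing the previous two stages with $\Comb$, each of $\keygen_{\mathsf{Univ}}$, $\StateGen_{\mathsf{Univ}}$, $\Vrfy_{\mathsf{Univ}}$ runs in time $\poly(\secp,|k|,|\psi_k|)$, which is the desired efficiency. I do not anticipate a real obstacle here; the only subtlety to watch is that the polynomial $n(\secp)=\secp^3$ is at most the number of distinct $(x,y,z)\in[\secp]^3$, so the combiner really is applied to polynomially many candidates, and the $\secp^3$ step bound on each candidate is what prevents a single non-halting Turing machine among the $\secp^3$ descriptions from blowing up the runtime.
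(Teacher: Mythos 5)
Your proof is correct and takes the same approach as the paper's, just spelled out in more detail: the paper's version simply observes that each $\widetilde{\Sigma}[\cM]$ is uniform QPT (thanks to the built-in $\secp^3$ timeout) and that $\Comb.\Sigma$ is uniform QPT, and concludes by composition. Your expansion of those two observations is accurate.
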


\begin{lemma}\label{lem:univ_owsg_cor}
    $\Sigma_{\mathsf{Univ}}$ satisfies correctness.
\end{lemma}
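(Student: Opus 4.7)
The plan is to derive the correctness of $\Sigma_{\mathsf{Univ}}$ directly from the correctness clause of the robust OWSGs combiner for universal construction $\Comb.\Sigma$ guaranteed by \cref{thm:owsg_comb_univ}. First, I would observe that, by construction, the triple $(\keygen_{\mathsf{Univ}}, \StateGen_{\mathsf{Univ}}, \Vrfy_{\mathsf{Univ}})$ is simply $(\Comb.\keygen, \Comb.\StateGen, \Comb.\Vrfy)$ instantiated on the family of $\secp^3$ candidates $\{\widetilde{\Sigma}[x,y,z]\}_{x,y,z \in [\secp]}$. Since the polynomial $n(\secp) = \secp^3$ matches the total number of triples $(x,y,z) \in [\secp]^3$, this is a valid invocation of the $(1,n)$-robust combiner for universal construction.

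Next, I would appeal to the correctness clause of \cref{def:robust_owsg_comb_univ}: for \emph{all} choices of $n(\secp)$ candidates (with no assumption that any of them be correct or secure), the resulting scheme accepts honestly generated $(k,\psi_k)$ with probability at least $1-\negl(\secp)$. Each $\widetilde{\Sigma}[x,y,z]$ satisfies the required syntactic interface because the $\top$-padding ensures that $\widetilde{\keygen}[x](1^\secp)$, $\widetilde{\StateGen}[y](1^\secp,k)$, and $\widetilde{\Vrfy}[z](1^\secp,k,\psi_k)$ all halt within $\secp^3$ steps and produce outputs of the expected types. Plugging in this family thus gives
\[
\Pr\bigl[\top \la \Vrfy_{\mathsf{Univ}}(1^\secp, k, \psi_k) \ : \ k \la \keygen_{\mathsf{Univ}}(1^\secp),\ \psi_k \la \StateGen_{\mathsf{Univ}}(1^\secp, k)\bigr] \geq 1 - \negl(\secp),
\]
which is precisely the OWSG correctness condition required by \cref{def:OWSG}.

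I do not anticipate any real obstacle: the argument is essentially a definition-unwrapping. The one mild subtlety worth flagging is confirming that the correctness guarantee of $\Comb.\Sigma$ is genuinely unconditional on the underlying candidates; this is directly visible from the ``For all $n$ candidates'' quantifier in \cref{def:robust_owsg_comb_univ} and is ultimately justified by the construction used for \cref{thm:OWSG_combiner}, which pre-processes every candidate through \cref{lem:amp_OWSG_cor} to produce a perfectly correct variant before combining. Consequently, we do \emph{not} need to invoke the assumed existence of a correct and secure OWSG candidate for this lemma; that hypothesis is used only for the security argument in \cref{lem:univ_owsg_sec}.
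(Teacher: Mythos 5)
Your proof is correct and takes essentially the same approach as the paper: both observe that $\Sigma_{\mathsf{Univ}}$ is exactly the combiner $\Comb.\Sigma$ run on the family $\{\widetilde{\Sigma}[\cM]\}_{x,y,z\in[\secp]}$, and then invoke the unconditional correctness clause of \cref{def:robust_owsg_comb_univ} (noting that each $\widetilde{\Sigma}[\cM]$ is a uniform QPT algorithm by the time-truncation/$\top$-padding). Your added remark that the correctness guarantee does not rely on any candidate being correct or secure, and that \cref{lem:amp_OWSG_cor} is what ultimately underwrites this unconditionality, is an accurate clarification of the same argument rather than a different route.
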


\begin{lemma}\label{lem:univ_owsg_sec}
   If there exist OWSGs, then $\Sigma_{\mathsf{Univ}}$ satisfies security.
\end{lemma}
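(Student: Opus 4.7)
The plan is to reduce the security of $\Sigma_{\Univ}$ to that of the robust combiner $\Comb.\Sigma$ from \cref{def:robust_owsg_comb_univ,thm:owsg_comb_univ} by exhibiting, inside the combined collection, a candidate that is efficient, correct, and secure from some point onward.

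First I would apply \cref{prop:owsg_time} to the hypothesis that OWSGs exist, obtaining a fixed triple of classical polynomial-time Turing machines $\cM^* = (x^*, y^*, z^*)$ such that $\Sigma[\cM^*]$ is a correct and secure OWSG scheme, and each of $x^*, y^*, z^*$ halts within $\secp^3$ steps on the relevant inputs for all sufficiently large $\secp$. Recall that $\widetilde{\Sigma}[\cM^*]$ differs from $\Sigma[\cM^*]$ only through the $\secp^3$-step timeout built into $\widetilde{\keygen}[x^*]$, $\widetilde{\StateGen}[y^*]$, and $\widetilde{\Vrfy}[z^*]$. Hence for every sufficiently large $\secp$ the two schemes behave identically, and this immediately transfers the correctness and security of $\Sigma[\cM^*]$ to $\widetilde{\Sigma}[\cM^*]$: the finitely many small-$\secp$ exceptions are absorbed into the negligible functions of the OWSG definition. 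Efficiency of $\widetilde{\Sigma}[\cM^*]$ is unconditional because of the timeout, so $\widetilde{\Sigma}[\cM^*]$ is a genuine OWSG scheme.

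Next I would observe that for every sufficiently large $\secp$ one has $x^*, y^*, z^* \in [\secp]$, so $\widetilde{\Sigma}[\cM^*]$ appears in the collection $\{\widetilde{\Sigma}[\cM]\}_{x,y,z \in [\secp]}$ that $\Sigma_{\Univ}$ feeds into $\Comb.\Sigma$. Enumerating triples via any fixed bijection between $\N^3$ and $\N$ turns this collection into a prefix $\{\widetilde{\Sigma}_i\}_{i \in [n(\secp)]}$ of a sequence indexed by $\N$, with $n(\secp) = \secp^3$, and it assigns $\widetilde{\Sigma}[\cM^*]$ a constant index $i^*$ depending only on $x^*, y^*, z^*$. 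In particular, $i^* < n(\secp) = \secp^3$ for all sufficiently large $\secp$, and every element of the sequence is efficient by the timeout construction, so the premises of the security clause of \cref{def:robust_owsg_comb_univ} are met. Applying the security guarantee of $\Comb.\Sigma$ then yields, for every non-uniform QPT adversary $\cA$ and every polynomial $t$, a negligible bound on the probability that $\Vrfy_{\Univ}(1^\secp, k^*, \psi_k)$ accepts, which is exactly the security of $\Sigma_{\Univ}$.

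The hard part will not be any single step but the bookkeeping of several simultaneous asymptotic thresholds: the one from \cref{prop:owsg_time} beyond which the timeouts stop firing, the one inherited from the correctness and security of $\Sigma[\cM^*]$, and the one after which $\max(x^*, y^*, z^*) \leq \secp$ and $i^* < \secp^3$. Since all three thresholds are constants independent of the adversary, taking $\secp$ larger than their maximum suffices, and the argument above goes through with only negligible error terms in the final bound.
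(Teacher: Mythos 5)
Your proposal is correct and follows essentially the same route as the paper: apply \cref{prop:owsg_time} to obtain a triple $(x^*,y^*,z^*)$ that halts within $\secp^3$ steps, observe that $\widetilde{\Sigma}[\cM^*]$ coincides with $\Sigma[\cM^*]$ for all sufficiently large $\secp$ and is therefore a correct and secure OWSG, and then invoke the security of the robust combiner for universal construction (\cref{def:robust_owsg_comb_univ}, \cref{thm:owsg_comb_univ}) together with the observation that $\Sigma_{\Univ}$ just runs $\Comb.\Sigma$ on the collection $\{\widetilde{\Sigma}[\cM]\}_{x,y,z\in[\secp]}$. Your extra remark about fixing an ordering of triples so that $[\secp]^3$ is a prefix of a single sequence (e.g., ordering by $\max(x,y,z)$) is a small but welcome bit of bookkeeping that the paper leaves implicit.
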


\begin{proof}[Proof of \cref{lem:univ_owsg_eff}]
    \cref{lem:univ_owsg_eff} follows because $(\widetilde{\keygen}[x],\widetilde{\StateGen}[y],\widetilde{\Vrfy}[z])$ is a set of uniform QPT algorithms for any $x,y,z\in[\secp]$, and $\Comb.\Sigma$ is also a set of uniform QPT algorithms.
\end{proof}

\begin{proof}[Proof of \cref{lem:univ_owsg_cor}]
From the construction, we have
\begin{align}
    &\Pr[\top\la\Vrfy_{\mathsf{Univ}}(1^\secp,k,\psi_k):k\la\keygen_{\mathsf{Univ}}(1^\secp),\psi_k\la\StateGen_{\mathsf{Univ}}(1^\secp,k)]\\
    &=\Pr\left[
    \top\la\Comb.\Vrfy(1^\secp,\psi_k,\{\widetilde{\Sigma}[\cM]\}_{x,y,z\in[\secp]})
    \ :
    \begin{array}{ll}
    k\la\Comb.\keygen(1^\secp,\{\widetilde{\Sigma}[\cM]\}_{x,y,z\in [\secp]})\\
    \psi_k\la\Comb.\StateGen(1^\secp,k,\{\widetilde{\Sigma}[\cM]\}_{x,y,z\in[\secp] })
    \end{array}
    \right].
\end{align}
    Because $\widetilde{\Sigma}[\cM]$ is a set of uniform QPT algorithms and the robust combiner $\Comb.\Sigma$ satisfies correctness, we have
\begin{align}
    \Pr\left[
    \top\la\Comb.\Vrfy(1^\secp,\psi_k,\{\widetilde{\Sigma}[\cM]\}_{x,y,z\in[\secp]})
    \ :
    \begin{array}{ll}
    k\la\Comb.\keygen(1^\secp,\{\widetilde{\Sigma}[\cM]\}_{x,y,z\in[\secp]})\\
    \psi_k\la\Comb.\StateGen(1^\secp,k,\{\widetilde{\Sigma}[\cM]\}_{x,y,z\in[\secp] })
    \end{array}
    \right] 
    \geq
    1-\negl(\secp).
\end{align}
This implies that $\Sigma_{\mathsf{Univ}}$ satisfies correctness.

\end{proof}

\begin{proof}[Proof of \cref{lem:univ_owsg_sec}]

Assume that there exists an OWSG.
Then, from \cref{prop:owsg_time}, there exists a set of classical Turing machines $(x^*,y^*,z^*)$ such that $x^*,y^*,z^*\in[n]$ for some $n\in\N$, and $x^*(1^\secp)$, $y^*(1^\secp)$, and $z^*(1^\secp)$ halt within $\secp^3$ steps for all sufficiently large security parameters $\secp\in\N$, and moreover $(\keygen[x^*],\StateGen[y^*],\Vrfy[z^*])$ is an OWSG scheme that satisfies correctness and security.
Furthermore, $(\widetilde{\keygen}[x^*],\widetilde{\StateGen}[y^*],\widetilde{\Vrfy}[z^*])$ also satisfies correctness and security
because for all sufficiently large security parameters, $(\widetilde{\keygen}[x^*],\widetilde{\StateGen}[y^*],\widetilde{\Vrfy}[z^*])$ emulates a correct-and-secure OWSG scheme $(\keygen[x^*],\StateGen[y^*],\Vrfy[z^*])$.
Therefore, for any polynomial $t$ and QPT adversary $\cA$, we have
\begin{align}
\Pr\left[\top\la\Comb.\Vrfy(1^\secp,k^*,\psi_k,\{\widetilde{\Sigma}[\cM]\}_{x,y,z\in[\secp]})
\ :
\begin{array}{ll}
k\la\Comb.\keygen(1^\secp,\{\widetilde{\Sigma}[\cM]\}_{x,y,z\in[\secp]})\\
\psi_k\la\Comb.\StateGen(1^\secp,k,\{\widetilde{\Sigma}[\cM]\}_{x,y,z\in[\secp]})\\
k^*\la\cA(\psi_k^{\otimes t(\secp)})
\end{array}
\right]
\leq \negl(\secp).
\end{align}
This is because $\Comb.\Sigma$ satisfies security and $\{(\widetilde{\keygen}[x],\widetilde{\StateGen}[y],\widetilde{\Vrfy}[z])\}_{x,y,z\in[\secp]}$ includes $(\widetilde{\keygen}[x^*],\widetilde{\StateGen}[y^*],\allowbreak\widetilde{\Vrfy}[z^*])$ for all sufficiently large $\secp\in\N$.

Furthermore, from the construction of $(\keygen_{\mathsf{Univ}},\StateGen_{\mathsf{Univ}},\Vrfy_{\mathsf{Univ}})$,
for all polynomial $t$, QPT adversary $\cA$, and security parameters $\secp\in\N$, we have
\begin{align}
    &\Pr\left[\top\la\Comb.\Vrfy(1^\secp,k^*,\psi_k,\{\widetilde{\Sigma}[\cM]\}_{x,y,z\in[\secp]})
\ :
\begin{array}{ll}
k\la\Comb.\keygen(1^\secp,\{\widetilde{\Sigma}[\cM]\}_{x,y,z\in[\secp]})\\
\psi_k\la\Comb.\StateGen(1^\secp,k,\{\widetilde{\Sigma}[\cM]\}_{x,y,z\in[\secp]})\\
k^*\la\cA(\psi_k^{\otimes t(\secp)})
\end{array}
\right]\\
&=\Pr\left[\top\la\Vrfy_{\mathsf{Univ}}(1^\secp,k^*,\psi_k)
\ :
\begin{array}{ll}
k\la\keygen_{\mathsf{Univ}}(1^\secp)\\
\psi_k\la\StateGen_{\mathsf{Univ}}(1^\secp,k)\\
k^*\la\cA(\psi_k^{\otimes t(\secp)})
\end{array}
\right].
\end{align}
Therefore, our universal construction $\Sigma_{\Univ}$ satisfies security.

\if0
Now, we show that $\Sigma_{\mathsf{Univ}}$ satisfies security.
For a contradiction, assume that 
$\Sigma_{\mathsf{Univ}}$ is not secure, and then show that $\Comb.\Sigma$ is not secure.
More formally, assume that there exists a non-uniform QPT adversary $\cA$, and polynomial $t$, constant $c$, and security parameter $\secp^*\in\N$ such that
\begin{align}
\Pr\left[\top\la\Vrfy_{\mathsf{Univ}}(1^\secp,k^*,\psi_k)
\ \middle |
\begin{array}{ll}
k\la\keygen_{\mathsf{Univ}}(1^\secp)\\
\psi_k\la\StateGen_{\mathsf{Univ}}(1^\secp,k)\\
k^*\la\cA(\psi_k^{\otimes t(\secp)})
\end{array}
\right]
\geq 1/\secp^c
\end{align}
for all $\secp>\secp^*$, and then show that there exists a non-uniform QPT adversary $\cB$ such that
\begin{align}
\Pr\left[\top\la\Comb.\Vrfy(1^\secp,k^*,\psi_k,\{\widetilde{\Sigma}[\cM]\}_{\cM\in3[\secp]})
\ \middle |
\begin{array}{ll}
k\la\Comb.\keygen(1^\secp,\{\widetilde{\Sigma}[\cM]\}_{\cM\in3[\secp]})\\
\psi_k\la\Comb.\StateGen(1^\secp,k,\{\widetilde{\Sigma}[\cM]\}_{\cM\in3[\secp]})\\
k^*\la\cB(\psi_k^{\otimes t(\secp)})
\end{array}
\right]
\geq 1/\secp^c
\end{align}
for all $\secp>\secp^*$.

Below, we describe the construction of $\cB$.
\begin{enumerate}
    \item $\cB$ receives $\psi_k^{\otimes t(\secp)}$, where $k\la\Comb.\keygen(1^\secp,\{\widetilde{\Sigma}[\cM]\}_{\cM\in3[\secp]})$, \\
    and $\psi_k\la\Comb.\StateGen(1^\secp,k,\{\widetilde{\Sigma}[\cM]\}_{\cM\in3[\secp]})$.
    \item $\cB$ runs $\cA$ on $\psi_k^{\otimes t(\secp)}$, and obtains $k^*$.
    \item Run $\top\la\Comb.\Vrfy(1^\secp,k^*,\psi_k,\{\widetilde{\Sigma}[\cM]\}_{\cM\in3[\secp]})$.
\end{enumerate}
From the construction, we have
\begin{align}
    &\Pr\left[\top\la\Comb.\Vrfy(1^\secp,k^*,\psi_k,\{\widetilde{\Sigma}[\cM]\}_{\cM\in3[\secp]})
\ \middle |
\begin{array}{ll}
k\la\Comb.\keygen(1^\secp,\{\widetilde{\Sigma}[\cM]\}_{\cM\in3[\secp]})\\
\psi_k\la\Comb.\StateGen(1^\secp,k,\{\widetilde{\Sigma}[\cM]\}_{\cM\in3[\secp]})\\
k^*\la\cB(\psi_k^{\otimes t(\secp)})
\end{array}
\right]\\
&=\Pr\left[\top\la\Vrfy_{\mathsf{Univ}}(1^\secp,k^*,\psi_k)
\ \middle |
\begin{array}{ll}
k\la\keygen_{\mathsf{Univ}}(1^\secp)\\
\psi_k\la\StateGen_{\mathsf{Univ}}(1^\secp,k)\\
k^*\la\cA(\psi_k^{\otimes t(\secp)})
\end{array}
\right]\\
&\geq 1/\secp^c
\end{align}
for all $\secp>\secp^*$.
This contradicts that we have
\begin{align}
\Pr\left[\top\la\Comb.\Vrfy(1^\secp,k^*,\psi_k,\{\widetilde{\Sigma}[\cM]\}_{\cM\in3[\secp]})
\ \middle |
\begin{array}{ll}
k\la\Comb.\keygen(1^\secp,\{\widetilde{\Sigma}[\cM]\}_{\cM\in3[\secp]})\\
\psi_k\la\Comb.\StateGen(1^\secp,k,\{\widetilde{\Sigma}[\cM]\}_{\cM\in3[\secp]})\\
k^*\la\cB(\psi_k^{\otimes t(\secp)})
\end{array}
\right]
\leq \negl(\secp)
\end{align}
 for any polynomial $t$ and QPT adversaries $\cB$.
 Therefore, $\Sigma_{\mathsf{Univ}}$ satisfies security.
\fi
\end{proof}
\end{proof}

\if0
\begin{proof}[Proof of \cref{lem:OWSG_cor}]
For $i\in[n]$, we write $\fail_i$ to mean the event that $\mathsf{Check}(\Sigma_i)$ outputs $\top$ although the correctness of $\Sigma_i$ is below $2/3$. 
We compute $\Pr[\mathsf{fail_i}]$ when the correctness of $\Sigma_i$ is at most $2/3$.
We consider the random variable $X_{i,j}$ as $1$ if $\top\la\Vrfy_i(k,\phi_k)$ for the $j$-th running of the verification algorithm while running $\mathsf{Check}$, and consider $X_j$ as $0$ if $\bot\la\Vrfy_i(k,\phi_k)$. 
If we denote $S_{i,\secp}\seteq\sum_{j\in[\secp]}X_{i,j}$, then $\Exp[S_{i,\secp}]\leq\frac{2\secp}{3}$, and thus $\abs{S_{i,\secp}-\Exp[S_{i,\secp}]}\geq \frac{\secp}{6}$ must hold for $S_{i,\secp}\geq \frac{5\secp}{6}$.
Therefore, from \cref{lem:hoefd}, we have
\begin{align}
    \Pr[\fail_i]\leq \Pr[\abs{S_{i,\secp}-\Exp[S_{i,\secp}]}\geq \frac{\secp}{6}]\leq2\exp(-\frac{\secp}{18}).
\end{align}
In particular,
because $\fail_i$ is an independent event, we have
\begin{align}
    \Pr[\bigcap_{i\in[n]}\overline{\fail_i}]
    &=\prod_{i\in[\secp]}\Pr[\bigcap_{i\in[n]}\overline{\fail_i}]\\
    &= \left(1-2\exp(-\secp/18)\right)^n\\
    &\geq 1-2n\exp(-\secp/18).
\end{align}

We compute the probability $\Pr_{k_i,\psi_{k_i}}[\top\la\mathsf{Mult}.\Vrfy_i(k_i,\psi_{k_i})]$ promised that $\Sigma_i$ satisfies $\frac{2}{3}$-correctness.
Here, $k_i\seteq \{k_i^j\}_{j\in[\secp]}$ and $\psi_{k_i}\seteq\bigotimes_{j\in[\secp]}\psi_{k_i^j}$, where $k_i^j\la\keygen_i(1^{\secp})$ and $\psi_{k_i^j}\la\StateGen_i(k_i^j)$ for $j\in[\secp]$.
We consider $X_{i,j}$ as $1$ if $\top\la\Vrfy_i(k_i^j,\psi_{k_i^j})$, and consider $X_{i,j}$ as $0$ otherwise.
If we denote $S_{i,\secp}\seteq \sum_{j\in[\secp]}X_{i,j}$, then $\Exp[S_{i,\secp}]\geq \frac{2\secp}{3}$, and thus $\abs{S_{i,\secp}-\Exp[S_{i,\secp}]}\geq \frac{\secp}{6}$ must hold when $S_{i,\secp}\leq \frac{\secp}{2}$.
Then, by \cref{lem:hoefd}, we have
\begin{align}
    \Pr_{k_i,\psi_{k_i}}[\top\la\mathsf{Mult}.\Vrfy_i(k_i,\psi_{k_i})]
    &=1-\Pr_{k_i,\psi_{k_i}}[\bot\la\mathsf{Mult}.\Vrfy(k_i,\psi_{k_i})]\\
    &\geq 1-\Pr[\abs{S_{i,\secp}-\Exp[S_{i,\secp}]}\geq\frac{\secp}{6} ]\geq 1-2\exp(-\frac{\secp}{18}).
\end{align}
In particular, because $\top\la\mathsf{Mult}.\Vrfy_i(k_i,\psi_{k_i})$ is an independent event, if $\Sigma_i$ satisfies $2/3$-correctness for all $i\in[n]$, then we have
\begin{align}    \Pr_{k,\psi_{k}}[\bigcap_{i\in[n]}\left(\top\la\mathsf{Mult}.\Vrfy_i(k_i,\psi_{k_i})\right)]=\prod_{i\in[n]}\Pr_{k_i,\psi_{k_i}}[\top\la\mathsf{Mult}.\Vrfy_i(k_i,\psi_{k_i})]\geq (1-2\exp(-\secp/18))^n\geq 1-2n\exp(-\secp/18).
\end{align}

Finally, we compute $\Pr_{\substack{k\la\Comb.\keygen(1^\secp)\\ \psi_k\la\Comb.\StateGen(k)}}[\top\la\Vrfy(k,\psi_k)]$.
\begin{align}
    &\Pr_{\substack{k\la\Comb.\keygen(1^\secp)\\ \psi_k\la\Comb.\StateGen(k)}}[\top\la\Vrfy(k,\psi_k)]\\
    &\geq\Pr_{\substack{k\la\Comb.\keygen(1^\secp)\\ \psi_k\la\Comb.\StateGen(k)}}[\top\la\Comb.\Vrfy(k,\psi_k)\wedge \bigcap_{i\in[n]}\overline{\fail_i}]\\
    &\geq \Pr_{\substack{k\la\Comb.\keygen(1^\secp)\\ \psi_k\la\Comb.\StateGen(k)}}[\top\la\Comb.\Vrfy(k,\psi_k): \bigcap_{i\in[n]}\overline{\fail_i}]\Pr[\bigcap_{i\in[n]}\overline{\fail_i}]\\
    &\geq (1-2n\exp(-\secp/18))^2\geq 1-4n\exp(-\secp/18).
\end{align}
Here in the third inequality, we have used that $\Sigma_i$ satisfies $\frac{2}{3}$-correctness or $\mathsf{Mult}.\Vrfy_i(k_i,\psi_{k_i})$ always outputs $\top$ when $\fail_i$ does not occur.

\end{proof}

\begin{proof}[Proof of \cref{lem:OWSG_sec}]
    For simplicity, we omit $\{\Sigma_i\}_{i\in[n]}$ from $\Comb.\keygen(1^\secp,\{\Sigma_i\}_{i\in[n]})$, $\Comb.\StateGen(k,\{\Sigma_i\}_{i\in[n]})$, and $\Comb.\Vrfy(k,\psi_k,\{\Sigma_i\}_{i\in[n]})$, and for some event $X$ we write $\Pr_{k,\psi_k,k^*}[X]$ to mean $\Pr_{\substack{k\la\Comb.\keygen(1^{\secp})\\\psi_k\la\Comb.\StateGen(k)\\k^*\la\cA(\psi^{\otimes t})}}[X]$.
    
    Let us assume that $\Sigma_{x}$ satisfies both correctness and security.
    We note that there exists a negligible $\negl$ such that $\Pr[\bot\la\mathsf{Check}(\Sigma_x)]\leq \negl(\secp)$ because $\Sigma_x$ satisfies correctness.
    For contradiction, let us assume that $\cA $ breaks the security of $\Sigma$, and then construct a QPT algorithm $\cB$ that breaks the security of $\Sigma_x$.
    We have
    \begin{align}
      &\Pr_{k,\psi_k,k^*}[\top\la\Comb.\Vrfy(k^*,\psi_k)]  \\
      &=\Pr_{k,\psi_k,k^*}[\top\la\Comb.\Vrfy(k^*,\psi_k)\wedge\bot\la\mathsf{Check}(\Sigma_x)]
    +\Pr_{k,\psi_k,k^*}[\top\la\Comb.\Vrfy(k^*,\psi_k)\wedge\top\la\mathsf{Check}(\Sigma_x)]\\
      &\leq\Pr_{k,\psi_k,k^*}[\bot\la\mathsf{Check}(\Sigma_x)]+\Pr_{k,\psi_k,k^*}[\top\la\mathsf{Mult}.\Vrfy_x(k^*_x,\psi_{k_x})\wedge\top\la\mathsf{Check}(\Sigma_x)]\\
      &\leq\negl(\secp)+\Pr_{k,\psi_k,k^*}[\top\la\mathsf{Mult}.\Vrfy_x(k^*_x,\psi_{k_x})\wedge\top\la\mathsf{Check}(\Sigma_x)].
    \end{align}
    Here in the first inequality, we have used the fact that the output of $\mathsf{Mult}.\Vrfy$ is $\top$ when the output of $\Comb.\Vrfy$ is $\top$, and in the second inequality, we have used $\Pr[\bot\la\mathsf{Check}(\Sigma_x)]\leq \negl(\secp)$.
     Because we assume that $\cA$ breaks the security of $\Comb.\Sigma$, $\Pr_{k,\psi_k,k^*}[\top\la\Comb.\Vrfy(k^*,\psi_k)]$ is non-negligible, and thus
    \begin{align}
    \Pr_{k,\psi_k,k^*}[\top\la\mathsf{Mult}.\Vrfy_x(k^*_x,\psi_{k_x})\wedge\top\la\mathsf{Check}(\Sigma_x)]\geq p(\sep)
    \end{align}
    for some non-negligible $p$.
    Furthermore, the \cref{prop:comb_owsg} holds, which we prove later.
    \color{red}
    \begin{proposition}\label{prop:comb_owsg}
    \begin{align}
        \Pr_{k,\psi_k,k^*}[\top\la\mathsf{Mult}.\Vrfy_x(k^*_x,\psi_{k_x})\wedge\top\la\mathsf{Check}(\Sigma_x)]\leq \frac{1}{n}\sum_{y\in[n]}\Pr_{k,\psi_k,k^*}[\top\la\Vrfy_x(k_x^{*y},\psi_{k_x^y})\wedge\top\la\mathsf{Check}(\Sigma_x)].
    \end{align}
    \end{proposition}
    \color{black}
    \taiga{I skip the red part, and I think that I need to modify the inequality above.}
    
    Now, let us describe the $\cB$.
    \begin{enumerate}
        \item $\cB$ runs $b_i\la\mathsf{Check}(\Sigma_i)$ for all $i\in[n]$.
        If $\bot\la\mathsf{Check}(\Sigma_x)$, then $\cB$ aborts the protocol.
        \item $\cB$ does the followings for all $i\in[n]\backslash x$:
        \begin{itemize}
            \item If $b_i=\top$, run $k_{i}^j\la\keygen(1^{\secp})$ for all $j\in[\secp]$ and set $k_i\seteq\{k_i^j\}_{j\in[\secp]}$.
            \item Otherwise, set $k_{i}\seteq \bot$.
        \end{itemize}
        \item For all $i\in[n]\backslash x$:
        \begin{itemize}
            \item If $k_i\seteq\bot$, then set $\psi_{k_i}\seteq\bot$.
            \item Otherwise, set $\psi_{k_i}\seteq\bigotimes_{i\in[n]}\psi_{k_i^j}$ where $\psi_{k_i^j}\la\StateGen_i(k_i^j)$. 
        \end{itemize}
        \item The challenger samples $k'_x\la\keygen_x(1^\secp)$, and sends $\psi_{k'_x}^{\otimes t}$ to $\cB$.
        \item $\cB$ samples $y\la[\secp]$, and sets $\psi_{k_x}\seteq\bigotimes_{j\in[\secp]}\psi_{k_x^j}$, where $\psi_{k_x^j}\la\keygen_x(k_x^j)$ for $j\in[\secp]\backslash y$ and $\psi_{k_x^y}\seteq \psi_{k'_x}$.
        \item $\cB$ sends $\left(\psi_{k_1}\otimes\cdots\otimes \psi_{k_n}\right)^{\otimes t}$ to $\cA$.
        \item $\cB$ receives $k^*\seteq \{k_1^{*j}\}_{j\in[\secp]}||\cdots ||\{k_n^{*j}\}_{j\in[\secp]}$ from the $\cA$.
        \item $\cB$ sends $k_x^{*y}$ to the challenger.
    \end{enumerate}
    Now, we compute the probability that $\cB$ breaks the security of $\Sigma_x$.
    Note that when $\top\la\Sigma_x$, $\cB$ perfectly simulates the challenger of $\Comb.\Sigma$.
    \begin{align}
        &\Pr[\cB\mbox{ \,\,wins}]\\
        &=\Pr_{k,\psi_{k},y,k^*}[\top\la\Vrfy_x(k_x^{*y},\psi_{k'_x})\wedge \top\la\mathsf{Check}(\Sigma_x)]\\
        &=\frac{1}{n}\sum_{y\in[n]}\Pr_{k,\psi_{k},k^*}[\top\la\Vrfy_x(k_x^{*y},\psi_{k'_x})\wedge \top\la\mathsf{Check}(\Sigma_{x})]
    \end{align}
    From \cref{prop:comb_owsg}, $\Pr[\cB\mbox{\,\, wins}]$ is non-negligible, which contradicts to $\Sigma_x$ satisfies security.
    Now, we show the remaining \cref{prop:comb_owsg}.
    \begin{proof}[Proof of \cref{prop:comb_owsg}]
        
    \end{proof}
\end{proof}
\fi
\color{black}


\section{Robust Combiner for Public-Key Quantum Money Mini-Scheme}\label{sec:money}
\begin{definition}[Robust Combiner for Public-Key Quantum Money Mini-Scheme]
A robust combiner for public-key quantum money mini-scheme is a deterministic classical polynomial-time Turing machine $\cM$ with the following properties:
\begin{itemize}
    \item $\cM$ takes as input $1^n$ with $n\in\N$ and $n$-candidates for public-key quantum money mini-schemes $\{\Sigma_i\seteq(\Mint_i,\Vrfy_i)\}_{i\in[n]}$ promised that all candidates satisfy efficiency, and outputs a single set of algorithms $\Sigma\seteq(\Mint,\Vrfy)$.
    \item If all of $\{\Sigma_i\}_{i\in[n]}$ satisfy efficiency and at least one of $\{\Sigma_i\}_{i\in[n]}$ satisfies both correctness and security, then $\Sigma$ is a public-key quantum money mini-scheme that satisfies efficiency, correctness, and security.
\end{itemize}
\end{definition}

\begin{theorem}\label{thm:combiner_money}
    A robust combiner for public-key quantum money mini-scheme exists.
\end{theorem}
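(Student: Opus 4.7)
The plan is to mirror the two-stage strategy used for \cref{thm:OWSG_combiner}: first, give a generic transformation that forces an arbitrary candidate to satisfy correctness while preserving security when the original was already correct and secure; second, combine the transformed candidates by simple parallel composition. Unlike the OWSG case, the gentle-measurement trick cannot be applied directly to $\Mint$ because for money we need the verification register to be reusable across both halves of the cloning experiment, so we instead follow the approach sketched in the footnote of \cref{sec:tech} and import the correctness-amplification technique of \cite{EC:HKNRR05}.

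Concretely, given a candidate $\Sigma = (\Mint, \Vrfy)$, I would build $\Sigma^{*} = (\Mint^{*}, \Vrfy^{*})$ as follows. On input $1^{\secp}$, the algorithm first performs a \emph{correctness check}: it runs $\Mint$ and $\Vrfy$ independently $\secp$ times and counts the fraction of successful verifications. If the fraction exceeds, say, $5/6$, it concludes that $\Sigma$ is "plausibly correct"; otherwise it flags $\Sigma$ as "bad". In the plausibly-correct branch, $\Mint^{*}(1^{\secp})$ runs $\Mint(1^{\secp})$ polynomially many times in parallel and attaches to the serial number a string of independent minted serials; $\Vrfy^{*}$ verifies each copy and accepts iff a majority verifies, which by a Hoeffding-style argument yields $1-\negl(\secp)$ correctness whenever the underlying scheme has inverse-polynomial correctness. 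In the bad branch, $\Mint^{*}$ simply outputs a dummy serial and an empty state, while $\Vrfy^{*}$ is defined to always accept. Crucially, the choice between the two branches is made once inside $\Mint^{*}/\Vrfy^{*}$ using shared randomness encoded into the serial $s$ itself (for instance, by fixing the outcome of the check as a deterministic function of $s$), so that honest minter and verifier agree. Because this check is performed by an independent sampling that is statistically close to the true correctness probability of $\Sigma$, the bad branch is triggered with negligible probability whenever $\Sigma$ is in fact correct; this implies that $\Sigma^{*}$ inherits the security of $\Sigma$ up to a negligible loss, while $\Sigma^{*}$ is unconditionally correct by construction.

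With this transformation in hand, the robust combiner $\cM$ is straightforward: on input $1^{n}$ and $\{\Sigma_{i}\}_{i\in[n]}$, first compute $\Sigma_{i}^{*}$ for each $i$, then output the scheme $\Sigma = (\Mint, \Vrfy)$ where $\Mint(1^{\secp})$ runs $(s_{i}, \rho_{s_{i}}) \la \Mint_{i}^{*}(1^{\secp})$ for all $i\in[n]$ and outputs $s \seteq (s_{1}, \ldots, s_{n})$ and $\rho_{s} \seteq \bigotimes_{i\in[n]} \rho_{s_{i}}$, and $\Vrfy(1^{\secp}, s, \rho)$ parses $\rho$ as a tensor product across the $n$ registers and accepts iff every $\Vrfy_{i}^{*}$ accepts on its register. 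Correctness of $\Sigma$ follows since every $\Sigma_{i}^{*}$ is correct. For security, suppose $\Sigma_{i^{*}}$ is correct and secure, hence $\Sigma_{i^{*}}^{*}$ is also secure. An adversary $\cA$ producing a state $\sigma_{R[1], R[2]}$ on which both verifications succeed induces, by tracing out the registers corresponding to indices $i \neq i^{*}$, an adversary $\cB$ against $\Sigma_{i^{*}}^{*}$ that wins with the same probability; $\cB$ can simulate the minting of all components $j \neq i^{*}$ itself since $\Sigma_{j}^{*}$ is efficient.

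The main obstacle I anticipate is the subtle interplay between the correctness check and the two-copy security experiment. Because the bad/good branch of $\Sigma_{i^{*}}^{*}$ is decided by a randomized check, one must argue that conditioning on "the good branch was taken" does not skew the distribution of minted states in a way that the reduction $\cB$ cannot simulate, and that the inverse-polynomial success probability of a would-be cloning adversary cannot be absorbed into the negligible failure probability of the check. This will be handled by the standard argument of \cite{EC:HKNRR05}: set the check's threshold so that $\Sigma$ passes with overwhelming probability when truly correct, and apply a union bound over the finitely many candidates to conclude that the combined scheme is secure whenever at least one underlying candidate is both correct and secure.
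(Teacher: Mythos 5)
Your proposal follows essentially the same two-stage route as the paper: a check-and-amplify transformation à la \cite{EC:HKNRR05} to enforce correctness on each candidate, followed by parallel composition of the transformed schemes and a standard hybrid reduction to the secure $\Sigma_{i^*}^*$. The only cosmetic difference is how the check outcome reaches the verifier: the paper simply sets the serial $s^*$ to a special symbol $\top$ when the check fails (so $\Vrfy^*$ reads it off directly), whereas your phrasing about a ``deterministic function of $s$'' suggests re-deriving the randomized check from the serial, which is unnecessary and would not actually work without pinning down the check's randomness; once interpreted the first way, the rest of your argument matches the paper's.
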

For proving \cref{thm:combiner_money}, we introduce the following \cref{lem:amp_cor_money}.
\begin{lemma}\label{lem:amp_cor_money}
    Let $\Sigma=(\mathsf{Mint},\Vrfy)$ be a candidate for public-key quantum money mini-scheme.
    From $\Sigma$, we can construct a public-key quantum money mini-scheme $\Sigma^*=(\mathsf{Mint}^*,\Vrfy^*)$ with the following properties:
    \begin{enumerate}
        \item If $\Sigma$ is a uniform QPT algorithm, then $\Sigma^*$ is a uniform QPT algorithm.
        \item $\Sigma^*$ satisfies correctness.
        \item If $\Sigma$ is a uniform QPT algorithm and satisfies both correctness and security, then $\Sigma^*$ satisfies security.
    \end{enumerate}
\end{lemma}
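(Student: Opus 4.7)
The plan is to follow the approach of \cite{EC:HKNRR05} sketched in the footnote of \cref{sec:tech}: first test the correctness of the candidate $\Sigma$, and then either amplify its correctness by parallel repetition or fall back to a trivial scheme whose verification always accepts. Concretely, on input $1^\secp$, $\Mint^*$ performs a correctness test by sampling $(s_i,\rho_i)\la\Mint(1^\secp)$ and $b_i\la\Vrfy(1^\secp,s_i,\rho_i)$ for $i\in[\secp]$ and setting $N\seteq\abs{\{i:b_i=\top\}}$. I fix a constant check threshold $T_c\in(3/4,1)$, say $T_c=7/8$. If $N/\secp<T_c$, then $\Mint^*$ outputs the ``dummy'' pair $(s^*=\bot,\rho^*=\ket{0}\bra{0})$. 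Otherwise $\Mint^*$ draws fresh $(s_j,\rho_j)\la\Mint(1^\secp)$ for $j\in[\secp]$ and outputs $(s^*=(s_1,\dots,s_\secp),\rho^*=\bigotimes_j\rho_j)$. On input $(s^*,\rho^*)$, $\Vrfy^*$ accepts immediately whenever $s^*=\bot$, and otherwise runs $\Vrfy(1^\secp,s_j,\cdot)$ coherently on the $j$-th sub-register of $\rho^*$ for every $j\in[\secp]$ and accepts iff at least $3\secp/4$ of the outcomes are $\top$. Item~1 is then immediate since $\Mint$ and $\Vrfy$ are invoked only polynomially many times.

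For item~2 I would do a case analysis on the single-trial acceptance probability $q$ of $\Sigma$. By Chernoff, for a small constant $\delta>0$ the test passes with probability $1-e^{-\Omega(\delta^2\secp)}$ whenever $q\geq T_c+\delta$ and fails with the same probability whenever $q\leq T_c-\delta$. In the first regime $\Mint^*$ outputs the parallel repetition and, again by Chernoff, the number of accepting coordinates is $\geq 3\secp/4$ with overwhelming probability because $q>3/4$. In the second regime $\Vrfy^*$ accepts by virtue of the dummy serial number. Because the ambiguous window $[T_c-\delta,T_c+\delta]$ is chosen to sit strictly inside $(3/4,1)$, even if the check happens to pass in this window the parallel repetition still accepts with overwhelming probability. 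Combining the cases, $\Vrfy^*$ accepts the output of $\Mint^*$ with probability $1-\negl(\secp)$ for every value of $q$.

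For item~3 assume $\Sigma$ satisfies both correctness and security. Correctness of $\Sigma$ yields $q\geq 1-\negl(\secp)$, so the test of $\Mint^*$ passes with overwhelming probability, and consequently the distribution output by $\Mint^*$ is $\negl(\secp)$-close in trace distance to the $\secp$-fold parallel repetition of $\Mint$. Given a non-uniform QPT adversary $\cA^*$ that wins the security game against $\Sigma^*$ with non-negligible probability $\epsilon$, I would construct $\cA$ against $\Sigma$ as follows: on challenge $(s,\rho_s)$, sample $j^*\la[\secp]$, run $(s_j,\rho_j)\la\Mint(1^\secp)$ for each $j\neq j^*$, set $(s_{j^*},\rho_{j^*})\seteq(s,\rho_s)$, feed $((s_1,\dots,s_\secp),\bigotimes_j\rho_j)$ to $\cA^*$, receive the pair of forgeries $\sigma_{R[1],R[2]}$, and return the $j^*$-th sub-register of each side. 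Because the verification measurements on disjoint sub-registers commute, I can define random acceptance sets $S_i\subseteq[\secp]$ consisting of those coordinates on which $\Vrfy(s_j,\cdot)$ accepts the $j$-th sub-register of the $R[i]$ side. Conditioning on $\cA^*$'s win event, $\abs{S_1},\abs{S_2}\geq 3\secp/4$, so by inclusion--exclusion $\abs{S_1\cap S_2}\geq\secp/2$ and a uniformly random $j^*$ lies in $S_1\cap S_2$ with probability at least $1/2$. Hence $\cA$ wins against $\Sigma$ with probability at least $\epsilon/2-\negl(\secp)$, contradicting security of $\Sigma$.

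The main obstacle will be lining up the two thresholds (the test threshold $T_c$ and the verification threshold $3/4$) with the Chernoff concentration windows so that correctness is maintained in every regime of $q$, not merely when $q$ is far from $T_c$. A secondary subtlety is handling possible entanglement between $R[1]$ and $R[2]$ in the security reduction, but since the sets $S_1$ and $S_2$ are defined purely from marginal measurement outcomes on disjoint sub-registers the inclusion--exclusion step is classical and survives arbitrary entanglement, so it is not a genuine obstacle once stated carefully.
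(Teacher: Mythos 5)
Your proposal is correct and follows essentially the same route as the paper's proof in \cref{sec:app_money}: test correctness by sampling and verifying $\secp$ money states, fall back to a trivially accepting scheme if the test fails, otherwise output the $\secp$-fold parallel repetition with a $3/4$-threshold verifier, and in the security reduction embed the challenge at a random coordinate and use inclusion--exclusion on the per-coordinate acceptance sets to argue that the random coordinate passes with probability at least $1/2$ conditioned on the adversary's win. The only differences are cosmetic (the paper uses check threshold $11/12$ and splits on $q\lessgtr 5/6$ rather than a $T_c\pm\delta$ window, and it keeps the real state as the dummy output rather than $\ket{0}\bra{0}$), and neither affects the argument.
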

We describe the proof of \cref{lem:amp_cor_money} in \cref{sec:app_money}.

\if0
\begin{proof}[Proof of \cref{lem:amp_cor_money}]
Without loss of generality, $\Vrfy(1^\secp,s,\rho)$ can be considered as the algorithm working in the following way.  

For input $(1^\secp,s,\rho)$, run a classical Turing machine $\cM$ on $(1^\secp,s,\abs{\rho})$, obtain $U_{\Vrfy,k}$, append auxiliary state $\ket{0\cdots0}\bra{0\cdots0}$ to $\rho$, apply a unitary $U_{\Vrfy,s}$ on $\psi\otimes\ket{0\cdots0}\bra{0\cdots0}$, obtain $\rho_{\psi}$, and measure the first qubit of $\rho_\psi$ with the computational basis and output $\top$ if the measurement result is $1$ and $\bot$ otherwise.

We describe the $\Sigma^*\seteq(\Mint^*,\Vrfy^*)$.
\begin{description}
    \item[$\Mint^*(1^\secp)$:]$ $
    \begin{itemize}
        \item Run $(s,\rho_s)\la\Mint(1^\secp)$.
        \item Apply $U_{\Vrfy,s}$ on $\rho_s\otimes\ket{0\cdots0}\bra{0\cdots 0}$, obtain $\rho_k\seteq U_{\Vrfy,s}(\rho_s\otimes \ket{0\cdots 0}\bra{0\cdots0})U_{\Vrfy,s}^{\dagger}$, measures the first bit of $\rho_s$ in the computational basis, and obtains the measurement result $ b$, and post-measurement quantum state $\rho_{b,s}$.
        \begin{itemize}
            \item If the measurement result is $1$, then output $s^*\seteq s$ and $\rho_s^*\seteq U_{\Vrfy,s}^{\dagger}(\ket{1}\bra{1}\otimes \rho_{1,s})U_{\Vrfy,s}$.
            \item If the measurement result is $0$, then output $s^*\seteq\bot$ and $\rho_s^*\seteq U_{\Vrfy,s}^{\dagger}(\ket{0}\bra{0}\otimes \rho_{0,s})U_{\Vrfy,s}$.
        \end{itemize}
    \end{itemize}
    \item[$\Vrfy^*(1^\secp,s^*,\rho)$]
    \begin{itemize}
        \item If $s^*=\bot$, then output $\top$.
        \item If $s^*\seteq s\neq \bot$, then measure the first qubit of $U_{\Vrfy,s}\rho U_{\Vrfy,s}^{\dagger}$ in the computational basis.
        Output $\top$ if the measurement result is $1$, and output $\bot$ otherwise.
    \end{itemize}
\end{description}
\end{proof}
\fi

\if0
\begin{proof}[Proof of \cref{lem:amp_cor_money}]
Below, we consider a fixed security parameter $\secp$. Let us describe $\Sigma^*$.
\begin{description}
    \item[$\Mint^*(1^\secp)$:]$ $
    \begin{itemize}
        \item Run $\mathsf{Check}(\Sigma)$, where $\mathsf{Check}$ works as follows:
        \begin{itemize}
        \item Run $(s_i,\rho_{s_i})\la\Mint(1^\secp)$ for all $i\in[\secp]$.
        \item Run $\Vrfy(1^\secp,s_i,\rho_{s_i})$ for all $i\in[\secp]$.
        \item Output $1$ if the number of $\top\la\Vrfy(1^\secp,s_i,\rho_{s_i})$ is at least $\frac{11\secp}{12}$, and output $0$ otherwise.
        \end{itemize}
        \item If $1\la\mathsf{Check}(\Sigma)$, then run $(s_i,\rho_{s_i})\la\Mint(1^\secp)$ for all $i\in[\secp]$ and output $s^*\seteq \{s_i\}_{i\in[\secp]} $ and $\rho_{s}^*\seteq \bigotimes_{i\in[\secp]}\rho_{s_i}$. 
        \item If $0\la\mathsf{Check}(\Sigma)$, then run $(s_i,\rho_{s_i})\la\Mint(1^\secp)$ for all $i\in[\secp]$ and output $s\seteq \top$ and $\rho_{s^*}\seteq \bigotimes_{i\in[\secp]}\rho_{s_i}$.
    \end{itemize}
    \item[$\Vrfy^*(1^\secp, s^*,\rho)$:]$ $ 
    \begin{itemize}
        \item Let $\rho$ be a quantum state on the $\secp$ registers $R[1]\cdots R[\secp]$. Let $\rho[i]\seteq \Tr_{R[1]\cdots R[i-1]R[i+1]\cdots R[\secp]}(\rho)$ for all $i\in[\secp]$.
        \item If $s^*= \top$, output $\top$.
        \item If $s^* \neq \top$, then parse $s^*= \{s_i\}_{i\in[\secp]}$ and run $\Vrfy(1^\secp,s_i,\rho[i])$ for all $i\in[\secp]$.
        Output $\top$ if the number of $\top\la\Vrfy(1^\secp,s_i,\rho[i])$ is at least $\frac{3\secp}{4}$.
    \end{itemize}
\end{description}
\paragraph{Proof of the first item.}
This trivially holds.

\paragraph{Proof of the second item.} 
First, assume that $\Pr[\top\la\Vrfy(1^\secp,s,\rho_s):(s,\rho_s)\la\Mint(1^\secp)]<5/6$, and compute $\Pr[\top\la\Vrfy^*(1^\secp,s^*,\rho_{s^*}): (s^*,\rho_{s^*})\la\Mint^*(1^\secp)]$.
\begin{align}
    &\Pr[\top\la\Vrfy^*(1^\secp,s^*,\rho_{s^*}): (s^*,\rho_{s^*})\la\Mint^*(1^\secp)]\\
    &\geq\Pr[\top\la\Vrfy^*(1^\secp,s^*,\rho_{s^*}): 0\la\mathsf{Check}(\Sigma)\wedge (s^*,\rho_{s^*})\la\Mint^*(1^\secp)]\Pr[0\la\mathsf{Check}(\Sigma)]\\
    &\geq \Pr[0\la\mathsf{Check}(\Sigma)]\\
    &\geq 1-2\exp(-\secp/72)).
\end{align}
Here, the third inequality follows from 
$\Pr[\top\la\Vrfy(1^\secp,s,\rho_s):(s,\rho_s)\la\Mint(1^\secp)]<5/6$ and \cref{lem:hoefd}, which we will show later.

Next, assume that $\Pr[\top\la\Vrfy(1^\secp,s,\rho_s):(s,\rho_s)\la\Mint(1^\secp)]\geq 5/6$, and compute\\ $\Pr[\top\la\Vrfy^*(1^\secp,s^*,\rho_{s^*}): (s^*,\rho_{s^*})\la\Mint^*(1^\secp)]$.
\begin{align}
    &\Pr[\top\la\Vrfy^*(1^\secp,s^*,\rho_{s^*}): (s^*,\rho_{s^*})\la\Mint^*(1^\secp)]\\
    &\geq\bigg(\Pr[\top\la\Vrfy^*(1^\secp,s^*,\rho_{s^*}): 0\la\mathsf{Check}(\Sigma)\wedge (s^*,\rho_{s^*})\la\Mint^*(1^\secp)]\Pr[0\la\mathsf{Check}(\Sigma)]\\
    &+\Pr[\top\la\Vrfy^*(1^\secp,s^*,\rho_{s^*}): 1\la\mathsf{Check}(\Sigma)\wedge (s^*,\rho_{s^*})\la\Mint^*(1^\secp)]\Pr[1\la\mathsf{Check}(\Sigma)]\bigg)\\
    &= \Pr[0\la\mathsf{Check}(\Sigma)]+\Pr[\top\la\Vrfy^*(1^\secp,s^*,\rho_{s^*}): 1\la\mathsf{Check}(\Sigma)\wedge (s^*,\rho_{s^*})\la\Mint^*(1^\secp)]\Pr[1\la\mathsf{Check}(\Sigma)]\\
    &=1-\Pr[1\la\mathsf{Check}(\Sigma)]\left(1-\Pr[\top\la\Vrfy^*(1^\secp,s^*,\rho_{s^*}): 1\la\mathsf{Check}(\Sigma)\wedge (s^*,\rho_{s^*})\la\Mint^*(1^\secp)]\right)\\
    &\geq \Pr[\top\la\Vrfy^*(1^\secp,s^*,\rho_{s^*}): 1\la\mathsf{Check}(\Sigma)\wedge (s^*,\rho_{s^*})\la\Mint^*(1^\secp)]\\
    &\geq 1-2\exp(-\secp/72)).
\end{align}
Here in the final inequality, we have used that $ \Pr[\top\la\Vrfy(1^\secp,s,\rho_s):(s,\rho_s)\la\Mint(1^\secp)]\geq 5/6$ and \cref{lem:hoefd}, which we will prove later.
Therefore, the $\Sigma^*$ satisfies the correctness.

We will prove the skipped part.
Assume $\Pr[\top\la\Vrfy(1^\secp,s,\rho_s):(s,\rho_s)\la\Mint(1^\secp)]<5/6$ and show $\Pr[0\la\mathsf{Check}(\Sigma)]\geq 1-2\exp(-\secp/72)$.
We consider the random variable $X_{i}$ as $1$ if $\top\la\Vrfy(1^\secp,s_i,\rho_{s_i})$ for the $i$-th running of the verification algorithm while running $\mathsf{Check}$, and consider $X_i$ as $0$ if $\bot\la\Vrfy(1^\secp,s_i,\rho_{s_i})$. 
If we denote $S_{\secp}\seteq\sum_{i\in[\secp]}X_{i}$, then
$1\la\mathsf{Check}(\Sigma) $ if and only if $S_{\secp}\geq \frac{11\secp}{12}$.
On the other hand,
because $\Exp[S_{\secp}]<\frac{5\secp}{6}$, $\abs{S_{\secp}-\Exp[S_{\secp}]}\geq \frac{\secp}{12}$ must hold for $S_{\secp}\geq \frac{11\secp}{12}$.
Therefore, from \cref{lem:hoefd}, we have
\begin{align}
    \Pr[1\la\mathsf{Check}(\Sigma)]\leq \Pr[\abs{S_{\secp}-\Exp[S_{\secp}]}\geq \frac{\secp}{12}]\leq2\exp(-\frac{\secp}{72}).
\end{align}
In the same way, we can prove that 
\begin{align}
    \Pr[\top\la\Vrfy^*(1^\secp,s^*,\rho_{s^*}):1\la\mathsf{Check}(\Sigma)\wedge (s^*,\rho_{s^*})\la\Mint^*(1^\secp)]\geq 1-2\exp(-\secp/72)
\end{align}
when $\Pr[\top\la\Vrfy(1^\secp,s,\rho_s):(s,\rho_s)\la\Mint(1^\secp)]\geq 5/6$.

\paragraph{Proof of the third item.}
We prove that $\Sigma^*$ satisfies security if $\Sigma$ satisfies correctness and security.
Assume that there exists a QPT adversary $\cA$ that breaks the security of $\Sigma^*$, and then construct a QPT adversary $\cB$ that breaks $\Sigma$ as follows.
\begin{enumerate}
    \item $\cB$ receives $\rho_s$ from the challenger of $\Sigma$.
    \item $\cB$ samples $I\in[\secp]$, and sets $\rho_{s_I}\seteq \rho_s$.
    \item $\cB$ generates $(s_j,\rho_{s_j})\la\Mint(1^\secp)$ for all $j\in[\secp]\setminus I$, and sends $\bigotimes_{j\in[\secp]}\rho_{s_j}$ to $\cA$.
    \item $\cB$ receives $\sigma$ consisting of $2\secp$ registers $R[1],\cdots,R[2\secp]$. We write $\sigma[j]\seteq \Tr_{R[1]\cdots R[j-1]R[j+1]\cdots R[2\secp]}(\sigma)$ for all $j\in[\secp]$.
    \item $\cB$ runs $\Vrfy(1^\secp,s_i,\sigma[i])$ and $\Vrfy(1^\secp,s_i,\sigma[i+\secp])$ for all $i\in[\secp]\setminus I$.
    \item $\cB$ sends $\sigma[I]$ and $\sigma[I+\secp]$ to the challenger, and it computes $\Vrfy(1^\secp,s,\sigma[I])$ and $\Vrfy(1^\secp,s,\sigma[I+\secp])$.
    \item $\cB$ outputs $1$ if the number of $i\in[\secp]$ is at least $\frac{3\secp}{4}$ such that $\top\la\Vrfy(1^\secp,s_i\sigma[i])$ and $\top\la\Vrfy(1^\secp,s_i\sigma[i+\secp])$.
\end{enumerate}
Clearly, $\cB$ simulates the challenger of $\Sigma^*$ except that it does not compute $\mathsf{Check}(\Sigma)$.
Because we assume that $\Sigma$ satisfies the correctness, $1\la\mathsf{Check}(\Sigma)$ occurs with overwhelming probability.
This implies that the output of $\cB$ is equal to the output of the original security experiment of $\Sigma^*$ except for negligible probability.
Therefore, if $\cA$ breaks the security of $\Sigma^*$, then $\cB$ outputs $1$ with non-negligible probability.
Furthermore, it implies that the number of $i\in[\secp]$ is at least $\frac{3\secp}{4}$ with non-negligible probability that satisfies $\top\la\Vrfy(1^\secp,s_i,\sigma[i])$ and $\top\la\Vrfy(1^\secp,s_i,\sigma[i+\secp])$.
Because $I$ is uniformly random and independent from $\cA$, $\top\la\Vrfy(1^\secp,s_I,\sigma[I])$ and $\top\la\Vrfy(1^\secp,s_I,\sigma[I+\secp])$ with non-negligible probability.
This completes the proof.
\end{proof}
\fi

\begin{proof}[Proof of \cref{thm:combiner_money}]
    Below, we consider a fixed constant $n$.
    Let us introduce some notations.
\paragraph{Notations.}
\begin{itemize}
    \item Let $\Sigma_i$ be a candidate of public-key quantum money mini-scheme for $i\in[n]$.
    \item For a candidate of public-key quantum money mini-scheme $\Sigma_i$, let $\Sigma_i^*\seteq(\Mint_i^*,\Vrfy_i^*)$ be a candidate of public-key quantum money mini-scheme derived from \cref{lem:amp_cor_money}, which satisfies:
    \begin{itemize}
        \item $\Sigma_i^*$ is a uniform QPT algorithm if $\Sigma_i$ is a uniform QPT algorithm.
        \item $\Sigma_i^*$ satisfies correctness.
        \item $\Sigma_i^*$ satisfies security if $\Sigma_i$ is a uniform QPT algorithm and satisfies both correctness and security.
    \end{itemize}
\end{itemize}

    \paragraph{Construction of Robust Combiner for Public-Key Quantum Money Mini-Scheme:}
    A robust combiner for public-key quantum money mini-scheme is a deterministic classical polynomial-time Turing machine $\cM$ that takes as input $1^n$ and $\{\Sigma_i\}_{i\in[n]}$, and outputs the following set of algorithms $\Sigma=(\Mint,\Vrfy)$:
\begin{description}
\item[$\mathsf{Mint}(1^\secp)$:]$ $ 
    \begin{itemize}
        \item For all $i\in[n]$, run $(s_i^*,\rho_{s_i}^*)\la \Mint_i^*(1^\secp)$.
        \item Output $s\seteq \{s_i^*\}_{i\in[n]}$ and $\rho_s\seteq\bigotimes_{i\in[n]}\rho_{s_i}^*$.
    \end{itemize}
    \item[$\Vrfy(1^\secp,s,\rho)$:]$ $
    \begin{itemize}
        \item Parse $s= \{s_i\}_{i\in[n]}$. Let $\rho$ be a quantum state on $n$ registers, $\{R[i]\}_{i\in[n]}$, each of which is of $\abs{\rho_{s_i}}$ qubits.
        \item For all $i\in[n]$, run $\Vrfy_{i}^*(1^\secp,s_i,\cdot)$ on the $R[i]$ register and obtain $b[i]$.
        If $b[i]=\top$ for all $i\in[n]$, output $\top$. Otherwise, output $\bot$.
    \end{itemize}
\end{description}
\cref{thm:combiner_money} follows from the following \cref{lem:money_eff,lem:money_cor,lem:money_sec}.
\begin{lemma}\label{lem:money_eff}
    If all of $\{\Sigma_i\}_{i\in[n]}$ satisfies efficiency, then $\Sigma$ satisfies efficiency.
\end{lemma}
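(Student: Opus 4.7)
The plan is to observe that efficiency of $\Sigma = (\Mint, \Vrfy)$ follows by composition from the efficiency guarantees we already have on the constituent pieces. Concretely, by the first item of \cref{lem:amp_cor_money}, whenever an input candidate $\Sigma_i$ is a set of uniform QPT algorithms, the transformed candidate $\Sigma_i^* = (\Mint_i^*, \Vrfy_i^*)$ is also a set of uniform QPT algorithms. Moreover, inspection of the proof of \cref{lem:amp_cor_money} shows that a description of $\Sigma_i^*$ can be produced from a description of $\Sigma_i$ by a deterministic classical polynomial-time Turing machine, since $\Sigma_i^*$ is obtained by a fixed syntactic wrapping around $\Sigma_i$ (unitary dilation of $\Vrfy_i$, measurement of the output qubit, and a conditional branch).

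First, I would argue that $\Mint$ is a uniform QPT algorithm. On input $1^\secp$, $\Mint$ simply runs $\Mint_i^*(1^\secp)$ for each $i \in [n]$ and concatenates the resulting serial numbers and tensor-products the resulting quantum states. Since $n$ is a fixed constant, and each $\Mint_i^*$ is uniform QPT by \cref{lem:amp_cor_money}, the overall procedure is computable in polynomial time by a single classical Turing machine that outputs the corresponding general quantum circuit; the circuit has size at most $n \cdot \poly(\secp) = \poly(\secp)$.

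Next, the same argument applies to $\Vrfy$: on input $(1^\secp, s, \rho)$ with $s = \{s_i\}_{i \in [n]}$, it runs $\Vrfy_i^*(1^\secp, s_i, \cdot)$ on register $R[i]$ for each $i \in [n]$ and takes the conjunction of the outcomes. Since each $\Vrfy_i^*$ is uniform QPT and $n$ is constant, the overall verification procedure is uniform QPT.

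Finally, to meet the combiner syntax of \cref{def:robust_OWSG_comb}, I would note that the combiner $\cM$ itself, on input $1^n$ and descriptions $\{\Sigma_i\}_{i \in [n]}$, produces the description of $(\Mint, \Vrfy)$ by first invoking the transformation of \cref{lem:amp_cor_money} on each $\Sigma_i$ to get a description of $\Sigma_i^*$ and then outputting the fixed classical wrapper that runs these $n$ components in parallel; this is clearly a deterministic classical polynomial-time procedure. No obstacle arises here — the lemma is purely a syntactic composition statement — so I would present it as a short paragraph rather than a multi-step argument.
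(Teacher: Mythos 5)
Your proposal is correct and matches the paper's intent: the paper simply states that \cref{lem:money_eff} ``trivially follows,'' implicitly appealing to exactly the composition argument you spell out (each $\Sigma_i^*$ is uniform QPT by the first item of \cref{lem:amp_cor_money}, and running a constant number of uniform QPT algorithms in parallel and combining their outputs is again uniform QPT). Your write-up just makes the routine bookkeeping explicit.
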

\begin{lemma}\label{lem:money_cor}
    $\Sigma$ satisfies correctness.
\end{lemma}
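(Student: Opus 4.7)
The plan is to reduce correctness of $\Sigma$ to the (unconditional) correctness of each transformed candidate $\Sigma_i^*$ guaranteed by the second item of \cref{lem:amp_cor_money}, together with the fact that $n$ is a constant. The key observation is that $\Mint$ simply runs the independent procedures $\Mint_i^*(1^\secp)$ in parallel, producing a product state $\rho_s=\bigotimes_{i\in[n]}\rho_{s_i}^{*}$ together with a serial number $s=\{s_i^{*}\}_{i\in[n]}$, and $\Vrfy$ accepts $(s,\rho_s)$ iff $\Vrfy_i^{*}(1^\secp,s_i^{*},\cdot)$ accepts on the $i$-th register $R[i]$ for every $i\in[n]$.

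First, I would invoke \cref{lem:amp_cor_money} to obtain, for each $i\in[n]$, a negligible function $\negl_i$ such that
\begin{align}
\Pr[\top\la\Vrfy_i^{*}(1^\secp,s_i^{*},\rho_{s_i}^{*}):(s_i^{*},\rho_{s_i}^{*})\la\Mint_i^{*}(1^\secp)]\geq 1-\negl_i(\secp).
\end{align}
Since the $n$ executions of $\Mint_i^{*}$ are independent and the $i$-th verification acts only on the register $R[i]$ produced by the $i$-th execution, the event that $\Vrfy$ rejects on an honestly minted bill decomposes into the union of the $n$ events ``$\Vrfy_i^{*}$ rejects its own honest output''. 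A union bound then yields
\begin{align}
\Pr[\top\la\Vrfy(1^\secp,s,\rho_s):(s,\rho_s)\la\Mint(1^\secp)]\geq 1-\sum_{i\in[n]}\negl_i(\secp).
\end{align}

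Because $n$ is a constant (in the sense of \cref{def:robust_OWSG_comb}, applied here to the public-key quantum money combiner), the finite sum $\sum_{i\in[n]}\negl_i(\secp)$ is itself negligible, completing the proof. There is no real obstacle: the only subtle point is confirming that $n$ is treated as a constant, so that the union bound preserves negligibility; no assumption on correctness or security of the original candidates $\{\Sigma_i\}_{i\in[n]}$ is needed, since correctness of each $\Sigma_i^{*}$ holds unconditionally by \cref{lem:amp_cor_money}.
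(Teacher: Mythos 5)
Your proof is correct and matches the paper's (one-line) argument: the paper simply remarks that $\Sigma_i^*$ satisfies correctness for all $i\in[n]$, and your union-bound over the constant number of independent parallel verifications is exactly the routine elaboration of that remark.
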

\begin{lemma}\label{lem:money_sec}
    If all of $\{\Sigma_i\}_{i\in[n]}$ satisfies efficiency and at least one of $\{\Sigma_i\}_{i\in[n]}$ satisfies both correctness and security, then $\Sigma$ satisfies security.
\end{lemma}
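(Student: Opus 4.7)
The plan is to prove \cref{lem:money_sec} by a direct reduction. Let $i^{*}\in[n]$ be the index such that $\Sigma_{i^{*}}$ satisfies both correctness and security; then by \cref{lem:amp_cor_money}, the transformed scheme $\Sigma_{i^{*}}^{*}=(\Mint_{i^{*}}^{*},\Vrfy_{i^{*}}^{*})$ is a public-key quantum money mini-scheme satisfying security. Suppose for contradiction that a non-uniform QPT adversary $\cA$ wins $\mathsf{Exp}_{\Sigma,\cA}^{\mathsf{unc}}(\secp)$ with non-negligible probability. The goal is to build from $\cA$ a non-uniform QPT adversary $\cB$ winning $\mathsf{Exp}_{\Sigma_{i^{*}}^{*},\cB}^{\mathsf{unc}}(\secp)$ with non-negligible probability, contradicting the security of $\Sigma_{i^{*}}^{*}$.

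The reduction $\cB$ works as follows. Upon receiving the challenge $(s_{i^{*}}^{*},\rho_{s_{i^{*}}}^{*})$ from the $\Sigma_{i^{*}}^{*}$ challenger, $\cB$ runs $(s_{i}^{*},\rho_{s_{i}}^{*})\la\Mint_{i}^{*}(1^{\secp})$ itself for every $i\in[n]\setminus\{i^{*}\}$, embeds $(s_{i^{*}}^{*},\rho_{s_{i^{*}}}^{*})$ in the $i^{*}$-th slot, and sends $(s,\rho_s)\seteq(\{s_{i}^{*}\}_{i\in[n]},\bigotimes_{i\in[n]}\rho_{s_{i}}^{*})$ to $\cA$. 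Since $\cB$ has invoked $\Mint_{i}^{*}$ honestly for $i\neq i^{*}$ and used the challenger's output for $i=i^{*}$, the distribution of $(s,\rho_{s})$ is identical to that produced by $\Mint(1^{\secp})$. When $\cA$ outputs $\sigma_{R[1],R[2]}$, $\cB$ views each $R[j]$ as an $n$-fold tensor of registers of $\abs{\rho_{s_{i}}^{*}}$ qubits, extracts the $i^{*}$-th sub-register from both $R[1]$ and $R[2]$, traces out the remaining sub-registers, and submits the resulting two-register state to the $\Sigma_{i^{*}}^{*}$ challenger.

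It remains to analyze the winning probability of $\cB$. Because the combined verification $\Vrfy(1^{\secp},s,\cdot)$ accepts only when $\Vrfy_{i}^{*}(1^{\secp},s_{i},\cdot)$ accepts on every sub-register, the event $\mathsf{Exp}_{\Sigma,\cA}^{\mathsf{unc}}(\secp)=1$ implies in particular that $\Vrfy_{i^{*}}^{*}(1^{\secp},s_{i^{*}}^{*},\cdot)$ accepts on both the $i^{*}$-th sub-register of $R[1]$ and that of $R[2]$; that is, it implies $\mathsf{Exp}_{\Sigma_{i^{*}}^{*},\cB}^{\mathsf{unc}}(\secp)=1$. Hence $\Pr[\mathsf{Exp}_{\Sigma_{i^{*}}^{*},\cB}^{\mathsf{unc}}(\secp)=1]\geq\Pr[\mathsf{Exp}_{\Sigma,\cA}^{\mathsf{unc}}(\secp)=1]$, which is non-negligible, contradicting the security of $\Sigma_{i^{*}}^{*}$ guaranteed by \cref{lem:amp_cor_money}.

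The main conceptual point (rather than an obstacle) is to verify that partial trace commutes with the per-slot verification in the combined scheme: this is immediate because $\Vrfy_{i}^{*}$ acts only on the $i$-th sub-register, so the marginal on the $i^{*}$-th sub-register is exactly the state on which the reduction is evaluated. Note that the constant $n$ plays no role in the asymptotic analysis, so the reduction incurs only a constant overhead. This completes the proof sketch of \cref{lem:money_sec}, and hence of \cref{thm:combiner_money}.
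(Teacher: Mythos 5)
Your proof is correct and is essentially the same reduction the paper gives: embed the $\Sigma_{i^*}^*$ challenge in the $i^*$-th slot, simulate the remaining slots by running $\Mint_i^*$ yourself, and forward $\cA$'s output. The only cosmetic difference is that the paper's reduction explicitly runs $\Vrfy_i^*$ on the non-$i^*$ slots of both copies before handing the $i^*$-th slots to the challenger, whereas you simply trace those slots out; since measurements on disjoint subsystems commute, both yield the same marginal on the two $i^*$-th sub-registers and hence the same winning bound $\Pr[\mathsf{Exp}_{\Sigma_{i^*}^*,\cB}^{\mathsf{unc}}=1]\ge\Pr[\mathsf{Exp}_{\Sigma,\cA}^{\mathsf{unc}}=1]$.
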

\cref{lem:money_eff} trivially follows.
\cref{lem:money_cor} follows because $\Sigma_i^*$ satisfies correctness for all $i\in[n]$.
\begin{proof}[Proof of \cref{lem:money_sec}]
We can prove \cref{lem:money_sec} via a standard hybrid argument.
For a reader's convenience, we describe the proof.
Let $\Sigma_x$ be a candidate of public-key quantum money mini-scheme that satisfies both correctness and security.
Then, $\Sigma_x^*$ satisfies security from \cref{lem:amp_cor_money}.
Assume that there exists a QPT adversary $\cA$ that breaks the security of $\Sigma$, and then construct an adversary $\cB_x$ that breaks the security of $\Sigma_x^*$.
We describe $\cB_x$:
\begin{enumerate}
    \item $\cB_x$ receives $\rho_{s_x}^*$ from the challenger of $\Sigma_x^*$.
    \item $\cB_x$ runs $(s_i^*,\rho_{s_i}^*)\la\Mint_i^*(1^\secp)$ for all $i\in[n]\setminus \{x\}$, and sends $\left(\{s_i^*\}_{i\in[n]},\{\rho_{s_i}^*\}_{i\in[n]}\right)$ to $\cA$.
    \item $\cB_x$ receives $\sigma$ from $\cA$. Here, $\sigma$ is a quantum state on $2n$ registers, $\{R[i]\}_{i\in[2n]}$, where $R[i]$ and $R[i+n]$ are registers over $\abs{\rho_{s_i}^*}$-length qubits.
    \item For $i\in[n]\setminus \{x\}$, $\cB_x$ runs $\Vrfy_i^*(1^\secp,s_i^*,\cdot)$ on the $R[i]$ and $R[i+n]$ registers, and obtains $b[i]$ and $b[i+n]$, respectively.
    \item $\cB_x$ sends the $R[x]$ and $R[x+n]$ registers to the challenger of $\Sigma_x^*$.
    \item The challenger runs $\Vrfy_x^*(1^\secp,s_x^*,\cdot)$ on the $R[x]$ and $R[x+n]$ registers, and obtains $b[x]$ and $b[x+n]$, respectively.
    If $b[x]=b[x+n]=\top$, then the challenger outputs $\top$.
\end{enumerate}
Clearly, $\cB_x$ perfectly simulates the challenge of $\Sigma$.
Because $\cA$ breaks the security of $\Sigma$, $\cA$ outputs $\sigma$ such that $b[i]=b[i+n]=\top$ for all $i\in[n]$ with non-negligible probability. 
Therefore, the challenger outputs $\top$ with non-negligible probability, which implies that $\cB_x$ breaks the security of $\Sigma_x^*$.
This completes the proof.
\end{proof}
\end{proof}

\subsection{Universal Construction}
\begin{definition}
    We say that a set of uniform algorithms $\Sigma_{\mathsf{Univ}}=(\Mint,\Vrfy)$ is a universal construction of public-key quantum money mini-scheme if $\Sigma_{\mathsf{Univ}}$ is a public-key quantum money mini-scheme as long as there exists a public-key quantum money mini-scheme.
\end{definition}

\begin{theorem}\label{thm:univ_money}
    There exists a universal construction of public-key quantum money mini-scheme.
\end{theorem}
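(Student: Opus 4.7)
The plan is to mirror the universal construction of OWSGs given in the proof of \cref{thm:univ_owsg}. As in the OWSG case, the robust combiner from \cref{thm:combiner_money} is not directly sufficient because it handles only a constant number $n$ of candidates, whereas to capture all possible Turing-machine descriptions we need $n$ to grow with $\secp$. So the first step is to introduce a robust combiner for public-key quantum money mini-scheme for universal construction, whose syntax is analogous to \cref{def:robust_owsg_comb_univ}: it takes as input $1^\secp$ together with $n(\secp)$ candidate schemes $\{\Sigma_i\}_{i\in[n(\secp)]}$, it satisfies correctness unconditionally on the candidates, and it satisfies security whenever some fixed index $i^*$ (independent of $\secp$ but with $i^*\le n(\secp)$ for all large $\secp$) corresponds to a candidate that satisfies both correctness and security. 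The construction from \cref{thm:combiner_money} extends to this setting essentially unchanged, since the proof of \cref{thm:combiner_money} is a uniform hybrid argument whose reduction cost depends only polynomially on $n$.

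Next I would establish the analogue of \cref{prop:owsg_time} for public-key quantum money mini-scheme: if a public-key quantum money mini-scheme exists, then there exists a pair of classical Turing machines $(x^*,y^*)$ such that $\Sigma[(x^*,y^*)]\seteq(\Mint[x^*],\Vrfy[y^*])$ is a public-key quantum money mini-scheme satisfying correctness and security, and such that $x^*(1^\secp)$ and $y^*(1^\secp,s,\abs{\rho_s})$ halt within $\secp^3$ steps for all sufficiently large $\secp$. This is shown by the standard padding trick: take any public-key quantum money mini-scheme and pad the security parameter so that the runtime of $\Mint$ and $\Vrfy$ on security parameter $\secp$ is dominated by $\secp^3$; since the padded scheme is again a public-key quantum money mini-scheme, its description can be taken as $(x^*,y^*)$. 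The proof is the same as the one given in \cref{sec:padding} for OWSGs with only cosmetic changes.

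With these tools in hand, the universal construction $\Sigma_{\Univ}=(\Mint_{\Univ},\Vrfy_{\Univ})$ is defined as follows. For each pair $(x,y)\in\bit^*\times\bit^*$ let $\widetilde{\Sigma}[(x,y)]\seteq(\widetilde{\Mint}[x],\widetilde{\Vrfy}[y])$ be the ``clocked'' candidate that simulates $\Mint[x]$ and $\Vrfy[y]$ for at most $\secp^3$ steps and outputs a fixed accepting symbol $\top$ if the time bound is exceeded (exactly analogous to the clocked OWSG candidates in the proof of \cref{thm:univ_owsg}), so that $\widetilde{\Sigma}[(x,y)]$ is a uniform QPT candidate for every $(x,y)$. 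Let $\Comb.\Sigma=\Comb.(\Mint,\Vrfy)$ be the $(1,n)$-robust combiner for universal construction with $n(\secp)=\secp^2$. On input $1^\secp$, $\Mint_{\Univ}$ runs $\Comb.\Mint(1^\secp,\{\widetilde{\Sigma}[(x,y)]\}_{x,y\in[\secp]})$, and $\Vrfy_{\Univ}$ runs $\Comb.\Vrfy$ with the same candidate list.

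Efficiency of $\Sigma_{\Univ}$ is immediate because every $\widetilde{\Sigma}[(x,y)]$ is uniform QPT by construction and $\Comb.\Sigma$ is itself uniform QPT. Correctness follows from the unconditional correctness guarantee of the robust combiner for universal construction, applied to the (uniform QPT) candidate family $\{\widetilde{\Sigma}[(x,y)]\}$. For security, assume some public-key quantum money mini-scheme exists; then by the padding argument there is a pair $(x^*,y^*)$ such that the clocked scheme $\widetilde{\Sigma}[(x^*,y^*)]$ coincides with the correct and secure scheme $\Sigma[(x^*,y^*)]$ for all sufficiently large $\secp$, hence $\widetilde{\Sigma}[(x^*,y^*)]$ itself satisfies correctness and security. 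Since $(x^*,y^*)$ is a fixed pair of machines and thus belongs to $[\secp]\times[\secp]$ for all sufficiently large $\secp$, the security property of $\Comb.\Sigma$ yields security of $\Sigma_{\Univ}$. The main obstacle is purely definitional: making sure the combiner-for-universal-construction notion is stated correctly so that the negligible function in its security guarantee is uniform in the candidate list (otherwise one runs into the subtle issue flagged in the OWSG universal construction argument). Once that notion is set up, the rest is a direct translation of the OWSG proof.
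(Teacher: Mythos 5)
Your proposal is correct and takes essentially the same route the paper intends: the paper states only that the proof of \cref{thm:univ_money} ``is almost the same as \cref{thm:univ_owsg}'' and omits it, and your proposal fills in exactly the three expected ingredients — a $(1,n)$-robust combiner for universal construction (analogous to \cref{def:robust_owsg_comb_univ}), a padding lemma analogous to \cref{prop:owsg_time}, and clocked candidates $\widetilde{\Sigma}[(x,y)]$ with $n(\secp)=\secp^2$ (since a mini-scheme has two algorithms rather than the three of a OWSG) — and then runs the identical efficiency/correctness/security argument. You also correctly identify that the only real care is in stating the combiner-for-universal-construction so the negligible bound is independent of the candidate list, which is the same definitional subtlety the paper flags in the OWSG section.
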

The proof is almost the same as \cref{thm:univ_owsg}, and thus we skip the proof.

\if0
Before proving \cref{thm:univ_money}, let us introduce the following \cref{prop:money_time}.
\begin{proposition}\label{prop:money_time}
If there exists public-key quantum money, then there exists public-key quantum money $\Sigma=(\keygen[x^*],\Mint[y^*],\Vrfy[z^*])$ such that each Turing machine $x^*$, $y^*$, and $z^*$ halt within $\secp^2$ steps for all sufficiently large security parameters $\secp\in\N$.  
\end{proposition}
This can be shown by a standard padding argument.
For the reader's convenience, we describe the proof in \cref{sec:padding}.

\begin{proof}[Proof of \cref{thm:univ_money}]
First, let us describe some notations:
\paragraph{Notations.}
\begin{itemize}
    \item For a set of classical Turing machines $\cM\seteq x||y||z\in \bit^*$, we write $\Sigma[\cM]\seteq(\keygen[x],\Mint[y],\Vrfy[x])$ to mean the candidate of public-key quantum money, where $\keygen$, $\Mint$, and $\Vrfy$ are described by a classical Turing machines $x$, $y$, and $z$ that generates a general quantum circuits, respectively.
    \item For a candidate of public-key quantum money $\Sigma[\cM]\seteq (\keygen[x],\Mint[y],\Vrfy[z])$, $\widetilde{\Sigma}[\cM]\seteq  (\widetilde{\keygen}[x],\widetilde{\Mint}[y],\widetilde{\Vrfy}[z])$ is a candidate of public-key quantum money such that if, for security parameter $\secp$, one of classical Turing machine $x$, $y$, and $z$ does not halt within $\secp^2$, then $\widetilde{\keygen}[x]$, $\widetilde{\Mint}[y]$, and $\widetilde{\Vrfy}[z]$ automatically outputs $\top$ so that $\widetilde{\Sigma}[\cM]$ satisfies efficiency.
    \item We write $\Comb.\Sigma\seteq \Comb.(\keygen,\Mint,\Vrfy)$ to mean a robust public-key quantum money combiner.
    \item For $n\in\N$, we write $\{\widetilde{\Sigma}[\cM]\}_{\cM\in\bit^{\leq 3\log(n)}}$ to mean $\{\widetilde{\Sigma}[x||y||z]\seteq (\widetilde{\keygen}[x],\widetilde{\Mint}[y],\widetilde{\Vrfy}[z])\}_{\{x\in\bit^{\leq\log(\secp)}, y\in\bit^{\leq\log(\secp)},z\in\bit^{\leq\log(\secp)} \}}$.
\end{itemize}
\paragraph{Construction.}
We give a description of $\Sigma_{\mathsf{Univ}}\seteq (\keygen_{\mathsf{Univ}},\Mint_{\mathsf{Univ}},\Vrfy_{\mathsf{Univ}})$
\begin{description}
    \item[$\keygen_{\mathsf{Univ}}(1^\secp)$:]$ $
    \begin{itemize}
        \item Output $k\la \Comb.\keygen(1^\secp, \{\widetilde{\Sigma}[\cM]\}_{\cM\in\bit^{\leq 3\log(\secp)}} )$.
    \end{itemize}
    \item[$\Mint_{\mathsf{Univ}}(1^\secp,k)$:]$ $
    \begin{itemize}
        \item Run $\Comb.\Mint(1^\secp,k,\{\widetilde{\Sigma}[\cM]\}_{\cM\in\bit^{\leq3\log(\secp)}})$, and output its output.
    \end{itemize}
    \item[$\Vrfy(1^\secp,k,\psi_k)$:]$ $
    \begin{itemize}
        \item Run $\Comb.\Vrfy(1^\secp,k,\psi_k,\{\widetilde{\Sigma}[\cM]\}_{\cM\in\bit^{\leq 3\log(\secp)}})$ and output its output.
    \end{itemize}
\end{description}

\cref{thm:univ_money} follows fromt the following \cref{lem:univ_money_eff,lem:univ_money_cor,lem:univ_money_sec}.
\begin{lemma}\label{lem:univ_money_eff}
    $\Sigma_{\mathsf{Univ}}$ satisfies efficiency.
\end{lemma}

\begin{lemma}\label{lem:univ_money_cor}
    $\Sigma_{\mathsf{Univ}}$ satisfies correctness.
\end{lemma}

\begin{lemma}\label{lem:univ_money_sec}
   If there exist public-key quantum money, then $\Sigma_{\mathsf{Univ}}$ satisfies security.
\end{lemma}

\cref{lem:univ_money_eff,lem:univ_money_cor} trivially follows because $\widetilde{\Sigma[x||y||z]}$ satisfies efficiency for all $x\in\bit^{\leq \log(\secp)}$, $y\in\bit^{\leq \log(\secp)}$, and $z\in\bit^{\leq \log(\secp)}$ and the robust public-key quantum money combiner $\Comb.\Sigma$ satisfies efficiency and correctness, and thus we skip the proofs of them.
\begin{proof}[Proof of \cref{lem:univ_money_sec}]
Assume that there exists public-key quantum money. Then, from \cref{prop:money_time}, there exists a public-key quantum money scheme $\Sigma[\cM^*]\seteq(\keygen[x^*],\Mint[y^*],\Vrfy[z^*])$, where each Turing machine $x^*\in\bit^*$, $y^*\in\bit^*$, and $z^*\in\bit^*$ halt within $\secp^2$ steps for all sufficiently large security parameters $\secp\in\N$.

For all sufficiently large $\secp\in\N$, $\{\cM\}_{\cM\in\bit^{\leq 3\log(\secp)}}$ contains $\cM^*$.
Therefore, $\Sigma_{\mathsf{Univ}}$ satisfies security
because one of $\{\widetilde{\Sigma}[\cM]\}_{\cM\in\bit^{\leq3\log(\secp)}}$ satisfies correctness and security for all sufficiently large security parameters $\secp\in\N$ and $\Comb.\Sigma$ is a robust public-key quantum money combiner.
\end{proof}
\end{proof}
\fi

\section{Robust Canonical Quantum Bit Commitment Combiner}\label{sec:commit}
\begin{definition}[Robust Canonical Quantum Bit Commitment Combiner]
A robust canonical quantum bit commitment combiner is a deterministic classical polynomial-time Turing machine $\cM$ with the following properties:
\begin{itemize}
    \item $\cM$ takes as input $1^n$ and $n$-deterministic classical polynomial-time Turing machine $\{\cT_i\}_{i\in[n]}$ that produces unitary, and outputs a deterministic classical polynomial-time Turing machine $\cT$ that produces unitary.
    \item Let $\left(Q_{i,0}(\secp),Q_{i,1}(\secp)\right)$ be the unitary obtained by $\cT_i(\secp)$ and let $\left(Q_{0}(\secp),Q_1(\secp)\right)$ be the unitary obtained by $\cT(\secp)$.
    If one of $\{ \{Q_{i,0} (\secp),Q_{i,1}(\secp)\}_{\secp\in\N} \}_{i\in[n]}$ satisfies  
    computational binding and computational hiding, then $\{Q_{0}(\secp),Q_1(\secp)\}_{\secp\in\N}$ is a quantum bit commitment that satisfies statistical binding and computational hiding.
\end{itemize}
\end{definition}

\if0
\begin{definition}[Robust Canonical Quantum Commitment Combiner]
    A $(1,n)$-robust canonical quantum commitment combiner is a classical polynomial-time Turing machine $\cM$, where $n$ is some polynomial.
    For $i\in\N$, let $\{Q_{i,0}(\secp),Q_{i,1}(\secp)\}_{\secp\in\N}$ be a sequence of QPT unitaries.
    The $(1,n)$-robust canonical quantum commitment combiner $\cM$ takes as input $1^\secp$ with $\secp\in\N$, and a sequence of uniform QPT unitaries $\{Q_{i,0}(\secp),Q_{i,1}(\secp)\}_{i\in[n]}$, and outputs a unitary $(\Comb.Q_0(\secp),\Comb.Q_1(\secp))$ acting on the registers $\mathbf{C}$ and $\mathbf{R}$.
    We require that $\cM$ satisfies the following properties:
    \paragraph{Statistical Binding.}
    $\{\Comb.Q_0(\secp).\Comb.Q_1(\secp) \}_{\secp\in\N}$ satisfies statistical binding.
    In other words,
    \begin{align}
        \norm{(\bra{0}\Comb.Q_1^{\dagger}(\secp))_{\mathbf{CR}} (I_{\mathbf{C}}\otimes U_{\mathbf{RZ}})(\Comb.Q_0(\secp)\ket{0}_{\mathbf{CR}}\ket{\tau}_{\mathbf{Z}}) }_1\leq \negl(\secp).
    \end{align}
    \paragraph{Computational Hiding.}
    $\{\Comb.Q_0(\secp).\Comb.Q_1(\secp) \}_{\secp\in\N}$ satisfies computational hiding if $\{Q_{i,0}(\secp),Q_{i,1}(\secp)\}_{\secp\in\N}$ satisfies either (computational hiding, statistical binding) or (statistical hiding, computational binding) for some $i\in[n]$.
    In other words, $\Tr_{\mathbf{R}}(Q_0(\secp)\ket{0}_{\mathbf{CR}})$ is $\negl(\secp)$-computationally indistinguishable from $\Tr_{\mathbf{R}}(Q_1(\secp)\ket{0}_{\mathbf{CR}})$ if $\{Q_{i,0}(\secp),Q_{i,1}(\secp)\}_{\secp\in\N}$ satisfies either (computational hiding, statistical binding) or (statistical hiding, computational binding) for some $i\in[n]$.
\end{definition}
\fi

In this section, we show the \cref{thm:combiner_com}.
\begin{theorem}\label{thm:combiner_com}
    There exists a robust canonical quantum bit commitment combiner.
\end{theorem}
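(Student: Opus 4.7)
The plan is to instantiate the two-step strategy from the technical overview. Given $n$ candidates $\{\cT_i\}_{i \in [n]}$ producing unitaries $(Q_0[i](\secp), Q_1[i](\secp))$, I first apply a per-candidate transformation yielding $(Q_0[i]^*(\secp), Q_1[i]^*(\secp))$ that satisfies statistical binding unconditionally and preserves computational hiding whenever the original candidate is both computationally hiding and computationally binding. I then combine the transformed candidates using X-OR secret sharing to obtain $(\Comb.Q_0(\secp), \Comb.Q_1(\secp))$, and lift statistical binding and computational hiding from the transformed candidates to this combination.

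For the transformation, let $(\widetilde{Q_0}, \widetilde{Q_1})$ denote the flavor-converted version of $(Q_0, Q_1)$ from \cref{lem:converting_flavor}, acting on fresh registers $\mathbf{\widetilde{C}}, \mathbf{\widetilde{R}}$. I define $Q_b^* \seteq (Q_b \otimes \widetilde{Q_b})^{\otimes \secp}$ on the commitment register $\mathbf{C^*} \seteq (\mathbf{C}^{\otimes \secp}, \mathbf{\widetilde{C}}^{\otimes \secp})$ and reveal register $\mathbf{R^*} \seteq (\mathbf{R}^{\otimes \secp}, \mathbf{\widetilde{R}}^{\otimes \secp})$. Writing $\rho_b \seteq \Tr_{\mathbf{R}}(Q_b \ket{0}\bra{0} Q_b^\dagger)$, $\widetilde{\rho_b} \seteq \Tr_{\mathbf{\widetilde{R}}}(\widetilde{Q_b}\ket{0}\bra{0}\widetilde{Q_b}^\dagger)$, and $f \seteq F(\rho_0, \rho_1)$, the Fuchs--van de Graaf inequality shows that $(Q_0, Q_1)$ is $(1-f)^{1/2}$-statistically hiding, so by \cref{lem:converting_flavor} we get $F(\widetilde{\rho_0}, \widetilde{\rho_1}) \leq (1-f)^{1/2}$. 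Multiplicativity of fidelity across tensor products then gives $F(\rho_0^*, \rho_1^*) \leq f^\secp (1-f)^{\secp/2} \leq 2^{-\secp/2}$, and statistical binding of $(Q_0^*, Q_1^*)$ follows via Uhlmann's theorem (\cref{lem:uhlmann}). Computational hiding is preserved via a standard hybrid: computational hiding of $(Q_0, Q_1)$ hides the $\rho_b$ factors, while computational binding of $(Q_0, Q_1)$ yields computational hiding of $(\widetilde{Q_0}, \widetilde{Q_1})$ by the computational part of \cref{lem:converting_flavor}, hiding the $\widetilde{\rho_b}$ factors.

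For the combining step, $\Comb.Q_b$ samples $\{r[i]\}_{i\in[n]}$ in superposition over $\bit^n$ subject to $r[1] \oplus \cdots \oplus r[n] = b$, stores the shares on auxiliary registers $\mathbf{D[i]}$ inside the reveal register, and applies $Q_{r[i]}[i]^*$ controlled on $\mathbf{D[i]}$ to its own commitment/reveal registers. Statistical binding is proved by a hybrid over $i \in [n]$ that, at step $i$, converts the commit-$0$ state into the commit-$1$ state on the $i$-th transformed component by flipping $r[i]$ (and compensating with another fixed share to maintain the parity constraint); Uhlmann's theorem translates the fidelity gap at each hybrid step into a binding attack against $(Q_0[i]^*, Q_1[i]^*)$, which is statistically binding unconditionally, so the telescoping sum over the constant $n$ remains negligible. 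Computational hiding, when some index $i^*$ has an original candidate satisfying both computational hiding and computational binding, follows because $(Q_0[i^*]^*, Q_1[i^*]^*)$ is then computationally hiding, making $r[i^*]$ computationally hidden, whence X-OR sharing with a hidden share information-theoretically hides $b$.

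The main obstacle I anticipate is the statistical binding reduction for the X-OR combination: an adversary's unitary acts jointly across all auxiliary share registers $\{\mathbf{D[i]}\}$ and reveal registers $\{\mathbf{R[i]^*}\}$ without touching any commitment register, and one must argue that any successful parity flip forces a binding break of some single transformed candidate $(Q_0[i]^*, Q_1[i]^*)$. I expect this to be handled by iteratively purifying the adversary's action on each transformed component and invoking Uhlmann together with the unconditional statistical binding of each $(Q_0[i]^*, Q_1[i]^*)$, exactly in the fidelity-based style used for the transformation step.
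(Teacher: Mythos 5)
Your high-level strategy — a per-candidate transformation $Q_b^* \seteq (Q_b \otimes \widetilde{Q_b})^{\otimes \secp}$ (statistically binding unconditionally via Fuchs--van de Graaf, \cref{lem:converting_flavor}, fidelity multiplicativity, and Uhlmann) followed by XOR combination in superposition — matches the paper's proof exactly, and your transformation lemma (\cref{lem:amp_binding_com}/\cref{lem:amp}) and the hiding-preservation hybrid for the combiner (\cref{thm:preserve}) are both essentially as in the paper. The gap is in your argument for statistical binding of the XOR combination (\cref{thm:binding}). You propose a hybrid over $i\in[n]$ that "converts the commit-$0$ state into the commit-$1$ state on the $i$-th transformed component by flipping $r[i]$ (compensating with another fixed share)," but there is no well-defined sequence of intermediate states: $\Comb.\rho_b = \frac{1}{2^{n-1}}\sum_{\{r:\sum r_i = b\}} \bigotimes_i \rho^*_{i,r_i}$ is a uniform mixture over all parity-$b$ share vectors, so flipping a single share $r[i]$ toggles the overall parity and lands immediately at $\Comb.\rho_{1-b}$; there is no intermediate "$i$-th component fixed to $1$, rest commit-$0$" state within the parity-constrained mixture. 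You also note you expect to resolve this "in the fidelity-based style used for the transformation step," but that step bounded $F(\rho_0^*,\rho_1^*)$ directly via fidelity multiplicativity over a tensor product — which does not apply to the mixture $\Comb.\rho_b$.

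The paper instead gives a direct, non-hybrid calculation: let $\{\Pi^*_{i,0},\Pi^*_{i,1}\}$ be the optimal POVM distinguishing $\rho^*_{i,0}$ from $\rho^*_{i,1}$ (which exists with $\Tr(\Pi^*_{i,0}(\rho^*_{i,0}-\rho^*_{i,1})) \geq 1-\negl(\secp)$ since each $\Sigma_i^*$ is statistically binding), and define $\Comb.\Pi_b \seteq \sum_{\{r:\sum r_i=b\}}\bigotimes_i \Pi^*_{i,r_i}$. A short computation using $\Tr(A\otimes B)=\Tr(A)\Tr(B)$ shows
\begin{align}
    \Tr\bigl(\Comb.\Pi_0(\Comb.\rho_0-\Comb.\rho_1)\bigr) = \prod_{i\in[n]}\Tr\bigl(\Pi^*_{i,0}(\rho^*_{i,0}-\rho^*_{i,1})\bigr) \geq (1-\negl(\secp))^n \geq 1 - n\cdot\negl(\secp),
\end{align}
so $\mathsf{TD}(\Comb.\rho_0,\Comb.\rho_1)\geq 1-n\cdot\negl(\secp)$, hence $F(\Comb.\rho_0,\Comb.\rho_1)\leq 1-\mathsf{TD}^2 \leq 2n\cdot\negl(\secp)$, and Uhlmann's theorem closes the argument. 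You should replace your hybrid sketch with this direct POVM computation; as written, that step does not go through.
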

First, let us introduce the following \cref{lem:amp_binding_com}.
\begin{proposition}
\label{lem:amp_binding_com}
    Let $\Sigma=\{Q_0(\secp),Q_1(\secp)\}_{\secp\in\N}$ be a candidate of a canonical quantum bit commitment. From $\Sigma$, we can construct a canonical quantum bit commitment $\Sigma^*\seteq \{Q_0^*(\secp),Q_1^*(\secp)\}_{\secp\in\N}$ such that:
    \begin{enumerate}
        \item $\Sigma^*$ satisfies statistical binding. 
        \item If $\Sigma$ satisfies computational binding and computational hiding,  then $\Sigma^*$ satisfies computational hiding.
    \end{enumerate}
\end{proposition}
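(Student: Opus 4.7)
The plan is to define $\Sigma^*$ as the $\secp$-fold parallel repetition of $\Sigma$ together with its flavor conversion. Concretely, I will first apply \cref{lem:converting_flavor} to $\Sigma = \{Q_0(\secp), Q_1(\secp)\}_{\secp\in\N}$ to obtain the flavor-converted candidate $\{\widetilde{Q_0}(\secp), \widetilde{Q_1}(\secp)\}_{\secp\in\N}$, with commitment register $\widetilde{\mathbf{C}}$ and reveal register $\widetilde{\mathbf{R}}$. Then I will set the commitment register of $\Sigma^*$ to be $\mathbf{C^*} \seteq (\mathbf{C}^{\otimes \secp}, \widetilde{\mathbf{C}}^{\otimes \secp})$, the reveal register to be $\mathbf{R^*} \seteq (\mathbf{R}^{\otimes \secp}, \widetilde{\mathbf{R}}^{\otimes \secp})$, and define $Q_b^*(\secp) \seteq (Q_b(\secp) \otimes \widetilde{Q_b}(\secp))^{\otimes \secp}$, so that $Q_b^*(\secp)\ket{0}_{\mathbf{C^*,R^*}} = (Q_b(\secp)\ket{0}_{\mathbf{C,R}})^{\otimes \secp} \otimes (\widetilde{Q_b}(\secp)\ket{0}_{\widetilde{\mathbf{C}},\widetilde{\mathbf{R}}})^{\otimes \secp}$.

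For statistical binding, I would write $\rho_b \seteq \Tr_{\mathbf{R}}(Q_b(\secp)\ket{0}\bra{0}Q_b(\secp)^{\dagger})$ and $\widetilde{\rho_b} \seteq \Tr_{\widetilde{\mathbf{R}}}(\widetilde{Q_b}(\secp)\ket{0}\bra{0}\widetilde{Q_b}(\secp)^{\dagger})$, and set $f \seteq F(\rho_0, \rho_1) \in [0,1]$. The first key step is the bound $F(\widetilde{\rho_0}, \widetilde{\rho_1}) \leq (1-f)^{1/2}$, which follows from the chain: $F(\rho_0, \rho_1) = f$ implies $\mathsf{TD}(\rho_0, \rho_1) \leq \sqrt{1-f}$, so $\Sigma$ satisfies $\sqrt{1-f}$-statistical hiding; then \cref{lem:converting_flavor}(1) yields $(1-f)^{1/4}$-statistical binding for $(\widetilde{Q_0}, \widetilde{Q_1})$, and Uhlmann's theorem (\cref{lem:uhlmann}) translates this back into $F(\widetilde{\rho_0}, \widetilde{\rho_1}) \leq (1-f)^{1/2}$. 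The second key step is multiplicativity of fidelity under tensor products, which gives
\begin{align*}
F(\rho_0^*, \rho_1^*) = F(\rho_0, \rho_1)^{\secp} \cdot F(\widetilde{\rho_0}, \widetilde{\rho_1})^{\secp} \leq f^{\secp}(1-f)^{\secp/2} \leq (4/27)^{\secp/2},
\end{align*}
where the last inequality comes from maximizing $f^2(1-f)$ over $f \in [0,1]$. Applying Uhlmann's theorem in the other direction then shows that $\Sigma^*$ has $\sqrt{F(\rho_0^*, \rho_1^*)}$-statistical binding, which is negligible in $\secp$.

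For computational hiding, assume $\Sigma$ satisfies computational binding and computational hiding. Then $\rho_0$ is computationally indistinguishable from $\rho_1$ by assumption, and moreover \cref{lem:converting_flavor}(2) guarantees that $\widetilde{\rho_0}$ is computationally indistinguishable from $\widetilde{\rho_1}$ because $\Sigma$'s computational binding is carried to computational hiding of the flavor conversion. A standard hybrid argument over the $2\secp$ tensor factors then shows that $\rho_0^* = (\rho_0 \otimes \widetilde{\rho_0})^{\otimes \secp}$ is computationally indistinguishable from $\rho_1^* = (\rho_1 \otimes \widetilde{\rho_1})^{\otimes \secp}$, giving computational hiding of $\Sigma^*$.

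The main obstacle is the statistical binding argument: one must carefully use Uhlmann's theorem in both directions to convert between the operator-norm definition of binding and the fidelity of the commitment-register states, and then rely on the quantitative statement of \cref{lem:converting_flavor} with a non-negligible hiding parameter (the case $c = \sqrt{1-f}$ for a constant $f$, which is why \cref{lem:converting_flavor} is stated for general $c$ rather than only negligible $c$). Once these fidelity bounds are in place, the parallel-repetition amplification and the hybrid argument for hiding are routine.
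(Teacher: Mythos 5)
Your proposal is correct and follows essentially the same route as the paper's proof of \cref{lem:amp}: the same construction $Q_b^*(\secp) = (Q_b(\secp)\otimes\widetilde{Q_b}(\secp))^{\otimes\secp}$, the same chain $F(\rho_0,\rho_1)=f \Rightarrow \sqrt{1-f}$-statistical hiding $\Rightarrow (1-f)^{1/4}$-statistical binding of the flavor conversion $\Rightarrow F(\widetilde{\rho_0},\widetilde{\rho_1})\le(1-f)^{1/2}$ via Uhlmann, then multiplicativity of fidelity and Uhlmann back to binding, and finally a hybrid argument for hiding using that flavor conversion preserves computational hiding (via \cref{lem:converting_flavor}(2)). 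The only cosmetic difference is the final numeric bound: you optimize $f^2(1-f)$ to get $(4/27)^{\secp/2}$, while the paper settles for the cruder $2^{-\secp/2}$; both are negligible.
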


\cref{lem:amp_binding_com} directly follows from the following \cref{lem:amp}.
\begin{lemma}[Amplifying Binding]\label{lem:amp}
    Let $\{Q_{0}(\secp),Q_{1}(\secp)\}_{\secp\in\N}$ be a candidate of canonical quantum bit commitment.
    Let $\{Q_{0}^*(\secp),Q_{1}^*(\secp)\}_{\secp\in\N}$ be a candidate of canonical quantum bit commitment described as follows:
    \begin{itemize}
        \item If $\mathbf{C}$ and $\mathbf{R}$ are the commitment and reveal registers of $(Q_{0}(\secp),Q_{1}(\secp))$, and $\mathbf{\widetilde{C}}$ and $\mathbf{\widetilde{R}}$ are the commitment and reveal registers of $(\widetilde{Q_{0}}(\secp),\widetilde{Q_{1}}(\secp))$, which is the flavor conversion of $(Q_{0}(\secp),Q_{1}(\secp))$ introduced in \cref{lem:converting_flavor}, then the commitment and reveal registers of $(Q^*_{0}(\secp),Q_{1}^*(\secp))$ are defined as $\mathbf{C}^*\seteq \left(\mathbf{C}^{\otimes \secp},\mathbf{\widetilde{C}}^{\otimes \secp} \right)$, and $\mathbf{R^*}\seteq \left(\mathbf{R}^{\otimes \secp},\widetilde{\mathbf{R}}^{\otimes \secp}\right)$.
        \item For $b\in\bit$, the unitary $Q_b^*(\secp)$ is defined as follows:
        \begin{align}
            Q_{b}^*(\secp)\seteq (Q_{b}(\secp)\otimes \widetilde{Q_{b}}(\secp))^{\otimes \secp}.
        \end{align}
    Then, the following is satisfied:
    \begin{enumerate}
        \item $\{Q_{0}^*(\secp),Q_{1}^*(\secp)\}_{\secp\in\N}$ satisfies statistical binding.
        \item If $\{Q_{0}(\secp),Q_{1}(\secp)\}_{\secp\in\N}$ satisfies  computational hiding and computational binding, then $\{Q_{0}^*(\secp),Q_{1}^*(\secp)\}_{\secp\in\N}$ satisfies computational hiding.
    \end{enumerate}
    \end{itemize}
\end{lemma}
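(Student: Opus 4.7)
The plan is to handle the two items separately. For item 1 the strategy is to argue that for an arbitrary candidate, either $(Q_0,Q_1)$ itself or its flavor conversion $(\widetilde{Q_0},\widetilde{Q_1})$ must already exhibit some non-trivial binding, quantified via the fidelity of the reduced commitment states, and that the tensor-product construction $(Q_b \otimes \widetilde{Q_b})^{\otimes \secp}$ then drives this fidelity exponentially close to $0$ regardless of $(Q_0,Q_1)$. For item 2 the approach is a direct hybrid argument that uses \cref{lem:converting_flavor} to translate computational binding of $(Q_0,Q_1)$ into computational hiding of $(\widetilde{Q_0},\widetilde{Q_1})$. Throughout I will denote $\rho_b \seteq \Tr_{\mathbf{R}}(Q_b(\secp)\ket{0}\bra{0}Q_b^{\dagger}(\secp))$, $\widetilde{\rho_b} \seteq \Tr_{\widetilde{\mathbf{R}}}(\widetilde{Q_b}(\secp)\ket{0}\bra{0}\widetilde{Q_b}^{\dagger}(\secp))$, and $\rho_b^* \seteq \Tr_{\mathbf{R^*}}(Q_b^*(\secp)\ket{0}\bra{0}Q_b^{*\dagger}(\secp)) = (\rho_b \otimes \widetilde{\rho_b})^{\otimes \secp}$, suppressing the security parameter.

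For item 1, I would first introduce the single parameter $f \seteq F(\rho_0,\rho_1) \in [0,1]$ (using the squared-fidelity convention of \cref{lem:uhlmann}), which will control everything. By Fuchs--van de Graaf, $\mathsf{TD}(\rho_0,\rho_1) \leq \sqrt{1-f}$, so $(Q_0,Q_1)$ satisfies $\sqrt{1-f}$-statistical hiding for free. Feeding this into item~1 of \cref{lem:converting_flavor} yields $(1-f)^{1/4}$-statistical binding of $(\widetilde{Q_0},\widetilde{Q_1})$, and another application of \cref{lem:uhlmann} (now in the direction bounding the norm by fidelity) gives $F(\widetilde{\rho_0},\widetilde{\rho_1}) \leq (1-f)^{1/2}$. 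Multiplicativity of fidelity under tensor products then yields
\begin{align*}
F(\rho_0^*,\rho_1^*) \;=\; F(\rho_0,\rho_1)^{\secp}\cdot F(\widetilde{\rho_0},\widetilde{\rho_1})^{\secp} \;\leq\; f^{\secp}(1-f)^{\secp/2} \;\leq\; 2^{-\secp/2},
\end{align*}
where the last inequality uses $\max_{f\in[0,1]} f(1-f)^{1/2} = \tfrac{2}{3\sqrt{3}} < 2^{-1/2}$. Since $Q_b^*\ket{0}_{\mathbf{C^*R^*}}\ket{\tau}_{\mathbf{Z}}$ is a purification of $\rho_b^*$ for any auxiliary state $\ket{\tau}_{\mathbf{Z}}$, monotonicity of fidelity under partial trace gives, for every unitary $U_{\mathbf{R^*Z}}$,
\begin{align*}
\norm{(\bra{0}Q_1^{*\dagger})_{\mathbf{C^*R^*}}(I_{\mathbf{C^*}}\otimes U_{\mathbf{R^*Z}})(Q_0^*\ket{0}_{\mathbf{C^*R^*}}\ket{\tau}_{\mathbf{Z}})}^{2} \;\leq\; F(\rho_0^*,\rho_1^*) \;\leq\; 2^{-\secp/2},
\end{align*}
which establishes statistical binding.

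For item 2, I would observe that computational binding of $(Q_0,Q_1)$ combined with item~2 of \cref{lem:converting_flavor} yields computational hiding of $(\widetilde{Q_0},\widetilde{Q_1})$, so both $\rho_0 \cind \rho_1$ and $\widetilde{\rho_0} \cind \widetilde{\rho_1}$ hold. A standard $2\secp$-step hybrid argument, swapping one factor at a time between $\rho_0 \otimes \widetilde{\rho_0}$ and $\rho_1 \otimes \widetilde{\rho_1}$ across the $\secp$ copies, then gives $\rho_0^* \cind \rho_1^*$, which is exactly computational hiding of $(Q_0^*,Q_1^*)$. The main obstacle I anticipate is purely in item~1: one must arrange that the same parameter $f$ simultaneously controls the fidelities of both the original and the flavor-converted reductions, so that the product $f^{\secp}(1-f)^{\secp/2}$ decays exponentially \emph{uniformly} in $f$. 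Getting the quantitative version of \cref{lem:converting_flavor} tight enough (the $c$-to-$\sqrt{c}$ loss, which becomes the $(1-f)^{1/4}$ above) is the only delicate step; once that trade-off is in place, the remainder is a routine combination of Uhlmann's theorem, Fuchs--van de Graaf, multiplicativity of fidelity, and a standard hybrid argument.
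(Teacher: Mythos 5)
Your proposal is correct and follows essentially the same route as the paper's proof: define $f = F(\rho_0,\rho_1)$, invoke Fuchs--van de Graaf to get $\sqrt{1-f}$-statistical hiding of $(Q_0,Q_1)$, pass through \cref{lem:converting_flavor} to get $(1-f)^{1/4}$-statistical binding of $(\widetilde{Q_0},\widetilde{Q_1})$, use Uhlmann to convert that into $F(\widetilde{\rho_0},\widetilde{\rho_1})\leq(1-f)^{1/2}$, multiply fidelities across the $\secp$-fold tensor product, and bound by $2^{-\secp/2}$; then a final Uhlmann/purification step gives statistical binding, and item 2 is the same flavor-conversion-plus-hybrid argument. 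The one place you go beyond the paper is in explicitly verifying $\max_{f\in[0,1]} f(1-f)^{1/2}=\tfrac{2}{3\sqrt{3}}<2^{-1/2}$, which the paper asserts without justification, so your write-up is arguably slightly more complete on that point.
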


\begin{proof}[Proof of \cref{lem:amp}]
Below, we fix the security parameter $\secp$, and write $(Q_0,Q_1)$, $(\widetilde{Q_0},\widetilde{Q_1})$ and $(Q_0^*,Q_1^*)$ to mean $(Q_0(\secp),Q_1(\secp))$, $(\widetilde{Q_0}(\secp),\widetilde{Q_1}(\secp))$ and $(Q_0^*(\secp),Q_1^*(\secp))$, respectively. 
\paragraph{Proof of the first item}
    We define 
\begin{align}
    \rho_{b}\seteq\Tr_{\mathbf{R}}(Q_{b}\ket{0}_{\mathbf{C,R}})\mbox{\,\,\,and\,\,\,}
    \widetilde{\rho_{b}}\seteq \Tr_{\mathbf{\widetilde{R}}}(\widetilde{Q_{b}}\ket{0}_{\mathbf{\widetilde{C},\widetilde{R}}})\mbox{\,\,\,and\,\,\,}\rho_b^*\seteq \Tr_{\mathbf{R^*}}(Q_b^*\ket{0}_{\mathbf{C^*,R^*}}).
\end{align}
From the construction of $Q_0^*$ and $Q_1^*$, we have
\begin{align}
    (\rho_b\otimes \widetilde{\rho_b})^{\otimes \secp}\seteq \Tr_{\mathbf{R^*}}(Q_b^*\ket{0}_{\mathbf{C^*,R^*}}).
\end{align}
Let $0\leq f\leq 1$ be some value such that 
\begin{align}
    F(\rho_{0},\rho_{1})=f.
\end{align}

We have
\begin{align}
    \mathsf{TD}(\rho_{0},\rho_{1})\leq \sqrt{1-F(\rho_{0},\rho_{1})}\leq \sqrt{1-f}.
\end{align}
In particular, this implies that $(Q_{0},Q_{1})$ satisfies $\sqrt{1-f}$-statistical hiding.
From \cref{lem:converting_flavor}, this implies that $(\widetilde{Q_{0}},\widetilde{Q_{1}})$ satisfies $(1-f)^{1/4}$-statistical binding.
Furthermore, from Uhlmann's theorem (\cref{lem:uhlmann}), this implies that
\begin{align}
    F(\widetilde{\rho_{0}},\widetilde{\rho_{1}})\leq (1-f)^{1/2},
\end{align}
which we prove later.

Therefore, we have
\begin{align}
    &F(\rho_0^*,\rho_1^*)=F\left((\rho_{0}\otimes\widetilde{\rho_{0}})^{\otimes \secp},(\rho_{1}\otimes\widetilde{\rho_{1}})^{\otimes \secp}\right)
    \leq F(\rho_{0},\rho_{1})^{\secp}F(\widetilde{\rho_{0}},\widetilde{\rho_{1}})^{\secp}\leq f^\secp(1-f)^{\secp/2}\leq 2^{-\secp/2},
\end{align}
which implies that $(Q_{0}^*,Q_{1}^*)$ satisfies statistical binding.

Now, we show that if $(\widetilde{Q_{0}},\widetilde{Q_{1}})$ satisfies $(1-f)^{1/4}$-statistical binding, then
$
    F(\widetilde{\rho_{0}},\widetilde{\rho_{1}})\leq (1-f)^{1/2}.
$
For contradiction, assume that 
$
    F(\widetilde{\rho_{0}},\widetilde{\rho_{1}})>(1-f)^{1/2},
$
and then show that $(\widetilde{Q_{0}},\widetilde{Q_{1}})$ does not satisfies $(1-f)^{1/4}$-statistical binding.
From Uhlmann's Theorem (\cref{lem:uhlmann}), there exists some unitary $U_{\mathbf{\widetilde{R}}}$ acting on the register $\mathbf{\widetilde{R}}$ such that
\begin{align}
F(\widetilde{\rho_{0}},\widetilde{\rho_{1}})=
\abs{\bra{0}_{\mathbf{\widetilde{C},\widetilde{R}}}\widetilde{Q_0}^{\dagger}(I_{\mathbf{\widetilde{C}}}\otimes U_{\mathbf{\widetilde{R}}})\widetilde{Q_1}\ket{0}_{\mathbf{\widetilde{C},\widetilde{R}}}}^2>(1-f)^{1/2}.
\end{align}
Now, we have
\begin{align}
    &\norm{\left(\bra{0}_{\mathbf{\widetilde{C},\widetilde{R}}}\widetilde{Q_0}^{\dagger}\otimes I_{\mathbf{\widetilde{C},\widetilde{Z}}}\right)(I_{\mathbf{\widetilde{C},\widetilde{Z}}}\otimes U_{\mathbf{\widetilde{R}}})\widetilde{Q_1}\ket{0}_{\mathbf{\widetilde{C},\widetilde{R}}}\ket{\tau}_{\mathbf{\widetilde{Z}}}}_1\\
    &=\norm{\left(\bra{0}_{\mathbf{\widetilde{C},\widetilde{R}}}\widetilde{Q_0}^{\dagger}(I_{\mathbf{\widetilde{C}}}\otimes U_{\mathbf{\widetilde{R}}})\widetilde{Q_1}\ket{0}_{\mathbf{\widetilde{C},\widetilde{R}}}\right) \ket{\tau}_{\mathbf{\widetilde{Z}}}}_1
    =\abs{\bra{0}_{\mathbf{\widetilde{C},\widetilde{R}}}\widetilde{Q_0}^{\dagger}(I_{\mathbf{\widetilde{C}}}\otimes U_{\mathbf{\widetilde{R}}})\widetilde{Q_1}\ket{0}_{\mathbf{\widetilde{C},\widetilde{R}}}}>(1-f)^{\frac{1}{4}},
\end{align}
which contradicts that$(\widetilde{Q_{0}},\widetilde{Q_{1}})$ satisfies $(1-f)^{1/4}$-statistical binding.

\paragraph{Proof of the second item.}
We prove that $(Q_{0}^*,Q_{1}^*)$ satisfies computational hiding if $(Q_0,Q_1)$ satisfies computational hiding and computational binding.
Because $(\widetilde{Q_0},\widetilde{Q_1})$ is the flavor conversion of $(Q_0,Q_1)$, $(\widetilde{Q}_0,\widetilde{Q}_1)$ also satisfies computational hiding.
Therefore, we can reduce the computational hiding of $(Q_{0}^*,Q_{1}^*)$ to those of $(Q_{0},Q_{1})$ and $(\widetilde{Q}_0,\widetilde{Q}_1)$ by a standard hybrid argument.
\end{proof}

\begin{proof}[Proof of \cref{thm:combiner_com}]
Below, we consider some fixed constant $n$.
For $i\in[n]$, let $\cT_i$ be a deterministic classical Turing machine that takes as input $1^\secp$, and outputs $(Q_{i,0}(\secp),Q_{i,1}(\secp))$.
Let $\Sigma_i\seteq \{Q_{i,0}(\secp),Q_{i,1}(\secp)\}_{\secp\in\N}$ be a candidate of canonical quantum bit commitment.
Let $\Sigma_i^*\seteq \{Q_{i,0}^*(\secp),Q_{i,1}^*(\secp)\}_{\secp\in\N}$ be a candidate of canonical quantum bit commitment such that: 
\begin{enumerate}
    \item $\Sigma_i^*$ satisfies statistical binding.
    \item $\Sigma_i^*$ satisfies computational hiding if $\Sigma_i$ satisfies computational hiding and computational binding.
\end{enumerate}
Note that such a canonical quantum bit commitment is obtained from \cref{lem:amp_binding_com}.

A robust canonical quantum bit commitment combiner is a deterministic classical polynomial-time Turing machine $\cM$ that takes as input $1^n$ and $\{\cT_i\}_{i\in[n]}$, and outputs a deterministic classical polynomial-time Turing machine $\cT$ that works as follows.
$\cT$ takes as input $1^\secp$ and outputs the following QPT unitary $\left(\Comb.Q_0(\secp), \Comb.Q_1(\secp)\right)$:
\begin{itemize}
    \item If $\mathbf{C}^*_i$ and $\mathbf{R}^*_i$ are the commitment register and the reveal register of $(Q_{i,0}^*(\secp),Q_{i,1}^*(\secp))$, then the commitment and reveal register of $(\Comb.Q_0(\secp),\Comb.Q_1(\secp))$ are defined as $\mathbf{C}\seteq \{\mathbf{C}^*_i\}_{i\in[n]}$ and $\mathbf{R}=(\{\mathbf{R}^*_i\}_{i\in[n]},\{\mathbf{D}^*_i\}_{i\in[n]})$, where $\mathbf{D}^*_i$ is an additional one-qubit register for $i\in[n]$.
    \item For $b\in\bit$, the unitary $\Comb.Q_b$ is defined as follows:
    \begin{align}
        \Comb.Q_b(\secp)\seteq\left(\sum_{r\in\bit^n}\bigotimes_{i\in[n]} (Q^*_{i,r_i}(\secp)\otimes \ket{r_i}\bra{r_i}_{\mathbf{D}^*_i})\right)\left(\bigotimes_{i\in[n]}I_{\mathbf{C}^*_i,\mathbf{R}^*_i}\otimes X^b_{\mathbf{D}_1^*}\bigotimes_{i\in \{2,\cdots,n\}}  \mathsf{CNOT}_{\mathbf{D}_1^*,\mathbf{D}^*_i}\bigotimes_{i\in\{2,\cdots,n\}} H_{\mathbf{D}^*_i}\right).
    \end{align}
    Here, $r_i$ is the $i$-th bit of $r$ and $\mathsf{CNOT}_{\mathbf{D}^*_1,\mathbf{D}^*_i}$ is a CNOT gate, where $\mathbf{D}^*_1$ is a target register and $\mathbf{D}^*_i$ is a control register.
Note that we have
\begin{align}
    \Comb.Q_b(\secp)\ket{0}_{\mathbf{C,R}}=\frac{1}{2^{(n-1)/2}}\sum_{\{r: \sum_{i\in[n]}r_i=b\}}\bigotimes_{i\in[n]}(Q_{i,r_i}^*(\secp)\ket{0}_{\mathbf{C}^*_i,\mathbf{R}^*_i}\otimes \ket{r_i}_{\mathbf{D}^*_i}).
\end{align}
\end{itemize}

We have the following \cref{thm:binding,thm:preserve}, which we prove later.
Therefore, \cref{thm:combiner_com} holds.
\begin{lemma}\label{thm:binding}
    $\{\Comb.Q_0(\secp),\Comb.Q_1(\secp)\}_{\secp\in\N}$ satisfies statistical binding.
\end{lemma}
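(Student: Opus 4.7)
The plan is to convert statistical binding into a fidelity statement via Uhlmann's theorem, bound the relevant fidelity through the trace distance (which factorizes cleanly on our construction), and then convert back. Specifically, letting $\rho_b^{\mathsf{comb}} \seteq \Tr_{\mathbf{R}}(\Comb.Q_b(\secp)\ket{0}\bra{0}\Comb.Q_b^\dagger(\secp))$, Uhlmann's theorem (\cref{lem:uhlmann}) tells us that the binding quantity maximized over adversaries equals $\sqrt{F(\rho_0^{\mathsf{comb}},\rho_1^{\mathsf{comb}})}$, so it suffices to show $F(\rho_0^{\mathsf{comb}},\rho_1^{\mathsf{comb}}) \leq \negl(\secp)$.

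First I would compute the reduced states. The reveal register $\mathbf{R}$ contains both the $\mathbf{R}^*_i$'s and the $\mathbf{D}^*_i$'s that store the shares $r_i$. Tracing out the $\mathbf{D}^*_i$'s kills all cross terms with $r \neq r'$, so that
\begin{align}
\rho_b^{\mathsf{comb}} = \frac{1}{2^{n-1}}\sum_{r:\bigoplus_i r_i = b}\bigotimes_{i\in[n]} \rho_{i,r_i}^*,
\end{align}
where $\rho_{i,c}^* \seteq \Tr_{\mathbf{R}^*_i}(Q_{i,c}^*(\secp)\ket{0}\bra{0}Q_{i,c}^{*\dagger}(\secp))$. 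The key algebraic observation is that the parity sign factorizes as $(-1)^{\bigoplus_i r_i} = \prod_i (-1)^{r_i}$, so
\begin{align}
\rho_0^{\mathsf{comb}} - \rho_1^{\mathsf{comb}} = \frac{1}{2^{n-1}}\sum_{r\in\bit^n}\bigotimes_{i\in[n]}\bigl((-1)^{r_i}\rho_{i,r_i}^*\bigr) = \frac{1}{2^{n-1}}\bigotimes_{i\in[n]}(\rho_{i,0}^* - \rho_{i,1}^*).
\end{align}
Since the trace norm is multiplicative on tensor products, this yields the clean identity $\mathsf{TD}(\rho_0^{\mathsf{comb}},\rho_1^{\mathsf{comb}}) = \prod_{i\in[n]}\mathsf{TD}(\rho_{i,0}^*,\rho_{i,1}^*)$.

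To finish, I would apply Fuchs-van de Graaf in both directions. By \cref{lem:amp_binding_com} each $\Sigma_i^*$ satisfies statistical binding, which via Uhlmann means $F(\rho_{i,0}^*,\rho_{i,1}^*) \leq \negl(\secp)$; hence $\mathsf{TD}(\rho_{i,0}^*,\rho_{i,1}^*) \geq 1 - \sqrt{F(\rho_{i,0}^*,\rho_{i,1}^*)} \geq 1 - \negl(\secp)$. As $n$ is a fixed constant, the product identity yields $\mathsf{TD}(\rho_0^{\mathsf{comb}},\rho_1^{\mathsf{comb}}) \geq (1-\negl(\secp))^n \geq 1 - n\cdot\negl(\secp)$. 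The other direction of Fuchs-van de Graaf then gives
\begin{align}
F(\rho_0^{\mathsf{comb}},\rho_1^{\mathsf{comb}}) \leq 1 - \mathsf{TD}(\rho_0^{\mathsf{comb}},\rho_1^{\mathsf{comb}})^2 \leq 2n\cdot\negl(\secp),
\end{align}
which is negligible, and Uhlmann's theorem converts this to the desired bound on $\norm{(\bra{0}\Comb.Q_1^\dagger(\secp))_{\mathbf{C,R}}(I_{\mathbf{C}}\otimes U_{\mathbf{R,Z}})(\Comb.Q_0(\secp)\ket{0}_{\mathbf{C,R}}\ket{\tau}_{\mathbf{Z}})}$ uniformly over $U,\ket{\tau}$.

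The main obstacle is recognizing the tensor-product factorization of $\rho_0^{\mathsf{comb}} - \rho_1^{\mathsf{comb}}$; once that is in hand, the rest is essentially manipulation of Fuchs-van de Graaf inequalities. A minor subtlety is that statistical binding of the $\Sigma_i^*$ gives a fidelity bound, which must be routed through the trace distance (because that is the quantity that multiplies cleanly on tensor products) before being translated back into the fidelity bound that Uhlmann requires; using the weaker bound $F \leq 1 - \mathsf{TD}^2$ rather than $F \leq 2(1-\mathsf{TD})$ is harmless since $\sqrt{\negl}$ is still negligible.
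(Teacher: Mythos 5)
Your proof is correct and follows essentially the same route as the paper: both observe the tensor-product factorization $\rho_0^{\mathsf{comb}} - \rho_1^{\mathsf{comb}} = 2^{-(n-1)}\bigotimes_i(\rho_{i,0}^*-\rho_{i,1}^*)$, lower-bound the trace distance of the combined states by the product of the component trace distances (each $\geq 1-\negl$ via \cref{lem:amp_binding_com}), and then pass to fidelity and Uhlmann's theorem with the bound $F \leq 1-\mathsf{TD}^2 \leq 2n\cdot\negl$. The only cosmetic difference is that you invoke multiplicativity of the trace norm over tensor products directly, whereas the paper lower-bounds the trace distance by constructing the explicit product POVM $\Comb.\Pi_0 = \sum_{r: \oplus r_i = 0}\bigotimes_i \Pi_{i,r_i}^*$ — these are the same calculation.
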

\begin{lemma}\label{thm:preserve}
    If one of $\{\Sigma_i\}_{i\in[n]}$ satisfies computational hiding and computational binding, then $\{\Comb.Q_0(\secp),\Comb.Q_1(\secp)\}_{\secp\in\N}$ satisfies computational hiding.
\end{lemma}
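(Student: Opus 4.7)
The plan is to show that tracing out the reveal register collapses the $\mathbf{D}^*_i$ registers and turns $\Comb.Q_b(\secp)\ket{0}_{\mathbf{C,R}}$ into a uniform XOR secret sharing of $b$ across the $n$ component commitments, and then to use the computational hiding of the distinguished candidate $\Sigma_{i^*}^*$ via a standard hybrid argument.

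First I would compute the commitment state explicitly. Writing $\sigma^*_{i,c}(\secp)\seteq\Tr_{\mathbf{R}^*_i}\bigl(Q^*_{i,c}(\secp)\ket{0}\bra{0}Q^{*\dagger}_{i,c}(\secp)\bigr)$ and tracing out all of $\mathbf{R}=(\{\mathbf{R}^*_i\}_i,\{\mathbf{D}^*_i\}_i)$ in the expression
\[
\Comb.Q_b(\secp)\ket{0}_{\mathbf{C,R}}=\frac{1}{2^{(n-1)/2}}\sum_{\{r:\sum_i r_i=b\}}\bigotimes_{i\in[n]}\bigl(Q^*_{i,r_i}(\secp)\ket{0}_{\mathbf{C}^*_i,\mathbf{R}^*_i}\otimes\ket{r_i}_{\mathbf{D}^*_i}\bigr)
\]
yields the classical mixture
\[
\rho_b(\secp)\seteq\Tr_{\mathbf{R}}\bigl(\Comb.Q_b(\secp)\ket{0}\bra{0}\Comb.Q^\dagger_b(\secp)\bigr)=\frac{1}{2^{n-1}}\sum_{\{r:\sum_i r_i=b\}}\bigotimes_{i\in[n]}\sigma^*_{i,r_i}(\secp),
\]
since orthogonality of $\ket{r_i}$ in the $\mathbf{D}^*_i$ register kills cross terms. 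Let $i^*\in[n]$ be the index for which $\Sigma_{i^*}$ satisfies both computational hiding and computational binding; by \cref{lem:amp_binding_com}, $\Sigma^*_{i^*}$ satisfies computational hiding, i.e.\ $\sigma^*_{i^*,0}(\secp)\cind\sigma^*_{i^*,1}(\secp)$.

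Next I would rewrite $\rho_b(\secp)$ by conditioning on the value of $r_{i^*}$. For $c\in\bit$, let $A_c(\secp)\seteq\frac{1}{2^{n-2}}\sum_{\{r_{-i^*}:\sum_{j\neq i^*}r_j=c\}}\bigotimes_{j\neq i^*}\sigma^*_{j,r_j}(\secp)$ (for $n\ge 2$; the case $n=1$ is trivial). Then
\[
\rho_b(\secp)=\tfrac12\,A_b(\secp)\otimes\sigma^*_{i^*,0}(\secp)+\tfrac12\,A_{b\oplus 1}(\secp)\otimes\sigma^*_{i^*,1}(\secp).
\]
I would now define two intermediate hybrids
\[
H_0=\rho_0,\qquad H_1=\tfrac12\,A_0\otimes\sigma^*_{i^*,1}+\tfrac12\,A_1\otimes\sigma^*_{i^*,1},\qquad H_2=\rho_1,
\]
and show $H_0\cind H_1\cind H_2$ by reduction to the hiding of $\Sigma^*_{i^*}$. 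For $H_0\cind H_1$: on a challenge $\tau^*\in\{\sigma^*_{i^*,0}(\secp),\sigma^*_{i^*,1}(\secp)\}$, the reduction flips a fair coin $c$; if $c=0$ it samples $r_{-i^*}$ with $\sum_{j\neq i^*}r_j=0$, generates $\bigotimes_{j\neq i^*}\sigma^*_{j,r_j}(\secp)$ honestly and outputs $\bigotimes_{j\neq i^*}\sigma^*_{j,r_j}(\secp)\otimes\tau^*$, while if $c=1$ it samples $r_{-i^*}$ with $\sum_{j\neq i^*}r_j=1$, generates $\bigotimes_{j\neq i^*}\sigma^*_{j,r_j}(\secp)$ and outputs $\bigotimes_{j\neq i^*}\sigma^*_{j,r_j}(\secp)\otimes\sigma^*_{i^*,1}(\secp)$ (produced by the reduction itself using $Q^*_{i^*,1}(\secp)$). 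By construction this perfectly simulates $H_0$ when $\tau^*=\sigma^*_{i^*,0}$ and $H_1$ when $\tau^*=\sigma^*_{i^*,1}$; the reduction is QPT because every $Q^*_{j,r_j}(\secp)$ is uniform QPT. A symmetric reduction handles $H_1\cind H_2$, swapping the roles of the two terms.

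I do not expect any serious obstacle here; the main thing to be careful about is the efficiency of the reduction (which is fine because the combiner only merges a constant number $n$ of candidates, each of which is a uniform QPT unitary) and the bookkeeping for who prepares which $\sigma^*_{j,c}$ in each hybrid. Putting the two hybrid steps together gives $\rho_0(\secp)\cind\rho_1(\secp)$, which is exactly the computational hiding of $\{\Comb.Q_0(\secp),\Comb.Q_1(\secp)\}_{\secp\in\N}$.
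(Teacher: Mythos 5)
Your proof is correct and takes essentially the same route as the paper: both insert a middle hybrid in which the $i^*$-th slot of the XOR secret sharing is pinned to a single fixed commitment state (making that hybrid independent of $b$), and reduce each of the two adjacent hybrid pairs to the computational hiding of $\Sigma_{i^*}^*$ supplied by \cref{lem:amp_binding_com}. The only cosmetic differences are that you pin the slot to $\sigma^*_{i^*,1}$ whereas the paper pins to $\rho^*_{x,0}$, and your reduction simulates each hybrid exactly via a coin flip on which term to embed the challenge into, whereas the paper's reduction does the equivalent bookkeeping by conditioning its output on $R=\sum_{i\neq x}r_i$.
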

\end{proof}

\begin{proof}[Proof of \cref{thm:binding}]
    Below, we fix security parameter $\secp$, and write $\{Q_{i,0}^*,Q_{i,1}^*\}_{i\in[n]}$ and $(\Comb.Q_0,\Comb.Q_1)$ to mean $\{Q_{i,0}^*(\secp),Q_{i,1}^*(\secp)\}_{i\in[n]}$ and $(\Comb.Q_0(\secp),\Comb.Q_1(\secp) )$, respectively.
    Let us denote ${\rho_{i,b}^*}\seteq \Tr_{\mathbf{R}_i}(Q^*_{i,b}\ket{0}_{\mathbf{C}_i,\mathbf{R}_i})$.
    We write $\Comb.\rho_b\seteq \Tr_{\mathbf{R}}(\Comb.Q_b\ket{0}_{\mathbf{C,R}})$ and write $R$ to mean $\sum_{i\in[n]}r_i$.
    Note that we have 
    \begin{align}
        \Comb.\rho_b= \frac{1}{2^{n-1}}\sum_{\{r:R=b \}}\bigotimes_{i\in[n]}\rho_{i,r_i}^*.
    \end{align}
    Now, we show that 
    \begin{align}
    \mathsf{TD}(\Comb.\rho_0,\Comb.\rho_1)\geq 1-\negl(\secp).
    \end{align}
    For that, it is sufficient to show that there exists a POVM measurement $\{\Comb.\Pi_0,\Comb.\Pi_1\}$ that distinguishes $\Comb.\rho_0$ from $\Comb.\rho_1$.
    From \cref{lem:amp}, all $\Sigma_i^*$ satisfies statistical binding. This implies that we have $\mathsf{TD}(\rho_{i,0}^*,\rho_{i,1}^*)\geq 1-\negl(\secp)$.
    Moreover, this implies that there exists a two-outcome POVM measurement $\{\Pi_{i,0}^*,\Pi_{i,1}^*\}$ such that
    \begin{align}
        \Tr\left(\Pi_{i,0}^*(\rho_{i,0}^*-\rho_{i,1}^*)\right)=\Tr\left(\Pi_{i,1}^*(\rho_{i,1}^*-\rho_{i,0}^*)\right)\geq 1-\negl(\secp).
    \end{align}
    We introduce the two-outcome POVM measurement $\{\Comb.\Pi_0\seteq\sum_{\{r:R=0\}}\bigotimes_{i\in[n]}\Pi_{i,r_i}^*,\Comb.\Pi_1\seteq\sum_{\{r:R=1\}}\bigotimes_{i\in[n]}\Pi_{i,r_i}^*\}$.
    Then, we have
    \begin{align}
        \mathsf{TD}(\Comb.\rho_0,\Comb.\rho_1)&\geq\Tr(\Comb.\Pi_0(\Comb.\rho_0-\Comb.\rho_1))\\
        &=
        \frac{1}{2^{n-1}}\Tr(\Comb.\Pi_0\left(\sum_{\{r:R=0\}}\bigotimes_{i\in[n]}\rho_{i,r_i}^*-\sum_{\{r:R=1\}}\bigotimes_{i\in[n]}\rho_{i,r_i}^*\right))\\
        &=\frac{1}{2^{n-1}}\Tr(\Comb.\Pi_0\left(\sum_{r}\bigotimes_{i\in[n]}(-1)^{r_i}\rho_{i,r_i}^*\right))\\
        &=\frac{1}{2^{n-1}}\Tr(\Comb.\Pi_0\bigotimes_{i\in[n]}\left(\rho^*_{i,0}-\rho_{i,1}^*\right))\\
        &=\frac{1}{2^{n-1}}\Tr(\left(\sum_{\{r:R=0\}}\bigotimes_{i\in[n]}\Pi^*_{i,r_i}\right)\left(\bigotimes_{i\in[n]}\rho^*_{i,0}-\rho_{i,1}^*\right))\\
        &=\frac{1}{2^{n-1}}\Tr(\sum_{\{r:R=0\}}\left(\bigotimes_{i\in[n]}\Pi^*_{i,r_i}\left(\rho^*_{i,0}-\rho_{i,1}^*\right)\right))\\
        &=\frac{1}{2^{n-1}}\sum_{\{r:R=0\}}\Tr(\bigotimes_{i\in[n]}\Pi^*_{i,r_i}\left(\rho^*_{i,0}-\rho_{i,1}^*\right))\\
        &=\frac{1}{2^{n-1}}\sum_{\{r:R=0\}}\prod_{i\in[n]}\Tr(\Pi^*_{i,r_i}\left(\rho^*_{i,0}-\rho_{i,1}^*\right))\\
       &=\prod_{i\in[n]}\Tr(\Pi^*_{i,0}\left(\rho^*_{i,0}-\rho_{i,1}^*\right))\geq (1-\negl(\secp))^{n}\geq 1-n\cdot\negl(\secp). 
    \end{align}
    \if0
    \ryo{The second equation is unclear to me.}
    \taiga{I change the second equation. Does the equation make sense?}
    \fi
    Here, we have used that $\Tr(A+B)=\Tr(A)+\Tr(B)$ in the sixth equation, and we have used $\Tr(A\otimes B)=\Tr(A)\Tr(B)$ in the seventh equation, and we have used that
    \begin{align}
        \prod_{i\in[n]}\Tr(\Pi_{i,r_i}^*\left(\rho_{i,0}^*-\rho_{i,1}^* \right))=\prod_{i\in[n]}\Tr(\Pi_{i,0}^*\left(\rho_{i,0}^*-\rho_{i,1}^* \right))
    \end{align}
    for all $r\in\bit^n$ with $\sum_{i\in[n]}r_i=0$ in the final equation.
    
    Furthermore, we have
    $F(\Comb.\rho_0,\Comb.\rho_1)\leq 1-\mathsf{TD}(\Comb.\rho_0,\Comb.\rho_1)^2\leq 2n\cdot\negl(\secp)$.
    From Uhlmann's theorem (\cref{lem:uhlmann}), this implies that $\{\Comb.Q_0(\secp),\Comb.Q_1(\secp)\}_{\secp\in\N}$ satisfies statistical binding.
\end{proof}

\begin{proof}[Proof of \cref{thm:preserve}]
    Below, we fix security parameter $\secp$, and write $\{Q_{i,0}^*,Q_{i,1}^*\}_{i\in[n]}$ and $(\Comb.Q_0,\Comb.Q_1)$ to mean $\{Q_{i,0}^*(\secp),Q_{i,1}^*(\secp)\}_{i\in[n]}$ and $(\Comb.Q_0(\secp),\Comb.Q_1(\secp) )$, respectively.
    Let $\rho_{i,b}^*\seteq \Tr_{\mathbf{R}_i}(Q_{i,b}^*\ket{0}_{\mathbf{C}_i,\mathbf{R}_i}) $.
    We show that $\{\Comb.Q_0(\secp),\Comb.Q_1(\secp)\}_{\secp\in\N}$ satisfies computational hiding
    as long as one of $\{\Sigma_i\}_{i\in[n]}$ satisfies computational hiding and computational binding.
    Let $\Sigma_x$ be the canonical quantum bit commitment that satisfies computational hiding and computational binding.
    Then, from \cref{lem:amp}, $\Sigma_{x}^*$ satisfies computational hiding.
    Now, we introduce the following sequence of hybrid experiments against QPT adversary $\cA$.
    \begin{description}
        \item[$\mathsf{Hyb}_0(b)$:]$ $
        \begin{enumerate}
            \item The challenger sends $\Comb.\rho_b$ to $\cA$.
            \item $\cA$ outputs $b^*$.
        \end{enumerate}
        \item[$\mathsf{Hyb}_1(b)$:]$ $
        \begin{enumerate}
            \item The challenger randomly samples $r_{i}\la\bit$ for all $i\in[n]\backslash x$. We
            write $R$ to mean $\sum_{i\in[n]\backslash x}r_i$.
            \item The challenger sends $\rho_{1,r_1}^*\otimes \cdots \otimes\rho_{x-1,r_{x-1}}^*\otimes \rho_{x,R+b}^*\otimes \rho_{x+1,r_{x+1}}^*\otimes\cdots \otimes\rho_{n,r_{n}}^*$ to $\cA$.
            \item $\cA$ outputs $b^*$.
        \end{enumerate}
        \item[$\mathsf{Hyb}_2(b)$:] $ $
        \begin{enumerate}
            \item The challenger randomly samples $r_i\la\bit$ for all $i\in[n]\backslash x$.
            \item The challenger sends $\rho_{1,r_1}^*\otimes\cdots\otimes \rho_{x-1,r_{x-1}}^*\otimes\rho_{x,0}^* \otimes\rho_{x+1,r_{x+1}}^* \otimes\cdots\otimes\rho_{n,r_{n}}^*$.
            \item $\cA$ outputs $b^*$.
        \end{enumerate}
    \end{description}
    We have the following \cref{prop:hyb_0_hyb_1_com,prop:hyb_1_hyb_2_com,prop:hyb_2_com}. Therefore, we have
    \begin{align}
        \abs{\Pr[\mathsf{Hyb}_0(0)=1]-\Pr[\mathsf{Hyb}_0(1)=1]}\leq \negl(\secp),
    \end{align}
    which implies that $\{\Comb.\rho_0(\secp),\Comb.\rho_1(\secp)\}_{\secp\in\N}$ satisfies computational hiding.
    \begin{proposition}\label{prop:hyb_0_hyb_1_com}
        $\Pr[\mathsf{Hyb}_0(b)=1]=\Pr[\mathsf{Hyb}_1(b)=1]$ for $b\in\bit$.
    \end{proposition}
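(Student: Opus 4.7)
The plan is to show that the mixed state the challenger sends to $\cA$ in $\mathsf{Hyb}_1(b)$ is identical to $\Comb.\rho_b$, which is the state sent in $\mathsf{Hyb}_0(b)$; once this is established the two experiments are statistically identical and the probabilities must agree.

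First I would unfold the definition of $\Comb.\rho_b$ that was derived in the proof of \cref{thm:binding}, namely
\begin{align}
    \Comb.\rho_b = \frac{1}{2^{n-1}}\sum_{\{r\in\bit^n:\sum_{i\in[n]}r_i=b\}}\bigotimes_{i\in[n]}\rho_{i,r_i}^*.
\end{align}
Next I would analyze the distribution over tuples $(r_1,\ldots,r_n)$ implicitly sampled by the challenger of $\mathsf{Hyb}_1(b)$: the bits $\{r_i\}_{i\in[n]\setminus x}$ are drawn independently and uniformly from $\bit$, and the $x$-th bit is then set to $r_x \seteq R + b = \sum_{i\in[n]\setminus x} r_i + b \pmod 2$. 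Observe that $\sum_{i\in[n]} r_i = R + r_x = 2R + b \equiv b \pmod 2$, so every tuple produced is in the set $\{r : \sum_{i\in[n]} r_i = b\}$, and conversely every such tuple arises from exactly one choice of $\{r_i\}_{i\in[n]\setminus x}$. Since those $n-1$ bits are uniform, the induced distribution on $(r_1,\ldots,r_n)$ is the uniform distribution on $\{r\in\bit^n : \sum_{i\in[n]} r_i = b\}$, a set of size $2^{n-1}$.

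Therefore the state the challenger hands to $\cA$ in $\mathsf{Hyb}_1(b)$ is
\begin{align}
    \frac{1}{2^{n-1}}\sum_{\{r:\sum_i r_i=b\}}\bigotimes_{i\in[n]}\rho_{i,r_i}^{*},
\end{align}
which equals $\Comb.\rho_b$ by the expression above. Hence the view of $\cA$ is identical in $\mathsf{Hyb}_0(b)$ and $\mathsf{Hyb}_1(b)$, and in particular $\Pr[\mathsf{Hyb}_0(b)=1] = \Pr[\mathsf{Hyb}_1(b)=1]$. No step here is really difficult; the only mild subtlety is checking that the deterministic assignment of $r_x$ given the other bits yields the uniform distribution on the parity-$b$ subset, but this is a one-line counting argument.
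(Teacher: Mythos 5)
Your proof is correct and is precisely the natural one-line counting argument the paper has in mind when it declares the proposition ``trivial'' and omits the proof: setting $r_x = R+b$ given uniformly random $\{r_i\}_{i\neq x}$ induces the uniform distribution on the parity-$b$ subset of $\bit^n$, so the challenger's state in $\mathsf{Hyb}_1(b)$ is exactly $\Comb.\rho_b$. Nothing to add.
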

    \begin{proposition}\label{prop:hyb_1_hyb_2_com}
        If $\Sigma_x^*$ satisfies computational hiding, then 
        \begin{align}
            \abs{\Pr[\mathsf{Hyb}_1(b)=1]-\Pr[\mathsf{Hyb}_2(b)=1]}\leq \negl(\secp)
        \end{align}
        for each $b\in\bit$.
    \end{proposition}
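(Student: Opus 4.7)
The plan is to reduce distinguishing $\mathsf{Hyb}_1(b)$ from $\mathsf{Hyb}_2(b)$ to breaking the computational hiding of $\Sigma_x^*$. The structural observation underlying the reduction is that the two hybrids sample $r_i\leftarrow\bit$ for $i\in[n]\setminus\{x\}$ with exactly the same distribution and use the resulting states $\rho_{i,r_i}^*$ identically at every position $i\neq x$; they differ only in the $x$-th tensor factor, which is $\rho_{x,R+b}^*$ in $\mathsf{Hyb}_1(b)$ versus $\rho_{x,0}^*$ in $\mathsf{Hyb}_2(b)$. Hence when $R+b\equiv 0\pmod 2$ the two hybrids are pointwise identical, and when $R+b\equiv 1\pmod 2$ the discrepancy is exactly $\rho_{x,1}^*$ versus $\rho_{x,0}^*$.

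Assuming $n\geq 2$ (the case $n=1$ is trivial, since the combiner then just outputs its single input), the random variable $R=\sum_{i\neq x}r_i\bmod 2$ is uniform on $\bit$. I would then construct a reduction $\cB$ to the hiding of $\Sigma_x^*$: on receiving a challenge state $\rho_{x,c}^*$ with unknown $c\in\bit$, $\cB$ samples $\{r_i\}_{i\neq x}$ uniformly from $\bit^{n-1}$ subject to $\sum_{i\neq x}r_i\equiv 1-b\pmod 2$, runs the unitary $Q_{i,r_i}^*(\secp)\ket{0}$ and traces out the reveal register to prepare $\rho_{i,r_i}^*$ honestly for each $i\neq x$, forwards to $\cA$ the tensor product of these states with $\rho_{x,c}^*$ placed in the $x$-th slot, and finally returns $\cA$'s output bit. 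When $c=1$, the state fed to $\cA$ has the exact distribution of $\mathsf{Hyb}_1(b)$ conditioned on $R=1-b$ (since then $R+b\equiv 1$ so the $x$-slot is $\rho_{x,1}^*$); when $c=0$, it has the distribution of $\mathsf{Hyb}_2(b)$ conditioned on $R=1-b$. The reduction is uniform QPT because each $Q_{i,b}^*(\secp)$ is output by $\cM$ as a uniform QPT unitary.

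To conclude, I would set $p_b^{i,r}\seteq\Pr[\cA=1\mid\mathsf{Hyb}_i(b),\,R=r]$ and use the law of total probability together with uniformity of $R$ to obtain $\Pr[\mathsf{Hyb}_i(b)=1]=\tfrac{1}{2}(p_b^{i,0}+p_b^{i,1})$ for $i\in\{1,2\}$. Since the two hybrids coincide conditional on $R=b$, one has $p_b^{1,b}=p_b^{2,b}$, and therefore
\begin{align}
\abs{\Pr[\mathsf{Hyb}_1(b)=1]-\Pr[\mathsf{Hyb}_2(b)=1]}=\tfrac{1}{2}\abs{p_b^{1,1-b}-p_b^{2,1-b}}.
\end{align}
The right-hand side is exactly half the distinguishing advantage of $\cB$ between $\rho_{x,0}^*$ and $\rho_{x,1}^*$, which by the assumed computational hiding of $\Sigma_x^*$ (inherited from the computational hiding and computational binding of $\Sigma_x$ via \cref{lem:amp}) must be negligible in $\secp$.

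The main obstacle is really a bookkeeping subtlety rather than a conceptual difficulty: the reduction is handed a single challenge state and cannot freely choose the $x$-slot, so it must shift the entire difficulty into the sampling of the remaining $r_i$'s by conditioning on $R\equiv 1-b\pmod 2$. This is exactly the event on which the two hybrids disagree, and it has probability $1/2$ only because $n\geq 2$ makes $R$ uniform. The factor of $2$ loss in the advantage comes precisely from this conditioning step.
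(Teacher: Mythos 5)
Your proposal is correct and follows the same essential route as the paper: reduce to the computational hiding of $\Sigma_x^*$ by observing that the two hybrids disagree only when $R=1-b$, accepting a factor-$2$ loss in advantage from the conditioning on that event. The only cosmetic difference is that the paper's reduction $\cB_x$ samples the $r_i$ unconditionally and then outputs a fixed bit whenever $R\neq 1$, whereas you condition the sampling on $R=1-b$ and forward $\cA$'s bit directly (and you parameterize over both values of $b$ at once, where the paper does $b=0$ and then says ``similarly''); these are equivalent ways of packaging the same argument.
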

    \begin{proposition}\label{prop:hyb_2_com}
        $\Pr[\mathsf{Hyb}_2(0)=1]=\Pr[\mathsf{Hyb}_2(1)=1]$.
    \end{proposition}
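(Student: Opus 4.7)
The plan is essentially an observation: the experiment $\mathsf{Hyb}_2(b)$ does not depend on $b$ at all, so there is nothing to reduce to. Looking at the description, in $\mathsf{Hyb}_2(b)$ the challenger samples $r_i\la\bit$ independently and uniformly for $i\in[n]\setminus x$, and then sends the tensor product
\begin{align}
\rho_{1,r_1}^*\otimes\cdots\otimes \rho_{x-1,r_{x-1}}^*\otimes\rho_{x,0}^* \otimes\rho_{x+1,r_{x+1}}^* \otimes\cdots\otimes\rho_{n,r_{n}}^*
\end{align}
to $\cA$. The bit $b$ appears nowhere on the right-hand side: the $x$-th slot is fixed to $\rho_{x,0}^*$ and the remaining slots are independent of $b$. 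In particular, the joint distribution of the challenger's message is identical in $\mathsf{Hyb}_2(0)$ and $\mathsf{Hyb}_2(1)$.

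Therefore I would conclude that the entire view of the adversary $\cA$, and hence the distribution of its output $b^*$, is literally the same random variable in both experiments. Since the experiment outputs $1$ as a fixed (deterministic) function of $\cA$'s view, the two output probabilities coincide:
\begin{align}
\Pr[\mathsf{Hyb}_2(0)=1]=\Pr[\mathsf{Hyb}_2(1)=1].
\end{align}

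There is no real obstacle here; this claim is used together with \cref{prop:hyb_0_hyb_1_com,prop:hyb_1_hyb_2_com} only as the endpoint of the hybrid chain. The only thing worth double-checking when writing it out is the syntactic point that nothing in the sampling of $\{r_i\}_{i\neq x}$ or in the fixed state $\rho_{x,0}^*$ secretly depends on $b$; once that is observed, the statement follows immediately by inspection and no calculation is required.
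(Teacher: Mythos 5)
Your observation is exactly right: in $\mathsf{Hyb}_2(b)$ the challenger's message is $\rho_{1,r_1}^*\otimes\cdots\otimes\rho_{x,0}^*\otimes\cdots\otimes\rho_{n,r_n}^*$ with the $r_i$ sampled uniformly and independently of $b$, so the adversary's view — and hence the distribution of $b^*$ — is identical for $b=0$ and $b=1$. The paper simply states that this proposition ``trivially follows'' and omits the proof, and your by-inspection argument is precisely the intended one.
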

    
    \cref{prop:hyb_0_hyb_1_com,prop:hyb_2_com} trivially follows, and thus we omit the proof.

\begin{proof}[Proof of \cref{prop:hyb_1_hyb_2_com}]
    For simplicity, we write $\sum_r$ to mean that $\sum_{\{r:r_i\in\bit\mbox{\,\,for\,\,} i\in[n]\backslash x\}} $, and recall that $R\seteq \sum_{i\in[n]\backslash x}r_i$.

    Then, we have
    \begin{align}
        &\Pr[\mathsf{Hyb}_1(0)=1]-\Pr[\mathsf{Hyb}_2(0)=1]\\
        &=\frac{1}{2^{n-1}}\sum_{r}\left(\Pr[1\la\cA(\rho_{1,r_1}^*\otimes\cdots\otimes \rho_{x-1,r_{x-1}}^*\otimes\rho_{x,R}^*\otimes\rho_{x+1,r_{x+1}}^*\otimes \cdots\otimes \rho_{n,r_{n}}^*) ]\right)\\
        &\,\,\,\,\,\,\,\,\,\,\,\,\,\,\,\,\,\,\,\,-\frac{1}{2^{n-1}}\sum_{r}\left(\Pr[1\la\cA(\rho_{1,r_1}^*\otimes\cdots\otimes \rho_{x-1,r_{x-1}}^*\otimes\rho_{x,0}^*\otimes\rho_{x+1,r_{x+1}}^*\otimes \cdots\otimes \rho_{n,r_{n}}^*) ]\right)\\
        &=\frac{1}{2^{n-1}}\sum_{r}\bigg(\Pr[1\la\cA(\rho_{1,r_1}^*\otimes\cdots\otimes \rho_{x-1,r_{x-1}}^*\otimes\rho_{x,R}^*\otimes\rho_{x+1,r_{x+1}}^*\otimes \cdots\otimes \rho_{n,r_{n}}^*) ]\\
        &\,\,\,\,\,\,\,\,\,\,\,\,\,\,\,\,\,\,\,\,\,\,\,-\Pr[1\la\cA(\rho_{1,r_1}^*\otimes\cdots\otimes \rho_{x-1,r_{x-1}}^*\otimes\rho_{x,0}^*\otimes\rho_{x+1,r_{x+1}}^*\otimes \cdots\otimes \rho_{n,r_{n}}^*) ]\bigg)\\
        &=\frac{1}{2^{n-1}}\sum_{\{r:R=1\}}\bigg(\Pr[1\la\cA(\rho_{1,r_1}^*\otimes\cdots\otimes \rho_{x-1,r_{x-1}}^*\otimes\rho_{x,1}^*\otimes\rho_{x+1,r_{x+1}}^*\otimes \cdots\otimes \rho_{n,r_{n}}^*) ]\\
        &\,\,\,\,\,\,\,\,\,\,\,\,\,\,\,\,\,\,\,\,\,\,\,-\Pr[1\la\cA(\rho_{1,r_1}^*\otimes\cdots\otimes \rho_{x-1,r_{x-1}}^*\otimes\rho_{x,0}^*\otimes\rho_{x+1,r_{x+1}}^*\otimes \cdots\otimes \rho_{n,r_{n}}^*) ]\bigg)
        \\
        &=\frac{1}{2^{n-1}}\sum_{\{r:R=1\}}\bigg(\Pr[0\la\cA(\rho_{1,r_1}^*\otimes\cdots\otimes \rho_{x-1,r_{x-1}}^*\otimes\rho_{x,0}^*\otimes\rho_{x+1,r_{x+1}}^*\otimes \cdots\otimes \rho_{n,r_{n}}^*) ]\\
        &\,\,\,\,\,\,\,\,\,\,\,\,\,\,\,\,\,\,\,\,\,\,\,-\Pr[0\la\cA(\rho_{1,r_1}^*\otimes\cdots\otimes \rho_{x-1,r_{x-1}}^*\otimes\rho_{x,1}^*\otimes\rho_{x+1,r_{x+1}}^*\otimes \cdots\otimes \rho_{n,r_{n}}^*) ]\bigg).
    \end{align}
    For showing a contradiction, assume that there exists some constant $c$ and a QPT adversary $\cA$ such that
    \begin{align}
        \abs{\Pr[\mathsf{Hyb}_1(0)=1]-\Pr[\mathsf{Hyb}_2(0)=1]}\geq  1/\secp^c
    \end{align}
    for all sufficiently large security parameters $\secp\in\N$
    and then construct a QPT algorithm $\cB_x$ that breaks the computational hiding of $\Sigma_x^*$.
    \begin{enumerate}
        \item $\cB_x$ receives $\rho_{x,b}^*$ from the challenger of $\Sigma_x^*$, where $b$ is randomly sampled from $\bit$.
        \item $\cB_x$ randomly samples $r_i\la\bit$ for all $i\in[n]\backslash x$.
        \item $\cB_x$ sends $\rho_{1,r_1}^*\otimes\cdots\otimes\rho_{x-1,r_{x-1}}^*\otimes\rho_{x,b}^* \otimes\rho_{x+1,r_{x+1}}^*\otimes\cdots \otimes\rho_{n,r_{n}}^* $ to $\cA$.
        \item $\cB_x$ receives $b^*$ from $\cA$.
        \item $\cB_x$ outputs $b^*+1$ if $R=1$, and outputs $0$ otherwise, where $R= \sum_{i\in[n]\backslash x}r_i$.
    \end{enumerate}
    We compute $\abs{\Pr[1\la\cB_x:b=0]-\Pr[1\la\cB_x:b=1]}$.
    It holds that
    \begin{align}
        &\abs{\Pr[1\la\cB_x:b=0]-\Pr[1\la\cB_x:b=1]}\\
        &=\frac{1}{2}\abs{\Pr[0\la\cA: b=0,R=1]-\Pr[0\la\cA:b=1,R=1]}\\
        &=\frac{1}{2^{n}}\Bigg|\sum_{\{r:R=1\}}\left(\Pr[0\la\cA(\rho_{1,r_1}^*\otimes\cdots\otimes\rho_{x-1,r_{x-1}}^*\otimes\rho_{x,0}^* \otimes\rho_{x+1,r_{x+1}}^*\otimes\cdots \otimes\rho_{n,r_{n}}^*)]\right)\\
        &-\sum_{\{r:R=1\}}\left(\Pr[0\la\cA(\rho_{1,r_1}^*\otimes\cdots\otimes\rho_{x-1,r_{x-1}}^*\otimes\rho_{x,1}^* \otimes\rho_{x+1,r_{x+1}}^*\otimes\cdots \otimes\rho_{n,r_{n}}^*)]\right) \Bigg|\\
        &=\frac{1}{2^{n}}\Bigg|\sum_{\{r:R=1\}}\bigg(\Pr[0\la\cA(\rho_{1,r_1}^*\otimes\cdots\otimes\rho_{x-1,r_{x-1}}^*\otimes\rho_{x,0}^* \otimes\rho_{x+1,r_{x+1}}^*\otimes\cdots \otimes\rho_{n,r_{n}}^*)]\\
        &-\Pr[0\la\cA(\rho_{1,r_1}^*\otimes\cdots\otimes\rho_{x-1,r_{x-1}}^*\otimes\rho_{x,1}^* \otimes\rho_{x+1,r_{x+1}}^*\otimes\cdots \otimes\rho_{n,r_{n}}^*)]\bigg)\Bigg|\\
        &=\frac{1}{2}\abs{\Pr[\mathsf{Hyb}_1(0)=1]-\Pr[\mathsf{Hyb}_2(0)=1]}.
    \end{align}
    This implies that if there exists a QPT adversary such that $\abs{\Pr[\mathsf{Hyb}_1(0)=1]-\Pr[\mathsf{Hyb}_2(0)=1]}$ is non-negligible, then $\cB_x$ breaks the computational hiding of $\Sigma_x^*$.
    Therefore, we have
        \begin{align}
        \abs{\Pr[\mathsf{Hyb}_1(0)=1]-\Pr[\mathsf{Hyb}_2(0)=1]}\leq \negl(\secp).
    \end{align}

In a similar way, we can prove that
    \begin{align}
        \abs{\Pr[\mathsf{Hyb}_1(1)=1]-\Pr[\mathsf{Hyb}_2(1)=1]}\leq \negl(\secp).
    \end{align}
\end{proof}

\if0
\begin{proof}[Proof of \cref{prop:hyb_1_hyb_2_com}]
    For simplicity, we write $\sum_r$ to mean that $\sum_{\{r:r_i\in\bit\mbox{\,\,for\,\,} i\in[n]\backslash x\}} $, and remind that $R\seteq \sum_{i\in[n]\backslash x}r_i$.
    
    We have
    \begin{align}
        &\Pr[\mathsf{Hyb}_1(0)=1]-\Pr[\mathsf{Hyb}_2(0)=1]\\
        &=\frac{1}{2^{n-1}}\sum_{r}\left(\Pr[1\la\cA(\rho_{1,r_1}^*\otimes\cdots\otimes \rho_{x-1,r_{x-1}}^*\otimes\rho_{x,R}^*\otimes\rho_{x+1,r_{x+1}}^*\otimes \cdots\otimes \rho_{n,r_n}^*) ]\right)\\
        &\,\,\,\,\,\,\,\,\,\,\,\,\,\,\,\,\,\,\,\,-\sum_{r}\left(\Pr[1\la\cA(\rho_{1,r_1}^*\otimes\cdots\otimes \rho_{x-1,r_{x-1}}^*\otimes\rho_{x,0}^*\otimes\rho_{x+1,r_{x+1}}^*\otimes \cdots\otimes \rho_{n,r_n}^*) ]\right)\\
        &=\frac{1}{2^{n-1}}\sum_{r}\bigg(\Pr[1\la\cA(\rho_{1,r_1}^*\otimes\cdots\otimes \rho_{x-1,r_{x-1}}^*\otimes\rho_{x,R}^*\otimes\rho_{x+1,r_{x+1}}^*\otimes \cdots\otimes \rho_{n,r_n}^*) ]\\
        &\,\,\,\,\,\,\,\,\,\,\,\,\,\,\,\,\,\,\,\,\,\,\,-\Pr[1\la\cA(\rho_{1,r_1}^*\otimes\cdots\otimes \rho_{x-1,r_{x-1}}^*\otimes\rho_{x,0}^*\otimes\rho_{x+1,r_{x+1}}^*\otimes \cdots\otimes \rho_{n,r_n}^*) ]\bigg)\\
        &=\frac{1}{2^{n-1}}\sum_{\{r:R=1\}}\bigg(\Pr[1\la\cA(\rho_{1,r_1}^*\otimes\cdots\otimes \rho_{x-1,r_{x-1}}^*\otimes\rho_{x,1}^*\otimes\rho_{x+1,r_{x+1}}^*\otimes \cdots\otimes \rho_{n,r_n}^*) ]\\
        &\,\,\,\,\,\,\,\,\,\,\,\,\,\,\,\,\,\,\,\,\,\,\,-\Pr[1\la\cA(\rho_{1,r_1}^*\otimes\cdots\otimes \rho_{x-1,r_{x-1}}^*\otimes\rho_{x,0}^*\otimes\rho_{x+1,r_{x+1}}^*\otimes \cdots\otimes \rho_{n,r_n}^*) ]\bigg)
        \\
        &=\frac{1}{2^{n-1}}\sum_{\{r:R=1\}}\bigg(\Pr[0\la\cA(\rho_{1,r_1}^*\otimes\cdots\otimes \rho_{x-1,r_{x-1}}^*\otimes\rho_{x,0}^*\otimes\rho_{x+1,r_{x+1}}^*\otimes \cdots\otimes \rho_{n,r_n}^*) ]\\
        &\,\,\,\,\,\,\,\,\,\,\,\,\,\,\,\,\,\,\,\,\,\,\,-\Pr[0\la\cA(\rho_{1,r_1}^*\otimes\cdots\otimes \rho_{x-1,r_{x-1}}^*\otimes\rho_{x,1}^*\otimes\rho_{x+1,r_{x+1}}^*\otimes \cdots\otimes \rho_{n,r_n}^*) ]\bigg).
    \end{align}
    Similarly, we have
    \begin{align}
        &\Pr[\mathsf{Hyb}_1(1)=1]-\Pr[\mathsf{Hyb}_2(1)=1]\\
        &=\frac{1}{2^{n-1}}\sum_{\{r:R=0\}}\bigg(\Pr[1\la\cA(\rho_{1,r_1}^*\otimes\cdots\otimes \rho_{x-1,r_{x-1}}^*\otimes\rho_{x,0}^*\otimes\rho_{x+1,r_{x+1}}^*\otimes \cdots\otimes \rho_{n,r_n}^*) ]\\
        &\,\,\,\,\,\,\,\,\,\,\,\,\,\,\,\,\,\,\,\,\,\,\,-\Pr[1\la\cA(\rho_{1,r_1}^*\otimes\cdots\otimes \rho_{x-1,r_{x-1}}^*\otimes\rho_{x,1}^*\otimes\rho_{x+1,r_{x+1}}^*\otimes \cdots\otimes \rho_{n,r_n}^*) ]\bigg).
    \end{align}
    For showing a contradiction, assume that there exists a non-negligible $p$ and an adversary $\cA$ such that either
    \begin{align}
        \abs{\Pr[\mathsf{Hyb}_1(0)=1]-\Pr[\mathsf{Hyb}_2(0)=1]}\geq p(\secp)
    \end{align}
    or
    \begin{align}
        \abs{\Pr[\mathsf{Hyb}_1(1)=1]-\Pr[\mathsf{Hyb}_2(1)=1]}\geq p(\secp),
    \end{align}
    and then construct an algorithm $\cB$ that breaks the hiding of $\Sigma_x^*$.
    \begin{enumerate}
        \item $\cB$ receives $\rho_{x,b}^*$ from the challenger of $\Sigma_x^*$, where $b$ is randomly sampled from $\bit$.
        \item $\cB$ randomly samples $r_i\la\bit$ for all $i\in[n]\backslash x$.
        \item $\cB$ sends $\rho_{1,r_1}^*\otimes\cdots\otimes\rho_{x-1,r_{x-1}}^*\otimes\rho_{x,b}^* \otimes\rho_{x+1,r_{x+1}}^*\otimes\cdots \otimes\rho_{n,r_n}^* $ to $\cA$.
        \item $\cB$ receives $b^*$ from $\cA$.
        \item $\cB$ outputs $b^*+R$, where $R\seteq \sum_{i\in[n]\backslash x}r_i$.
    \end{enumerate}
    We compute $\abs{\Pr[1\la\cB:b=0]-\Pr[1\la\cB:b=1]}$.
    It holds that
    \begin{align}
        &\abs{\Pr[1\la\cB:b=0]-\Pr[1\la\cB:b=1]}\\
        &=\abs{\Pr[\left(1+R\right)\la\cA: b=0]-\Pr[\left(1+R\right)\la\cA:b=1]}\\
        &=\frac{1}{2^{n-1}}\Bigg|\sum_{r}\left(\Pr[\left(1+R\right)\la\cA(\rho_{1,r_1}^*\otimes\cdots\otimes\rho_{x-1,r_{x-1}}^*\otimes\rho_{x,0}^* \otimes\rho_{x+1,r_{x+1}}^*\otimes\cdots \otimes\rho_{n,r_n}^*)]\right)\\
        &-\sum_{r}\left(\Pr[\left(1+R\right)\la\cA(\rho_{1,r_1}^*\otimes\cdots\otimes\rho_{x-1,r_{x-1}}^*\otimes\rho_{x,1}^* \otimes\rho_{x+1,r_{x+1}}^*\otimes\cdots \otimes\rho_{n,r_n}^*)]\right) \Bigg|\\
        &=\frac{1}{2^{n-1}}\Bigg|\sum_{\{r:R=0\}}\bigg(\Pr[1\la\cA(\rho_{1,r_1}^*\otimes\cdots\otimes\rho_{x-1,r_{x-1}}^*\otimes\rho_{x,0}^* \otimes\rho_{x+1,r_{x+1}}^*\otimes\cdots \otimes\rho_{n,r_n}^*)]\\
        &-\Pr[1\la\cA(\rho_{1,r_1}^*\otimes\cdots\otimes\rho_{x-1,r_{x-1}}^*\otimes\rho_{x,1}^* \otimes\rho_{x+1,r_{x+1}}^*\otimes\cdots \otimes\rho_{n,r_n}^*)]\bigg)\\
        &+\sum_{\{r:R=1\}}\bigg(\Pr[0\la\cA(\rho_{1,r_1}^*\otimes\cdots\otimes\rho_{x-1,r_{x-1}}^*\otimes\rho_{x,0}^* \otimes\rho_{x+1,r_{x+1}}^*\otimes\cdots \otimes\rho_{n,r_n}^*)]\\
        &-\Pr[0\la\cA(\rho_{1,r_1}^*\otimes\cdots\otimes\rho_{x-1,r_{x-1}}^*\otimes\rho_{x,1}^* \otimes\rho_{x+1,r_{x+1}}^*\otimes\cdots \otimes\rho_{n,r_n}^*)]\bigg) \Bigg|\\
        &=\abs{\left(\Pr[\mathsf{Hyb}_1(1)=1]-\Pr[\mathsf{Hyb}_2(1)=1]\right)+\left(\Pr[\mathsf{Hyb}_1(0)=1]-\Pr[\mathsf{Hyb}_2(0)=1]\right)}.
    \end{align}
    This implies that if either $\abs{\Pr[\mathsf{Hyb}_2(1)=1]-\Pr[\mathsf{Hyb}_1(1)=1]}$ or $\abs{\Pr[\mathsf{Hyb}_2(0)=1]-\Pr[\mathsf{Hyb}_1(0)=1]}$ is non-negligible.
\end{proof}
\fi

\end{proof}

\subsection{Universal Construction}
\begin{definition}
    We say that a sequence of uniform QPT unitaries $\Sigma_{\mathsf{Univ}}=\{Q_0(\secp),Q_1(\secp)\}_{\secp\in\N}$ is a universal construction of canonical quantum bit commitment if $\Sigma_{\mathsf{Univ}}$ is canonical quantum bit commitment as long as there exists canonical quantum bit commitment.
\end{definition}

\begin{theorem}\label{thm:univ_com}
    There exists a universal construction of canonical quantum bit commitment.
\end{theorem}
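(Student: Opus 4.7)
The plan is to mimic the strategy used for \cref{thm:univ_owsg} and \cref{thm:univ_money}, instantiating it with the robust canonical quantum bit commitment combiner from \cref{thm:combiner_com}. First, I would introduce an analogue of \cref{def:robust_owsg_comb_univ} for canonical quantum bit commitments, i.e., a robust combiner for universal construction that takes $n(\secp)$ candidates (rather than constantly many) and whose correctness is required to hold regardless of the input candidates while its computational hiding and statistical binding hold whenever at least one input candidate is computationally hiding and computationally binding. One verifies directly, by the same argument as \cref{thm:combiner_com} together with a standard hybrid across the polynomially many slots, that this extended combiner exists for every polynomial $n$; the only change from \cref{thm:combiner_com} is that one tracks the hybrid loss across $n(\secp)$ slots, which is still negligible.

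Next, I would prove a padding proposition analogous to \cref{prop:owsg_time}: if canonical quantum bit commitment exists, then there is a classical Turing machine $\cT^*$ which, on input $1^\secp$, halts within $\secp^3$ steps and outputs a pair of unitaries $(Q_0^*(\secp), Q_1^*(\secp))$ such that $\{Q_0^*(\secp), Q_1^*(\secp)\}_{\secp \in \N}$ is computationally hiding and computationally binding. This is the standard padding trick: take any commitment scheme, consider its description as a Turing machine that pads its input to length $\max(\secp, \secp_0)$ for a large enough starting security parameter, and observe that the new machine is polynomial-time and remains hiding and binding.

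With these tools in place, the universal construction is defined essentially as in the proof of \cref{thm:univ_owsg}. For each $x \in \bit^*$ representing a classical Turing machine, let $\cT[x]$ be the machine that on input $1^\secp$ runs $x(1^\secp)$; if $x$ does not halt within $\secp^3$ steps, output a trivial default (say $(I, X)$, which is clearly statistically binding but not hiding, so that the combiner's correctness guarantee applies). Let $\widetilde{\cT}[x]$ denote this truncated machine, and let $\{\widetilde{Q_0}[x](\secp), \widetilde{Q_1}[x](\secp)\}_{\secp\in\N}$ be the corresponding candidate. The universal construction $\Sigma_{\mathsf{Univ}} = \{Q_0^{\Univ}(\secp), Q_1^{\Univ}(\secp)\}_{\secp\in\N}$ is obtained by applying the extended robust combiner to the $\secp$ candidates $\{\widetilde{\cT}[x]\}_{x \in [\secp]}$ (using any reasonable enumeration of Turing machines).

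Efficiency follows because each $\widetilde{\cT}[x]$ runs in time $\poly(\secp)$ by construction, and the combiner itself is a uniform QPT procedure. Statistical binding follows from the binding guarantee of the extended combiner, which holds regardless of the quality of the input candidates. Computational hiding is where the padding proposition is used: assuming commitment exists, the machine $\cT^*$ produced by the padding proposition has some bounded description length $x^* \in \N$, and for all $\secp \geq x^*$ the enumeration $\{\widetilde{\cT}[x]\}_{x \in [\secp]}$ includes $\widetilde{\cT}[x^*]$, which coincides with the hiding-and-binding scheme of the padding proposition; by the security guarantee of the extended combiner, $\Sigma_{\mathsf{Univ}}$ is therefore computationally hiding. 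The main obstacle I anticipate is the bookkeeping around the combiner-for-universal-construction: one must verify that the combiner's hybrid-based proof of hiding still goes through when $n$ grows polynomially with $\secp$ rather than being a fixed constant, which requires that the single-slot hiding advantage remains negligible after summing over $n(\secp)$ slots — this is automatic since the advantage is negligible and $n$ is polynomial, but it must be spelled out carefully when rewriting the proof of \cref{thm:combiner_com} in the $n = n(\secp)$ setting.
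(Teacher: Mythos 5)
Your plan follows the same approach the paper indicates ("The proof is almost the same as \cref{thm:univ_owsg}"): reformulate the robust combiner with a polynomial number of slots analogous to \cref{def:robust_owsg_comb_univ}, prove a padding proposition analogous to \cref{prop:owsg_time}, truncate all Turing machine descriptions and substitute a trivial default, and apply the combiner to the enumeration. The details you flag — bounding the hybrid loss over polynomially many slots, and ensuring the default candidate is a syntactically valid pair of unitaries so that the combiner's unconditional statistical-binding guarantee applies — are exactly the points that need to be spelled out when adapting \cref{thm:combiner_com} to the $n = n(\secp)$ setting, and your treatment of them is correct.
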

The proof is almost the same as \cref{thm:univ_owsg}, and thus we skip the proof.
\section{Robust Combiner for Unclonable Encryption}\label{sec:unc}
\begin{definition}[Robust Combiner for Unclonable Secret-Key Encryption]
    A robust combiner for (one-time) unclonable secret-key encryption with $\ell(\secp)$-bit plaintexts is a deterministic classical polynomial-time Turing machine $\cM$ with the following properties:
    \begin{itemize}
        \item $\cM$ takes as input $1^n$ with $n\in\N$ and $n$-candidates (one-time) unclonable secret-key encryption with $\ell(\secp)$-bit plaintexts $\{\Sigma_i\seteq(\keygen_i,\Enc_i,\Dec_i)\}_{i\in[n]}$ promised that all candidates satisfies efficiency, and outputs a set of algorithms $\Sigma\seteq(\keygen,\Enc,\Dec)$.
        \item If all of $\{\Sigma_i\}_{i\in[n]}$ satisfies efficiency and at least one of $\{\Sigma_i\}_{i\in[n]}$ satisfies correctness, (one-time) IND-CPA security and (one-time) unclonable IND-CPA security, then $\Sigma$ is (one-time) unclonable secret-key encryption for $\ell(\secp)$-bit plaintexts that satisfies efficiency, correctness, (one-time) IND-CPA security and (one-time) unclonable IND-CPA security.
    \end{itemize}
\end{definition}

\if0
\begin{definition}[Robust Unclonable Secret-Key Encryption Combiner]
A $(1,n)$-robust unclonable secret-key encryption for $\ell(\secp)$-bit plaintexts combiner $\Comb.\Sigma$ consists of three algorithms $(\Comb.\keygen,\Comb.\Enc,\Comb.\Dec)$, where $n$ and $\ell$ are some polynomial.
For $i\in\N$, let $\Sigma_i$ be a candidate of unclonable secret-key encryption for $\ell(\secp)$-bit plaintexts promised that $\Sigma_i$ satisfies efficiency. 
A $(1,n)$-robust unclonable secret-key encryption for $\ell(\secp)$-bit plaintexts combiner $(\Comb.\keygen,\Comb.\Enc,\Comb.\Dec)$ has the following syntax:
    \begin{description}
        \item[$\Comb.\keygen(1^{\secp},\{\Sigma_i\}_{i\in[n(\secp)]})$:]
        It takes as input a security parameter $1^\secp$, and $n(\secp)$-candidates of unclonable secret-key encryption $\{\Sigma_i\}_{i\in[n(\secp)]}$, and outputs a classical key $\sk$.
        \item[$\Comb.\Enc(1^\secp,\sk,m, \{\Sigma_i\}_{i\in[n(\secp)]})$:]
        It takes as input a security parameter $1^\secp$, $\sk$, $m\in\bit^{\ell(\secp)}$ and $\{\Sigma\}_{i\in[n(\secp)]}$ and outputs a quantum ciphertext $\ct$.
        \item[$\Comb.\Dec(1^\secp,\sk,\ct,\{\Sigma_i\}_{i\in[n]})$:]
        It takes as input a security parameter $1^\secp$, $\sk$, $\ct$ and $\{\Sigma_i\}_{i\in[n(\secp)]}$ and outputs $m$.
    \end{description}
    \paragraph{Efficiency.} $(\Comb.\keygen,\Comb.\Enc,\Comb.\Dec)$ are uniform QPT algorithms.
    \paragraph{Correctness.}
    For all security parameters $\secp\in\N$ and $m\in\bit^{\ell(\secp)}$,
    \begin{align}
        \Pr\left[m\la\Comb.\Dec(1^\secp,\sk,\ct,\{\Sigma_i\}_{i\in[n(\secp)]})
        \ \middle |
        \begin{array}{ll}
              \sk\la\Comb.\keygen(1^{\secp},\{\Sigma_i\}_{i\in[n(\secp)]}) \\
              \ct\la\Comb.\Enc(1^\secp,\sk,m,\{\Sigma_i\}_{i\in[n(\secp)]})
        \end{array}
        \right]
        \geq 1-\negl(\secp).
    \end{align}
    \paragraph{IND-CPA security.}
    If $\Sigma_i$ satisfies both correctness and IND-CPA security for some $i\in[n(\secp)]$, then $\Comb.\Sigma$ satisfies the IND-CPA security defined below.
    Given a $(1,n)$-robust unclonable encryption combiner $\Comb.\Sigma$, we consider the following security experiment $\mathsf{Exp}_{\Comb.\Sigma,\cA}^{\mathsf{IND\mbox{-}CPA}}(1^\secp,\{\Sigma_i\}_{i\in[n(\secp)]})$ against $\cA$.
    \begin{enumerate}
        \item The challenger runs $\sk\la\Comb.\keygen(1^\secp,\{\Sigma_i\}_{i\in[n(\secp)]})$.
        \item $\cA$ can query $\Comb.\Enc(1^\secp,\sk,\{\Sigma_i\}_{i\in[n(\secp)]}, \cdot)$ polynomially many times.
        \item $\cA$ sends $(m_0,m_1)$ to the challenger.
        \item The challenger samples $b\la\bit$, runs $\ct_b\la\Comb.\Enc(1^\secp,\sk,\{\Sigma_i\}_{i\in[n(\secp)]}, m_b)$, and sends $\ct_b$ to $\cA$.
        \item $\cA$ outputs $b'$.
        \item The experiment outputs $1$ indicating win if $b=b'$ and $0$ otherwise.
    \end{enumerate}
    We say that $\Comb.\Sigma$ is unclonable IND-CPA secure if for all non-uniform QPT adversaries $\cA$, 
    \begin{align}
        \Pr[\mathsf{Exp}_{\Comb.\Sigma,\cA}^{\mathsf{IND\mbox{-}CPA}}(1^\secp,\{\Sigma_i\}_{i\in[n(\secp)]})=1]\leq \frac{1}{2}+\negl(\secp).
    \end{align}
    
    \paragraph{Unclonable IND-CPA Security.}
    If $\Sigma_i$ satisfies both correctness and unclonable IND-CPA security for some $i\in[n(\secp)]$, then $\Comb.\Sigma$ satisfies the unclonable IND-CPA security defined below.
    Given a $(1,n)$-robust unclonable encryption combiner $\Comb.\Sigma$, we consider the following security experiment $\mathsf{Exp}_{\Comb.\Sigma,(\cA,\cB,\cC)}^{\mathsf{unclone}}(1^\secp,\{\Sigma_i\}_{i\in[n(\secp)]})$ against $(\cA,\cB,\cC)$.
    \begin{enumerate}
        \item The challenger runs $\sk\la\Comb.\keygen(1^\secp,\{\Sigma_i\}_{i\in[n(\secp)]})$.
        \item $\cA$ can query $\Comb.\Enc(1^\secp,\sk,\{\Sigma_i\}_{i\in[n(\secp)]}, \cdot)$ polynomially many times.
        \item $\cA$ sends $(m_0,m_1)$ to the challenger.
        \item The challenger samples $b\la\bit$, runs $\ct_b\la\Comb.\Enc(1^\secp,\sk,\{\Sigma_i\}_{i\in[n(\secp)]}, m_b)$, and sends $\ct_b$ to $\cA$.
        \item $\cA$ produces $\rho_{\cB,\cC}$ and sends the corresponding registers to $\cB$ and $\cC$.
        \item $\cB$ and $\cC$ receives $\sk$ and outputs $b_\cB$ and $b_\cC$.
        \item The experiment outputs $1$ indicating win if $b_\cB=b_\cC=b$ and $0$ otherwise.
    \end{enumerate}
We say that $\Comb.\Sigma$ is unclonable IND-CPA secure if for all non-uniform QPT adversaries $(\cA,\cB,\cC)$,
\begin{align}
    \Pr[\mathsf{Exp}_{\Comb.\Sigma,(\cA,\cB,\cC)}^{\mathsf{unclone}}(\secp,\{\Sigma_i\}_{i\in[n(\secp)]})=1]\leq \frac{1}{2}+\negl(\secp).
\end{align}
\end{definition}
\begin{remark}
    In this work, we also consider a robust one-time unclonable secret-key encryption.
    The difference is that the inputs of the algorithms $\Sigma\seteq (\keygen,\StateGen,\Vrfy)$ are candidates of one-time unclonable secret-key encryption instead of unclonable secret-key encryption, and if one of the candidates is correct and secure one-time secret-key encryption, then $\Sigma$ satisfies one-time IND-CPA security and one-time unclonable IND-CPA security.
\end{remark}
\fi

In this section, we prove the following \cref{thm:unclone_ske_comb}.
\begin{theorem}\label{thm:unclone_ske_comb}
   There exists a robust combiner for (one-time) unclonable secret-key encryption with $\ell(\secp)$-bit plaintexts for all polynomial $\ell$. 
\end{theorem}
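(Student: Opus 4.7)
The plan is to follow the same two-step strategy used for OWSGs in \cref{thm:OWSG_combiner}: first establish a correctness-amplification lemma that turns any candidate $\Sigma_{\unc}$ into a new candidate $\Sigma_{\unc}^*$ which is perfectly correct unconditionally and inherits (one-time) IND-CPA and unclonable IND-CPA security whenever the original candidate was correct and secure; and then combine the amplified candidates using XOR secret sharing on the plaintext. The PKE case follows as a corollary by first combining with a standard parallel-style combiner for PKE with quantum ciphertexts and then applying the unclonable-SKE-to-unclonable-PKE transformation of \cite{TCC:AK21,Asia:HMNY21}, so I will focus on SKE here.

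For the amplification step, I would first rewrite $\Dec(1^\secp,\sk,\ct)$ in canonical form, namely appending $\ket{0}\bra{0}$, applying a key-dependent unitary $U_{\sk}$, and measuring the first $\ell(\secp)$ qubits in the computational basis. Then define $\Sigma_{\unc}^*=(\keygen^*,\Enc^*,\Dec^*)$ by taking $\keygen^*=\keygen$; letting $\Enc^*(1^\secp,\sk,m)$ run $\ct\la\Enc(1^\secp,\sk,m)$, coherently measure the first $\ell$ qubits of $U_{\sk}(\ct\otimes\ket{0}\bra{0})U_{\sk}^\dagger$, and if the outcome equals $m$ rewind and output that register as $\ct^*$, else output the special symbol $(\bot,m)$; and letting $\Dec^*$ read $m$ off directly from $(\bot,m)$ and otherwise apply $U_{\sk}$ and measure the first $\ell$ qubits. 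Perfect correctness is then immediate by construction. For the security preservation, if $\Sigma_{\unc}$ is correct, then the internal equality check in $\Enc^*$ succeeds with overwhelming probability, so by the gentle measurement lemma (\cref{lem:gentle}) $\ct^*$ is statistically $\negl(\secp)$-close to $\ct\otimes\ket{0}\bra{0}$; an adversary $(\cA,\cB,\cC)$ attacking $\Sigma_{\unc}^*$ can be converted into one attacking $\Sigma_{\unc}$ that simply tensors $\ket{0}\bra{0}$ onto the challenge ciphertext (and also onto any encryption-oracle answers, in the many-time setting).

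For the combiner itself, given candidates $\{\Sigma_i\}_{i\in[n]}$ I would first transform them into $\{\Sigma_i^*\}_{i\in[n]}$ via the lemma above and then define $\Sigma=(\keygen,\Enc,\Dec)$ by: $\keygen$ runs $\sk_i\la\keygen_i^*(1^\secp)$ and sets $\sk\seteq(\sk_1,\ldots,\sk_n)$; $\Enc(\sk,m)$ samples uniform $r_1,\ldots,r_{n-1}\in\bit^{\ell(\secp)}$, sets $r_n\seteq m\oplus r_1\oplus\cdots\oplus r_{n-1}$, and outputs $\bigotimes_{i\in[n]}\ct_i$ where $\ct_i\la\Enc_i^*(\sk_i,r_i)$; $\Dec(\sk,\ct)$ decrypts each component using $\Dec_i^*$ and XORs the recovered shares. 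Efficiency and perfect correctness are immediate from the corresponding properties of each $\Sigma_i^*$. For (unclonable) IND-CPA, suppose $\Sigma_x$ is correct and secure, so that $\Sigma_x^*$ is secure as well. The reduction $\cA^{\star}$ against $\Sigma_x^*$ samples $\sk_i\la\keygen_i^*(1^\secp)$ for $i\neq x$, simulates any encryption-oracle query $m'$ by sampling fresh uniform shares $r_i'$ for $i\neq x$, setting $r_x'=m'\oplus\bigoplus_{i\neq x}r_i'$, forwarding $r_x'$ to its own oracle and encrypting the remaining shares under the locally held keys. Upon receiving the challenge pair $(m_0,m_1)$ from $\cA$, it samples uniform $r_i$ for $i\neq x$, computes $r_x^0=m_0\oplus\bigoplus_{i\neq x}r_i$ and $r_x^1=m_1\oplus\bigoplus_{i\neq x}r_i$, and submits $(r_x^0,r_x^1)$ to its own challenger; the returned $\ct_x^{b}$ is combined with freshly generated $\ct_i\la\Enc_i^*(\sk_i,r_i)$ to form the challenge ciphertext, which is handed to $\cA$. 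The subadversaries $\cB^\star,\cC^\star$ embed the classical $\sk_i$, $i\neq x$, into their advice/registers and, upon receiving $\sk_x$ from the $\Sigma_x^*$ challenger, assemble the full $\sk$ and forward it to $\cB,\cC$, whose outputs they return verbatim. Since $\{r_i\}_{i\neq x}$ are uniform and independent of $b$, the simulated distribution is identical to the combined experiment, so any advantage against $\Sigma$ is preserved.

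The main obstacle will be the unclonable IND-CPA reduction, because $\cB$ and $\cC$ receive the entire key $\sk$ and must output their guesses without communicating; I need to verify that embedding the locally generated $\sk_i$ for $i\neq x$ into the shared state $\rho_{\cB^\star,\cC^\star}$ during the $\cA^\star$ stage is admissible in the $\Sigma_x^*$ experiment (which it is, since the experiment places no structural restriction on what $\cA^\star$ forwards). Beyond this point the argument is a standard XOR-share hiding reduction. The one-time variant is immediate by dropping the simulated encryption-oracle step, and extending to many $n$ only requires an additional hybrid over the single secure index.
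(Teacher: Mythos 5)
Your proposal matches the paper's proof essentially step for step: it uses the same correctness-amplification lemma (\cref{lem:unclone_ske_cor}, proved via the gentle measurement lemma applied to a canonical-form decryption), then the same XOR secret-sharing combiner over the amplified candidates, and the same reduction that shifts the challenge messages by $\bigoplus_{i\neq x} r_i$ and embeds the locally sampled keys for $i\neq x$ into the state passed to $\widetilde{\cB},\widetilde{\cC}$. Your added remark about simulating the encryption oracle in the many-time setting is a detail the paper elides by focusing on the one-time case, but it is the natural extension and does not change the argument.
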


As a corollary, we obtain the following \cref{cor:unclone_pke}.
\begin{corollary}\label{cor:unclone_pke}
    There exists a robust combiner for unclonable public-key encryption with $\ell(\secp)$-bit plaintexts for all polynomial $\ell$. 
\end{corollary}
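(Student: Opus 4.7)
The plan is to reduce the construction of a robust combiner for unclonable PKE to three ingredients: (i) the robust combiner for unclonable SKE from \cref{thm:unclone_ske_comb}, (ii) a robust combiner for PKE with quantum ciphertexts obtained by adapting the classical construction of \cite{EC:HKNRR05}, and (iii) the hybrid-encryption transformation from \cite{Asia:HMNY21,TCC:AK21} which converts an unclonable SKE plus a PKE (with quantum ciphertexts) into an unclonable PKE.

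First I would observe that every candidate $\Sigma_i=(\keygen_i,\Enc_i,\Dec_i)$ of unclonable PKE induces in a canonical way (a) an unclonable SKE candidate $\Sigma_i^{\unc}$ by setting the SKE secret key to be the pair $(\sk_i,\pk_i)$ generated by $\keygen_i$ and using $\Enc_i(\pk_i,\cdot)$ as the encryption algorithm, and (b) a PKE candidate $\Sigma_i^{\pke}$ with quantum ciphertexts which is syntactically identical to $\Sigma_i$ but is only required to satisfy standard IND-CPA security. Crucially, if the original $\Sigma_i$ is correct and unclonable IND-CPA secure, then $\Sigma_i^{\unc}$ is a correct and unclonable IND-CPA secure unclonable SKE scheme (the SKE adversary can be simulated by an unclonable PKE adversary since the PKE adversary is given $\pk_i$ which trivially implements the encryption oracle), and $\Sigma_i^{\pke}$ is a correct and IND-CPA secure PKE scheme (since unclonable IND-CPA security implies standard IND-CPA security).

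Next I would apply \cref{thm:unclone_ske_comb} to $\{\Sigma_i^{\unc}\}_{i\in[n]}$ to obtain a single unclonable SKE scheme $\Sigma^{\unc}$, and in parallel I would instantiate a robust PKE combiner (with quantum ciphertexts) applied to $\{\Sigma_i^{\pke}\}_{i\in[n]}$ to obtain a PKE scheme $\Sigma^{\pke}$. By the guarantees of both combiners, as long as there exists some $i^*$ for which the original $\Sigma_{i^*}$ is a correct and unclonable IND-CPA secure unclonable PKE, both $\Sigma^{\unc}$ and $\Sigma^{\pke}$ are simultaneously correct and secure in their respective senses. Finally, I would combine them via the standard hybrid-encryption template of \cite{Asia:HMNY21,TCC:AK21}: to encrypt $m$, sample a fresh SKE key $k$, compute $\ct_1\la\Sigma^{\unc}.\Enc(k,m)$ and $\ct_2\la\Sigma^{\pke}.\Enc(\pk,k)$, and output $(\ct_1,\ct_2)$. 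Decryption first recovers $k$ from $\ct_2$ and then recovers $m$ from $\ct_1$. Correctness is immediate, and unclonable IND-CPA security follows by a two-step hybrid: first, replace $\ct_2$ by an encryption of a random independent key using IND-CPA security of $\Sigma^{\pke}$ (this is sound because the splitting adversary's cloning step happens before $\sk$ is revealed, so the reduction can embed the PKE challenge there); then, in the resulting hybrid, the joint ciphertext is computationally equivalent to an unclonable SKE ciphertext (with $k$ hidden), so unclonable IND-CPA security of $\Sigma^{\unc}$ applies directly.

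The main obstacle I expect is verifying clause (ii), namely that the \cite{EC:HKNRR05} robust PKE combiner carries over faithfully to PKE with quantum ciphertexts. Their construction uses parallel repetition to amplify correctness of a possibly faulty candidate and XOR secret sharing to combine them; the parallel-repetition correctness-amplification step is unproblematic for standard (non-unclonable) IND-CPA PKE even with quantum ciphertexts, because standard IND-CPA security is preserved under sending multiple ciphertexts of the same message, unlike in the unclonable setting. Once this is verified (which is largely a syntactic translation of the classical argument, since the cloning attack is not at issue here), the remaining reductions are a straightforward hybrid argument, and no new cryptographic ideas beyond \cref{thm:unclone_ske_comb} and the hybrid encryption construction are needed.
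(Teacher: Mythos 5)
Your high-level decomposition is the same as the paper's (reduce to the unclonable-SKE combiner, a PKE-with-quantum-ciphertexts combiner, and a hybrid transform), but the security argument you sketch for the final hybrid-encryption step has a genuine gap, and the paper explicitly routes around exactly this gap.

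You propose to encrypt a fresh SKE key $k$ under $\Sigma^{\pke}$ and argue security by ``replace $\ct_2$ by an encryption of a random independent key using IND-CPA security of $\Sigma^{\pke}$''. This reduction does not go through for unclonable IND-CPA security: in the unclonable security game, $\cB$ and $\cC$ are eventually handed the full secret key $\sk$, which here includes $\pke.\sk$. A reduction playing the IND-CPA game against $\Sigma^{\pke}$ never learns $\pke.\sk$, so it cannot simulate the final phase in which $\cB$ and $\cC$ receive $\sk$. Once $\pke.\sk$ is revealed, a trivial distinguisher decrypts $\ct_2$ and checks whether the recovered key is $k$, so the two hybrids are perfectly distinguishable given what $\cB$ and $\cC$ are legitimately handed. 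This is precisely why the paper cannot use the \cite{TCC:AK21} transform directly (that transform also relies on OWFs, which are not known to follow from PKE with quantum ciphertexts, as the paper's footnote notes). Instead the paper first converts the combined PKE into a receiver non-committing encryption with quantum ciphertexts and then applies the \cite{Asia:HMNY21}-style transform: the RNC $\Fake$/$\Reveal$ algorithms let the challenger produce a fake ciphertext together with a secret key that is consistent with whatever message it later commits to, which is exactly what is needed to carry out the hybrid while still handing $\cB$ and $\cC$ a valid-looking $\sk$. Your clause (ii) (the PKE combiner adapts to quantum ciphertexts) and your use of \cref{thm:unclone_ske_comb} are correct and match the paper; the missing idea is the receiver non-committing step in clause (iii).
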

\begin{proof}[Proof of \cref{cor:unclone_pke}]
We give a rough sketch of the proof.

\cref{cor:unclone_pke} follows from the following observations. 
We can trivially obtain one-time unclonable SKE from unclonable PKE.
From \cref{thm:unclone_ske_comb}, we have a robust combiner for one-time unclonable SKE. 
Furthermore, we can trivially construct PKE with quantum ciphertexts from unclonable PKE. It is known that there exists a robust PKE combiner~\cite{EC:HKNRR05}, and we observe that we can also construct a robust combiner for PKE with quantum ciphertexts in the same way.
Moreover, we can construct unclonable PKE from one-time unclonable SKE, and PKE with quantum ciphertexts.
This is because we can construct unclonable PKE from one-time SKE and receiver non-committing encryption with quantum ciphertexts 
~\footnote{
\cite{TCC:AK21} shows that unclonable PKE can be constructed from one-time unclonable SKE and PKE with classical ciphertexts.
Note that it is unclear whether we can construct unclonable PKE from one-time SKE and PKE with ``quantum'' ciphertexts in the same way as \cite{TCC:AK21}.
This is because they use the existence of OWFs in their proof although it is unclear whether PKE with quantum ciphertexts implies OWFs.
Therefore, we use the technique of \cite{Asia:HMNY21} instead.
(For the detail, see \cref{sec:app_unc_pke})
} (For the detail, see \cref{sec:app_unc_pke}),
and
receiver non-committing encryption with quantum ciphertexts can be constructed from PKE with quantum ciphertexts in the same way as the classical ciphertext case~\cite{TCC:CanHalKat05,C:KNTY19}.

By combining these observations, we can construct a robust combiner for unclonable PKE as follows.
Given candidates of unclonable PKE $\{\Sigma_i\}_{i\in[n]}$, we first use a robust combiner for one-time unclonable SKE, and obtain a new candidate of one-time unclonable SKE $\Sigma_{\SKE}$ regarding each candidate $\Sigma_i$ as a one-time unclonable SKE scheme.
Next, we use a robust combiner for PKE with quantum ciphertexts and obtain a new candidate of PKE with quantum ciphertexts $\Sigma_{\PKE}$ regarding each candidate $\Sigma_i$ as a (not necessarily unclonable) PKE scheme.
Then, we construct a receiver non-committing encryption with quantum ciphertexts $\Sigma_{\NCE}$ from $\Sigma_{\PKE}$.
Finally, we construct unclonable PKE $\Sigma_{\unc}$ from one-time unclonable SKE $\Sigma_{\SKE}$ and receiver non-committing encryption with quantum ciphertexts $\Sigma_{\NCE}$.
\end{proof}

For proving \cref{thm:unclone_ske_comb}, we introduce the following \cref{lem:unclone_ske_cor}.

\begin{lemma}\label{lem:unclone_ske_cor}
    Let $\Sigma$ be a candidate for (one-time) unclonable secret-key encryption with $\ell(\secp)$-bit plaintexts.
    From $\Sigma$, we can construct a (one-time) unclonable secret-key encryption with $\ell(\secp)$-bit plaintexts  $\Sigma^*\seteq(\keygen^*,\Enc^*,\Dec^*)$ such that:
    \begin{enumerate}
        \item $\Sigma^*$ is a uniform QPT algorithm, if $\Sigma$ is a uniform QPT algorithm.
        \item $\Sigma^*$ satisfies perfect correctness.
        \item $\Sigma^*$ satisfies (one-time) IND-CPA security and (one-time) unclonable IND-CPA security if $\Sigma$ is a uniform QPT algorithm and satisfies correctness, (one-time) IND-CPA security and (one-time) unclonable IND-CPA security.
    \end{enumerate}
\end{lemma}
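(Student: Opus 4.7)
The plan is to mirror the transformation used for OWSG in \cref{lem:amp_OWSG_cor}, adapting it to the encryption setting. Without loss of generality, I will model $\Dec(1^\secp,\sk,\ct)$ as follows: it appends an auxiliary register $\ket{0\cdots0}\bra{0\cdots0}$ to $\ct$, applies a unitary $U_{\Dec,\sk}$ to $\ct\otimes\ket{0\cdots0}\bra{0\cdots0}$, measures the first $\ell(\secp)$ qubits of $U_{\Dec,\sk}(\ct\otimes\ket{0\cdots0}\bra{0\cdots0})U_{\Dec,\sk}^{\dagger}$ in the computational basis, and outputs the measurement outcome. This is without loss of generality by standard quantum purification.

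I will define $\Sigma^*=(\keygen^*,\Enc^*,\Dec^*)$ as follows. $\keygen^*$ is identical to $\keygen$. On input $(\sk,m)$, the algorithm $\Enc^*$ first runs $\ct\la\Enc(1^\secp,\sk,m)$, coherently computes the measurement of the first $\ell(\secp)$ qubits of $U_{\Dec,\sk}(\ct\otimes\ket{0\cdots0}\bra{0\cdots0})U_{\Dec,\sk}^{\dagger}$ to obtain $m^*$ and a post-measurement residue, and checks whether $m^*=m$. If so, it rewinds by applying $U_{\Dec,\sk}^{\dagger}$ and outputs the resulting register together with a flag bit set to $1$, so that the ciphertext has the form $U_{\Dec,\sk}^{\dagger}(\ket{m}\bra{m}\otimes \rho_{m,\sk})U_{\Dec,\sk}\otimes\ket{1}\bra{1}$. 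Otherwise, it outputs the special symbol $(\bot,m)$ tagged with a flag bit $0$. The decryption $\Dec^*(1^\secp,\sk,\ct^*)$ inspects the flag bit: if it is $0$, it outputs the stored $m$; if it is $1$, it applies $U_{\Dec,\sk}$ and measures the first $\ell(\secp)$ qubits, outputting the outcome. Perfect correctness is immediate in both branches, and efficiency (the first item) is clear.

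For the third item (security preservation), I will reduce each of (one-time) IND-CPA security and (one-time) unclonable IND-CPA security of $\Sigma^*$ to the corresponding security of $\Sigma$. Given any adversary $(\cA,\cB,\cC)$ (or $\cA$ in the IND-CPA case) breaking $\Sigma^*$, I will construct an adversary against $\Sigma$ that, upon receiving a challenge ciphertext $\ct_b\la\Enc(1^\secp,\sk,m_b)$, simply forwards $\ct_b\otimes\ket{0\cdots0}\bra{0\cdots0}\otimes\ket{1}\bra{1}$ to the $\Sigma^*$-adversary and relays everything else, including the encryption-oracle queries (which it can answer by appending the same auxiliary register and flag). The correctness of $\Sigma$ implies that
\begin{align}
U_{\Dec,\sk}(\ct_b\otimes\ket{0\cdots0}\bra{0\cdots0})U_{\Dec,\sk}^{\dagger}=(1-\negl(\secp))\ket{m_b}\bra{m_b}\otimes\rho_{m_b,\sk}+\negl(\secp)\cdot\sigma_{\sk,b},
\end{align}
so by the gentle measurement lemma (\cref{lem:gentle}) the measure-and-rewind procedure disturbs the state only negligibly, and hence $\ct_b\otimes\ket{0\cdots0}\bra{0\cdots0}\otimes\ket{1}\bra{1}$ is negligibly close in trace distance to an honestly generated $\ct^*_b\la\Enc^*(1^\secp,\sk,m_b)$. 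The same argument applies to each oracle answer. Thus the reduction's simulation deviates from the true $\Sigma^*$ experiment by at most a negligible amount per ciphertext, and a union bound over the polynomially many encryption queries transfers any non-negligible advantage against $\Sigma^*$ to $\Sigma$.

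The main technical subtlety I anticipate is handling the encryption oracle in the (non-one-time) IND-CPA and unclonable IND-CPA cases: each of the polynomially many oracle responses uses the gentle measurement step, so I must argue that the cumulative trace-distance error is still negligible. This follows from the triangle inequality together with the fact that correctness of $\Sigma$ gives a single negligible error bound uniform in the message, but it needs to be stated carefully so that the bound holds for every plaintext the adversary can query. In the one-time case this issue disappears and the argument is a direct instantiation of the OWSG proof in \cref{lem:amp_OWSG_cor}.
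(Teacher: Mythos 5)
Your proposal is correct and matches the paper's approach essentially step for step: the paper (in Appendix D) gives exactly the same measure-and-rewind transformation for $\Enc^*$, with the only cosmetic difference being that it stores the failure-branch ciphertext as $m\otimes\ket{0}\bra{0}$ rather than your tagged $(\bot,m)$, and it then simply states that the security argument is "almost the same as" the OWSG case (\cref{lem:amp_OWSG_cor}) rather than spelling it out. Your instantiation of that reduction—simulating each honest $\ct^*$ by $\ct\otimes\ket{0\cdots0}\bra{0\cdots0}\otimes\ket{1}\bra{1}$, invoking correctness plus the gentle measurement lemma per ciphertext, and taking a triangle-inequality/union bound over the polynomially many oracle answers in the non-one-time case—is precisely what the paper implicitly relies on.
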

The proof is almost the same as \cref{lem:amp_OWSG_cor}.
For the reader's convenience, we describe the construction of $\Sigma^*$ in \cref{sec:app_unclone_comb}.

\begin{proof}[Proof of \cref{thm:unclone_ske_comb}]
    Below, we consider a fixed constant $n$ and a fixed polynomial $\ell$.
    Let us describe some notations:
    \paragraph{Notations.}
    \begin{itemize}
        \item Let $\Sigma_i$ be a candidate of (one-time) unclonable secret-key encryption with $\ell(\secp)$-length for $i\in[n]$.
        \item For a candidate of (one-time) unclonable secret-key encryption with $\ell(\secp)$-bit plaintexts $\Sigma_i$, let $\Sigma_i^*\seteq(\keygen^*_i,\Enc^*_i,\Dec^*_i)$ be a candidate of (one-time) unclonable secret-key encryption with $\ell(\secp)$-bit plaintexts derived from \cref{lem:unclone_ske_cor}, which satisfies:
        \begin{itemize}
            \item $\Sigma_i^*$ is a uniform QPT algorithm, if $\Sigma_i$ is a uniform QPT algorithm.
            \item $\Sigma_i^*$ satisfies correctness.
            \item $\Sigma_i^*$ satisfies (one-time) IND-CPA security and
            (one-time) unclonable IND-CPA security if $\Sigma_i$ is uniform QPT algorithm and satisfies correctness, (one-time) IND-CPA security, and
            (one-time) unclonable IND-CPA security.
        \end{itemize}
    \end{itemize} 
    \paragraph{Construction of Robust (One-Time) Unclonable Secret-Key Encryption.}
    A robust combiner for (one-time) unclonable secret-key encryption with $\ell(\secp)$-bit plaintexts is a deterministic classical polynomial-time Turing machine that takes as input $1^n$ and $\{\Sigma_i\}_{i\in[n]}$, and outputs the following set of algorithms  
    $\Sigma=(\keygen,\Enc,\Dec)$:
    \begin{description}
        \item[$\keygen(1^\secp)$:]$ $
        \begin{itemize}
            \item For all $i\in[n]$, run $\sk_i^*\la \keygen_i^*(1^{\secp})$.
            \item Output $\sk\seteq \{\sk_i^*\}_{i\in[n]}$.
        \end{itemize}
        \item[$\Enc(1^\secp,\sk,m)$:]$ $
        \begin{itemize}
        \item For all $i\in[n]$, sample $r_i\la\bit^{\ell(\secp)}$ promised that $\sum_{i\in[n]}r_i=m$, where the $\ell(\secp)$ is the length of plaintext $m$.
        \item For all $i\in[n]$, run $\ct_i^*\la\Enc_i^*(1^\secp,\sk_i^*,r_i)$ for all $i\in[n]$.
        \item Output $\ct\seteq \{\ct_i^*\}_{i\in[n]}$.
        \end{itemize}
        \item[$\Dec(1^\secp,\sk,\ct) $:]$ $
        \begin{itemize}
            \item Run $r_i^*\la\Dec_i^*(1^\secp,\sk_i^*,\ct_i^*)$ for all $i\in[n]$.
            \item Output $\sum_{i\in[n]}r_i^*$.
        \end{itemize}
    \end{description}
    
    \cref{thm:unclone_ske_comb} follows from the following \cref{lem:eff_unc_comb,lem:cor_unc_comb,lem:ind_sec_unc_comb,lem:sec_unc_comb}.
    \begin{lemma}\label{lem:eff_unc_comb}
        If all of $\{\Sigma_i\}_{i\in[n]}$ satisfies efficiency, $\Sigma$ satisfies efficiency.
    \end{lemma}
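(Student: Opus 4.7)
The plan is to argue directly from the construction that each of $\keygen$, $\Enc$, and $\Dec$ can be implemented by a uniform QPT algorithm whenever every underlying candidate $\Sigma_i$ is. First, by the efficiency assumption, each $\Sigma_i=(\keygen_i,\Enc_i,\Dec_i)$ is a set of uniform QPT algorithms, so by the first item of \cref{lem:unclone_ske_cor} each transformed scheme $\Sigma_i^*=(\keygen_i^*,\Enc_i^*,\Dec_i^*)$ is also a set of uniform QPT algorithms. Concretely, the classical Turing machine $\cM$ that implements the robust combiner can, given the descriptions of $\{\Sigma_i\}_{i\in[n]}$, output a description of $\Sigma=(\keygen,\Enc,\Dec)$ by syntactically composing the descriptions of the $\Sigma_i^*$'s (each of which is obtainable in classical polynomial time from $\Sigma_i$ via the transformation of \cref{lem:unclone_ske_cor}).

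Next I would verify efficiency component by component. The algorithm $\keygen(1^\secp)$ performs $n$ parallel invocations of $\keygen_i^*(1^\secp)$; since $n$ is a fixed constant and each $\keygen_i^*$ runs in time $\poly(\secp)$, the total running time is $n\cdot\poly(\secp)=\poly(\secp)$. The algorithm $\Enc(1^\secp,\sk,m)$ additionally samples random strings $r_1,\dots,r_n\in\bit^{\ell(\secp)}$ subject to $\sum_i r_i=m$, which can be done by sampling $r_1,\dots,r_{n-1}$ uniformly at random and setting $r_n\seteq m+\sum_{i<n}r_i$; this is classical polynomial time, and the $n$ invocations of $\Enc_i^*(1^\secp,\sk_i^*,r_i)$ again cost $n\cdot\poly(\secp)$. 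Finally, $\Dec(1^\secp,\sk,\ct)$ consists of $n$ parallel invocations of $\Dec_i^*$ followed by a XOR of the $\ell(\secp)$-bit outputs, which is again $\poly(\secp)$.

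There is no real obstacle here: the only subtlety is that the combiner must be a single deterministic classical polynomial-time Turing machine that produces the description of $\Sigma$ uniformly in the descriptions of $\{\Sigma_i\}_{i\in[n]}$, and this is immediate from the fact that the transformation of \cref{lem:unclone_ske_cor} is itself explicit and efficient and that $n$ is a constant (so the overhead of combining does not blow up with $\secp$). This completes the proof of efficiency.
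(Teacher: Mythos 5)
Your proof is correct and fills in the details behind what the paper dismisses as trivial (the paper simply writes that \cref{lem:eff_unc_comb} ``trivially follows'' and skips the proof). Your argument — invoke item~1 of \cref{lem:unclone_ske_cor} to get that each $\Sigma_i^*$ is uniform QPT, then observe that $\keygen$, $\Enc$, $\Dec$ each make $n$ calls to the $\Sigma_i^*$ components plus constant overhead, with $n$ a fixed constant — is exactly the standard reasoning the authors have in mind.
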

    \begin{lemma}\label{lem:cor_unc_comb}
        $\Sigma$ satisfies correctness.
    \end{lemma}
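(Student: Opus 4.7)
The plan is to reduce the correctness of the combined scheme $\Sigma$ to the perfect correctness of each transformed candidate $\Sigma_i^*$, which is guaranteed by the second item of \cref{lem:unclone_ske_cor} regardless of whether the original candidate $\Sigma_i$ is correct. This is the whole reason we introduced the $\Sigma_i^*$ transformation before invoking the X-OR combiner: the naive X-OR combiner only preserves correctness when every underlying candidate is correct, but by pre-processing each $\Sigma_i$ into $\Sigma_i^*$, we get correctness for free across all branches.

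Concretely, I would first unfold the combined scheme. For any plaintext $m \in \bit^{\ell(\secp)}$, let $\sk \seteq \{\sk_i^*\}_{i\in[n]}$ be the output of $\keygen(1^\secp)$ and let $\{r_i\}_{i\in[n]}$ be the additive shares sampled by $\Enc(1^\secp,\sk,m)$, so that $\sum_{i\in[n]} r_i = m$ and $\ct_i^* \la \Enc_i^*(1^\secp, \sk_i^*, r_i)$. Then $\Dec(1^\secp,\sk,\ct)$ computes $r_i^* \la \Dec_i^*(1^\secp, \sk_i^*, \ct_i^*)$ for each $i$ and outputs $\sum_{i\in[n]} r_i^*$. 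So to prove correctness it suffices to show that, with probability $1$, we have $r_i^* = r_i$ for every $i \in [n]$.

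Next, I would apply the second item of \cref{lem:unclone_ske_cor} to each candidate $\Sigma_i$: the derived scheme $\Sigma_i^*$ satisfies perfect correctness, i.e.,
\begin{align}
\Pr[r_i \la \Dec_i^*(1^\secp,\sk_i^*,\ct_i^*) : \sk_i^* \la \keygen_i^*(1^\secp),\ \ct_i^* \la \Enc_i^*(1^\secp,\sk_i^*,r_i)] = 1
\end{align}
for every $\secp \in \N$ and every $r_i \in \bit^{\ell(\secp)}$. Since the $n$ executions of the underlying $\Sigma_i^*$ are performed independently on independent key material and independent shares, a union bound (trivially, since each event has probability $1$) yields $r_i^* = r_i$ for all $i \in [n]$ simultaneously. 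Hence $\sum_{i\in[n]} r_i^* = \sum_{i\in[n]} r_i = m$.

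There is essentially no hard step here; the argument is a one-line consequence of \cref{lem:unclone_ske_cor}. The only subtlety worth flagging in the write-up is that the X-OR sharing step commutes with decryption because decryption is applied componentwise and recovers each share exactly; had we instead relied on $\negl(\secp)$-correctness per candidate, a simple union bound would still give correctness, but perfect correctness of $\Sigma_i^*$ makes the statement clean and matches the perfect correctness of the combined scheme $\Sigma$.
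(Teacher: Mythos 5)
Your argument is correct and is exactly what the paper has in mind; the paper simply states that \cref{lem:cor_unc_comb} ``trivially follows'' because each $\Sigma_i^*$ from \cref{lem:unclone_ske_cor} is perfectly correct, so the component-wise decryption recovers every additive share exactly and the X-OR reconstruction returns $m$. Your write-up fills in that one-line justification faithfully.
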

    \begin{lemma}\label{lem:ind_sec_unc_comb}
        If all of $\{\Sigma_i\}_{i\in[n]}$ satisfies efficiency and one of $\{\Sigma_i\}_{i\in[n]}$, satisfies both correctness and (one-time) IND-CPA security, then $\Sigma$ satisfies (one-time) IND-CPA security.
    \end{lemma}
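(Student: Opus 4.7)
My plan is a direct black-box reduction. Suppose $\Sigma_x$ is the candidate that satisfies correctness and (one-time) IND-CPA security; by \cref{lem:unclone_ske_cor}, the transformed scheme $\Sigma_x^*=(\keygen_x^*,\Enc_x^*,\Dec_x^*)$ also satisfies (one-time) IND-CPA security (and perfect correctness). I will show that any QPT adversary $\cA$ against the (one-time) IND-CPA security of $\Sigma$ can be turned into a QPT adversary $\cB_x$ against the (one-time) IND-CPA security of $\Sigma_x^*$ with the same advantage, contradicting the assumed security of $\Sigma_x^*$.

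The reduction $\cB_x$ proceeds as follows. On input $1^\secp$, $\cB_x$ generates $\sk_i^*\la\keygen_i^*(1^\secp)$ for every $i\in[n]\setminus\{x\}$ on its own; $\sk_x^*$ stays with the external challenger of $\Sigma_x^*$. Whenever $\cA$ makes an encryption-oracle query $m$ (this step is vacuous in the one-time setting), $\cB_x$ samples $r_i\la\bit^{\ell(\secp)}$ for $i\in[n]\setminus\{x\}$ and sets $r_x\seteq m-\sum_{i\neq x}r_i$, queries its own oracle on $r_x$ to obtain $\ct_x^*$, computes $\ct_i^*\la\Enc_i^*(1^\secp,\sk_i^*,r_i)$ locally, and returns $\{\ct_i^*\}_{i\in[n]}$ to $\cA$. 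When $\cA$ submits a challenge pair $(m_0,m_1)$, $\cB_x$ samples $r_i\la\bit^{\ell(\secp)}$ for $i\neq x$, sets $R\seteq\sum_{i\neq x}r_i$, forms $r_x^{(0)}\seteq m_0-R$ and $r_x^{(1)}\seteq m_1-R$, and forwards $(r_x^{(0)},r_x^{(1)})$ as its own challenge; upon receiving $\ct_x^*$, it computes $\ct_i^*\la\Enc_i^*(1^\secp,\sk_i^*,r_i)$ for $i\neq x$, sends $\{\ct_i^*\}_{i\in[n]}$ to $\cA$, and finally outputs whatever bit $\cA$ outputs.

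I will then verify that this is a perfect simulation of $\cA$'s IND-CPA game with $\Sigma$. The key observation is that sampling independent uniform $r_i$ for $i\neq x$ and then setting $r_x$ so that $\sum_{i\in[n]}r_i$ equals the desired plaintext induces the same distribution on $\{r_i\}_{i\in[n]}$ as the honest $\Enc$ procedure (uniform conditioned on the XOR being the plaintext). Consequently, when the external challenger's bit is $b$, the ciphertext vector $\{\ct_i^*\}_{i\in[n]}$ is distributed exactly as $\Enc(1^\secp,\sk,m_b)$, since $R+r_x^{(b)}=m_b$. Thus $\cB_x$'s advantage in its (one-time) IND-CPA game equals $\cA$'s advantage against $\Sigma$; if the latter is non-negligible, so is the former, contradicting the (one-time) IND-CPA security of $\Sigma_x^*$.

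There is no substantive obstacle: the construction uses X-OR secret sharing in the plaintext domain, which plays perfectly with the one-hop reduction, and the perfect correctness of every $\Sigma_i^*$ (in particular of $\Sigma_x^*$) ensures that $\cB_x$'s oracle simulation and challenge construction are distributed identically to the real experiment regardless of the behavior of the other $\Sigma_i$'s. The only subtlety worth spelling out in the write-up is that the simulation works uniformly for the one-time variant as well, because in that setting the encryption-oracle step is simply skipped.
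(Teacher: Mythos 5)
Your proposal is correct and follows essentially the same reduction the paper uses: the paper states that \cref{lem:ind_sec_unc_comb} is proved the same way as \cref{lem:sec_unc_comb}, where the challenger of $\Sigma_x^*$ is invoked on the messages $m_b + \sum_{i\neq x} r_i$ (your $m_b - R$, identical over $\F_2^{\ell}$), and the remaining shares and ciphertexts are sampled locally, giving a perfect simulation of $\Sigma$'s experiment. The only thing you spell out more explicitly than the paper is the encryption-oracle simulation for the non-one-time case, but that is a straightforward extension of the same idea.
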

    \begin{lemma}\label{lem:sec_unc_comb}
        If all of $\{\Sigma_i\}_{i\in[n]}$ satisfies efficiency and one of $\{\Sigma_i\}_{i\in[n]}$, satisfies both correctness and (one-time) unclonable IND-CPA security, then $\Sigma$ satisfies (one-time) unclonable IND-CPA security.
    \end{lemma}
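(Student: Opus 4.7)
The plan is a standard hybrid-free reduction: I pick the lucky index $x\in[n]$ and reduce unclonable IND-CPA security of the combined scheme $\Sigma$ to that of $\Sigma_x^*$, exploiting the fact that sampling $r_1,\dots,r_n$ uniformly subject to $\sum_i r_i=m$ is the same as sampling $r_i$ uniformly for $i\neq x$ and setting $r_x\seteq m-\sum_{i\neq x}r_i$. By hypothesis, some candidate $\Sigma_x$ is uniform QPT, correct, and (one-time) unclonable IND-CPA secure; \cref{lem:unclone_ske_cor} then yields $\Sigma_x^*$ which is uniform QPT, perfectly correct, and (one-time) unclonable IND-CPA secure.

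Assume for contradiction that $(\cA,\cB,\cC)$ wins the unclonable IND-CPA game for $\Sigma$ with non-negligible advantage. I would build $(\widetilde{\cA},\widetilde{\cB},\widetilde{\cC})$ against $\Sigma_x^*$ as follows. The adversary $\widetilde{\cA}$ first generates $\sk_i^*\la\keygen_i^*(1^\secp)$ for every $i\neq x$ (it can do this because every $\Sigma_i^*$ is uniform QPT). To answer an encryption-oracle query $m'$ from $\cA$ (only needed in the reusable case), $\widetilde{\cA}$ samples $r_i\la\bit^{\ell(\secp)}$ for $i\neq x$, sets $r_x\seteq m'-\sum_{i\neq x}r_i$, uses its own $\Enc_x^*(1^\secp,\sk_x^*,\cdot)$ oracle on $r_x$ to obtain $\ct_x^*$, computes $\ct_i^*\la\Enc_i^*(1^\secp,\sk_i^*,r_i)$ for $i\neq x$, and returns $\{\ct_i^*\}_{i\in[n]}$. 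On receiving the challenge pair $(m_0,m_1)$ from $\cA$, $\widetilde{\cA}$ samples fresh $r_i\la\bit^{\ell(\secp)}$ for $i\neq x$, defines $m_0'\seteq m_0-\sum_{i\neq x}r_i$ and $m_1'\seteq m_1-\sum_{i\neq x}r_i$, forwards $(m_0',m_1')$ to the $\Sigma_x^*$ challenger, receives $\ct_x^*\la\Enc_x^*(1^\secp,\sk_x^*,m_b')$, computes $\ct_i^*\la\Enc_i^*(1^\secp,\sk_i^*,r_i)$ for $i\neq x$, and hands $\{\ct_i^*\}_{i\in[n]}$ to $\cA$. When $\cA$ produces the bipartite state $\rho_{\cB,\cC}$, $\widetilde{\cA}$ splits it, sending the $\cB$-register together with $\{\sk_i^*\}_{i\neq x}$ (plus $\cA$'s internal info needed by $\cB$) to $\widetilde{\cB}$, and analogously to $\widetilde{\cC}$. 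Upon receiving $\sk_x^*$ from the $\Sigma_x^*$ challenger, $\widetilde{\cB}$ assembles $\sk\seteq\{\sk_i^*\}_{i\in[n]}$, runs $\cB$ on its register and $\sk$, and outputs $\cB$'s guess; $\widetilde{\cC}$ acts symmetrically.

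The distribution of the simulated challenge ciphertext $\{\ct_i^*\}_{i\in[n]}$ in our reduction is identical to the one in the real unclonable IND-CPA game for $\Sigma$, because the marginal distribution of $(r_1,\dots,r_n)$ in both experiments is uniform over tuples summing to $m_b$. Likewise, simulated encryption-oracle answers are distributed exactly as real encryptions. Therefore the winning probability of $(\widetilde{\cA},\widetilde{\cB},\widetilde{\cC})$ equals the winning probability of $(\cA,\cB,\cC)$, which is by assumption non-negligibly above $1/2$; this contradicts the (one-time) unclonable IND-CPA security of $\Sigma_x^*$.

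I do not expect any serious obstacle here: the only nontrivial point is bookkeeping to verify that the joint distribution of the simulated ciphertexts, secret keys, and oracle answers is exactly the one in the real experiment for $\Sigma$, and the argument above shows this reduces to the elementary identity that fixing $r_x\seteq m-\sum_{i\neq x}r_i$ after sampling the other shares uniformly yields the uniform distribution over $\{(r_1,\dots,r_n):\sum_i r_i=m\}$. The one-time case is strictly simpler since no encryption oracle needs to be simulated. Hence $\Sigma$ satisfies (one-time) unclonable IND-CPA security whenever some $\Sigma_i$ is correct and (one-time) unclonable IND-CPA secure.
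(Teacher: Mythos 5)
Your proof is correct and takes essentially the same approach as the paper: fix the secure index $x$, replace the combiner's $r_x$-share with a challenge to $\Sigma_x^*$ by sending $(m_0 - \sum_{i\neq x} r_i,\; m_1 - \sum_{i\neq x} r_i)$, and forward $\{\sk_i^*\}_{i\neq x}$ plus the lucky challenger's $\sk_x^*$ to the split adversaries. The only difference is that you also spell out the oracle simulation for the reusable case, which the paper elides by focusing on the one-time setting.
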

\cref{lem:eff_unc_comb,lem:cor_unc_comb} trivially follows, and thus we skip the proof.
The proof of \cref{lem:ind_sec_unc_comb} is the same as that of \cref{lem:sec_unc_comb}, and thus we skip the proof.
\begin{proof}[Proof of \cref{lem:sec_unc_comb}]
    We prove the \cref{lem:sec_unc_comb} via a standard hybrid argument.
    For the reader's convenience, we describe the proof.
    For simplicity, we consider the one-time case where $\Sigma_i$ is a candidate of one-time unclonable secret-key encryption for each $i\in[n]$.
    We show that $\Sigma$ satisfies unclonable IND-CPA security as long as all of $\{\Sigma_i\}_{i\in[n]}$ satisfy efficiency and one of $\{\Sigma_i\}_{i\in[n]}$ satisfies one-time unclonable IND-CPA security.
    Let $\Sigma_x$ be the candidate for one-time unclonable secret-key encryption that satisfies both correctness and one-time unclonable IND-CPA security.
    Then, $\Sigma_x^*$ satisfies unclonable IND-CPA security from \cref{lem:unclone_ske_cor}.
    Assume that there exists a QPT adversary $(\cA,\cB,\cC)$ that breaks the one-time unclonable IND-CPA security of $\Sigma$, and then construct a set of QPT adversaries $(\widetilde{\cA_x},\widetilde{\cB_x},\widetilde{\cC_x})$ that breaks the one-time unclonable security of $\Sigma_x^*$. 
    \begin{enumerate}
        \item $\widetilde{\cA_x}$ receives $(m_0,m_1)$ from $\cA$.
        \item $\widetilde{\cA_x}$ samples $r_i\la\bit^{\ell(\secp)}$ for all $i\in[n]\backslash x$, and sends $(M_0\seteq m_0+\sum_{i\in[n]\backslash x}r_i,M_1\seteq m_1+\sum_{i\in [n]\backslash x}r_i)$ to the challenger of $\Sigma_x^*$.
        \item The challenger of $\Sigma_x^*$ samples $b\la\bit$, and runs $\ct_x[M_b]^*\la\Enc_x^*(1^\secp,\sk_x^*,M_b)$.
        \item $\widetilde{\cA_x}$ receives from $\ct_x[M_b]^*$, runs $\sk_i^*\la\keygen_i^*(1^\secp)$ for all $i\in[n]\backslash x$, samples $r_i$ for all $i\in[n]\backslash x$, runs $\ct_i[r_i]^*\la\Enc_i^*(1^\secp,\sk_i^*,r_i)$, and sends $(\ct_1[r_1]^*,\cdots, \ct_{x-1}[r_{x-1}]^*,\ct_x[M_b]^*,\ct_{x+1}[r_{x+1}]^*,\cdots, \ct_{n}[r_{n}]^*)$ to $\cA$.
        \item When $\cA$ outputs $\rho_{\cB,\cC}$, $\widetilde{\cA_x}$ sends $\{\sk_i\}_{i\in[n]\backslash x}$ and the $\cB$ register (resp. the $\cC$ register) to $\widetilde{\cB_x}$ (resp. $\widetilde{\cC_x}$).
        \item $\widetilde{\cB_x}$ and $\widetilde{\cC_x}$ receive $\sk_x^*$ from the challenger of $\Sigma_x^*$.
        \item $\widetilde{\cB_{x}}$ (resp. $\widetilde{\cC_x}$) sends $\{\sk_i^*\}_{i\in[n]}$ and the $\cB$ register to $\cB$ (resp. the $\cC$ register to $\cC$).
        \item The experiment outputs $1$ if $b=b_\cB=b_\cC$, where $b_\cB$ (resp. $b_\cC$) is the output of $\cB$ (resp. $\cC$).
    \end{enumerate}
From the construction of $(\widetilde{\cA_x},\widetilde{\cB_x},\widetilde{\cC_x})$, it perfectly simulates the challenger of $\Sigma$. Therefore, if $(\cA,\cB,\cC)$ breaks the one-time unclonable IND-CPA security of $\Sigma$, then $(\widetilde{\cA_x},\widetilde{\cB_x},\widetilde{\cC_x})$ breaks the one-time unclonable IND-CPA security of $\Sigma_x^*$.
\end{proof}    
\end{proof}

\subsection{Universal Constructions}
\begin{definition}
    We say that a set of uniform QPT algorithms $\Sigma_{\mathsf{Univ}}=(\keygen,\Enc,\Dec)$ is a universal construction of (one-time) unclonable SKE (resp. PKE) if $\Sigma_{\mathsf{Univ}}$ is (one-time) unclonable SKE (resp. PKE) as long as there exists (one-time) unclonable SKE (resp. PKE).
\end{definition}
We give a universal construction of unclonable encryption via robust combiners.

\paragraph{Universal Construction via Robust Combiner}
\begin{theorem}\label{thm:univ_unc}
    There exists a universal construction of (one-time) unclonable SKE and unclonable PKE.
\end{theorem}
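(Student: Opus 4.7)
The plan is to follow the same template used for \cref{thm:univ_owsg} and \cref{thm:univ_money}, instantiating it with the robust combiner for (one-time) unclonable SKE (\cref{thm:unclone_ske_comb}) and the robust combiner for unclonable PKE (\cref{cor:unclone_pke}). Concretely, I will enumerate classical Turing machines $\cM = x\|y\|z$ that describe triples $(\keygen[x],\Enc[y],\Dec[z])$, wrap each triple into a time-bounded variant $\widetilde{\Sigma}[\cM] \seteq (\widetilde{\keygen}[x],\widetilde{\Enc}[y],\widetilde{\Dec}[z])$ that forces halting within $\secp^3$ steps (outputting a dummy symbol on timeout so that efficiency is guaranteed), and then feed the family $\{\widetilde{\Sigma}[\cM]\}_{x,y,z\in[\secp]}$ to the robust combiner. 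The universal scheme $\Sigma_{\Univ}$ simply runs the combiner on this family for each security parameter $\secp$.

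The first step is to upgrade the combiner theorems to the universal-construction flavor (analogous to \cref{def:robust_owsg_comb_univ}), i.e.\ to allow the number of combined candidates $n$ to be an arbitrary polynomial in $\secp$ rather than a constant. This is a routine adaptation: the constructions in \cref{thm:unclone_ske_comb} and \cref{cor:unclone_pke} proceed by XOR-secret-sharing the plaintext across all candidates and by applying \cref{lem:unclone_ske_cor} to each candidate to make it perfectly correct; inspecting the hybrid argument in \cref{lem:sec_unc_comb} shows that it goes through with only a polynomial loss in advantage even when $n = n(\secp)$, because the reduction only guesses the index of the secure candidate. The second step is a padding argument analogous to \cref{prop:owsg_time}: if there exists any unclonable SKE (or PKE) scheme $(\keygen,\Enc,\Dec)$, then by padding the input with $1^\secp$ we can produce a description $\cM^* = (x^*,y^*,z^*)$ of constant length such that each of $x^*,y^*,z^*$ halts within $\secp^3$ steps on all sufficiently large $\secp$, while still yielding a correct and unclonable-IND-CPA-secure scheme.

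With these two pieces in place, the proof of correctness and security follows verbatim from the OWSG proof. Correctness of $\Sigma_{\Univ}$ reduces to correctness of the combiner applied to a family whose members are all uniform QPT (thanks to the hard timeout), and hence is automatic. For security, fix $\cM^* = (x^*,y^*,z^*)$ as provided by the padding lemma; for all sufficiently large $\secp$ we have $x^*,y^*,z^* \in [\secp]$ and the timeout never fires, so $\widetilde{\Sigma}[\cM^*]$ agrees with the assumed correct and secure scheme. Since at least one candidate in the combined family is correct and (one-time) unclonable IND-CPA secure, the robust combiner's security guarantee transfers directly to $\Sigma_{\Univ}$. The unclonable PKE case is identical, using \cref{cor:unclone_pke} in place of \cref{thm:unclone_ske_comb}.

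The main obstacle I expect is verifying that the robust combiner really works when $n$ grows polynomially in $\secp$: the hybrid argument in \cref{lem:sec_unc_comb} switches candidates one at a time, and one must check that the reduction loss, as well as the statistical deviation introduced by the perfect-correctness transformation of \cref{lem:unclone_ske_cor} applied to each candidate, remains negligible when summed over $n(\secp) = \poly(\secp)$ candidates. This is analogous to the issue sidestepped in the OWSG case by \cref{def:robust_owsg_comb_univ}, and handling it cleanly may require re-stating the combiner theorem with an explicit polynomial dependence on $n$ before invoking it in the universal construction; otherwise the argument is entirely parallel to \cref{thm:univ_owsg}.
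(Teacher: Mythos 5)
Your proposal is correct and takes essentially the same approach as the paper: lift the robust combiner for (one-time) unclonable SKE and unclonable PKE to the universal-construction flavor where $n=n(\secp)$ is a polynomial (mirroring \cref{def:robust_owsg_comb_univ}), prove a padding lemma analogous to \cref{prop:owsg_time}, and then run the combiner on the time-bounded family $\{\widetilde{\Sigma}[\cM]\}_{x,y,z\in[\secp]}$ exactly as in \cref{thm:univ_owsg}. One small clarification on the obstacle you flag at the end: the worry about the ``statistical deviation introduced by \cref{lem:unclone_ske_cor}'' accumulating over $n(\secp)$ candidates is moot, because that lemma yields \emph{perfect} correctness for every transformed candidate, and the security proof of \cref{lem:sec_unc_comb} is a direct reduction that embeds the challenge only at the single secure index $x$ (guessing no index and walking no hybrid over all $n(\secp)$ candidates), so the loss is independent of $n$.
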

The proof is almost the same as \cref{thm:univ_owsg}, and thus we skip the proof.

\if0
\paragraph{Universal Construction via Decomposable Quantum Randomized Encoding}
Below, we give another universal construction of a one-time unclonable secret-key encryption for $n(\secp)$-bit plaintexts spaces by using decomposable quantum randomized encoding.
Let us introduce some notations.
\paragraph{Notations.}
\begin{itemize}
    \item
    We assume that there exists one-time unclonable secret-key encryption for $n(\secp)$-bit plaintexts.
    Then, from \cref{prop:perfect_unc}, there exists one-time unclonable secret-key encryption with perfect correctness, and the secret key is uniformly randomly sampled.
    Let $\Sigma_\unc\seteq \Unc. (\keygen,\Enc,\Dec)$ be the one-time unclonable secret-key encryption with $n(\secp)$-bit plaintexts, where it satisfies perfect correctness and the secret key is uniformly randomly sampled.
    \item Let $s$ be a polynomial such that $s(\secp)$ is the length of secret-key that $\Unc.\keygen(1^\secp)$ generates for all security parameters $\secp\in\N$.
    Let $q$ be a polynomial such that $q(\secp)$ is the length of quantum ciphertext that $\Unc.\Enc(1^\secp,\unc.\sk,m)$ generates for all security parameters $\secp\in\N$.
    \item For polynomial $p$ and $\secp\in\N$, let $U_{3p(\secp),p(\secp)}$
    be a universal quantum circuit over $\{I,H,CNOT,T\}$ (\cref{def:universal_circuit}) acting over $3p(\secp)+q(\secp)$-length input for some polynomial $q$, where the last $q(\secp)$ bit encodes circuit information.
    \item 
    For polynomial $p$, let $\Dec_{\secp,p}$ be a quantum circuit family consisting of $\{I,H,CNOT,T\}$ with $3p(\secp)$-length input and $p(\secp)$-depth that implements $\Unc.\Dec$.
    In other words, for any $C\in\Dec_{\secp,p}$, it holds that
    \begin{align}
        \Pr[m\la C((\unc\ct\otimes Z) \otimes (X\otimes Y) \otimes 0^{p(\secp)}):
        \begin{array}{ll}
            \unc.\ct\la\Unc.\Enc(1^\secp,X,m)
        \end{array}
        ]= 1
    \end{align}
    for any $m\in\bit^{n(\secp)}$, $X\in\bit^{s(\secp)}$, $Y\in\bit^{p(\secp)-s(\secp)}$ and $Z\in\bit^{p(\secp)-q(\secp)}$.
    \item 
    Let $D_{\secp,p}\subseteq \bit^{q(\secp)}$ be a family such that for all $d\in D_{\secp,p}$, there exists $D\in\Dec_{\secp,p}$ such that
    \begin{align}
        U_{3p(\secp),p(\secp)}(\ket{\psi}\otimes \ket{d})= D\ket{\psi} \otimes \ket{d}. 
    \end{align}
    \item 
    For $\secp\in\N$, polynomial $p$, $m\in\bit^{n(\secp)}$, let $\cC_{\secp,p}[m]$ be a quantum circuit family consisting of $\{I,H,CNOT,T\}$ on $3p(\secp)$-length input and $p(\secp)$-depth that outputs $m$.
    In other words, for any $C\in \cC_{\secp,p}[m]$, we have
    \begin{align}
        \Pr[m \la C(x\otimes k \otimes 0^{p(\secp)} )]=1
    \end{align}
    for any $x\in\bit^{p(\secp)}$ and $k\in\bit^{p(\secp)}$.
    \item Let $C_{\secp,p}[m]\subseteq \bit^{q(\secp) }$ be a family such that for all $c_m\in C_{\secp,p}[m]$, there exists a circuit $C[m]\in\cC_{\secp,p}[m]$ such that
    \begin{align}
        U_{3p(\secp),p(\secp)}(\ket{\psi}\otimes \ket{c_m})=C[m]\ket{\psi}\otimes\ket{y}.
    \end{align}
    \item Let $\Sigma_{\RE}\seteq \RE.(\Enc,\Dec)$ be a decomposable quantum randomized encoding.
    Given quantum circuit $C$ and $n_1$-length quantum input and $n_2$-length classical input $\mathbf{q}$ and $x$,
    the encoding $\widehat{C}(\mathbf{q},x)$ can be written
    \begin{align}
        (\widehat{C}_{\mathsf{off}},
        \widehat{C}_{1},\cdots,\widehat{C}_{n_1+n_2})\la \widehat{C}(\mathbf{q},x,r,e),
    \end{align}
    where $r$ is uniformly ransom string and $e$ is some quantum state.
    From decomposability, $\widehat{C}_{\mathsf{off}}$ acts only on $r$ and $e$.
    Moreover, 
    each $\widehat{C_i}(x_i,r_i)$ and $\widehat{C_i}(\mathbf{q}_i,r,e)$ can be computed separately, and $\widehat{C_i}(x_i,r)$ is classical string for $i\in\{n_1+1,\cdots ,n_1+n_2\}$.
    For any quantum circuit $C$, we write $\lab[i,x_i]= \widehat{C_i}(x_i,r_i)$ and $\lab[i,\mathbf{q}_i]=\widehat{C_i}(\mathbf{q}_i,r,e)$. 
\end{itemize}

\paragraph{Construction}
We give a construction of one-time unclonable secret-key encryption with $n(\secp)$-bit plaintexts $\Sigma\seteq(\keygen,\Enc,\Dec)$.
\begin{description}
    \item[$\keygen(1^\secp)$:]$ $
    \begin{itemize}
        \item Sample $x\la\bit^{p(\secp)}$.
        \item Sample $R[i]\la\bit^{\ell(\secp)}$ for all $i\in[p(\secp)]$.
        \item Output $\sk\seteq \left(x,\{R[i]\}_{i\in[p(\secp)]}\right)$.
    \end{itemize}
    \item[$\Enc(1^\secp,\sk,m)$:]$ $
    \begin{itemize}
        \item Parse $\sk=\left(x,\{R[i]\}_{i\in[p(\secp)]}\right)$.
        \item Let $c_m\in C_{\secp,p}[m]$.
        \item Compute $\widehat{U}_{3p(\secp),p(\secp)\mathsf{off}}$.
        \item Compute 
        \begin{align}
        \begin{array}{ll}
             &  \{\lab[i,0]\}_{i\in[p(\secp)]},\,\,\,\mbox{and}\,\,\,
        \{\lab[i,b]\}_{i\in \{p(\secp)+1,\cdots,2p(\secp)\},b\in\bit} \\
             &  \{\lab[i,0]\}_{i\in \{2p(\secp)+1,\cdots,3p(\secp)\}},\,\,\,\mbox{and}\,\,\,\{\lab[i,c_m[i]]\}_{i\in\{3p(\secp)+1,\cdots,3p(\secp)+q(\secp)\}}.
        \end{array}
        \end{align}
        \item Sample $S[i]\la\bit^{\ell(\secp)}$ for all $i\in[p(\secp)]$.
        \item Compute $\Lab.\ct[i+p(\secp),x[i]]\seteq R[i]+\lab[i+p(\secp),x[i]]$ and $\Lab.\ct[i+p(\secp),x[i]+1]\seteq S[i]+\lab[i+p(\secp),x[i]+1]$ for all $i\in[p(\secp)]$.
        \item Output 
        \begin{align}    
        \ct\seteq\left(\widehat{U}_{3p(\secp),p(\secp)\mathsf{off}}, 
        \begin{array}{ll}
            &  \{\lab[i,0]\}_{i\in[p(\secp)]},
        \{\Lab.\ct[i,b]\}_{i\in \{p(\secp)+1,\cdots,2p(\secp)\},b\in\bit} \\
             &  \{\lab[i,0]\}_{i\in \{2p(\secp)+1,\cdots,3p(\secp)\}},\{\lab[i,c_m[i]]\}_{i\in\{3p(\secp)+1,\cdots,3p(\secp)+q(\secp)\}}
        \end{array}
        \right).
        \end{align}
    \end{itemize}
    \item[$\Dec(1^\secp,\sk,\ct)$:]$ $
    \begin{itemize}
        \item Parse $\sk=\left(x,\{R[i]\}_{i\in[p(\secp)]}\right)$
        and
        \begin{align}    
        \ct=\left(\widehat{U}_{3p(\secp),p(\secp)\mathsf{off}}, 
        \begin{array}{ll}
             &  \{\lab[i,0]\}_{i\in[p(\secp)]},
        \{\Lab.\ct[i,b]\}_{i\in \{p(\secp)+1,\cdots,2p(\secp)\},b\in\bit} \\
             &  \{\lab[i,0]\}_{i\in \{2p(\secp)+1,\cdots,3p(\secp)\}},\{\lab[i,c_m[i]]\}_{i\in\{3p(\secp)+1,\cdots,3p(\secp)+q(\secp)\}}
        \end{array}
        \right).
        \end{align}
        \item Compute $ \lab[i+p(\secp),x[i]]\seteq R[i]+\Lab.\ct[i+p(\secp),x[i]]$ for all $i\in[p(\secp)]$.
        \item Compute 
        \begin{align}
            \RE.\Dec\left(\widehat{U}_{3p(\secp),p(\secp)\mathsf{off}},
            \begin{array}{ll}
                &  \{\lab[i,0]\}_{i\in[p(\secp)]},
        \{\lab[i,x[i]]\}_{i\in \{p(\secp)+1,\cdots,2p(\secp)\}} \\
             &  \{\lab[i,0]\}_{i\in \{2p(\secp)+1,\cdots,3p(\secp)\}},\{\lab[i,c_m[i]]\}_{i\in\{3p(\secp)+1,\cdots,3p(\secp)+q(\secp)\}}
            \end{array}
            \right)
        \end{align}
        and outputs its output.
    \end{itemize}
\end{description}

\begin{lemma}\label{lem:univ_eff_garble}
$\Sigma$ satisfies efficiency if $\Sigma_{\mathsf{RE}}$ is decomposable quantum randomized encoding.
\end{lemma}

\begin{lemma}\label{lem:univ_cor_garble}
    $\Sigma$ satisfies correctness if $\Sigma_{\mathsf{RE}}$ is decomposable quantum randomized encoding.
\end{lemma}

\begin{lemma}\label{lem:ind-cpa_garble}
    If $\Sigma_{\mathsf{RE}}$ is decomposable quantum randomized encoding and there exists one-time unclonable secret-key encryption, then $\Sigma$ satisfies one-time IND-CPA security for some polynomial $p$.
\end{lemma}

\begin{lemma}\label{thm:univ_unc_garble}
    If $\Sigma_{\mathsf{RE}}$ is decomposable quantum randomized encoding and there exists one-time unclonable secret-key encryption, then $\Sigma$ satisfies one-time unclonable IND-CPA security for some polynomial $p$.
\end{lemma}
\cref{lem:univ_eff_garble} straightforwardly follows.
\cref{lem:univ_cor_garble} follows because for any $c_m\in C_{\secp,p}[m]$, we have some $C[m]\in \cC_{\secp,p}[m]$ such that
\begin{align}
    U_{3p(\secp),p(\secp)}(\ket{\psi}\otimes\ket{c_m})=C[m](\ket{\psi})\otimes\ket{c_m}
\end{align}
for any $\ket{\psi}$, and
moreover,
$C[m]$ satisfies
\begin{align}
    \Pr[m\la C[m](x\otimes k\otimes 0^{p(\secp)}) ]=1
\end{align}
for any $x\in\bit^{p(\secp)}$ and $k\in\bit^{p(\secp)}$.

\cref{lem:ind-cpa_garble} is the same as \cref{thm:univ_unc_garble}, and thus we omit the proof.
\begin{proof}[Proof of \cref{thm:univ_unc_garble}]
For appropriate polynomial $p$, there is a quantum circuit $D\in\Dec_{\secp,p}$.
From the property of $U_{3p(\secp),p(\secp)}$, there exists a subset $D_{\secp,p}\subseteq \bit^{q(\secp)}$ such that for all $d\in D_{\secp,p}$ there exists a quantum circuit $D\in\Dec_{\secp,p}$ that satisfies
\begin{align}
    U_{3p(\secp),p}(\ket{\psi}\otimes \ket{d})=D\ket{\psi}\otimes \ket{d}.
\end{align}
We fix such $d$.
We consider a fixed string $c_{m_b}\in C_{\secp,p}[m_b]$.
We describe the sequence of hybrids against adversary $(\cA,\cB,\cC)$.
\begin{description}
    \item[$\mathsf{Hyb_0}$:] This is the original one-time unclonable IND-CPA security experiment.
    \begin{enumerate}
        \item The challenger samples $b\la\bit$.
        \item The challenger samples $x\la\bit^{p(\secp)}$ and $R[i]\la\bit^{\ell(\secp)}$ for all $i\in[p(\secp)]$.
        \item The adversary $\cA$ sends $(m_0,m_1)$ to the challenger.
        \item The challenger computes $\widehat{U}_{3p(\secp),p(\secp)\mathsf{off}}$
        and
        \begin{align}
        \begin{array}{ll}
             &  \{\lab[i,0]\}_{i\in[p(\secp)]},\,\,\,\mbox{and}\,\,\,
        \{\lab[i,\beta]\}_{i\in \{p(\secp)+1,\cdots,2p(\secp)\},\beta\in\bit} \\
             &  \{\lab[i,0]\}_{i\in \{2p(\secp)+1,\cdots,3p(\secp)\}},\,\,\,\mbox{and}\,\,\,\{\lab[i,c_{m_b}[i]]\}_{i\in\{3p(\secp)+1,\cdots,3p(\secp)+q(\secp)\}}.
        \end{array}
        \end{align}
        \item The challenger samples $S[i]\la\bit^{\ell(\secp)}$ for all $i\in[p(\secp)]$, and computes $\Lab.\ct[i+p(\secp),x[i]]\seteq R[i]+\lab[i+p(\secp),x[i]]$ and $\Lab.\ct[i+p(\secp),x[i]+1]\seteq S[i]+\lab[i+p(\secp),x[i]+1]$ for all $i\in[p(\secp)]$.
        \item The challenger sends 
        \begin{align}    
        \ct\seteq\left(\widehat{U}_{3p(\secp),p(\secp)\mathsf{off}}, 
        \begin{array}{ll}
            &  \{\lab[i,0]\}_{i\in[p(\secp)]},
        \{\Lab.\ct[i,\beta]\}_{i\in \{p(\secp)+1,\cdots,2p(\secp)\},\beta\in\bit} \\
             &  \{\lab[i,0]\}_{i\in \{2p(\secp)+1,\cdots,3p(\secp)\}},\{\lab[i,c_{m_b}[i]]\}_{i\in\{3p(\secp)+1,\cdots,3p(\secp)+q(\secp)\}}
        \end{array}
        \right).
        \end{align}
        to $\cA$.
        \item $\cA$ produces $\rho_{\cB,\cC}$ and sends the corresponding registers to $\cB$ and $\cC$.
        \item $\cB$ and $\cC$ receives $\left(x,\{R[i]\}_{i\in[p(\secp)]}\right)$, and outputs $b_\cB$ and $b_\cC$.
        \item The experiment outputs $1$ if $b_\cB=b_\cC=b$, and otherwise $0$.
    \end{enumerate}
    \item[$\mathsf{Hyb_1}$:]$ $
    \begin{enumerate}
        \item The challenger samples $b\la\bit$.
        \item The challenger samples $x\la\bit^{p(\secp)}$ and $R[i]\la\bit^{\ell(\secp)}$ for all $i\in[p(\secp)]$. 
        We write $X$ to mean the first $s(\secp)$ bit of $x$.
        \item The adversary $\cA$ sends $(m_0,m_1)$ to the challenger.
        \item The challenger computes  $\unc.\ct_b\la\Unc.\Enc(1^\secp,X,m_b)$, where $\unc.\ct_b$ is the $q(\secp)$-length quantum states.
        For $i\in\{q(\secp)+1,\cdots, p(\secp)\}$, set $\unc.\ct_b[i]\seteq 0$.
        \item The challenger computes $\widehat{U}_{3p(\secp),p(\secp)\mathsf{off}}$ and
        \begin{align}
        \begin{array}{ll}
             &  \{\lab[i,\unc.\ct_b[i]]\}_{i\in[p(\secp)]},\,\,\,\mbox{and}\,\,\,
        \{\lab[i,\beta]\}_{i\in \{p(\secp)+1,\cdots,2p(\secp)\},\beta\in\bit} \\
             &  \{\lab[i,0]\}_{i\in \{2p(\secp)+1,\cdots,3p(\secp)\}},\,\,\,\mbox{and}\,\,\,\{\lab[i,d[i]]\}_{i\in\{3p(\secp)+1,\cdots,3p(\secp)+q(\secp)\}}.
        \end{array}
        \end{align}
        \item The challenger samples $S[i]\la\bit^{\ell(\secp)}$ for all $i\in[p(\secp)]$, and computes $\Lab.\ct[i+p(\secp),x[i]]\seteq R[i]+\lab[i+p(\secp),x[i]]$ and $\Lab.\ct[i+p(\secp),x[i]+1]\seteq S[i]+\lab[i+p(\secp),x[i]+1]$ for all $i\in[p(\secp)]$.
        \item The challenger sends 
        \begin{align}    
        \ct\seteq\left(\widehat{U}_{3p(\secp),p(\secp)\mathsf{off}}, 
        \begin{array}{ll}
            &  \{\lab[i,\unc.\ct_b[i]]\}_{i\in[p(\secp)]},
        \{\Lab.\ct[i,\beta]\}_{i\in \{p(\secp)+1,\cdots,2p(\secp)\},\beta\in\bit} \\
             &  \{\lab[i,0]\}_{i\in \{2p(\secp)+1,\cdots,3p(\secp)\}},\{\lab[i,d[i]]\}_{i\in\{3p(\secp)+1,\cdots,3p(\secp)+q(\secp)\}}
        \end{array}
        \right)
        \end{align}
        to $\cA$.
        \item $\cA$ produces $\rho_{\cB,\cC}$ and sends the corresponding registers to $\cB$ and $\cC$.
        \item $\cB$ and $\cC$ receives $\left(x,\{R[i]\}_{i\in[p(\secp)]}\right)$, and outputs $b_\cB$ and $b_\cC$.
        \item The experiment outputs $1$ if $b_\cB=b_\cC=b$, and otherwise $0$.
    \end{enumerate}
\end{description}
\cref{thm:univ_unc_garble} follows from the following \cref{prop:hyb_0_hyb_1_univ_unc_garble,prop:hyb_1_univ_unc_garble}.
\begin{proposition}\label{prop:hyb_0_hyb_1_univ_unc_garble}
    If $\Sigma_\RE$ is decomposable quantum randomized encoding, then
    \begin{align}
        \abs{\Pr[\mathsf{Hyb_0}=1]-\Pr[\mathsf{Hyb_1}=1]}\leq \negl(\secp).
    \end{align}
\end{proposition}

\begin{proposition}\label{prop:hyb_1_univ_unc_garble}
If there exists one-time secret-key unclonable encryption with $n(\secp)$-bit plaintext, then
\begin{align}
    \Pr[\mathsf{Hyb_1}]\leq \frac{1}{2}+\negl(\secp).
\end{align}
\end{proposition}
\end{proof}
\begin{proof}[Proof of \cref{prop:hyb_0_hyb_1_univ_unc_garble}]
Assume that there exists a QPT adversary $(\cA,\cB,\cC)$ and constant $K$ such that
\begin{align}
    \abs{\Pr[\mathsf{Hyb_0}=1]-\Pr[\mathsf{Hyb_1}=1]}\geq 1/\secp^K
\end{align}
for all sufficiently large security parameters $\secp\in\N$.
Then, construct a QPT adversary $\widetilde{\cA}$ that breaks the security of $\Sigma_{\RE}$ as follows.
\begin{enumerate}
    \item $\widetilde{\cA}$ samples $b\la\bit$.
    \item $\widetilde{\cA}$ samples $x\la\bit^{p(\secp)}$ and $R[i]\la\bit^{\ell(\secp)}$ for all $i\in[p(\secp)]$. We write $X$ to mean the first $s(\secp)$ bits of $x$.
    \item $\widetilde{\cA}$ receives $(m_0,m_1)$ from the $\cA$.
    \item $\widetilde{\cA}$ computes $\unc.\ct_b\la\Unc.\Enc(1^\secp,X,m_b)$, where $\unc.\ct_b$ is the $q$-length quantum state.
    \item $\widetilde{\cA}$ sends $\left(\{U_{3p(\secp),p(\secp)},0^{p(\secp)},x,0^{p(\secp)},c_{m_b}\},  \{ U_{3p(\secp),p(\secp)},\left(\unc.\ct_b,0^{p(\secp)-q(\secp)}\right),x,0^{p(\secp)},d \} \right)$ to the challenger of $\Sigma_{\mathsf{RE}}$ in \cref{prop:ind_RE}.
    \item The challenger samples $b^*\la\bit$, and does the following.
    \begin{itemize}
        \item If $b^*=0$, then the challenger computes 
        \begin{align}
         \left(\widehat{U}_{3p(\secp),p(\secp)\mathsf{off}}, \{\lab[i]\}_{i\in[3p(\secp)+q(\secp)]}\right)\la
         \RE.\Enc\left(1^\secp,U_{3p(\secp),p(\secp)},\left(0^{p(\secp)},x,0^{p(\secp)},c_{m_b}\right)\right),  
        \end{align}
         and sends $\left(\widehat{U}_{3p(\secp),p(\secp)\mathsf{off}}, \{\lab[i]\}_{i\in[3p(\secp)+q(\secp)]}\right)$ to $\widetilde{\cA}$.
        \item If $b^*=1$, then the challenger computes 
        \begin{align}
         \left(\widehat{U}_{3p(\secp),p(\secp)\mathsf{off}}, \{\lab[i]\}_{i\in[3p(\secp)+q(\secp)]}\right)\la
         \RE.\Enc\left(1^\secp,U_{3p(\secp),p(\secp)},\left(\unc.\ct_b,0^{p(\secp)-q(\secp)},x,0^{p(\secp)},d)\right)\right),  
        \end{align}
        and sends $\left(\widehat{U}_{3p(\secp),p(\secp)\mathsf{off}}, \{\lab[i]\}_{i\in[3p(\secp)+q(\secp)]}\right)$ to $\widetilde{\cA}$.
    \end{itemize}
    \item $\widetilde{\cA}$ samples $S[i]\la\bit^{\ell(\secp)}$ for all $i\in[p(\secp)]$, computes $\Lab.\ct[i+p(\secp),x[i]]\seteq R[i]+\lab[i+p(\secp)]$ and $\Lab.\ct[i+p(\secp),x[i]+1]\seteq S[i]$ for all $i\in[p(\secp)]$, and runs $\cA$ on 
    \begin{align}    
        \ct\seteq\left(\widehat{U}_{3p(\secp),p(\secp)\mathsf{off}}, 
        \begin{array}{ll}
            &  \{\lab[i]\}_{i\in[p(\secp)]},
        \{\Lab.\ct[i,\beta]\}_{i\in \{p(\secp)+1,\cdots,2p(\secp)\},\beta\in\bit} \\
             &  \{\lab[i]\}_{i\in \{2p(\secp)+1,\cdots,3p(\secp)+q(\secp)\}}
        \end{array}
        \right),
    \end{align}
    and generates $\rho_{\cB,\cC}$.
    \item $\widetilde{\cA}$  sends the corresponding register to $\cB$ and $\cC$, respectively.
    \item $\widetilde{\cA}$ sends $x$ and $\{R[i]\}_{i\in[p(\secp)]}$ to $\cB$ and $\cC$.
    \item $\cB$ and $\cC$ outputs $b_\cB$ and $b_\cC$, respectively.
    \item $\widetilde{\cA}$ outputs $1$ if $b=b_\cB=b_\cC$, and outputs $0$ otherwise.
\end{enumerate}
From the construction of $\widetilde{\cA}$, if $b^*=0$, $\widetilde{\cA}$ perfectly simulates the challenger of $\mathsf{Hyb_0}$.
Otherwise, $\widetilde{\cA}$ perfectly simulates the challenger of $\mathsf{Hyb_1}$.
Furthermore, we have 
\begin{align}
U_{3p(\secp),p(\secp)}(0^{p(\secp)},x,0^{p(\secp)},c_b)=U_{3p(\secp),p(\secp)}(\unc.\ct_b,0^{p(\secp)-q(\secp)},x,0^{p(\secp)},d)=m_b.
\end{align}
Therefore, if there exists a QPT adversary $(\cA,\cB,\cC)$ and constant $K$ such that
\begin{align}
    \abs{\Pr[\mathsf{Hyb_0}=1]-\Pr[\mathsf{Hyb_1}=1]}\geq 1/\secp^K
\end{align}
for all sufficiently large security parameters $\secp\in\N$,
then it contradicts that $\Sigma_{\RE}$ satisfies security from \cref{prop:ind_RE}.
\end{proof}

\begin{proof}[Proof of \cref{prop:hyb_1_univ_unc_garble}]
    Assume that there exists a QPT adversary $(\cA,\cB,\cC)$ and constant $K$ such that
    \begin{align}
        \Pr[\mathsf{Hyb_1}=1]\geq 1/2+1/\secp^K
    \end{align}
    for all sufficiently large security parameters $\secp\in\N$.
    Then, construct a QPT adversary $(\widetilde{\cA},\widetilde{\cB},\widetilde{\cC})$ that breaks the one-time unclonable IND-CPA security of $\Sigma_\unc$ as follows.
    \begin{enumerate}
        \item The challenger of $\Sigma_{\unc}$ samples $b\la\bit$.
        \item $\widetilde{\cA}$ samples $Y\la\bit^{p(\secp)-s(\secp)}$ and $R[i,\beta]\la\bit^{\ell(\secp)}$ for all $i\in[p(\secp)]$ and $\beta\in\bit$.
        \item $\widetilde{\cA}$ receives $(m_0,m_1)$ from the $\cA$.
        \item $\widetilde{\cA}$ sends $( m_0,m_1)$ to the challenger, and receives $\unc.\ct_b$, where the challenger samples $X\la\bit^{s(\secp)}$ and runs $\unc.\ct_b\la\Unc.\Enc(1^\secp,X,m_b)$.
        \item $\widetilde{\cA}$ computes $\widehat{U}_{3p(\secp),p(\secp)\mathsf{off}}$ and
        \begin{align}
        \begin{array}{ll}
             &  \{\lab[i,\unc.\ct_b[i]]\}_{i\in[p(\secp)]},\,\,\,\mbox{and}\,\,\,
        \{\lab[i,\beta]\}_{i\in \{p(\secp)+1,\cdots,2p(\secp)\},\beta\in\bit} \\
             &  \{\lab[i,0]\}_{i\in \{2p(\secp)+1,\cdots,3p(\secp)\}},\,\,\,\mbox{and}\,\,\,\{\lab[i,d[i]]\}_{i\in\{3p(\secp)+1,\cdots,3p(\secp)+q(\secp)\}}.
        \end{array}
        \end{align}
        \item $\widetilde{\cA}$ computes $\Lab.\ct[i+p(\secp),\beta]]\seteq R[i,\beta]+\lab[i+p(\secp),\beta]$ for all $i\in[p(\secp)]$ and $\beta\in\bit$.
        \item $\widetilde{\cA}$ runs $\cA$ on 
        \begin{align}
        \left( \widehat{U}_{3p(\secp),p(\secp)},
        \begin{array}{ll}
             \{\lab[i,\unc.\ct_b[i]]\}_{i\in[p(\secp)]},
        \{\Lab.\ct[i,\beta]\}_{i\in \{p(\secp)+1,\cdots,2p(\secp)\},\beta\in\bit} \\
             \{\lab[i,0]\}_{i\in \{2p(\secp)+1,\cdots,3p(\secp)\}},
             \{\lab[i,d[i]]\}_{i\in\{3p(\secp)+1,\cdots,3p(\secp)+q(\secp)\}}
        \end{array}
        \right)
        \end{align}
        , obtains $\rho_{\cB,\cC}$, and sends the $\cB$ register and $(Y,\{R[i,\beta]\}_{i\in[p(\secp)],\beta\in\bit})$ to $\widetilde{\cB}$ and $\cC$ register and $(Y,\{R[i,\beta]\}_{i\in[p(\secp)],\beta\in\bit})$ to $\widetilde{\cC}$.
        \item $\widetilde{\cB}$ (resp. $\widetilde{\cC}$) receives the secret-key $X$ from the challenger of $\Sigma_{\unc}$, sets $x\seteq X||Y$, and sends $(x,\{R[i,x[i]]\}_{i\in[p(\secp)]})$ and the $\cB$ register (resp. $\cC$ register) to $\cB$ (resp. $\cC$). 
        \item The experiment outputs $1$ if $b=b_\cB=b_\cC$ where $b_\cB$ and $b_\cC$ are the output of $\cB$ and $\cC$, respectively.
    \end{enumerate}
    From the construction of $(\widetilde{\cA},\widetilde{\cB},\widetilde{\cC})$, the adversary perfectly simulates the challenger of $\mathsf{Hyb_1}$. Therefore, if there exists a QPT adversary $(\cA,\cB,\cC)$ and some constant $K$ such that 
    \begin{align}
        \Pr[\mathsf{Hyb_1}=1]\geq 1/2+1/\secp^K
    \end{align}
     for all sufficiently large security parameters $\secp\in\N$, it contradicts $\Sigma_{\unc}$ satisfies unclonable IND-CPA security.
\end{proof}
\fi

\section{Universal Plaintext Extension for Unclonable Encryption}\label{Sec:expanstion_UE}
In this section, we prove the following \cref{thm:unclone_multi_space}.
\if0
\taiga{I changed the statement of \cref{thm:unclone_multi_space}. 
Is the definition OK?
Especially, I care that $\Sigma$ can depend on the $\Sigma_{\unc}$ in this definition. 
}
\fi
\if0
\begin{theorem}\label{thm:unclone_multi_space}
    If there exists a decomposable quantum randomized encoding, then for all polynomial $n$, we can construct a set of uniform QPT algorithms $\Sigma=(\keygen,\Enc,\Dec)$ such that $\Sigma$ is a one-time unclonable secret-key encryption for $n(\secp)$-bit plaintexts as long as there exists one-time unclonable secret-key encryption for single-bit plaintexts.
\end{theorem}

\begin{remark}
Our construction is universal construction for one-time unclonable SKE because the construction depends only on the size of the single-bit scheme that is assumed to exist.
\end{remark}
\fi

\begin{theorem}\label{thm:unclone_multi_space}
    Assume that there exists a decomposable quantum randomized encoding and one-time unclonable SKE $\Sigma_{\unc}= \Unc.(\keygen,\Enc,\Dec)$ where the size of the quantum circuit $\Unc.\Dec(1^\secp,\cdot,\cdot)$ is $\ell(\secp)$.
    Then, for all polynomial $n$, there exists a polynomial $p$ which depends on the polynomial $n$ and $\ell$ and a set of uniform QPT algorithms $\Sigma=(\keygen,\Enc,\Dec)$ which depends on the polynomial $p$ such that $\Sigma$ is a one-time unclonable secret-key encryption for $n(\secp)$-bit plaintexts.
\end{theorem}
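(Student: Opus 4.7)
\medskip

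\noindent\textbf{Proof plan.} The construction is exactly the one sketched in the technical overview: $\keygen_{\Univ}(1^\secp)$ samples $x\la\bit^\secp$ and $R[i]\la\bit^{\ell(\secp)}$ for $i\in[\secp]$; $\Enc_{\Univ}(\sk,m)$ forms the constant-output quantum circuit $C[m]$ padded to have exactly the same size as the circuit $D_{\secp}[m_0,m_1]$ defined below, decomposably encodes it on input $(0^\secp,x)$, then one-time-pad masks the input labels of the classical key wires via $\Lab.\ct[i,x[i]]=R[i]+\lab_{\secp+i}(x[i])$ and $\Lab.\ct[i,1-x[i]]=S[i]+\lab_{\secp+i}(1-x[i])$ with fresh $S[i]\la\bit^{\ell(\secp)}$; $\Dec_{\Univ}$ uncovers $\lab_{\secp+i}(x[i])$ using $R[i]$ and applies $\RE.\Dec$. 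The polynomial $p$ promised in the statement is exactly the size of $D_{\secp}[m_0,m_1]$, which depends only on $n$ (via $|m_0|,|m_1|$) and on $\ell$ (the size of $\Unc.\Dec_\secp$).

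First I would verify efficiency (immediate from $\RE.\Enc$, $\RE.\Dec$ being uniform QPT) and correctness. Correctness reduces to correctness of the DQRE together with the identity $C[m](0^\secp,x)=m$: since $\Dec_{\Univ}$ recovers exactly the online labels $\lab_{\secp+i}(x[i])$ and feeds them, together with $\lab_i(0)$ and $\widehat{C[m]}_{\mathsf{off}}$, into $\RE.\Dec$, the output is $C[m](0^\secp,x)=m$ with probability $1$ up to DQRE decoding error.

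For one-time unclonable IND-CPA security I would run a two-step hybrid against an adversary $(\cA,\cB,\cC)$. In $\mathsf{Hyb}_0$ the challenger returns the real ciphertext; in $\mathsf{Hyb}_1$ it instead encodes the universal decryption circuit $D_\secp[m_0,m_1](\unc.\ct,x):=m_{\Unc.\Dec(x,\unc.\ct)}$ on input $(\unc.\ct_b,x)$, where $\unc.\ct_b\la\Unc.\Enc(x,b)$, producing
\[
\widetilde{\ct_b}=\Bigl(\widehat{D_\secp[m_0,m_1]}_{\mathsf{off}},\{\lab_i(\unc.\ct_b[i])\}_{i\in[\secp]},\{\Lab.\ct[i,\beta]\}_{i\in[\secp],\beta\in\bit}\Bigr).
\]
Indistinguishability of these two hybrids follows from \cref{prop:ind_RE}: $C[m_b]$ and $D_\secp[m_0,m_1]$ have the same size by padding, and on the respective inputs both evaluate to $m_b$ by perfect correctness of the underlying single-bit $\Sigma_\unc$. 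In the second step I would reduce $\mathsf{Hyb}_1$ to the one-time unclonable IND-CPA security of $\Sigma_\unc$: a simulator $\widetilde{\cA}$ receives $\unc.\ct_b$ from the $\Sigma_\unc$-challenger, samples independent masks $R[i,\beta]\la\bit^{\ell(\secp)}$ for every $(i,\beta)\in[\secp]\times\bit$ (playing the role of $R[i]$ and $S[i]$ simultaneously), assembles $\widetilde{\ct_b}$ using the received $\unc.\ct_b$ for the online label $\lab_i(\cdot)$ slot, and forwards the splitting to $\cA$; once $\widetilde{\cB},\widetilde{\cC}$ receive the single-bit key $x$ from the challenger, they hand $(x,\{R[i,x[i]]\}_{i\in[\secp]})$ to $\cB,\cC$, which is distributed exactly as $\sk=(x,\{R[i]\})$ because $R[i,x[i]]$ is uniform and independent and $R[i,1-x[i]]$ is never revealed.

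The main obstacle is the second hybrid. It hinges on two observations that must be carefully argued: (i) the joint distribution of $(x,\{R[i,x[i]]\})$ marginalized from the simulator's sampling is identical to the real $(x,\{R[i]\})$, so the splitters' view in $\mathsf{Hyb}_1$ matches the view simulated by $(\widetilde{\cA},\widetilde{\cB},\widetilde{\cC})$; and (ii) the DQRE online labels $\lab_i(\unc.\ct_b[i])$ can be produced by $\widetilde{\cA}$ \emph{without} knowing $x$, because by decomposability each $\lab_i(\cdot)$ depends only on a single qubit of the quantum input and on the global randomness, both of which $\widetilde{\cA}$ holds. Once these two points are established, the simulator perfectly reproduces $\mathsf{Hyb}_1$, so any advantage $\varepsilon$ of $(\cA,\cB,\cC)$ transfers to advantage $\varepsilon$ against $\Sigma_\unc$, contradicting one-time unclonable IND-CPA security in the single-bit setting. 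Combining the two hybrids yields $\Pr[\mathsf{Exp}=1]\leq 1/2+\negl(\secp)$, completing the proof.
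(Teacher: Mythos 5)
Your proposal reproduces the paper's construction and hybrid argument almost verbatim: same encryption scheme, same two hybrids, same use of \cref{prop:ind_RE} for $\mathsf{Hyb}_0\approx\mathsf{Hyb}_1$, and the same simulator that pre-commits to both label masks $R[i,\beta]$ and reveals only $R[i,x[i]]$ once $x$ arrives. Your observations (i) and (ii) about the distributional match and the decomposability of the online labels are also exactly what carries the reduction in the paper.

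There is one gap, which is not fatal but does need to be filled. Your argument silently assumes three normalizations of $\Sigma_\unc$: that it has \emph{perfect} correctness (you invoke this to assert $C[m_b](0^\secp,x)=D_\secp[m_0,m_1](\unc.\ct_b,x)=m_b$ with certainty, which is needed because \cref{prop:ind_RE} requires literal equality of the two circuits' outputs, not just statistical closeness), that $\Unc.\keygen(1^\secp)$ outputs a \emph{uniformly random} string of length exactly $\secp$ (you reuse the freshly sampled $x\la\bit^\secp$ of the outer scheme as the inner $\Sigma_\unc$ key, which is only sound if $\Unc.\keygen$ literally samples a uniform $\secp$-bit string), and that the inner ciphertext has exactly $\secp$ qubits so that the quantum-input slot of $D_\secp[m_0,m_1]$ lines up. None of these is guaranteed by the hypothesis "one-time unclonable SKE," which only provides $1-\negl$ correctness and an arbitrary key/ciphertext distribution. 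The paper devotes \cref{prop:perfect_unc} (proved in \cref{sec:app_perfect_unc}) to exactly this normalization: first apply the measure-and-rewind correction from \cref{lem:unclone_ske_cor} to get perfect correctness, then XOR-mask the key to make it uniform, then pad to fix the lengths. You should either invoke that proposition explicitly, or state and sketch the normalization yourself; otherwise the first hybrid comparison and the reduction simulator are both unjustified.
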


\begin{remark}
Our construction is universal construction for one-time unclonable SKE in the sense that our construction does not depend on the single-bit scheme $\Sigma_{\unc}$ that is assumed to exist except for the size of the decryption circuit of $\Sigma_{\unc}$.
\end{remark}

\color{black}

As corollaries,
we obtain \cref{cor:unclone_expnasion,cor:pub_unclone_expansion}.
\begin{corollary}\label{cor:unclone_expnasion}
    For all polynomial $n$, there exists a set of uniform QPT algorithms $\Sigma=(\keygen,\Enc,\Dec)$ such that $\Sigma$ is unclonable secret-key encryption for $n(\secp)$-bit plaintexts if there exists unclonable secret-key encryption for single-bit plaintexts.
\end{corollary}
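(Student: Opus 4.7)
\textbf{Proof plan for \cref{cor:unclone_expnasion}.} The plan is to bootstrap from \cref{thm:unclone_multi_space} by observing that the hypothesis---the existence of a reusable single-bit unclonable SKE scheme $\Sigma^{\mathsf{SB}}_\unc$---is strictly stronger than what \cref{thm:unclone_multi_space} needs. First I would observe that $\Sigma^{\mathsf{SB}}_\unc$ satisfies ordinary IND-CPA security, so OWFs exist, and therefore by \cref{thm:quantum_garble} a decomposable quantum randomized encoding exists. Second, $\Sigma^{\mathsf{SB}}_\unc$ trivially yields a one-time single-bit unclonable SKE. Plugging these two ingredients into \cref{thm:unclone_multi_space} already produces a one-time unclonable SKE $\Sigma^{\mathsf{OT}}$ for $n(\secp)$-bit plaintexts; the remaining task is to upgrade one-time security to reusable security while expanding to $n(\secp)$-bit plaintexts.

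For the reusable multi-bit construction I would essentially re-use the template of \cref{thm:unclone_multi_space}, where the reusable single-bit key plays the role of the classical DQRE input $x$, and the per-encryption blinders $\{R[i]\}$ (which in \cref{thm:unclone_multi_space} sit in $\sk$ because the scheme is one-time) are instead sampled freshly at each encryption. Concretely, $\sk=(\sk^{\mathsf{SB}},K)$ where $\sk^{\mathsf{SB}}$ is the reusable single-bit key and $K$ is a key for a standard IND-CPA SKE $\SKE$ (which we have from OWFs). Each encryption samples fresh $R[i]$, blinds the classical-input labels exactly as in \cref{thm:unclone_multi_space}, and transports the $R[i]$'s to the decryptor encrypted under $K$ inside the ciphertext; quantum-input labels are the encodings of $0^\secp$ as before. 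Decryption first recovers the $R[i]$'s using $K$, then unblinds and runs DQRE decoding using $\sk^{\mathsf{SB}}$ as the classical input. Correctness, uniform QPT efficiency, and standard reusable IND-CPA follow by straightforward hybrid arguments (the $R[i]$-ciphertexts hide the blinders before $\sk$ is disclosed, and DQRE security hides $m$).

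For unclonable IND-CPA I would reduce to the reusable single-bit unclonable IND-CPA security of $\Sigma^{\mathsf{SB}}_\unc$ via the two-step hybrid of \cref{thm:unclone_multi_space}: first use DQRE security (\cref{prop:ind_RE}) to replace $\widehat{C[m_b]}$ by $\widehat{D[m_0,m_1]}$ with quantum input $\unc.\ct_b$ and classical input $\sk^{\mathsf{SB}}$; then the cloning adversary $(\cA,\cB,\cC)$ against the multi-bit scheme is turned into a cloning adversary against $\Sigma^{\mathsf{SB}}_\unc$ which answers the outer encryption queries itself using its self-sampled $K$ and uses $\unc.\ct_b$ from its challenger to manufacture the single challenge ciphertext. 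The hard part is the simulation of encryption queries and the challenge ciphertext without knowing $\sk^{\mathsf{SB}}$: the reduction must produce blinded labels that decrypt consistently for $\cB',\cC'$ once $\sk^{\mathsf{SB}}$ is later handed to them by the single-bit challenger. I would handle this exactly as in \cref{thm:unclone_multi_space}---sample uniform $\Lab.\ct[i,\beta]$ together with $R[i]$'s independently, and after $\sk^{\mathsf{SB}}$ becomes available at the key-revelation stage, reveal $R[i]$ chosen so that $\Lab.\ct[i,\sk^{\mathsf{SB}}[i]]-R[i]$ equals the genuine DQRE label---relying on the fact that the marginal distribution of the ciphertext does not depend on $\sk^{\mathsf{SB}}$. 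The \emph{unclonability} part of the reduction then follows the same ``splitting the DQRE quantum-input labels corresponds to splitting $\unc.\ct_b$'' argument as in \cref{thm:unclone_multi_space}, with the only difference that we now invoke the reusable rather than one-time single-bit unclonable IND-CPA to handle the adversary's additional encryption oracle queries. The corresponding result for unclonable PKE is then immediate because unclonable PKE implies both reusable single-bit unclonable SKE and PKE with quantum ciphertexts, and the latter allows replaying the construction with a public-key wrapper via the technique of \cite{Asia:HMNY21} already used in \cref{cor:unclone_pke}.
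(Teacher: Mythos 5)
Your high-level strategy --- obtain DQRE and a one-time single-bit unclonable SKE from the hypothesis, invoke \cref{thm:unclone_multi_space}, and then handle reusability --- matches the route the paper sketches (the proof of \cref{cor:pub_unclone_expansion}, adapted). The difference, and the gap, is in how reusability is achieved. The paper uses \cref{thm:unclone_multi_space} as a black box to get a \emph{one-time} $n(\secp)$-bit unclonable SKE and then, for each encryption, samples a fresh one-time key and transports it inside the ciphertext under a \emph{receiver non-committing} encryption, precisely because the reduction to the one-time inner scheme must fake that transport and program the opening only after the one-time key is handed over by the challenger. You instead rebuild the scheme from scratch, hard-wiring the reusable single-bit key $\sk^{\mathsf{SB}}$ as the classical DQRE input and shipping the fresh blinders $\{R[i]\}$ inside each ciphertext under an ordinary IND-CPA SKE key $K$.

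The step you flag as ``the hard part'' does not go through with that choice. In \cref{prop:hyb_1_unc_expand} the reduction's freedom to set $R[i] := R[i,x[i]]$ after learning $x$ is legitimate only because the $\{R[i]\}$ sit in the one-time secret key, which $\widetilde{\cB},\widetilde{\cC}$ synthesize \emph{after} the single-bit challenger reveals $x$. In your scheme the $\{R[i]\}$ are fixed inside the $K$-ciphertext that is part of every ciphertext handed to $\cA$; and $K$ is part of $\sk$, hence is given to $\cB,\cC$. Once they hold $K$ and $\sk^{\mathsf{SB}}$ they decrypt and recover exactly the $R[i]$'s the reduction committed to before knowing $\sk^{\mathsf{SB}}$, so $\Lab.\ct[i,\sk^{\mathsf{SB}}[i]]-R[i]$ fails to be a valid DQRE label on roughly half the indices --- an ordinary IND-CPA SKE is committing, so there is no later ``reveal'' step available. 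The same inconsistency already appears when simulating the encryption-oracle responses, since deciding which of the two classical-input labels gets blinded by the transported $R[i]$ requires knowing $\sk^{\mathsf{SB}}[i]$, which the reduction does not have. The missing idea is to carry the $R[i]$'s (equivalently, the fresh one-time key in the paper's modular formulation) under a secret-key receiver non-committing encryption rather than under $K$; that is the ingredient the paper introduces in \cref{sec:app_unc_pke}, and once you make that substitution your direct construction collapses back into the paper's ``one-time $n$-bit SKE plus NCE'' route.
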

\begin{corollary}\label{cor:pub_unclone_expansion}
    For all polynomial $n$, there exists a set of uniform QPT algorithms $\Sigma=(\keygen,\Enc,\Dec)$ such that $\Sigma$ is unclonable public-key encryption for $n(\secp)$-bit plaintexts if there exists unclonable public-key encryption for single-bit plaintexts.
\end{corollary}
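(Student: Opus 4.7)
The plan is to reduce the multi-bit public-key case to \cref{thm:unclone_multi_space} through a one-time symmetric-key bridge, and then lift back to the public-key setting using the same HMNY21-style transformation already invoked in the proof of \cref{cor:unclone_pke}.

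First, I would observe that any unclonable PKE for single-bit plaintexts trivially yields a one-time unclonable SKE for single-bit plaintexts: place $(\pk,\sk)$ into the secret key and encrypt only once. Moreover, as remarked in the introduction, unclonable PKE implies decomposable quantum randomized encoding. These two facts jointly satisfy the hypotheses of \cref{thm:unclone_multi_space}, producing a one-time unclonable SKE scheme $\Sigma_{\unc}^{(n)}$ for $n(\secp)$-bit plaintexts.

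Next, I would note that the assumed unclonable PKE is in particular a PKE with quantum ciphertexts, from which one obtains a receiver non-committing encryption scheme with quantum ciphertexts via the standard transformation discussed in the proof of \cref{cor:unclone_pke}. Applying the HMNY21 construction (see \cref{sec:app_unc_pke}) to $\Sigma_{\unc}^{(n)}$ together with this receiver non-committing encryption yields an unclonable PKE scheme whose plaintext length equals that of $\Sigma_{\unc}^{(n)}$, namely $n(\secp)$ bits. The resulting scheme is a set of uniform QPT algorithms, as required by the statement.

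The main obstacle is verifying that the HMNY21 combination preserves unclonable IND-CPA security when the inner SKE has multi-bit plaintexts rather than a single bit. This reduces, via the standard hybrid argument in \cref{sec:app_unc_pke}, to the one-time unclonable IND-CPA security of $\Sigma_{\unc}^{(n)}$, which holds by \cref{thm:unclone_multi_space}, together with the IND-CPA security of the receiver non-committing encryption scheme. Since both ingredients are available from the single-bit unclonable PKE assumption alone, no additional assumption is needed, and the proof goes through without modification from the single-bit case.
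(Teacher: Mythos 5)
Your proposal is correct and follows essentially the same route as the paper's own proof: derive one-time single-bit unclonable SKE and DQRE from the assumed single-bit unclonable PKE, apply \cref{thm:unclone_multi_space} to get a one-time unclonable SKE with $n(\secp)$-bit plaintexts, then lift back to the public-key setting by combining with receiver non-committing encryption (obtained from the PKE with quantum ciphertexts) as in \cref{sec:app_unc_pke}. The only cosmetic difference is that you invoke ``unclonable PKE $\Rightarrow$ DQRE'' directly rather than passing explicitly through ``PKE with quantum ciphertexts,'' but this is the same chain.
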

\begin{proof}[Proof of \cref{cor:pub_unclone_expansion}]
    We give a rough sketch of the proof of \cref{cor:pub_unclone_expansion}. Note that, in the same way, we can prove \cref{cor:unclone_expnasion}.
    
    We can construct PKE with quantum ciphertexts and one-time unclonable SKE with single-bit plaintexts from unclonable PKE for single-bit plaintexts.
    We can construct decomposable quantum randomized encoding from PKE with quantum ciphertexts.
    Furthermore, from \cref{thm:unclone_multi_space}, we can construct one-time unclonable SKE with $n(\secp)$-bit plaintexts from decomposable quantum randomized encoding and one-time unclonable SKE with single-bit plaintexts.

    On the other hand, we can construct receiver non-committing encryption with quantum ciphertexts from PKE with quantum ciphertexts.
    By combining the receiver non-committing encryption with quantum ciphertexts and one-time unclonable SKE with $n(\secp)$-bit plaintexts, we obtain unclonable PKE with $n(\secp)$-bit plaintexts (For the detail, see \cref{sec:app_unc_pke}).
\end{proof}

\begin{proof}[Proof of \cref{thm:unclone_multi_space}]
First, let us describe notations and observations.

\paragraph{Notations and observations.}
\begin{itemize}
    \item 
    Let $C_{\secp,p}[m]$ be a quantum circuit of size $p(\secp)$ with $\secp$-qubit quantum inputs and $\secp$-bit classical inputs such that it outputs $m$ for any inputs, where $p$ is a polynomial which we specify later. 
    \item Let $\Sigma_{\RE}\seteq \RE.(\Enc,\Dec)$ be a decomposable quantum randomized encoding.
    Given quantum circuit $C$ and $n_1$-length quantum input and $n_2$-length classical input $\mathbf{q}$ and $x$,
    the encoding $\widehat{C}(\mathbf{q},x)$ can be separated as follows:
    \begin{align}
        \widehat{C}(\mathbf{q},x,r,e)=(\widehat{C}_{\mathsf{off}},
        \widehat{C}_{1},\cdots,\widehat{C}_{n_1+n_2})(\mathbf{q},x,r,e),
    \end{align}
    where $r$ is uniformly ransom string and $e$ is some quantum state.
    From decomposability, $\widehat{C}_{\mathsf{off}}$ acts only on $r$ and $e$, and $\widehat{C_i}$ acts only on $\mathbf{q_i}, r$ and $e$ for $i\in[n_1]$, and $\widehat{C_i}$ acts only on $x_i$ and $r$ for $i\in\{n_1+1,\cdots,n_1+n_2\}$.
    For any quantum circuit $C$, we write $\lab[i,x_i]= \widehat{C_i}(x_i,r_i)$ and $\lab[i,\mathbf{q}_i]=\widehat{C_i}(\mathbf{q}_i,r,e)$. 
    \if0
    \item Let $\Sigma_{\RE}\seteq \RE.(\Enc,\Dec)$ be a decomposable quantum randomized encoding.
    Given quantum circuit $C$ and $n_1$-length quantum input and $n_2$-length classical input $\mathbf{q}$ and $x$,
    the encoding $\widehat{C}(\mathbf{q},x)$ can be separated as follows:
    \begin{align}
        \widehat{C}(\mathbf{q},x,r,e)=(\widehat{C}_{\mathsf{off}},
        \widehat{C}_{1},\cdots,\widehat{C}_{n_1+n_2})(\mathbf{q},x,r,e),
    \end{align}
    where $r$ is uniformly ransom string and $e$ is some quantum state.
    From decomposability, $\widehat{C}_{\mathsf{off}}$ acts only on $r$ and $e$, and $\widehat{C_i}$ acts only on $\mathbf{q_i}, r$ and $e$ for $i\in[n_1]$, and $\widehat{C_i}$ acts only on $x_i$ and $r$ for $i\in\{n_1+1,\cdots,n_1+n_2\}$.
    For any quantum circuit $C$, we write $\lab[i,x_i]= \widehat{C_i}(x_i,r_i)$ and $\lab[i,\mathbf{q}_i]=\widehat{C_i}(\mathbf{q}_i,r,e)$. 
    \fi
\end{itemize}
\paragraph{Construction.}
We give a construction of one-time unclonable secret-key encryption  $\Sigma\seteq(\keygen,\Enc,\Dec)$ with $n(\secp)$-bit plaintexts by using decomposable quantum randomized encoding.
In the construction, we only use decomposable quantum randomized encoding.
The construction is secure as long as the underlying decomposable quantum randomized encoding is secure and there exists one-time unclonable secret-key encryption for single-bit plaintexts.
\begin{description}
    \item[$\keygen(1^\secp)$:]$ $
    \begin{itemize}
        \item Sample $x\la\bit^{\secp}$.
        \item Sample $R[i]\la\bit^{\ell(\secp)}$ for all $i\in[\secp]$.
        \item Output $\sk\seteq \left(x,\{R[i]\}_{i\in[\secp]}\right)$.
    \end{itemize}
    \item[$\Enc(1^\secp,\sk,m)$:]$ $
    \begin{itemize}
        \item Parse $\sk=\left(x,\{R[i]\}_{i\in[\secp]}\right)$.
        \item Prepare the quantum circuit $C_{\secp,p}[m]$ that outputs $m$ for any inputs.
        \item Compute $\widehat{C_{\secp,p}}[m]_{\mathsf{off}}$.
        \if0
        \item Compute $\left(\widehat{C_{\secp,p}}[m],\{\lab[i,0]\}_{i\in[\secp]}, \{\lab[i,0]\}_{i\in\{\secp+1,\cdots,2\secp\}} \right)\la\RE.\Enc(C_{\secp,p}[m],0,0)$.
        \fi
        \item Compute 
        $  \{\lab[i,0]\}_{i\in[\secp]},\,\,\,\mbox{and}\,\,\,
        \{\lab[i,b]\}_{i\in \{\secp+1,\cdots,2\secp\},b\in\bit}. 
        $
        \item Sample $S[i]\la\bit^{\ell(\secp)}$ for all $i\in[\secp]$.
        \item Compute $\Lab.\ct[i+\secp,x[i]]\seteq R[i]+\lab[i+\secp,x[i]]$ and $\Lab.\ct[i+\secp,1-x[i]]\seteq S[i]+\lab[i+\secp,1-x[i]]$ for all $i\in[\secp]$.
        \item Output 
        \begin{align}    
        \ct\seteq\left(\widehat{C_{\secp,p}}[m]_{\mathsf{off}}, 
        \     \{\lab[i,0]\}_{i\in[\secp]},
        \{\Lab.\ct[i,b]\}_{i\in \{\secp+1,\cdots,2\secp\},b\in\bit}
        \right).
        \end{align}
    \end{itemize}
    \item[$\Dec(1^\secp,\sk,\ct)$:]$ $
    \begin{itemize}
        \item Parse $\sk=\left(x,\{R[i]\}_{i\in[\secp]}\right)$
        and
        \begin{align}    
        \ct=\left(\widehat{C_{\secp,p}}[m]_{\mathsf{off}}, 
               \{\lab[i,0]\}_{i\in[\secp]},
        \{\Lab.\ct[i,b]\}_{i\in \{\secp+1,\cdots,2\secp\},b\in\bit} 
        \right).
        \end{align}
        \item Compute $ \lab[i+\secp,x[i]]\seteq\Lab.\ct[i+\secp,x[i]]+R[i]$ for all $i\in[\secp]$.
        \item Compute 
        \begin{align}
            \RE.\Dec\left(\widehat{C_{\secp,p}}[m]_{\mathsf{off}},
                  \{\lab[i,0]\}_{i\in[\secp]},
        \{\lab[i,x[i]]\}_{i\in \{\secp+1,\cdots,2\secp\}}
            \right)
        \end{align}
        and outputs its output.
    \end{itemize}
\end{description}
\begin{lemma}\label{lem:extend_eff}
    $\Sigma$ satisfies efficiency if $\Sigma_{\mathsf{RE}}$ is decomposable quantum randomized encoding.
\end{lemma}
\begin{lemma}\label{lem:extend_cor}
    $\Sigma$ satisfies correctness if $\Sigma_{\mathsf{RE}}$ is decomposable quantum randomized encoding.
\end{lemma}
\begin{lemma}\label{lem:expand_sec}
    If $\Sigma_{\mathsf{RE}}$ is decomposable quantum randomized encoding and there exists one-time unclonable secret-key encryption with single-bit plaintexts,
    $\Sigma$ satisfies one-time IND-CPA security for some polynomial $p$.
\end{lemma}

\begin{lemma}\label{thm:expnad_unc}
    If $\Sigma_{\mathsf{RE}}$ is decomposable quantum randomized encoding and there exists one-time unclonable secret-key encryption with single-bit plaintexts,
    $\Sigma$ satisfies one-time unclonable IND-CPA security for some polynomial $p$.  
\end{lemma}
\cref{lem:extend_eff} straightforwardly follows.
We can see that \cref{lem:extend_cor} holds as follows.
First, if $\sk\la\keygen(1^\secp)$ and $\ct\la\Enc(\sk,m)$, $\Dec(\sk,\ct)$ outputs the output of $C_{\secp,p}[m](0^{\secp},x)$.
From the definition of $C_{\secp,p}[m]$, $C_{\secp,p}[m](0^{\secp},x)$ outputs $m$ for all $x$.

The proof of \cref{lem:expand_sec} is the same as \cref{thm:expnad_unc}, and thus we skip the proof.
\begin{proof}[Proof of \cref{thm:expnad_unc}]
By a standard argument, we can show the following \cref{prop:perfect_unc}.
\begin{proposition}\label{prop:perfect_unc}
    If there exists one-time unclonable secret-key encryption for single-bit plaintexts, then there exists a one-time unclonable secret-key encryption for single-bit plaintexts scheme $\Sigma_{\unc}=\Unc.(\keygen,\Enc,\Dec)$ such that the following properties are satisfied:
    \begin{enumerate}
        \item $\Sigma_{\unc}$ satisfies perfect correctness.
        \item For all security parameters $\secp\in\N$ and $b\in\bit$, we have $\abs{\sk_\secp}=\abs{\ct_{\secp,b}}=\secp$, where $\sk_\secp\la\Unc.\keygen(1^\secp)$ and $\ct_{\secp,b}\la\Unc.\Enc(1^\secp,\sk_\secp,b)$.
        \item For all security parameters $\secp$, $\Unc.\keygen(1^\secp)$ uniformly randomly samples $\sk_\secp$.
    \end{enumerate}
\end{proposition}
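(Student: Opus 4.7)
The plan is to start from the hypothesized one-time unclonable SKE $\widehat{\Sigma}$ for single-bit plaintexts and perform three successive transformations that together yield a scheme $\Sigma_{\unc}$ satisfying all three properties. Let $s(\cdot)$ and $q(\cdot)$ denote polynomial upper bounds on the secret-key length and ciphertext length of $\widehat{\Sigma}$.

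First, I would apply the perfect-correctness transformation of \cref{lem:unclone_ske_cor} to $\widehat{\Sigma}$ to obtain $\widetilde{\Sigma} = (\widetilde{\Sigma}.\keygen, \widetilde{\Sigma}.\Enc, \widetilde{\Sigma}.\Dec)$, which is perfectly correct and inherits one-time unclonable IND-CPA security from $\widehat{\Sigma}$. Without loss of generality I assume $\widetilde{\Sigma}$ has polynomial key length $s(\secp)$ and ciphertext length $q(\secp)$ (possibly after enlarging these polynomials and padding ciphertexts to a canonical length independent of the encrypted bit). This alone takes care of property (1) for $\widetilde{\Sigma}$, but the key distribution and length constraints still need work.

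Second, to simultaneously enforce $\abs{\sk_\secp} = \abs{\ct_{\secp,b}} = \secp$ and uniform key generation, I would use a one-time-pad masking together with parameter rescaling. Fix a constant $c$ with $s(n), q(n) \leq n^c$ and set $\secp' \seteq \lceil \secp^{1/(3c)} \rceil$, so that $s(\secp') + q(\secp') \leq \secp$ for all sufficiently large $\secp$. Define $\Sigma_{\unc}$ as follows: $\Unc.\keygen(1^\secp)$ samples $\sk \la \bit^\secp$ uniformly; $\Unc.\Enc(1^\secp, \sk, m)$ computes $\sk' \la \widetilde{\Sigma}.\keygen(1^{\secp'})$ and $\ct' \la \widetilde{\Sigma}.\Enc(1^{\secp'}, \sk', m)$, sets $u \seteq \sk' \oplus \sk_{[1..s(\secp')]}$, and outputs $\ct \seteq u \concat \ct' \concat 0^{\secp - s(\secp') - q(\secp')}$; and $\Unc.\Dec(1^\secp, \sk, \ct)$ parses $\ct$ to recover $u$ and $\ct'$, computes $\sk' = u \oplus \sk_{[1..s(\secp')]}$, and runs $\widetilde{\Sigma}.\Dec(1^{\secp'}, \sk', \ct')$. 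By construction the secret key has length exactly $\secp$ and is uniformly distributed, every ciphertext has length exactly $\secp$, and perfect correctness is preserved by perfect correctness of $\widetilde{\Sigma}$ together with the deterministic invertibility of the XOR masking.

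Third, I would establish one-time unclonable IND-CPA security by a direct black-box reduction to $\widetilde{\Sigma}$ at parameter $\secp'$. Given a $\Sigma_{\unc}$-adversary $(\cA, \cB, \cC)$, the reduction's $\cA^*$ receives $\ct'$ from the $\widetilde{\Sigma}$ challenger, samples $u \la \bit^{s(\secp')}$ and the remaining $\secp - s(\secp')$ padding bits of $\sk$ uniformly, presents $\ct = u \concat \ct' \concat 0^{\cdots}$ to $\cA$, and forwards $u$ together with the padding bits as classical state to $\cB^*, \cC^*$. Upon receiving $\sk'$ from the $\widetilde{\Sigma}$ challenger in the final phase, $\cB^*, \cC^*$ set $\sk_{[1..s(\secp')]} \seteq u \oplus \sk'$, concatenate with the sampled padding bits to form the full $\sk \in \bit^\secp$, and invoke $\cB, \cC$. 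This simulation is perfect because in the real $\Sigma_{\unc}$ game $\sk_{[1..s(\secp')]}$ is uniform and independent of $\sk'$, so $u = \sk' \oplus \sk_{[1..s(\secp')]}$ is uniform and independent of $\sk'$; sampling $u$ first and defining $\sk_{[1..s(\secp')]}$ later yields the same joint distribution. Since $\secp' = \secp^{\Omega(1)}$ is super-logarithmic in $\secp$, any function negligible in $\secp'$ is negligible in $\secp$. The main obstacle is arranging the reduction so that the ciphertext can be simulated before $\sk'$ is revealed; the one-time-pad structure exactly decouples these two phases, and all remaining steps are routine verifications of correctness and security.
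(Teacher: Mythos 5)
Your proposal is correct and follows essentially the same route as the paper: first apply the perfect-correctness transformation of \cref{lem:unclone_ske_cor}, then use a one-time-pad mask on the inner secret key to make the outer key uniform, then rescale the security parameter and zero-pad so that the key and ciphertext lengths both equal $\secp$. The only cosmetic difference is that you merge the paper's two separate transformations (uniform-key masking, then length normalization) into a single construction, which changes nothing of substance; the reduction you sketch, resampling $u$ uniformly and recovering $\sk_{[1..s(\secp')]}=u\oplus\sk'$ after the key is revealed, is the same change-of-variables argument the paper uses.
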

We give the proof of \cref{prop:perfect_unc} in \cref{sec:app_perfect_unc}.
We define$D_{\secp}[m_0,m_1]$ as a quantum circuit that takes as input $\secp$-qubit quantum inputs $\rho$ and $\secp$-bit classical bits $x$, runs the quantum circuit $b\la\Unc.\Dec(1^\secp,x,\rho)$, and outputs $m_b$.
Now, we define $p$ as a polynomial large enough to run the circuit $D_{\secp}[m_0,m_1]$.

We describe the sequence of hybrids against the adversary $(\cA, \cB, \cC)$.
\begin{description}
        \item[$\mathsf{Hyb_0}$:] This is the original one-time unclonable IND-CPA security experiment.
    \begin{enumerate}
        \item The challenger samples $b\la\bit$.
        \item The challenger samples $x\la\bit^{\secp}$ and $R[i]\la\bit^{\ell(\secp)}$ for all $i\in[\secp]$.
        \item $\cA$ sends $(m_0,m_1)$ to the challenger.
        \item The challenger computes $\widehat{C_{\secp,p}}[m_b]_{\mathsf{off}}$,
        $ \{\lab[i,0]\}_{i\in[\secp]}$, and $ \{\lab[i,\beta]\}_{i\in\{\secp+1,\cdots,2\secp\},\beta\in\bit}$.
        \item The challenger samples $S[i]\la\bit^{\ell(\secp)}$ for all $i\in[\secp]$, and computes 
        \begin{align}
        &\Lab.\ct[i+\secp,x[i]]\seteq R[i]+\lab[i+\secp,x[i]]\\
        &\Lab.\ct[i+\secp,1-x[i]]\seteq S[i]+\lab[i+\secp,1-x[i]]
        \end{align}
        for all $i\in[\secp]$.
        \item The challenger sends 
        \begin{align}    
        \ct\seteq\left(\widehat{C_{\secp,p}}[m]_{\mathsf{off}}, 
          \{\lab[i,0]\}_{i\in[\secp]},
        \{\Lab.\ct[i,\beta]\}_{i\in \{\secp+1,\cdots,2\secp\},\beta\in\bit} 
        \right).
        \end{align}
        to $\cA$.
        \item $\cA$ produces $\rho_{\cB,\cC}$ and sends the corresponding registers to $\cB$ and $\cC$.
        \item $\cB$ and $\cC$ receives $\left(x,\{R[i]\}_{i\in[\secp]}\right)$, and outputs $b_\cB$ and $b_\cC$.
        \item The experiment outputs $1$ if $b_\cB=b_\cC=b$, and otherwise $0$.
    \end{enumerate}
    \item[$\mathsf{Hyb_1}$:]$ $
    \begin{enumerate}        
        \item The challenger samples $b\la\bit$.
        \item The challenger samples $x\la\bit^{\secp}$ and $R[i]\la\bit^{\ell(\secp)}$ for all $i\in[\secp]$. 
        \item The adversary $\cA$ sends $(m_0,m_1)$ to the challenger.
        \item The challenger computes $\unc.\ct_b\la\Unc.\Enc(1^\secp,x,b)$, where $\unc.\ct_b$ is the $\secp$-length quantum states.
        \item The challenger computes $\widehat{D_{\secp}}[m_0,m_1]_{\mathsf{off}}$,
        $ \{\lab[i,\unc.\ct_b[i]]\}_{i\in[\secp]}$, and 
        $
        \{\lab[i,\beta]\}_{i\in \{\secp+1,\cdots,2\secp\},\beta\in\bit}.
        $
        \item The challenger samples $S[i]\la\bit^{\ell(\secp)}$ for all $i\in[\secp]$, and computes 
        \begin{align}
        &\Lab.\ct[i+\secp,x[i]]\seteq R[i]+\lab[i+\secp,x[i]]\\ &\Lab.\ct[i+\secp,1-x[i]]\seteq S[i]+\lab[i+\secp,1-x[i]]
        \end{align}
        for all $i\in[\secp]$.
        \item The challenger sends 
        \begin{align}    
        \ct\seteq\left(\widehat{D_{\secp}}[m_0,m_1]_{\mathsf{off}}, 
              \{\lab[i,\unc.\ct_b[i]]\}_{i\in[\secp]},
        \{\Lab.\ct[i,\beta]\}_{i\in \{\secp+1,\cdots,2\secp\},\beta\in\bit} 
        \right)
        \end{align}
        to $\cA$.
        \item $\cA$ produces $\rho_{\cB,\cC}$ and sends the corresponding registers to $\cB$ and $\cC$.
        \item $\cB$ and $\cC$ receives $\left(x,\{R[i]\}_{i\in[\secp]}\right)$, and outputs $b_\cB$ and $b_\cC$.
        \item The experiment outputs $1$ if $b_\cB=b_\cC=b$, and otherwise $0$.
    \end{enumerate}
\end{description}
\cref{thm:expnad_unc} follows from the following \cref{prop:hyb_0_hyb_1_unc_expand,prop:hyb_1_unc_expand}.
\begin{proposition}\label{prop:hyb_0_hyb_1_unc_expand}
If $\Sigma_{\mathsf{RE}}$ is decomposable quantum randomized encoding, then
\begin{align}
    \abs{\Pr[\mathsf{Hyb_0}=1]-\Pr[\mathsf{Hyb_1}=1]}\leq\negl(\secp).
\end{align}
\end{proposition}

\begin{proposition}\label{prop:hyb_1_unc_expand}
    If there exists a one-time unclonable secret-key encryption $\Sigma_{\Unc}$ with single-bit plaintexts, then
    \begin{align}
        \abs{\Pr[\mathsf{Hyb}_1=1]}\leq \frac{1}{2}+\negl(\secp).
    \end{align}
\end{proposition}
\end{proof}

\begin{proof}[Proof of \cref{prop:hyb_0_hyb_1_unc_expand}]
Assume that there exists a QPT adversary $(\cA,\cB,\cC)$ such that
\begin{align}
    \abs{\Pr[\mathsf{Hyb_0}=1]-\Pr[\mathsf{Hyb_1}=1]}
\end{align}
is non-negligible.
Then, construct a QPT adversary $\widetilde{\cA}$ that breaks the security of $\Sigma_{\RE}$ as follows.
\begin{enumerate}
    \item $\widetilde{\cA}$ samples $b\la\bit$.
    \item $\widetilde{\cA}$ samples $x\la\bit^{\secp}$ and $R[i]\la\bit^{\ell(\secp)}$ for all $i\in[\secp]$.
    \item $\widetilde{\cA}$ receives $(m_0,m_1)$ from the $\cA$.
    \item $\widetilde{\cA}$ computes $\unc.\ct_b\la\Unc.\Enc(1^\secp,x,b)$.
    \item $\widetilde{\cA}$ sends $\left(\{C_{\secp,p}[m_b],0^{\secp},x\},  \{ D_{\secp}[m_0,m_1],\unc.\ct_b,x\} \right)$ to the challenger of $\Sigma_{\mathsf{RE}}$ in \cref{prop:ind_RE}.
    \item The challenger samples $b^*\la\bit$, and does the following.
    \begin{itemize}
        \item If $b^*=0$, then the challenger computes 
        \begin{align}
         \left(\widehat{C}_{\mathsf{off}}, \{\lab[i]\}_{i\in[2\secp]}\right)\la
         \RE.\Enc\left(1^\secp,C_{\secp,p}[m_b],\left(0^{\secp},x\right)\right),  
        \end{align}
         and sends $\left(\widehat{C}_{\mathsf{off}}, \{\lab[i]\}_{i\in[2\secp]}\right)$ to $\widetilde{\cA}$.
        \item If $b^*=1$, then the challenger computes 
        \begin{align}
         \left(\widehat{C}_{\mathsf{off}}, \{\lab[i]\}_{i\in[2\secp]}\right)\la
         \RE.\Enc\left(1^\secp,D_{\secp,p}[m_0,m_1],\left(\unc.\ct_b,x\right)\right),  
        \end{align}
        and sends $\left(\widehat{C}_{\mathsf{off}}, \{\lab[i]\}_{i\in[2\secp]}\right)$ to $\widetilde{\cA}$.
    \end{itemize}
    \item $\widetilde{\cA}$ samples $S[i]\la\bit^{\ell(\secp)}$ for all $i\in[\secp]$, computes 
    \begin{align}
    &\Lab.\ct[i+\secp,x[i]]\seteq R[i]+\lab[i+\secp]\\ &\Lab.\ct[i+\secp,1-x[i]]\seteq S[i]
    \end{align}
    for all $i\in[\secp]$,
    and runs $\cA$ on 
    \begin{align}    
        \ct\seteq\left(\widehat{C}_{\mathsf{off}}, 
  \{\lab[i]\}_{i\in[\secp]},
        \{\Lab.\ct[i,\beta]\}_{i\in \{\secp+1,\cdots,2\secp\},\beta\in\bit}
        \right),
    \end{align}
    and generates $\rho_{\cB,\cC}$.
    \item $\widetilde{\cA}$  sends the corresponding register to $\cB$ and $\cC$, respectively.
    \item $\widetilde{\cA}$ sends $x$ and $\{R[i]\}_{i\in[\secp]}$ to $\cB$ and $\cC$.
    \item $\cB$ and $\cC$ outputs $b_\cB$ and $b_\cC$, respectively.
    \item $\widetilde{\cA}$ outputs $1$ if $b=b_\cB=b_\cC$, and outputs $0$ otherwise.
\end{enumerate}
From the construction of $\widetilde{\cA}$, if $b^*=0$, $\widetilde{\cA}$ perfectly simulates the challenger of $\mathsf{Hyb_0}$.
Otherwise, $\widetilde{\cA}$ perfectly simulates the challenger of $\mathsf{Hyb_1}$.
Furthermore, we have 
\begin{align}
C_{\secp,p}[m_b](0^{\secp},x)=D_{\secp}[m_0,m_1](\unc.\ct_b,x)=m_b,
\end{align}
and the size of $C_{\secp,p}$ is equal to $D_{\secp}[m_0,m_1]$ for an appropriate polynomial $p$.
Therefore, if there exists a QPT adversary $(\cA,\cB,\cC)$ such that
\begin{align}
    \abs{\Pr[\mathsf{Hyb_0}=1]-\Pr[\mathsf{Hyb_1}=1]}
\end{align}
is non-negligible,
then it contradicts that $\Sigma_{\RE}$ satisfies security from \cref{prop:ind_RE}.    
\end{proof}

\begin{proof}[Proof of \cref{prop:hyb_1_unc_expand}]
    Assume that there exists a QPT adversary $(\cA,\cB,\cC)$ such that
    $
        \Pr[\mathsf{Hyb_1}=1]
    $
    is non-negligible.
    Then, construct a QPT adversary $(\widetilde{\cA},\widetilde{\cB},\widetilde{\cC})$ that breaks the unclonable IND-CPA security of $\Sigma_{\unc}$ as follows.
    \begin{enumerate}
        \item The challenger of $\Sigma_{\unc}$ samples $b\la\bit$.
        \item $\widetilde{\cA}$ samples $R[i,\beta]\la\bit^{\ell(\secp)}$ for all $i\in[\secp]$ and $\beta\in\bit$.
        \item $\widetilde{\cA}$ receives $(m_0,m_1)$ from the $\cA$.
        \item $\widetilde{\cA}$ sends $(0,1)$ to the challenger, and receives $\unc.\ct_b$, where $\unc.\ct_b\la \Unc.\Enc(1^\secp,x,b)$ and $x\la\bit^{\secp}$.
        \item $\widetilde{\cA}$ computes $\widehat{D_{\secp}}[m_0,m_1]_{\mathsf{off}}$, $\{\lab[i,\unc.\ct_b[i]]\}_{i\in[\secp]}$, and  $\{\lab[i,\beta]\}_{i\in \{\secp+1,\cdots,2\secp\},\beta\in\bit}$.
        \item $\widetilde{\cA}$ computes $\Lab.\ct[i+\secp,\beta]\seteq R[i,\beta]+\lab[i+\secp,\beta]$ for all $i\in[\secp]$ and $\beta\in\bit$.
        \item $\widetilde{\cA}$ runs $\cA$ on 
        \begin{align}
        \left( \widehat{D_{\secp}}[m_0,m_1]_{\mathsf{off}},
             \{\lab[i,\unc.\ct_b[i]]\}_{i\in[\secp]},
        \{\Lab.\ct[i,\beta]\}_{i\in \{\secp+1,\cdots,2\secp\},\beta\in\bit}
        \right),
        \end{align}
         obtains $\rho_{\cB,\cC}$, and sends the $\cB$ register and $\{R[i,\beta]\}_{i\in[\secp],\beta\in\bit}$ to $\cB$ and the $\cC$ register and $\{R[i,\beta]\}_{i\in[\secp],\beta\in\bit}$ to $\cC$.
        \item $\widetilde{\cB}$ (resp. $\widetilde{\cC}$) receives the secret-key $x$ from the challenger of $\Sigma_{\unc}$ and sends $(x,\{R[i,x[i]]\}_{i\in[\secp]})$ and the $\cB$ register (resp. $\cC$ register) to $\cB$ (resp. $\cC$). 
        \item The experiment outputs $1$ if $b=b_\cB=b_\cC$ where $b_\cB$ and $b_\cC$ are the outputs of $\cB$ and $\cC$, respectively.
    \end{enumerate}
From the construction of $(\widetilde{\cA},\widetilde{\cB},\widetilde{\cC})$, it perfectly simulates the challenger of $\mathsf{Hyb_1}$.
Therefore, if there exists some QPT adversaries $(\cA,\cB,\cC)$ such that
$
        \Pr[\mathsf{Hyb_1}=1]
$
is non-negligible, it contradicts that $\Sigma_{\unc}$ satisfies unclonable IND-CPA security.
\end{proof}
\end{proof}

\if0
\paragraph{Notations.}
\begin{itemize}
    \item
    We assume that there exists one-time unclonable secret-key encryption for single-bit plaintexts.
    Then, from \cref{prop:perfect_unc}, there exists one-time unclonable secret-key encryption with perfect correctness, and the secret key is uniformly randomly sampled.
    Let $\Sigma_\unc\seteq \Unc. (\keygen,\Enc,\Dec)$ be the one-time unclonable secret-key encryption with single-bit plaintexts, where it satisfies perfect correctness and the secret key is uniformly randomly sampled.
    \item Let $s$ be a polynomial such that $s(\secp)$ is the length of secret-key that $\Unc.\keygen(1^\secp)$ generates for all security parameters $\secp\in\N$.
    Let $q$ be a polynomial such that $q(\secp)$ is the length of quantum ciphertexts that $\Unc.\Enc(1^\secp,\unc.\sk,m)$ generates for all security parameters $\secp\in\N$.
    \item For polynomial $p$ and $\secp\in\N$, let $U_{3p(\secp),p(\secp)}$
    be a universal quantum circuit over $\{I,H,CNOT,T\}$ (\cref{def:universal_circuit}) acting on $3p(\secp)+q(\secp)$-length input for some polynomial $q$, where the last $q(\secp)$ bit encodes circuit information.
    \item 
    For a security parameter $\secp$, polynomial $p$, and $m_0,m_1\in\bit^{n(\secp)}$, let $\Dec_{\secp,p}[m_0,m_1]$ be a quantum circuit family consisting of $\{I,H,CNOT,T\}$ with $3p(\secp)$-length input and $p(\secp)$-depth that first implements $\Unc.\Dec$, obtains $b$, and outputs $m_b$.
    More formally, for any $D[m_0,m_1]\in\Dec_{\secp,p}[m_0,m_1]$, it holds that
    \begin{align}
        \Pr[m_b\la D[m_0,m_1]((\unc.\ct\otimes Z) \otimes (X\otimes Y) \otimes 0^{p(\secp)}):
        \begin{array}{ll}
            \unc.\ct\la\Unc.\Enc(1^\secp,X,b)
        \end{array}
        ]= 1
    \end{align}
    for any $m\in\bit^{n(\secp)}$, $X\in\bit^{s(\secp)}$, $Y\in\bit^{p(\secp)-s(\secp)}$ and $Z\in\bit^{p(\secp)-q(\secp)}$.
    \item 
    Let $D_{\secp,p}[m_0,m_1]\subseteq \bit^{q(\secp)}$ be a family such that for all $d_{m_0,m_1}\in D_{\secp,p}[m_0,m_1]$, there exists $D[m_0,m_1]\in\Dec_{\secp,p}[m_0,m_1]$ such that
    \begin{align}
        U_{3p(\secp),p(\secp)}(\ket{\psi}\otimes \ket{d_{m_0,m_1}})= D[m_0,m_1]\ket{\psi} \otimes \ket{d_{m_0,m_1}}. 
    \end{align}
    \item 
    For $\secp\in\N$, polynomial $p$, and $m\in\bit^{n(\secp)}$, let $\cC_{\secp,p}[m]$ be a quantum circuit family consisting of $\{I,H,CNOT,T\}$ on $3p(\secp)$-length input and $p(\secp)$-depth that outputs $m$.
    In other words, for any $C\in \cC_{\secp,p}[m]$, we have
    \begin{align}
        \Pr[m \la C(x\otimes k \otimes 0^{p(\secp)} )]=1
    \end{align}
    for any $x\in\bit^{p(\secp)}$ and $k\in\bit^{p(\secp)}$.
    \item Let $C_{\secp,p}[m]\subseteq \bit^{q(\secp) }$ be a family such that for all $c_m\in C_{\secp,p}[m]$, there exists a circuit $C[m]\in\cC_{\secp,p}[m]$ such that
    \begin{align}
        U_{3p(\secp),p(\secp)}(\ket{\psi}\otimes \ket{c_m})=C[m]\ket{\psi}\otimes\ket{c_m}.
    \end{align}
    \item Let $\Sigma_{\RE}\seteq \RE.(\Enc,\Dec)$ be a decomposable quantum randomized encoding.
    Given quantum circuit $C$ and $n_1$-length quantum input and $n_2$-length classical input $\mathbf{q}$ and $x$,
    the encoding $\widehat{C}(\mathbf{q},x)$ can be written
    \begin{align}
        (\widehat{C}_{\mathsf{off}},
        \widehat{C}_{1},\cdots,\widehat{C}_{n_1+n_2})\la \widehat{C}(\mathbf{q},x,r,e),
    \end{align}
    where $r$ is uniformly ransom string and $e$ is some quantum state.
    From decomposability, $\widehat{C}_{\mathsf{off}}$ acts only on $r$ and $e$.
    Moreover, 
    each $\widehat{C_i}(x_i,r_i)$ and $\widehat{C_i}(\mathbf{q}_i,r,e)$ can be computed separately, and $\widehat{C_i}(x_i,r)$ is classical string for $i\in\{n_1+1,\cdots ,n_1+n_2\}$.
    For any quantum circuit $C$, we write $\lab[i,x_i]= \widehat{C_i}(x_i,r_i)$ and $\lab[i,\mathbf{q}_i]=\widehat{C_i}(\mathbf{q}_i,r,e)$. 
\end{itemize}
\paragraph{Construction.}
We give a construction of one-time unclonable secret-key encryption  $\Sigma\seteq(\keygen,\Enc,\Dec)$ with $n(\secp)$-length plaintexts by using decomposable quantum randomized encoding.
In the construction, we only use decomposable quantum randomized encoding.
The construction is secure as long as the underlying decomposable quantum randomized encoding is secure and there exists one-time unclonable secret-key encryption.
\begin{description}
    \item[$\keygen(1^\secp)$:]$ $
    \begin{itemize}
        \item Sample $x\la\bit^{p(\secp)}$.
        \item Sample $R[i]\la\bit^{\ell(\secp)}$ for all $i\in[p(\secp)]$.
        \item Output $\sk\seteq \left(x,\{R[i]\}_{i\in[p(\secp)]}\right)$.
    \end{itemize}
    \item[$\Enc(1^\secp,\sk,m)$:]$ $
    \begin{itemize}
        \item Parse $\sk=\left(x,\{R[i]\}_{i\in[p(\secp)]}\right)$.
        \item Let $c_m\in C_{\secp,p}[m]$.
        \item Compute $\widehat{U}_{3p(\secp),p(\secp)\mathsf{off}}$.
        \item Compute 
        \begin{align}
        \begin{array}{ll}
             &  \{\lab[i,0]\}_{i\in[p(\secp)]},\,\,\,\mbox{and}\,\,\,
        \{\lab[i,b]\}_{i\in \{p(\secp)+1,\cdots,2p(\secp)\},b\in\bit} \\
             &  \{\lab[i,0]\}_{i\in \{2p(\secp)+1,\cdots,3p(\secp)\}},\,\,\,\mbox{and}\,\,\,\{\lab[i,c_m[i]]\}_{i\in\{3p(\secp)+1,\cdots,3p(\secp)+q(\secp)\}}.
        \end{array}
        \end{align}
        \item Sample $S[i]\la\bit^{\ell(\secp)}$ for all $i\in[p(\secp)]$.
        \item Compute $\Lab.\ct[i+p(\secp),x[i]]\seteq R[i]+\lab[i+p(\secp),x[i]]$ and $\Lab.\ct[i+p(\secp),x[i]+1]\seteq S[i]+\lab[i+p(\secp),x[i]+1]$ for all $i\in[p(\secp)]$.
        \item Output 
        \begin{align}    
        \ct\seteq\left(\widehat{U}_{3p(\secp),p(\secp)\mathsf{off}}, 
        \begin{array}{ll}
            &  \{\lab[i,0]\}_{i\in[p(\secp)]},
        \{\Lab.\ct[i,b]\}_{i\in \{p(\secp)+1,\cdots,2p(\secp)\},b\in\bit} \\
             &  \{\lab[i,0]\}_{i\in \{2p(\secp)+1,\cdots,3p(\secp)\}},\{\lab[i,c_m[i]]\}_{i\in\{3p(\secp)+1,\cdots,3p(\secp)+q(\secp)\}}
        \end{array}
        \right).
        \end{align}
    \end{itemize}
    \item[$\Dec(1^\secp,\sk,\ct)$:]$ $
    \begin{itemize}
        \item Parse $\sk=\left(x,\{R[i]\}_{i\in[p(\secp)]}\right)$
        and
        \begin{align}    
        \ct=\left(\widehat{U}_{3p(\secp),p(\secp)\mathsf{off}}, 
        \begin{array}{ll}
             &  \{\lab[i,0]\}_{i\in[p(\secp)]},
        \{\Lab.\ct[i,b]\}_{i\in \{p(\secp)+1,\cdots,2p(\secp)\},b\in\bit} \\
             &  \{\lab[i,0]\}_{i\in \{2p(\secp)+1,\cdots,3p(\secp)\}},\{\lab[i,c_m[i]]\}_{i\in\{3p(\secp)+1,\cdots,3p(\secp)+q(\secp)\}}
        \end{array}
        \right).
        \end{align}
        \item Compute $ \lab[i+p(\secp),x[i]\seteq\Lab.\ct[i+p(\secp),x[i]]+R[i]]$ for all $i\in[p(\secp)]$.
        \item Compute 
        \begin{align}
            \RE.\Dec\left(\widehat{U}_{3p(\secp),p(\secp)\mathsf{off}},
            \begin{array}{ll}
                &  \{\lab[i,0]\}_{i\in[p(\secp)]},
        \{\lab[i,x[i]]\}_{i\in \{p(\secp)+1,\cdots,2p(\secp)\}} \\
             &  \{\lab[i,0]\}_{i\in \{2p(\secp)+1,\cdots,3p(\secp)\}},\{\lab[i,c_m[i]]\}_{i\in\{3p(\secp)+1,\cdots,3p(\secp)+q(\secp)\}}
            \end{array}
            \right)
        \end{align}
        and outputs its output.
    \end{itemize}
\end{description}
\begin{lemma}\label{lem:extend_eff}
    $\Sigma$ satisfies efficiency if $\Sigma_{\mathsf{RE}}$ is decomposable quantum randomized encoding.
\end{lemma}
\begin{lemma}\label{lem:extend_cor}
    $\Sigma$ satisfies correctness if $\Sigma_{\mathsf{RE}}$ is decomposable quantum randomized encoding.
\end{lemma}
\begin{lemma}\label{lem:expand_sec}
    If $\Sigma_{\mathsf{RE}}$ is decomposable quantum randomized encoding and there exists one-time unclonable secret-key encryption with single-bit plaintexts,
    $\Sigma$ satisfies one-time IND-CPA security for some polynomial $p$.
\end{lemma}

\begin{lemma}\label{thm:expnad_unc}
    If $\Sigma_{\mathsf{RE}}$ is decomposable quantum randomized encoding and there exists one-time unclonable secret-key encryption with single-bit plaintexts,
    $\Sigma$ satisfies one-time unclonable IND-CPA security for some polynomial $p$.  
\end{lemma}
\cref{lem:extend_eff} straightforwardly follows.
\cref{lem:extend_cor} follows because for any $c_m\in C_{\secp,p}[m]$ there exists $C[m]\subseteq \cC_{\secp,p}[m]$ such that
\begin{align}
    U_{3p(\secp),p(\secp)}(\ket{\psi}\otimes \ket{c_m})=C[m](\ket{\psi})\otimes \ket{c_m}
\end{align}
for any $\ket{\psi}$ and $C[m]$ satisfies
\begin{align}
    \Pr[m\la C[m](x\otimes  k\otimes 0^{p(\secp)})]=1
\end{align}
for any $x\in\bit^{p(\secp)}$ and $k\in\bit^{p(\secp)}$.

The proof of \cref{lem:expand_sec} is the same as \cref{thm:expnad_unc}, and thus we skip the proof.
\begin{proof}[Proof of \cref{thm:expnad_unc}]
For appropriate polynomial $p$, there is a quantum circuit $D_{m_0,m_1} \in \Dec_{\secp,p}[m_0,m_1]$.
From the property of $U_{3p(\secp),p(\secp)}$, there exists a subset $D_{\secp,p} \subseteq\bit^{q(\secp)}$ such that for all $d_{m_0,m_1} \in D_{\secp,p}[m_0,m_1]$ there exists a quantum circuit $D_{m_0,m_1} \in \Dec_{\secp,p}[m_0,m_1]$
that satisfies
\begin{align}
U_{3p(\secp),p(\secp)}(\ket{\psi} \otimes\ket{d_{m_0,m_1}}) = D_{m_0,m_1} \ket{\psi} \otimes \ket{d_{m_0,m_1}}.
\end{align}
We consider a fixed such $d_{m_0,m_1}$.
We consider a fixed string $c_{m_b}\in C_{\secp,p}[m_b]$.
We describe the sequence of hybrids against the adversary $(\cA, \cB, \cC)$.
\begin{description}
        \item[$\mathsf{Hyb_0}$:] This is the original one-time unclonable IND-CPA security experiment.
    \begin{enumerate}
        \item The challenger samples $b\la\bit$.
        \item The challenger samples $x\la\bit^{p(\secp)}$ and $R[i]\la\bit^{\ell(\secp)}$ for all $i\in[p(\secp)]$.
        \item $\cA$ sends $(m_0,m_1)$ to the challenger.
        \item The challenger computes $\widehat{U}_{3p(\secp),p(\secp)\mathsf{off}}$
        and
        \begin{align}
        \begin{array}{ll}
             &  \{\lab[i,0]\}_{i\in[p(\secp)]},\,\,\,\mbox{and}\,\,\,
        \{\lab[i,\beta]\}_{i\in \{p(\secp)+1,\cdots,2p(\secp)\},\beta\in\bit} \\
             &  \{\lab[i,0]\}_{i\in \{2p(\secp)+1,\cdots,3p(\secp)\}},\,\,\,\mbox{and}\,\,\,\{\lab[i,c_{m_b}[i]]\}_{i\in\{3p(\secp)+1,\cdots,3p(\secp)+q(\secp)\}}.
        \end{array}
        \end{align}
        \item The challenger samples $S[i]\la\bit^{\ell(\secp)}$ for all $i\in[p(\secp)]$, and computes $\Lab.\ct[i+p(\secp),x[i]]\seteq R[i]+\lab[i+p(\secp),x[i]]$ and $\Lab.\ct[i+p(\secp),x[i]+1]\seteq S[i]+\lab[i+p(\secp),x[i]+1]$ for all $i\in[p(\secp)]$.
        \item The challenger sends 
        \begin{align}    
        \ct\seteq\left(\widehat{U}_{3p(\secp),p(\secp)\mathsf{off}}, 
        \begin{array}{ll}
            &  \{\lab[i,0]\}_{i\in[p(\secp)]},
        \{\Lab.\ct[i,\beta]\}_{i\in \{p(\secp)+1,\cdots,2p(\secp)\},\beta\in\bit} \\
             &  \{\lab[i,0]\}_{i\in \{2p(\secp)+1,\cdots,3p(\secp)\}},\{\lab[i,c_{m_b}[i]]\}_{i\in\{3p(\secp)+1,\cdots,3p(\secp)+q(\secp)\}}
        \end{array}
        \right).
        \end{align}
        to $\cA$.
        \item $\cA$ produces $\rho_{\cB,\cC}$ and sends the corresponding registers to $\cB$ and $\cC$.
        \item $\cB$ and $\cC$ receives $\left(x,\{R[i]\}_{i\in[p(\secp)]}\right)$, and outputs $b_\cB$ and $b_\cC$.
        \item The experiment outputs $1$ if $b_\cB=b_\cC=b$, and otherwise $0$.
    \end{enumerate}
    \item[$\mathsf{Hyb_1}$:]$ $
    \begin{enumerate}        \item The challenger samples $b\la\bit$.
        \item The challenger samples $x\la\bit^{p(\secp)}$ and $R[i]\la\bit^{\ell(\secp)}$ for all $i\in[p(\secp)]$. 
        We write $X$ to mean the first $s(\secp)$ bit of $x$.
        \item The adversary $\cA$ sends $(m_0,m_1)$ to the challenger.
        \item The challenger computes  $\unc.\ct_b\la\Unc.\Enc(1^\secp,X,b)$, where $\unc.\ct_b$ is the $q(\secp)$-length quantum states.
        For $i\in\{q(\secp)+1,\cdots, p(\secp)\}$, set $\unc.\ct_b[i]\seteq 0$.
        \item The challenger computes $\widehat{U}_{3p(\secp),p(\secp)\mathsf{off}}$ and
        \begin{align}
        \begin{array}{ll}
             &  \{\lab[i,\unc.\ct_b[i]]\}_{i\in[p(\secp)]},\,\,\,\mbox{and}\,\,\,
        \{\lab[i,\beta]\}_{i\in \{p(\secp)+1,\cdots,2p(\secp)\},\beta\in\bit} \\
             &  \{\lab[i,0]\}_{i\in \{2p(\secp)+1,\cdots,3p(\secp)\}},\,\,\,\mbox{and}\,\,\,\{\lab[i,d_{m_0,m_1}[i]]\}_{i\in\{3p(\secp)+1,\cdots,3p(\secp)+q(\secp)\}}.
        \end{array}
        \end{align}
        \item The challenger samples $S[i]\la\bit^{\ell(\secp)}$ for all $i\in[p(\secp)]$, and computes $\Lab.\ct[i+p(\secp),x[i]]\seteq R[i]+\lab[i+p(\secp),x[i]]$ and $\Lab.\ct[i+p(\secp),x[i]+1]\seteq S[i]+\lab[i+p(\secp),x[i]+1]$ for all $i\in[p(\secp)]$.
        \item The challenger sends 
        \begin{align}    
        \ct\seteq\left(\widehat{U}_{3p(\secp),p(\secp)\mathsf{off}}, 
        \begin{array}{ll}
            &  \{\lab[i,\unc.\ct_b[i]]\}_{i\in[p(\secp)]},
        \{\Lab.\ct[i,\beta]\}_{i\in \{p(\secp)+1,\cdots,2p(\secp)\},\beta\in\bit} \\
             &  \{\lab[i,0]\}_{i\in \{2p(\secp)+1,\cdots,3p(\secp)\}},\{\lab[i,d_{m_0,m_1}[i]]\}_{i\in\{3p(\secp)+1,\cdots,3p(\secp)+q(\secp)\}}
        \end{array}
        \right).
        \end{align}
        to $\cA$.
        \item $\cA$ produces $\rho_{\cB,\cC}$ and sends the corresponding registers to $\cB$ and $\cC$.
        \item $\cB$ and $\cC$ receives $\left(x,\{R[i]\}_{i\in[p(\secp)]}\right)$, and outputs $b_\cB$ and $b_\cC$.
        \item The experiment outputs $1$ if $b_\cB=b_\cC=b$, and otherwise $0$.
    \end{enumerate}
\end{description}
\cref{thm:expnad_unc} follows from the following \cref{prop:hyb_0_hyb_1_unc_expand,prop:hyb_1_unc_expand}.
\begin{proposition}\label{prop:hyb_0_hyb_1_unc_expand}
If $\Sigma_{\mathsf{RE}}$ is decomposable quantum randomized encoding, then
\begin{align}
    \abs{\Pr[\mathsf{Hyb_0}=1]-\Pr[\mathsf{Hyb_1}=1]}\leq\negl(\secp).
\end{align}
\end{proposition}

\begin{proposition}\label{prop:hyb_1_unc_expand}
    If there exists a one-time unclonable secret-key encryption $\Sigma_{\Unc}$ with single-bit plaintexts, then
    \begin{align}
        \abs{\Pr[\mathsf{Hyb}_1=1]}\leq \frac{1}{2}+\negl(\secp).
    \end{align}
\end{proposition}
\end{proof}

\begin{proof}[Proof of \cref{prop:hyb_0_hyb_1_unc_expand}]
Assume that there exists a QPT adversary $(\cA,\cB,\cC)$ and constant $K$ such that
\begin{align}
    \abs{\Pr[\mathsf{Hyb_0}=1]-\Pr[\mathsf{Hyb_1}=1]}\geq 1/\secp^K
\end{align}
for all sufficiently large security parameters $\secp\in\N$.
Then, construct a QPT adversary $\widetilde{\cA}$ that breaks the security of $\Sigma_{\RE}$ as follows.
\begin{enumerate}
    \item $\widetilde{\cA}$ samples $b\la\bit$.
    \item $\widetilde{\cA}$ samples $x\la\bit^{p(\secp)}$ and $R[i]\la\bit^{\ell(\secp)}$ for all $i\in[p(\secp)]$. We write $X$ to mean the first $s(\secp)$ bits of $x$.
    \item $\widetilde{\cA}$ receives $(m_0,m_1)$ from the $\cA$.
    \item $\widetilde{\cA}$ computes $\unc.\ct_b\la\Unc.\Enc(1^\secp,X,m_b)$, where $\unc.\ct_b$ is the $q$-length quantum state.
    \item $\widetilde{\cA}$ sends $\left(\{U_{3p(\secp),p(\secp)},0^{p(\secp)},x,0^{p(\secp)},c_{m_b}\},  \{ U_{3p(\secp),p(\secp)},\left(\unc.\ct_b,0^{p(\secp)-q(\secp)}\right),x,0^{p(\secp)},d_{m_0,m_1} \} \right)$ to the challenger of $\Sigma_{\mathsf{RE}}$ in \cref{prop:ind_RE}.
    \item The challenger samples $b^*\la\bit$, and does the following.
    \begin{itemize}
        \item If $b^*=0$, then the challenger computes 
        \begin{align}
         \left(\widehat{U}_{3p(\secp),p(\secp)\mathsf{off}}, \{\lab[i]\}_{i\in[3p(\secp)+q(\secp)]}\right)\la
         \RE.\Enc\left(1^\secp,U_{3p(\secp),p(\secp)},\left(0^{p(\secp)},x,0^{p(\secp)},c_{m_b}\right)\right),  
        \end{align}
         and sends $\left(\widehat{U}_{3p(\secp),p(\secp)\mathsf{off}}, \{\lab[i]\}_{i\in[3p(\secp)+q(\secp)]}\right)$ to $\widetilde{\cA}$.
        \item If $b^*=1$, then the challenger computes 
        \begin{align}
         \left(\widehat{U}_{3p(\secp),p(\secp)\mathsf{off}}, \{\lab[i]\}_{i\in[3p(\secp)+q(\secp)]}\right)\la
         \RE.\Enc\left(1^\secp,U_{3p(\secp),p(\secp)},\left(\unc.\ct_b,0^{p(\secp)-q(\secp)},x,0^{p(\secp)},d_{m_0,m_1}\right)\right),  
        \end{align}
        and sends $\left(\widehat{U}_{3p(\secp),p(\secp)\mathsf{off}}, \{\lab[i]\}_{i\in[3p(\secp)+q(\secp)]}\right)$ to $\widetilde{\cA}$.
    \end{itemize}
    \item $\widetilde{\cA}$ samples $S[i]\la\bit^{\ell(\secp)}$ for all $i\in[p(\secp)]$, computes $\Lab.\ct[i+p(\secp),x[i]]\seteq R[i]+\lab[i+p(\secp)]$ and $\Lab.\ct[i+p(\secp),x[i]+1]\seteq S[i]$ for all $i\in[p(\secp)]$, and runs $\cA$ on 
    \begin{align}    
        \ct\seteq\left(\widehat{U}_{3p(\secp),p(\secp)\mathsf{off}}, 
        \begin{array}{ll}
            &  \{\lab[i]\}_{i\in[p(\secp)]},
        \{\Lab.\ct[i,\beta]\}_{i\in \{p(\secp)+1,\cdots,2p(\secp)\},\beta\in\bit} \\
             &  \{\lab[i]\}_{i\in \{2p(\secp)+1,\cdots,3p(\secp)+q(\secp)\}}
        \end{array}
        \right),
    \end{align}
    and generates $\rho_{\cB,\cC}$.
    \item $\widetilde{\cA}$  sends the corresponding register to $\cB$ and $\cC$, respectively.
    \item $\widetilde{\cA}$ sends $x$ and $\{R[i]\}_{i\in[p(\secp)]}$ to $\cB$ and $\cC$.
    \item $\cB$ and $\cC$ outputs $b_\cB$ and $b_\cC$, respectively.
    \item $\widetilde{\cA}$ outputs $1$ if $b=b_\cB=b_\cC$, and outputs $0$ otherwise.
\end{enumerate}
From the construction of $\widetilde{\cA}$, if $b^*=0$, $\widetilde{\cA}$ perfectly simulates the challenger of $\mathsf{Hyb_0}$.
Otherwise, $\widetilde{\cA}$ perfectly simulates the challenger of $\mathsf{Hyb_1}$.
Furthermore, we have 
\begin{align}
U_{3p(\secp),p(\secp)}(0^{p(\secp)},x,0^{p(\secp)},c_b)=U_{3p(\secp),p(\secp)}(\unc.\ct_b,0^{p(\secp)-q(\secp)},x,0^{p(\secp)},d_{m_0,m_1})=m_b.
\end{align}
Therefore, if there exists a QPT adversary $(\cA,\cB,\cC)$ and constant $K$ such that
\begin{align}
    \abs{\Pr[\mathsf{Hyb_0}=1]-\Pr[\mathsf{Hyb_1}=1]}\geq 1/\secp^K
\end{align}
for all sufficiently large security parameters $\secp\in\N$,
then it contradicts that $\Sigma_{\RE}$ satisfies security from \cref{prop:ind_RE}.    
\end{proof}

\begin{proof}[Proof of \cref{prop:hyb_1_unc_expand}]
    Assume that there exists a QPT adversary $(\cA,\cB,\cC)$ and constant $K$ such that
    \begin{align}
        \Pr[\mathsf{Hyb_1}=1]\geq \frac{1}{2}+1/\secp^K
    \end{align}
    for all sufficiently large security parameters $\secp\in\N$.
    Then, construct a QPT adversary $(\widetilde{\cA},\widetilde{\cB},\widetilde{\cC})$ that breaks the unclonable IND-CPA security of $\Sigma_{\unc}$ as follows.
    \begin{enumerate}        \item The challenger of $\Sigma_{\unc}$ samples $b\la\bit$.
        \item $\widetilde{\cA}$ samples $Y\la\bit^{p(\secp)-s(\secp)}$ and $R[i,\beta]\la\bit^{\ell(\secp)}$ for all $i\in[p(\secp)]$ and $\beta\in\bit$.
        \item $\widetilde{\cA}$ receives $(m_0,m_1)$ from the $\cA$.
        \item $\widetilde{\cA}$ sends $(0,1)$ to the challenger, and receives $\unc.\ct_b$, where $\unc.\ct_b\la \Unc.\Enc(1^\secp,X,b)$ and $X\la\bit^{s(\secp)}$.
        \item $\widetilde{\cA}$ computes $\widehat{U}_{3p(\secp),p(\secp)\mathsf{off}}$ and
        \begin{align}
        \begin{array}{ll}
             &  \{\lab[i,\unc.\ct_b[i]]\}_{i\in[p(\secp)]},\,\,\,\mbox{and}\,\,\,
        \{\lab[i,\beta]\}_{i\in \{p(\secp)+1,\cdots,2p(\secp)\},\beta\in\bit} \\
             &  \{\lab[i,0]\}_{i\in \{2p(\secp)+1,\cdots,3p(\secp)\}},\,\,\,\mbox{and}\,\,\,\{\lab[i,d_{m_0,m_1}[i]]\}_{i\in\{3p(\secp)+1,\cdots,3p(\secp)+q(\secp)\}}.
        \end{array}
        \end{align}
        \item $\widetilde{\cA}$ computes $\Lab.\ct[i+p(\secp),\beta]]\seteq R[i,\beta]+\lab[i+p(\secp),\beta]$ for all $i\in[p(\secp)]$ and $\beta\in\bit$.
        \item $\widetilde{\cA}$ runs $\cA$ on 
        \begin{align}
        \left( \widehat{U}_{3p(\secp),p(\secp)},
        \begin{array}{ll}
             \{\lab[i,\unc.\ct_b[i]]\}_{i\in[p(\secp)]},
        \{\Lab.\ct[i,\beta]\}_{i\in \{p(\secp)+1,\cdots,2p(\secp)\},\beta\in\bit} \\
             \{\lab[i,0]\}_{i\in \{2p(\secp)+1,\cdots,3p(\secp)\}},
             \{\lab[i,d_{m_0,m_1}[i]]\}_{i\in\{3p(\secp)+1,\cdots,3p(\secp)+q(\secp)\}}
        \end{array}
        \right),
        \end{align}
         obtains $\rho_{\cB,\cC}$, and sends the $\cB$ register and $(Y,\{R[i,\beta]\}_{i\in[p(\secp)],\beta\in\bit})$ to $\cB$ and $\cC$ register and $(Y,\{R[i,\beta]\}_{i\in[p(\secp)],\beta\in\bit})$ to $\cC$.
        \item $\widetilde{\cB}$ (resp. $\widetilde{\cC}$) receives the secret-key $X$ from the challenger of $\Sigma_{\unc}$, sets $x\seteq X||Y$, and sends $(x,\{R[i,x[i]]\}_{i\in[p(\secp)]})$ and the $\cB$ register (resp. $\cC$ register) to $\cB$ (resp. $\cC$). 
        \item The experiment outputs $1$ if $b=b_\cB=b_\cC$ where $b_\cB$ and $b_\cC$ are the output of $\cB$ and $\cC$, respectively.
    \end{enumerate}
From the construction of $(\widetilde{\cA},\widetilde{\cB},\widetilde{\cC})$, it perfectly simulates the challenger of $\mathsf{Hyb_1}$.
Therefore, if there exists some QPT adversaries $(\cA,\cB,\cC)$ and constant $K$ such that
    \begin{align}
        \Pr[\mathsf{Hyb_1}=1]\geq 1/2+1/\secp^K
    \end{align}
     for all sufficiently large security parameters $\secp\in\N$, it contradicts that $\Sigma_{\unc}$ satisfies unclonable IND-CPA security.
\end{proof}
\end{proof}
\fi

\if0
\paragraph{Universal Construction via Decomposable Quantum Randomized Encoding}
Below, we give another universal construction of a one-time unclonable secret-key encryption for $n(\secp)$-bit plaintexts by using decomposable quantum randomized encoding.
Let us introduce some notations.
\paragraph{Notations.}
\begin{itemize}
    \item
    We assume that there exists one-time unclonable secret-key encryption for $n(\secp)$-plaintext spaces.
    Then, from \cref{prop:perfect_unc}, there exists one-time unclonable secret-key encryption with perfect correctness, and the secret key is uniformly randomly sampled.
    Let $\Sigma_\unc\seteq \Unc. (\keygen,\Enc,\Dec)$ be the one-time unclonable secret-key encryption with $n(\secp)$-length plaintext spaces, where it satisfies perfect correctness and the secret key is uniformly randomly sampled.
    \item Let $s$ be a polynomial such that $s(\secp)$ is the length of secret-key that $\Unc.\keygen(1^\secp)$ generates for all security parameters $\secp\in\N$.
    Let $q$ be a polynomial such that $q(\secp)$ is the length of quantum ciphertexts that $\Unc.\Enc(1^\secp,\unc.\sk,m)$ generates for all security parameters $\secp\in\N$.
    \item Let $D$ be the quantum circuit that $\Unc.\Dec$ runs. In other words, the $D$ is the circuit that
    takes as input $s(\secp)$-length classical string $X$ and $q(\secp)$-length quantum state $\unc.\ct$ and outputs $m$, where $X\la\bit^{s(\secp)}$ and $\Unc.\ct\la\Unc.\Enc(1^\secp,X,m)$.
    \item 
    We write $\cD_p$ to mean the general quantum circuit takes as input $p(\secp)$ length classical string $x$ and $p(\secp)$ quantum states $\sigma$, and runs the circuit $D$ on the first $s(\secp)$-length of classical string $x$ and the first $q(\secp)$-length $\sigma$, and outputs its output.
    \item 
    \color{red}
    For $m\in\bit^{n(\secp)}$, let $C_{\secp,p}[m]$ be a general quantum circuit that takes as input $p(\secp)$-length classical string and $p(\secp)$-length quantum states $\sigma$, and always outputs $m$. Furthermore, $C_{\secp,p}[m]$ is padded to size $p(\secp)$ so that for appropriate $p$, $\cD_p$ can be implemented by some circuit whose topology is equal to $C[m]_{\secp,p}$ for all security parameters $\secp\in\N$.
    \taiga{I do not want to explain the detail of how to pad the circuit $C_{\secp,p}[m]$ if possible}.
    \color{black}
    \item Let $\Sigma_{\RE}\seteq \RE.(\Enc,\Dec)$ be a decomposable quantum randomized encoding.
    Given quantum circuit $C$ and $n_1$-length classical inputs $x$ and $n_2$-length quantum input  $x$ and $\mathbf{q}$,
    the encoding $\widehat{C}(\mathbf{q},x)$ can be written
    \begin{align}
        (\widehat{C}_{\mathsf{off}},
        \widehat{C}_{1},\cdots,\widehat{C}_{n_1+n_2})\la \widehat{C}(x,\mathbf{q},r,e),
    \end{align}
    where $r$ is uniformly ransom string and $e$ is some quantum state.
    From decomposability, $\widehat{C}_{\mathsf{off}}$ acts only on $r$ and $e$.
    Moreover, 
    each $\widehat{C_i}(x_i,r_i)$ and $\widehat{C_i}(\mathbf{q}_i,r,e)$ can be computed separately, and $\widehat{C_i}(x_i,r)$ is classical string for $i\in[n_1]$.
    For any quantum circuit $C$, we write $\lab[i,x_i]= \widehat{C_i}(x_i,r_i)$ and $\lab[i,\mathbf{q}_i]=\widehat{C_i}(\mathbf{q}_i,r,e)$. 
\end{itemize}

\paragraph{Construction}
We give a construction of one-time unclonable secret-key encryption with $n(\secp)$-bit plaintexts $\Sigma\seteq(\keygen,\Enc,\Dec)$.
\begin{description}
    \item[$\keygen(1^\secp)$:]$ $
    \begin{itemize}
        \item Sample $x\la\bit^{p(\secp)}$.
        \item Sample $R[i]\la\bit^{\ell(\secp)}$ for all $i\in[p(\secp)]$.
        \item Output $\sk\seteq \left(x,\{R[i]\}_{i\in[p(\secp)]}\right)$.
    \end{itemize}
    \item[$\Enc(1^\secp,\sk,m)$:]$ $
    \begin{itemize}
        \item Compute $\widehat{C_{\secp,p}[m]}_{\mathsf{off}}$.
        \item Compute $\left(\{\lab[i,b]\}_{i\in[p(\secp)],b\in\bit}, \{\lab[i,0]\}_{i\in \{p(\secp)+1,\cdots,2p(\secp)\}}\right)$.
        \item Sample $S[i]\la\bit^{\ell(\secp)}$ for all $i\in[p(\secp)]$.
        \item Compute $\Lab.\ct[i,x[i]]\seteq R[i]+\lab[i,x[i]]$ and $\Lab.\ct[i,x[i]+1]\seteq S[i]+\lab[i,x[i]+1]$ for all $i\in[p(\secp)]$.
        \item Output $\ct\seteq \left(\widehat{C_{\secp,p}[m]}_{\mathsf{off}},\{\Lab. \ct[i,b]\}_{i\in[p(\secp)],b\in\bit}, \{\lab[i,0]\}_{i\in \{p(\secp)+1,\cdots,2p(\secp)\}}\right)$.
    \end{itemize}
    \item[$\Dec(1^\secp,\sk,\ct)$:]$ $
    \begin{itemize}
        \item Parse $\sk=\left(x,\{R[i]\}_{i\in[p(\secp)]}\right)$ and $\ct=\left(\widehat{C_{\secp,p}[m]}_{\mathsf{off}}, \{\Lab.\ct[i,b]\}_{i\in[p(\secp)],b\in\bit},\{\lab[i,0]\}_{i\in \{p(\secp)+1,\cdots 2p(\secp)\}}\right)$.
        \item Compute $ \lab[i,x[i]\seteq\Lab.\ct[i,x[i]]+R[i]$ for all $i\in[p(\secp)]$.
        \item Compute $\RE.\Dec\left(\widehat{C_{\secp,p}[m]}_{\mathsf{off}},\{\lab[i,x[i]]\}_{i\in[p(\secp)]},\{\lab[i,0]\}_{i\in \{p(\secp)+1,\cdots, 2p(\secp)\}}\right)$ and outputs its output.
    \end{itemize}
\end{description}
\begin{lemma}\label{lem:univ_eff_garble}
$\Sigma$ satisfies efficiency if $\Sigma_{\mathsf{RE}}$ is decomposable quantum randomized encoding.
\end{lemma}

\begin{lemma}\label{lem:univ_cor_garble}
    $\Sigma$ satisfies correctness if $\Sigma_{\mathsf{RE}}$ is decomposable quantum randomized encoding.
\end{lemma}

\begin{lemma}\label{thm:univ_unc_garble}
    If $\Sigma_{\mathsf{RE}}$ is decomposable quantum randomized encoding and there exists one-time unclonable secret-key encryption, then $\Sigma$ is one-time unclonable secret-key encryption for some polynomial $p$.
\end{lemma}
\cref{lem:univ_eff_garble,lem:univ_cor_garble} straightforwardly follows.

\begin{proof}[Proof of \cref{thm:univ_unc_garble}]
Assume that $\Sigma\seteq(\keygen,\Enc,\Dec)$ is one-time unclonable secret-key encryption, where it satisfies perfect correctness and the secret-key is uniformly randomly sampled.
For a security parameter $\secp$, let $s(\secp)$ be the length of secret-key that $\Unc.\keygen(1^\secp)$ generates and $q(\secp)$ be the length of quantum ciphertexts that $\Unc.\Enc(1^\secp,\unc.\sk,m)$ generates. 
\color{red}
Fix the polynomial $p$ such that $D_p$ can be implemented by some circuit whose topology is equal to $C_{\secp,p}[m]$.
\color{black}
\taiga{Perhaps, I need to explain the detail of topology of $C_{\secp,p}[m]$.}
    
We describe the sequence of hybrids against adversary $(\cA,\cB,\cC)$.
\begin{description}
    \item[$\mathsf{Hyb_0}$:] This is the original one-time unclonable IND-CPA security experiment.
    \begin{enumerate}
        \item The challenger samples $b\la\bit$.
        \item The challenger samples $x\la\bit^{p(\secp)}$ and $R[i]\la\bit^{\ell(\secp)}$ for all $i\in[p(\secp)]$.
        \item The adversary $\cA$ sends $(m_0,m_1)$ to the challenger.
        \item The challenger computes $\widehat{C_{\secp,p}[m]}_{\mathsf{off}}$, $\lab[i,\beta]$ for all $i\in[p(\secp)]$ and $\beta\in\bit$ and $\lab[i,0]$ for all $i\in\{p(\secp)+1,\cdots, 2p(\secp)\}$.
        \item The challenger samples $S[i]\la\bit^{\ell(\secp)}$ for all $i\in[p(\secp)]$, and computes $\Lab.\ct[i,x[i]]\seteq R[i]+\lab[i,x[i]]$ and $\Lab.\ct[i,x[i]+1]\seteq S[i]+\lab[i,x[i]+1]$ for all $i\in[p(\secp)]$.
        \item The challenger sends $\left(\widehat{C_{\secp,p}[m]}_{\mathsf{off}},\{\Lab.\ct[i,\beta]\}_{i\in[p(\secp)],\beta\in\bit}, \{\lab[i,0]\}_{i\in\{p(\secp)+1,\cdots, 2p(\secp)\}}\right)$ to $\cA$.
        \item $\cA$ produces $\rho_{\cB,\cC}$ and sends the corresponding registers to $\cB$ and $\cC$.
        \item $\cB$ and $\cC$ receives $\left(x,\{R[i]\}_{i\in[p(\secp)]}\right)$, and outputs $b_\cB$ and $b_\cC$.
        \item The experiment outputs $1$ if $b_\cB=b_\cC=b$, and otherwise $0$.
    \end{enumerate}
    \item[$\mathsf{Hyb_1}$:]$ $
    \begin{enumerate}
        \item The challenger samples $b\la\bit$.
        \item The challenger samples $x\la\bit^{p(\secp)}$ and $R[i]\la\bit^{\ell(\secp)}$ for all $i\in[p(\secp)]$. 
        We write $X$ to mean the first $s(\secp)$ bit of $x$.
        \item The adversary $\cA$ sends $(m_0,m_1)$ to the challenger.
        \item The challenger computes  $\unc.\ct_b\la\Unc.\Enc(1^\secp,X,m_b)$, where $\unc.\ct_b$ is the $q(\secp)$-length quantum states.
        For $i\in\{q(\secp)+1,\cdots, p(\secp)\}$, set $\unc.\ct_b[i]\seteq 0$.
        \item The challenger computes $\widehat{\cD_p}_{\mathsf{off}}$ and $\lab[i,\beta]$ for all $i\in[p(\secp)]$ and $\beta\in\bit$ and $\lab[i,\unc.\ct[i]]$ for all $i\in[p(\secp)]$.
        \item The challenger samples $S[i]\la\bit^{\ell(\secp)}$ for all $i\in[p(\secp)]$, and computes $\Lab.\ct[i,x[i]]\seteq R[i]+\lab[i,x[i]]$ and $\Lab.\ct[i,x[i]+1]\seteq S[i]+\lab[i,x[i]+1]$ for all $i\in[p(\secp)]$.
        \item The challenger sends $\left(\widehat{\cD_p}_{\mathsf{off}},\{\Lab.\ct[i,\beta]\}_{i\in[p(\secp)],\beta\in\bit},\{\lab[i,\unc.\ct[i]\}_{i\in[p(\secp)]}\right)$ to $\cA$.
        \item $\cA$ produces $\rho_{\cB,\cC}$ and sends the corresponding registers to $\cB$ and $\cC$.
        \item $\cB$ and $\cC$ receives $\left(x,\{R[i]\}_{i\in[p(\secp)]}\right)$, and outputs $b_\cB$ and $b_\cC$.
        \item The experiment outputs $1$ if $b_\cB=b_\cC=b$, and otherwise $0$.
    \end{enumerate}
\end{description}
\cref{thm:univ_unc_garble} follows from the following \cref{prop:hyb_0_hyb_1_univ_unc_garble,prop:hyb_1_univ_unc_garble}.
\begin{proposition}\label{prop:hyb_0_hyb_1_univ_unc_garble}
    If $\Sigma_\RE$ is decomposable quantum randomized encoding, then
    \begin{align}
        \abs{\Pr[\mathsf{Hyb_0}=1]-\Pr[\mathsf{Hyb_1}=1]}\leq \negl(\secp).
    \end{align}
\end{proposition}

\begin{proposition}\label{prop:hyb_1_univ_unc_garble}
If there exists one-time secret-key unclonable encryption with $n(\secp)$-bit plaintexts, then
\begin{align}
    \Pr[\mathsf{Hyb_1}]\leq \frac{1}{2}+\negl(\secp).
\end{align}
\end{proposition}
\end{proof}
\begin{proof}[Proof of \cref{prop:hyb_0_hyb_1_univ_unc_garble}]
Assume that there exists a QPT adversary $(\cA,\cB,\cC)$ and constant $c$ such that
\begin{align}
    \abs{\Pr[\mathsf{Hyb_0}=1]-\Pr[\mathsf{Hyb_1}=1]}\geq 1/\secp^c
\end{align}
for all sufficiently large security parameters $\secp\in\N$.
Then, construct a QPT adversary $\widetilde{\cA}$ that breaks the security of $\Sigma_{\RE}$ as follows.
\begin{enumerate}
    \item $\widetilde{\cA}$ samples $b\la\bit$.
    \item $\widetilde{\cA}$ samples $x\la\bit^{p(\secp)}$ and $R[i]\la\bit^{\ell(\secp)}$ for all $i\in[p(\secp)]$. We write $X$ to mean the first $s(\secp)$ bits of $x$.
    \item $\widetilde{\cA}$ receives $(m_0,m_1)$ from the $\cA$.
    \item $\widetilde{\cA}$ computes $\unc.\ct_b\la\Unc.\Enc(1^\secp,X,m_b)$, where $\unc.\ct_b$ is the $q$-length quantum state.
    \item $\widetilde{\cA}$ sends $\left(\{C[m_b],x,0^{p(\secp)}\},  \{ \cD_p, x,\left(\unc.\ct_b,0^{p(\secp)-q(\secp)}\right) \} \right)$ to the challenger of $\Sigma_{\mathsf{RE}}$.
    \item The challenger samples $b^*\la\bit$, and do the following.
    \begin{itemize}
        \item If $b^*=0$, then the challenger computes $\left(\widehat{F}_{\mathsf{off}}, \{\lab[i]\}_{i\in[2p(\secp)]}\right)\la\RE.\Enc(1^\secp,C[m_b],x,0^{p(\secp)})$, and sends $\left(\widehat{F}_{\mathsf{off}}, \{\lab[i]\}_{i\in[2p(\secp)]}\right)$ to $\widetilde{\cA}$.
        \item If $b^*=1$, then the challenger computes $\left(\widehat{F}_{\mathsf{off}}, \{\lab[i]\}_{i\in[2p(\secp)]}\right)\la\RE.\Enc(1^\secp,\cD_p,x,\unc.\ct_b,0^{p(\secp)-q(\secp)})$. 
    \end{itemize}
    \item $\widetilde{\cA}$ samples $S[i]\la\bit^{\ell(\secp)}$ for all $i\in[p(\secp)]$, computes $\Lab.\ct[i,x[i]]\seteq R[i]+\lab[i]$ and $\Lab.\ct[i,x[i]+1]\seteq S[i]$ for all $i\in[p(\secp)]$, and sends $(\widehat{F}_{\mathsf{off}},\{\Lab.\ct[i,\beta]\}_{i\in[p(\secp)],\beta\in\bit},\{\lab[i]\}_{i\in\{p(\secp)+1,\cdots,2p(\secp)\}} )$ to $\cA$.
    \item $\cA$ generates $\rho_{\cB,\cC}$ and sends the corresponding register to $\cB$ and $\cC$
    \item $\widetilde{\cA}$ sends $x$ and $\{R[i]\}_{i\in[p(\secp)]}$ to $\cB$ and $\cC$.
    \item $\widetilde{\cA}$ outputs $1$ if $b=b_\cB=b_\cC$, and outputs $0$ otherwise.
\end{enumerate}
From the construction of $\widetilde{\cA}$, if $b^*=0$, $\widetilde{\cA}$ perfectly simulates the challenger of $\mathsf{Hyb_0}$.
Otherwise, $\widetilde{\cA}$ perfectly simulates the challenger of $\mathsf{Hyb_1}$.
Furthermore, $C_{\secp,p}[m](x,0^{p(\secp)})=\cD_p(x,\unc.\ct_b,0^{p(\secp)-q(\secp)})$ and the topology of $C_{\secp,p}[m]_{\secp,p}$ and $\cD_p$ are equivalent.
Therefore, if there exists a QPT adversary $(\cA,\cB,\cC)$ and constant $c$ such that
\begin{align}
    \abs{\Pr[\mathsf{Hyb_0}=1]-\Pr[\mathsf{Hyb_1}=1]}\geq 1/\secp^c
\end{align}
for all sufficiently large security parameters $\secp\in\N$,
then it contradicts that $\Sigma_{\RE}$ satisfies security from \cref{prop:ind_RE}.
\end{proof}

\begin{proof}[Proof of \cref{prop:hyb_1_univ_unc_garble}]
    Assume that there exists a QPT adversary $(\cA,\cB,\cC)$ and constant $c$ such that
    \begin{align}
        \Pr[\mathsf{Hyb_1}=1]\geq 1/2+1/\secp^c
    \end{align}
    for all sufficiently large security parameters $\secp\in\N$.
    Then, construct a QPT adversary $(\widetilde{\cA},\widetilde{\cB},\widetilde{\cC})$ that breaks the one-time unclonable IND-CPA security of $\Sigma_\unc$ as follows.
    \begin{enumerate}
        \item The challenger of $\Sigma_{\unc}$ samples $b\la\bit$.
        \item $\widetilde{\cA}$ samples $Y\la\bit^{p(\secp)-s(\secp)}$ and $R[i,\beta]\la\bit^{\ell(\secp)}$ for all $i\in[p(\secp)]$ and $\beta\in\bit$.
        \item $\widetilde{\cA}$ receives $(m_0,m_1)$ from the $\cA$.
        \item $\widetilde{\cA}$ sends $( m_0,m_1)$ to the challenger, and receives $\unc.\ct_b$.
        \item $\widetilde{\cA}$ computes $\left(\widehat{\cD_p}_{\mathsf{off}}, \{\lab[i,\beta]\}_{i\in[p(\secp)],\beta\in\bit},\{\lab[i,\unc.\ct_b[i]] \}_{i\in\{p(\secp)+1,\cdots,2p(\secp)\}}\right)$.
        \item $\widetilde{\cA}$ computes $\Lab.\ct[i,\beta]]\seteq R[i,\beta]+\lab[i,\beta]$ for all $i\in[p(\secp)]$ and $\beta\in\bit$.
        \item $\widetilde{\cA}$ runs $\cA$ on $\left( \widehat{\cD_p}_{\mathsf{off}}, \{\Lab.\ct[i,\beta]\}_{i\in[p(\secp)],\beta\in\bit},\{\lab[i,\unc.\ct_b[i]] \}_{i\in\{p(\secp)+1,\cdots,2p(\secp)\}}\right)$, obtains $\rho_{\cB,\cC}$, and sends the $\cB$ register and $(Y,\{R[i,\beta]\}_{i\in[p(\secp)],\beta\in\bit})$ to $\cB$ and $\cC$ register and $(Y,\{R[i,\beta]\}_{i\in[p(\secp)],\beta\in\bit})$ to $\cC$.
        \item $\widetilde{\cB}$ (resp. $\widetilde{\cC}$) receives the secret-key $X$ from the challenger of $\Sigma_{\unc}$, sets $x\seteq X||Y$, and sends $(x,\{R[i,x[i]]\}_{i\in[p(\secp)]})$ and the $\cB$ register (resp. $\cC$ register) to $\cB$ (resp. $\cC$). 
        \item The experiment outputs $1$ if $b=b_\cB=b_\cC$ where $b_\cB$ and $b_\cC$ are the output of $\cB$ and $\cC$, respectively.
    \end{enumerate}
    From the construction of $(\widetilde{\cA},\widetilde{\cB},\widetilde{\cC})$, the adversary perfectly simulates the challenger of $\mathsf{Hyb_1}$. Therefore, if there exists a QPT adversary $(\cA,\cB,\cC)$ and some constant $c$ such that 
    \begin{align}
        \Pr[\mathsf{Hyb_1}=1]\geq 1/2+1/\secp^c
    \end{align}
     for all sufficiently large security parameters $\secp\in\N$, it contradicts $\Sigma_{\unc}$ satisfies unclonable IND-CPA security.
\end{proof}

\subsection{Expanding Plaintext Space}
In this section, we prove the following \cref{thm:unclone_multi_space}.
\begin{theorem}\label{thm:unclone_multi_space}
    If there exists a one-time unclonable secret-key encryption with single-bit plaintext spaces and decomposable quantum randomized encoding, then there exists a one-time unclonable secret-key encryption with $n(\secp)$-length plaintext spaces for all polynomial $n$.
\end{theorem}

\begin{corollary}\label{cor:unclone_expnasion}
    If there exists unclonable secret-key encryption with single-bit plaintext spaces, then there exists unclonable secret-key encryption with $n(\secp)$-length plaintext spaces for all polynomial $n$.
\end{corollary}
\begin{corollary}\label{cor:pub_unclone_expansion}
    If unclonable public-key encryption with single-bit plaintext spaces, then there exists unclonable public-key encryption with $n(\secp)$-length plaintext spaces for all polynomial $n$.
\end{corollary}
\begin{proof}[Proof of \cref{cor:pub_unclone_expansion}]
    We give a rough sketch of the proof of \cref{cor:pub_unclone_expansion}. Note that, in the same way, we can prove \cref{cor:unclone_expnasion}.
    If there exists unclonable public-key encryption with single-bit plaintext spaces, then there exists public-key encryption with quantum ciphertexts and one-time unclonable secret-key encryption with single-bit plaintext spaces.
    The existence of public-key encryption with quantum ciphertexts implies the existence of decomposable quantum randomized encoding.
    Therefore, by combining decomposable quantum randomized encoding and one-time unclonable secret-key encryption with single-bit plaintext spaces, we have one-time unclonable secret-key encryption with $n(\secp)$-plaintext spaces for all polynomial $n$.

    Furthermore, the existence of public-key encryption with quantum ciphertexts implies the existence of receiver non-committing encryption with quantum ciphertexts.
    By combining the receiver non-committing encryption with quantum ciphertexts and one-time unclonable secret-key encryption with $n(\secp)$-plaintext spaces, we obtain unclonable public-key encryption with $n(\secp)$-plaintext spaces.
    \taiga{Here, we need to check the details. Especially, I am not so confident that PKE with q-Cipher implies the decomposable quantum randomized encoding, where the labels are classical if the inputs are classical strings. (We need the classical label or classical off-line part for our purpose.)}
    \taiga{If time remains, I think that we should include the details of the proof of corollary.}
\end{proof}

\begin{proof}[Proof of \cref{thm:unclone_multi_space}]
First, let us describe notations.
\paragraph{Notations.}
\begin{itemize}
    \item Let $\Sigma_\unc\seteq \Unc. (\keygen,\Enc,\Dec)$ be one-time unclonable secret-key encryption with single-bit plaintext spaces, where it satisfies perfect correctness and the secret key is uniformly randomly sampled.
    Note that, from \cref{prop:perfect_unc}, such one-time unclonable secret-key encryption exists if there exists plain one-time unclonable secret-key encryption exists.
    \item  For a security parameter $\secp$, we assume that $s(\secp)$ be the length of secret-key that $\Unc.\keygen(1^\secp)$ generates and $q(\secp)$ be the length of quantum ciphertexts that $\Unc.\Enc(1^\secp,\unc.\sk,m)$ generates. 
    \item Let $D$ be the quantum circuit that $\Unc.\Dec$ runs. In other words, the $D$ is the circuit that
    takes as input $s(\secp)$-length classical string $x$ and $q(\secp)$-length quantum state $\unc.\ct$ and outputs $b\in\bit$, where $x\la\bit^{s(\secp)}$ and $\Unc.\ct\la\Unc.\Enc(1^\secp,x,m)$.
    \item 
    For $m_0,m_1\in\bit^{n(\secp)}$, we write $\cD_p[m_0,m_1]$ to mean the general quantum circuit takes as input $p(\secp)$ length classical string $x$ and $p(\secp)$ quantum states $\sigma$, and runs the circuit $D$ on the first $s(\secp)$-length of classical string $x$ and the first $q(\secp)$-length $\sigma$, obtains $b\in\bit$, and outputs $m_b$.
    \item 
    \color{red}
    For $m\in\bit^{n(\secp)}$, let $C_{\secp,p}[m]$ be a general quantum circuit that takes as input $p(\secp)$-length classical string and $p(\secp)$-length quantum states $\sigma$, and always outputs $m$. Furthermore, $C_{\secp,p}[m]$ is padded to size $p(\secp)$ so that for appropriate $p$, $\cD_p$ can be implemented by some circuit whose topology is equal to $C_{\secp,p}[m]$.
    \taiga{I do not want to explain the detail of how to pad the circuit $C[m]_{\secp,p}$ if possible}.
    \color{black}
    \item Let $\Sigma_{\RE}\seteq \RE.(\Enc,\Dec)$ be a decomposable quantum randomized encoding.
    For some quantum circuit $C$ and input $x$, we write $\widehat{C}_{\mathsf{off}}$ to mean the off-line part of $\widehat{C}(x;r)$, and write $\lab[i,x[i]]$ to mean the online encoding of $\widehat{C}(x;r)$, where $\widehat{C}(x;r)\la\RE.\Enc(1^\secp,C,x)$.
    Let $\ell(\secp)$ be the length of on-line part of $\lab[i,x[i]]$, where $\left(\widehat{C[m]}_{\mathsf{off}},\{\lab[i,x[i]]\}_{i\in[2p(\secp)]} \right)\la \RE.\Enc(C[m],x)$.
\end{itemize}

\paragraph{Construction.}
We give a construction of one-time unclonable secret-key encryption  $\Sigma\seteq(\keygen,\Enc,\Dec)$ with $n(\secp)$-length plaintext spaces.
\begin{description}
    \item[$\keygen(1^\secp)$:]$ $
    \begin{itemize}
        \item Sample $x\la\bit^{p(\secp)}$.
        \item Sample $R[i]\la\bit^{\ell(\secp)}$ for all $i\in[p(\secp)]$.
        \item Output $\sk\seteq (x,\{R[i]\}_{i\in[p(\secp)]})$.
    \end{itemize}
    \item[$\Enc(1^\secp,\sk,m)$:]$ $
    \begin{itemize}
        \item Compute $\widehat{C_{\secp,p}[m]}_{\mathsf{off}}$.
        \item Compute $\left(\{\lab[i,b]\}_{i\in[p(\secp)],b\in\bit},\{\lab[i,0]\}_{i\in\{p(\secp)+1,\cdots,2p(\secp)\}}\right)$.
        \item Sample $S[i]\la\bit^{\ell(\secp)}$ for all $i\in[p(\secp)]$.
        \item Compute $\Lab.\ct[i,x[i]]\seteq R[i]+\lab[i,x[i]]$ and $\Lab.\ct[i,x[i]+1]\seteq S[i]+\lab.\ct[i,x[i]+1]$ for all $i\in [p(\secp)]$.
        \item Output $\ct\seteq \left(\widehat{C_{\secp,p}[m]}_{\mathsf{off}}, \{\Lab.\ct[i,b]\}_{i\in[p(\secp)],b\in\bit},\{\lab[i,0]\}_{i\in\{p(\secp)+1,\cdots,2p(\secp)\}}\right)$.
    \end{itemize}
    \item[$\Dec(1^\secp,\sk,\ct)$:]$ $
    \begin{itemize}
        \item Parse $\sk=\left(x,\{R[i]\}_{i\in[p(\secp)]}\right)$ and $\ct=\left(\widehat{C_{\secp,p}[m]}_{\mathsf{off}}, \{\Lab.\ct[i,b]\}_{i\in[p(\secp)],b\in\bit},\{\lab[i,0]\}_{i\in\{p(\secp)+1,\cdots,2p(\secp)\}}\right)$.
        \item Compute $\lab[i,x[i]]\seteq \Lab.\ct[i,x[i]]+R[i]$ for all $i\in[p(\secp)]$.
        \item Compute $\RE.\Dec(1^\secp,\widehat{C_{\secp,p}[m]}_{\mathsf{off}},\{\lab[i,x[i]]\}_{i\in[p(\secp)]},\{\lab[i,0]\}_{i\in[p(\secp)]} )$ and outputs its output.
    \end{itemize}
\end{description}
\begin{theorem}\label{thm:expnad_unc}
    If $\Sigma_{\mathsf{RE}}$ is decomposable quantum randomized encoding and there exists one-time unclonable secret-key encryption with single-bit plaintext spaces,
    $\Sigma$ is one-time unclonable secret-key encryption with $n(\secp)$-length plaintext spaces for some polynomial $p$.  
\end{theorem}

\begin{proof}[Proof of \cref{thm:expnad_unc}]
Assume that $\Sigma\seteq(\keygen,\Enc,\Dec)$ is one-time unclonable secret-key encryption with single-bit plaintext spaces, where it satisfies perfect correctness and the secret-key is uniformly randomly sampled. 
For a security parameter $\secp$, let $s(\secp)$ be the length of secret-key that $\Unc.\keygen(1^\secp)$ generates and $q(\secp)$ be the length of quantum ciphertexts that $\Unc.\Enc(1^\secp,\unc.\sk,m)$ generates. Fix the polynomial $p$ such that $\cD_p[m_0,m_1]$ can be implemented by some circuit whose topology is equal to $C_{\secp,p}[m]$.

We describe the sequence of hybrids against adversary $(\cA,\cB,\cC)$.
\begin{description}
        \item[$\mathsf{Hyb_0}$:] This is the original one-time unclonable IND-CPA security experiment.
    \begin{enumerate}
        \item The challenger samples $b\la\bit$.
        \item The challenger samples $x\la\bit^{p(\secp)}$ and $R[i]\la\bit^{\ell(\secp)}$ for all $i\in[p(\secp)]$.
        \item The adversary $\cA$ sends $(m_0,m_1)$ to the challenger.
        \item The challenger computes $\widehat{C_{\secp,p}[m_b]}_{\mathsf{off}}$, $\lab[i,\beta]$ for all $i\in[p(\secp)]$ and $\beta\in\bit$ and $\lab[i,0]$ for all $i\in\{p(\secp)+1,\cdots, 2p(\secp)\}$.
        \item The challenger samples $S[i]\la\bit^{\ell(\secp)}$ for all $i\in[p(\secp)]$, and computes $\Lab.\ct[i,x[i]]\seteq R[i]+\lab[i,x[i]]$ and $\Lab.\ct[i,x[i]+1]\seteq S[i]+\lab[i,x[i]+1]$ for all $i\in[p(\secp)]$.
        \item The challenger sends $\left(\widehat{C_{\secp,p}[m_b]}_{\mathsf{off}},\{\Lab.\ct[i,\beta]\}_{i\in[p(\secp)],\beta\in\bit}, \{\lab[i,0]\}_{i\in\{p(\secp)+1,\cdots, 2p(\secp)\}}\right)$ to $\cA$.
        \item $\cA$ produces $\rho_{\cB,\cC}$ and sends the corresponding registers to $\cB$ and $\cC$.
        \item $\cB$ and $\cC$ receives $\left(x,\{R[i]\}_{i\in[p(\secp)]}\right)$, and outputs $b_\cB$ and $b_\cC$.
        \item The experiment outputs $1$ if $b_\cB=b_\cC=b$, and otherwise $0$.
    \end{enumerate}
    \item[$\mathsf{Hyb_1}$:]$ $
    \begin{enumerate}
        \item The challenger samples $b\la\bit$.
        \item The challenger samples $x\la\bit^{p(\secp)}$ and $R[i]\la\bit^{\ell(\secp)}$ for all $i\in[p(\secp)]$.
        We write $X$ to mean the first $s(\secp)$-bit of $x$.
        \item The adversary $\cA$ sends $(m_0,m_1)$ to the challenger.
        \item The challenger computes  $\unc.\ct_b\la\Unc.\Enc(1^\secp,X,b)$, where $\unc.\ct_b$ is $q(\secp)$-length. For $i\in\{q(\secp)+1,\cdots,p(\secp)\}$, set $\unc.\ct_b[i]\seteq 0$.
        \item The challenger computes $\widehat{\cD_p}[m_0,m_1]_{\mathsf{off}}$ and $\lab[i,\beta]$ for all $i\in[p(\secp)]$ and $\beta\in\bit$ and $\lab[i,\unc.\ct_b[i]]$ for all $i\in[p(\secp)]$.
        \item The challenger samples $S[i]\la\bit^{\ell(\secp)}$ for all $i\in[p(\secp)]$, and computes $\Lab.\ct[i,x[i]]\seteq R[i]+\lab[i,x[i]]$ and $\Lab.\ct[i,x[i]+1]\seteq S[i]+\lab[i,x[i]+1]$ for all $i\in[p(\secp)]$.
        \item The challenger sends $\left(\widehat{\cD_p}[m_0,m_1]_{\mathsf{off}},\{\Lab.\ct[i,\beta]\}_{i\in[p(\secp)],\beta\in\bit},\{\lab[i,\unc.\ct[i]]\}_{i\in[p(\secp)]}\right)$ to $\cA$.
        \item $\cA$ produces $\rho_{\cB,\cC}$ and sends the corresponding registers to $\cB$ and $\cC$.
        \item $\cB$ and $\cC$ receives $\left(x,\{R[i]\}_{i\in[p(\secp)]}\right)$, and outputs $b_\cB$ and $b_\cC$.
        \item The experiment outputs $1$ if $b_\cB=b_\cC=b$, and otherwise $0$.
    \end{enumerate}
\end{description}
\cref{thm:expnad_unc} follows from the following \cref{prop:hyb_0_hyb_1_unc_expand,prop:hyb_1_unc_expand}.
\begin{proposition}\label{prop:hyb_0_hyb_1_unc_expand}
If $\Sigma_{\mathsf{RE}}$ is decomposable quantum randomized encoding, then
\begin{align}
    \abs{\Pr[\mathsf{Hyb_0}=1]-\Pr[\mathsf{Hyb_1}=1]}\leq\negl(\secp).
\end{align}
\end{proposition}

\begin{proposition}\label{prop:hyb_1_unc_expand}
    If there exists a one-time unclonable secret-key encryption $\Sigma_{\Unc}$ with single-bit plaintext spaces, then
    \begin{align}
        \abs{\Pr[\mathsf{Hyb}_1=1]}\leq \frac{1}{2}+\negl(\secp).
    \end{align}
\end{proposition}
\end{proof}

Because the proof of \cref{prop:hyb_0_hyb_1_unc_expand} is almost the same as \cref{prop:hyb_0_hyb_1_univ_unc_garble}, and thus we skip the proof.

\begin{proof}[Proof of \cref{prop:hyb_1_unc_expand}]
    Assume that there exists a QPT adversary $(\cA,\cB,\cC)$ and constant $c$ such that
    \begin{align}
        \Pr[\mathsf{Hyb_1}=1]\geq \frac{1}{2}+1/\secp^c
    \end{align}
    for all sufficiently large security parameters $\secp\in\N$.
    Then, construct a QPT adversary $(\widetilde{\cA},\widetilde{\cB},\widetilde{\cC})$ that breaks the unclonable IND-CPA security of $\Sigma_{\unc}$ as follows.
    \begin{enumerate}
        \item The challenger of $\Sigma_{\unc}$ samples $b\la\bit$.
        \item $\widetilde{\cA}$ samples $Y\la\bit^{p(\secp)-s(\secp)}$ and $R[i,\beta]$ for all $i\in[p(\secp)]$ and $\beta\in\bit$.
        \item $\widetilde{\cA}$ receives $(m_0,m_1)$ from $\cA$.
        \item $\widetilde{\cA}$ sends $(0,1)$ to the challenger of $\Sigma_\unc$, and receives the $q(\secp)$-length quantum state $\unc.\ct_b$.
        $\widetilde{\cA}$ sets $\unc.\ct_b[i]\seteq 0$ for all $i\in\{q(\secp)+1,\cdots,p(\secp)\}$.
        \item $\widetilde{\cA}$ computes $\widehat{\cD_p}[m_0,m_1]_{\mathsf{off}}$ and $\lab[i,\beta]$ for all $i\in[p(\secp)]$ and $\beta\in\bit$ and $\lab[i,\unc.\ct_b[i]]$ for all $i\in[p(\secp)]$.
        \item $\widetilde{\cA}$ computes $\Lab.\ct[i,\beta]\seteq R[i,\beta]+\lab[i,\beta]$ for all $i\in[p(\secp)]$ and $\beta\in\bit$.
        \item $\widetilde{\cA}$ runs $\cA$ on $\left(\widehat{\cD_p}[m_0,m_1]_{\mathsf{off}},\{\Lab.\ct[i,\beta]\}_{i\in[p(\secp)],\beta\in\bit},\{\lab[i,\unc.\ct_b[i]]\}_{i\in\{p(\secp)+1,\cdots,2p(\secp)\}} \right)$, obtains $\rho_{\cB,\cC}$, and sends the $\cB$ register and $\{R[i,\beta]\}_{i\in[p(\secp)],\beta\in\bit}$ (resp. register $\cC$ and $\{R[i,\beta]\}_{i\in[p(\secp)],\beta\in\bit}$) to $\widetilde{\cB}$ (resp. $\widetilde{\cC}$).
        \item $\widetilde{\cB}$ (resp. $\widetilde{\cC}$) receives the $s(\secp)$-length secret key $X$ from the challenger of $\Sigma_\unc$, sets $x\seteq X||Y$, and sends $(x,\{R[i,x[i]]\}_{i\in[p(\secp)]})$ and the $\cB$ register (resp. $\cC$ register) to $\cB$ (resp. $\cC$).
        \item The experiment outputs $1$ if $b=b_\cB=b_\cC$, and $0$ otherwise, where $b_\cB$ and $b_\cC$ are the outputs of $\cB$ and $\cC$, respectively.
    \end{enumerate}
From the construction of $(\widetilde{\cA},\widetilde{\cB},\widetilde{\cC})$, it perfectly simulates the challenger of $\mathsf{Hyb_1}$.
Therefore, if there exists some QPT adversaries $(\cA,\cB,\cC)$ and constant $c$ such that
    \begin{align}
        \Pr[\mathsf{Hyb_1}=1]\geq 1/2+1/\secp^c
    \end{align}
     for all sufficiently large security parameters $\secp\in\N$, it contradicts that $\Sigma_{\unc}$ satisfies unclonable IND-CPA security.
\end{proof}
\end{proof}
\fi

{\bf Acknowledgements.}
TH is supported by 
JSPS research fellowship and by JSPS KAKENHI No. JP22J21864.

\ifnum\llncs=1
\bibliographystyle{alpha} 
\bibliography{abbrev3,crypto,reference}
\else
\bibliographystyle{alpha} 
\bibliography{abbrev3,crypto,reference}
\fi

\ifnum\cameraready=1
\else

\appendix
\section{Proof of \cref{prop:owsg_time}}\label{sec:padding}
Assume that there exists an OWSG.
Then, there exists a set of classical Turing machines $\cM\seteq(x,y,z)$ such that $\Sigma[\cM]\seteq (\keygen[x],\StateGen[y],\Vrfy[z])$ satisfies correctness and security because OWSG is a set of uniform QPT algorithms.
Let $c_x$, and $c_y$, and $c_z$ be a constant such that $x$, $y$, and $z$ halts within $\secp^{c_x}$, $\secp^{c_y}$, and $\secp^{c_z}$ steps for all sufficiently large $\secp\in\N$, respectively.
For simplicity, let us assume that $c_x\geq c_y\geq c_z$.
Note that the same argument goes through in the other cases.

For the set of uniform QPT algorithms $\Sigma[\cM]= (\keygen[x],\StateGen[y],\Vrfy[z])$,   $\Sigma[\cM^*]\seteq(\keygen[x^*],\StateGen[y^*],\allowbreak\Vrfy[z^*])$ is the set of uniform algorithms working as follows:
\begin{description}
    \item[$\keygen\lbrack x^*\rbrack(1^\secp)$:]$ $
    \begin{itemize}
        \item It runs a classical Turing machine $x$ on $1^{\kappa}$ and obtain a general quantum circuit $C[x]_{\kappa}$, where the $\kappa\in\N$ is the largest integer such that $\kappa+\kappa^{c_x}\leq \secp$.
        \item Output $k$, which is the output of $C[x]_{\kappa}$.
    \end{itemize}
    \item[$\StateGen\lbrack y^*\rbrack(1^\secp,k)$:]$ $
    \begin{itemize}
        \item It runs a classical Turing machine $y$ on $(1^{\kappa},k)$ and obtain a general quantum circuit $C[y]_{\kappa,k}$, where the $\kappa\in\N$ is the largest integer such that $\kappa+\kappa^{c_x}\leq \secp$.
        \item Output $\psi_k$, which is the output of $C[y]_{\kappa,k}$.
    \end{itemize}
    \item[$\Vrfy\lbrack z^*\rbrack(1^\secp, k,\psi_k)$:]$ $
    \begin{itemize}
        \item It runs a classical Turing machine $z$ on $(1^{\kappa},k, \abs{\psi_k})$ and obtain a general quantum circuit $C[z]_{\kappa,k,\abs{\psi_k}}$, where the $\kappa\in\N$ is the largest integer such that $\kappa+\kappa^{c_x}\leq \secp$.
        \item Output $\top$ if $1\la C[z]_{\kappa,k,\abs{\psi_k}}(\psi_k)$, and output $\bot$ if $0\la C[z]_{\kappa,k,\abs{\psi_k}}(\psi_k)$.
    \end{itemize} 
\end{description}
We can see that $x^*$, $y^*$, and $z^*$ halts within $\secp^3$ steps for all sufficiently large $\secp\in\N$.
\if0
Because $x(1^\secp)$ halts within $\secp^{c_x}$ steps, $x(1^\kappa)$ halts within $\kappa^{c_x}\leq \secp$ steps.
Therefore, $x^*$ halts within $\secp^3$ steps.
\fi
Given $1^\secp$, $x^*$ first computes $\kappa$ within $O(\secp^2)$ steps.
Furthermore, $x(1^\kappa)$ halts within $\kappa^{c_x}\leq \secp$ steps.
Overall, $x^*$ halts within $O(\secp^2)$ steps.
For the same reason, $y^*$ and $z^*$ also halts within $O(\secp^2)$ steps.
Therefore, for all sufficiently large security parameters $\secp\in\N$, $x^*(1^\secp)$, $y^*(1^\secp)$, and $z^*(1^\secp)$ halt within $\secp^3$ steps.
Apparently, $\Sigma[\cM^*]$ satisfies correctness if $\Sigma[\cM]$ satisfies correctness.

Furthermore, by a standard hybrid argument, we can show that the construction satisfies security as follows.
Suppose that $\Sigma[\cM^*]$ does not satisfy security and show that $\Sigma[\cM]$ does not satisfy security.
Since we assume that $\Sigma[\cM^*]$ does not satisfy security, there exists a polynomial $t$, a constant $C$ and a QPT adversary $\cA$ such that
\begin{align}
    \Pr[\top\la\Vrfy[z^*](1^{\secp+\secp^{c_x}}, k^*,\psi_k):
    \begin{array}{ll}
        k\la \keygen[x^*](1^{\secp+\secp^{c_x}})  \\
        \psi_k\la \StateGen[y^*](1^{\secp+\secp^{c_x}})\\
        k^*\la \cA(\psi_k^{\otimes t(\secp+\secp^{c_x})})
    \end{array}
    ]\geq 1/(\secp+\secp^{c_x})^C
\end{align}
for infinitely many security parameters $\secp$.
Let $t'$ be a polynomial such that $t'(\secp)\geq t(\secp+\secp^{c_x})$ for all $\secp\in\N$.
Now, we construct a QPT adversary $\cB$ that breaks $\Sigma[\cM]$ as follows.
\begin{enumerate}
    \item $\cB$ first receives $\psi_k^{\otimes t'(\secp)}$, where $k\la\keygen[x](1^{\secp})$ and $\psi_k\la\StateGen[y](1^\secp,k)$.
    \item $\cB$ runs $k^*\la\cA(\psi_k^{\otimes t(\secp+\secp^{c_x})})$.
    \item $\cB$ outputs $k^*$. 
\end{enumerate}
From the construction of $(\keygen[x^*],\StateGen[y^*],\Vrfy[z^*])$, $(\keygen[x^*](1^{\secp+\secp^{c_x}}), \StateGen[y^*](1^{\secp+\secp^{c_x}},k),\allowbreak\Vrfy[z^*](1^{\secp+\secp^{c_x}},k,\psi_k))$ works in the same way as $(\keygen[x](1^\secp),\StateGen[y](1^\secp,k),\Vrfy[z](1^{\secp},k,\psi_k))$.
Therefore, there exists some constant $D$ such that
\begin{align}
    &\Pr[\top\la\Vrfy[z](1^{\secp}, k^*,\psi_k):
    \begin{array}{ll}
        k\la \keygen[x](1^{\secp})  \\
        \psi_k\la \StateGen[y](1^{\secp},k)\\
        k^*\la \cB(\psi_k^{\otimes t'(\secp)})
    \end{array}
    ]\\
   &= \Pr[\top\la\Vrfy[z^*](1^{\secp+\secp^{c_x}}, k^*,\psi_k):
    \begin{array}{ll}
        k\la \keygen[x^*](1^{\secp+\secp^{c_x}})  \\
        \psi_k\la \StateGen[y^*](1^{\secp+\secp^{c_x}},k)\\
        k^*\la \cA(\psi_k^{\otimes t(\secp+\secp^{c_x})})
    \end{array}
    ]\geq 1/(\secp+\secp^{c_x})^C\geq 1/\secp^{D}
\end{align}
for infinitely many $\secp$.
This contradicts that $\Sigma[\cM]$ satisfies security.
Therefore, $\Sigma[\cM^*]$ satisfies security.

Therefore, $\Sigma[\cM^*]=(\keygen[x^*],\StateGen[y^*],\Vrfy[z^*])$ is a OWSG scheme, where $x^*$, $y^*$, and $z^*$ halts within $\secp^3$ steps.

\section{Proof of \cref{lem:amp_cor_money}}\label{sec:app_money}

\begin{proof}[Proof of \cref{lem:amp_cor_money}]
We describe $\Sigma^*=(\Mint^*,\Vrfy^*)$.
\begin{description}
    \item[$\Mint^*(1^\secp)$:]$ $
    \begin{itemize}
        \item Run $\mathsf{Check}(\Sigma)$, where $\mathsf{Check}$ works as follows:
        \begin{itemize}
        \item Run $(s_i,\rho_{s_i})\la\Mint(1^\secp)$ for all $i\in[\secp]$.
        \item Run $\Vrfy(1^\secp,s_i,\rho_{s_i})$ for all $i\in[\secp]$.
        \item Output $1$ if the number of $\top\la\Vrfy(1^\secp,s_i,\rho_{s_i})$ is at least $\frac{11\secp}{12}$, and output $0$ otherwise.
        \end{itemize}
        \item If $1\la\mathsf{Check}(\Sigma)$, then run $(s_i,\rho_{s_i})\la\Mint(1^\secp)$ for all $i\in[\secp]$ and output $s^*\seteq \{s_i\}_{i\in[\secp]} $ and $\rho_{s}^*\seteq \bigotimes_{i\in[\secp]}\rho_{s_i}$. 
        \item If $0\la\mathsf{Check}(\Sigma)$, then run $(s_i,\rho_{s_i})\la\Mint(1^\secp)$ for all $i\in[\secp]$ and output $s^*\seteq \top$ and $\rho_{s}^*\seteq \bigotimes_{i\in[\secp]}\rho_{s_i}$.
    \end{itemize}
    \item[$\Vrfy^*(1^\secp, s^*,\rho)$:]$ $ 
    \begin{itemize}
        \item Let $\rho$ be a quantum state on the $\secp$ registers $R[1]\cdots R[\secp]$. 
        \item If $s^*= \top$, output $\top$.
        \item If $s^* \neq \top$, then parse $s^*= \{s_i\}_{i\in[\secp]}$, run $\Vrfy(1^\secp,s_i,\cdot)$ on the $R[i]$ register and obtains $b[i]$ for all $i\in[\secp]$.
        Output $\top$ if the number of $b[i]=\top$ is at least $\frac{3\secp}{4}$.
    \end{itemize}
\end{description}
We have the following \cref{prop:money_eff_2,prop:money_cor_2,prop:money_sec_2}.
\begin{proposition}\label{prop:money_eff_2}
    If $\Sigma$ satisfies efficiency, then $\Sigma^*$ satisfies efficiency.
\end{proposition}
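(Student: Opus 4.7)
The plan is to observe that $\Sigma^*$ is constructed entirely by invoking $\Mint$ and $\Vrfy$ a polynomial number of times (specifically, $O(\secp)$ times each) on security parameter $\secp$, together with classical post-processing that is itself polynomial-time computable. Under the hypothesis that $\Sigma = (\Mint,\Vrfy)$ satisfies efficiency in the sense of \cref{def:money}, both $\Mint$ and $\Vrfy$ are uniform QPT algorithms, so each individual invocation runs in time $\poly(\secp)$.

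Concretely, I would first analyze $\Mint^*(1^\secp)$. The subroutine $\mathsf{Check}(\Sigma)$ makes $\secp$ calls to $\Mint(1^\secp)$, then $\secp$ calls to $\Vrfy(1^\secp,s_i,\rho_{s_i})$, then performs a classical count and threshold comparison on $\secp$ bits to decide its output. In both branches of the subsequent conditional, $\Mint^*$ makes another $\secp$ calls to $\Mint(1^\secp)$ and outputs a tuple of the results, together with either the collected serial numbers or the fixed symbol $\top$. All of this is describable by a single uniform classical Turing machine $\cM^*$ that, on input $1^\secp$, outputs the general quantum circuit implementing these operations, since $\cM^*$ can uniformly invoke the Turing machines underlying $\Mint$ and $\Vrfy$ a fixed polynomial number of times. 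Thus $\Mint^*$ is a uniform QPT algorithm.

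Next I would treat $\Vrfy^*(1^\secp,s^*,\rho)$. It first branches on whether $s^*=\top$, which is a classical $\poly(\abs{s^*})$-time check; if so it outputs $\top$ immediately. Otherwise it parses $s^* = \{s_i\}_{i\in[\secp]}$ and invokes $\Vrfy(1^\secp,s_i,\cdot)$ on each of the $\secp$ registers of $\rho$, yielding $\secp$ bits $b[i]$, and then performs a classical threshold count. This again amounts to a uniformly generated general quantum circuit of polynomial size, so $\Vrfy^*$ is a uniform QPT algorithm as well.

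I do not anticipate a main obstacle here: the statement is essentially a syntactic check that the construction is a polynomial-size wrapper around the uniform QPT algorithms $\Mint$ and $\Vrfy$. The only minor care needed is to note that the parsing and counting operations in $\Mint^*$ and $\Vrfy^*$ are classical polynomial-time, and that the sequential composition of uniform QPT algorithms indexed by a polynomial bound is itself a uniform QPT algorithm via a single classical Turing machine that outputs the composed general quantum circuit.
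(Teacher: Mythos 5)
Your argument is correct, and it is precisely the routine check that the paper omits when it writes ``We omit the proof of \cref{prop:money_eff_2}'': $\Sigma^*$ makes $O(\secp)$ uniform calls to $\Mint$ and $\Vrfy$ plus classical polynomial-time counting and branching, so it remains a pair of uniform QPT algorithms. There is no gap and no divergence from the paper's (implicit) reasoning.
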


\begin{proposition}\label{prop:money_cor_2}
    $\Sigma^*$ satisfies correctness.
\end{proposition}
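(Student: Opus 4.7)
The plan is to prove correctness by case analysis on the value
\[
p_\secp \;\seteq\; \Pr[\top\la\Vrfy(1^\secp,s,\rho_s) : (s,\rho_s)\la\Mint(1^\secp)],
\]
with threshold $5/6$. Observe that the samples drawn inside $\mathsf{Check}(\Sigma)$ and the samples $(s_i,\rho_{s_i})$ that form the final output of $\Mint^*(1^\secp)$ are generated by independent executions of $\Mint$, so Hoeffding's inequality can be applied to each batch separately. The two thresholds in the construction ($11\secp/12$ for Check to output $1$, and $3\secp/4$ for $\Vrfy^*$ to accept) are chosen so that they straddle $5\secp/6$ with a $\secp/12$ gap, which will give a $2\exp(-\secp/72)$-type Hoeffding tail bound in both cases.

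First I would handle the easy case $p_\secp < 5/6$. Let $X_i\in\bit$ indicate whether the $i$-th verification inside $\mathsf{Check}$ outputs $\top$, and let $S_\secp=\sum_{i\in[\secp]}X_i$. Then $\Exp[S_\secp]<5\secp/6$, so $\{1\la\mathsf{Check}(\Sigma)\} \subseteq \{\,|S_\secp-\Exp[S_\secp]|\geq\secp/12\,\}$. Hoeffding gives $\Pr[1\la\mathsf{Check}(\Sigma)]\leq 2\exp(-\secp/72)$. When $\mathsf{Check}$ outputs $0$, the construction sets $s^*=\top$, and $\Vrfy^*$ outputs $\top$ deterministically. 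Hence the overall success probability is at least $1-2\exp(-\secp/72)$.

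Next I would handle the case $p_\secp \geq 5/6$. If $\mathsf{Check}$ returns $0$, $\Vrfy^*$ again accepts deterministically, so I may condition on $\mathsf{Check}$ returning $1$ and analyze the fresh samples $\{(s_i,\rho_{s_i})\}_{i\in[\secp]}$ that actually make up $(s^*,\rho_s^*)$. Let $Y_i\in\bit$ indicate whether $\Vrfy(1^\secp,s_i,\rho_{s_i})=\top$ on the $R[i]$ register in the $\Vrfy^*$ step, and let $T_\secp=\sum_{i\in[\secp]}Y_i$. Since these samples are independent of the $\mathsf{Check}$ samples and have $\Exp[T_\secp]\geq 5\secp/6$, Hoeffding yields $\Pr[T_\secp<3\secp/4]\leq 2\exp(-\secp/72)$. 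Combining this with the conditioning (using a simple union-bound style inequality) gives overall success probability at least $1-2\exp(-\secp/72)$.

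The whole argument is essentially two applications of Hoeffding's inequality plus a disjunction on $p_\secp$; there is no real obstacle. The only mildly delicate point is bookkeeping: I must emphasize that the samples inside $\mathsf{Check}$ are \emph{not} the same samples contained in $\rho_s^*$ (both batches are freshly drawn), which is exactly what allows the two Hoeffding bounds to be invoked in a single execution of $\Mint^*$ without any correlation issues. Combining both cases, for all $\secp$ we obtain $\Pr[\top\la\Vrfy^*(1^\secp,s^*,\rho_s^*)]\geq 1-2\exp(-\secp/72)\geq 1-\negl(\secp)$, which completes the proof.
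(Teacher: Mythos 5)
Your proposal is correct and is essentially the same as the paper's proof: both split on whether $p_\secp \geq 5/6$, apply Hoeffding's inequality separately to the $\mathsf{Check}$ batch and to the freshly minted output batch using the $\secp/12$ gap between the $11\secp/12$ and $3\secp/4$ thresholds and $5\secp/6$, and exploit the independence of the two batches. The paper simply writes out the conditional-expectation decomposition more explicitly than your "union-bound style" remark, but the reasoning is identical.
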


\begin{proposition}\label{prop:money_sec_2}
    If $\Sigma$ satisfies efficiency, correctness, and security, then $\Sigma^*$ satisfies security.
\end{proposition}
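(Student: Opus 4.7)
My plan is to prove Proposition~\ref{prop:money_sec_2} by a standard reduction with an averaging (embedding-at-a-random-index) argument. Suppose for contradiction that there exists a non-uniform QPT adversary $\cA$ that wins the security experiment for $\Sigma^*$ with non-negligible probability $p(\secp)$. I will build a non-uniform QPT $\cB$ that breaks the security of $\Sigma$ with probability at least roughly $p(\secp)/2 - \negl(\secp)$, contradicting the assumed security of $\Sigma$.

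The reduction $\cB$ proceeds as follows. On input the challenge $(s,\rho_s)$ from the challenger of $\Sigma$, $\cB$ samples an index $I\la[\secp]$, runs $(s_i,\rho_{s_i})\la\Mint(1^\secp)$ independently for $i\in[\secp]\setminus\{I\}$, and sets $s_I\seteq s$, $\rho_{s_I}\seteq \rho_s$. It then forwards $s^*\seteq\{s_i\}_{i\in[\secp]}$ and $\rho_s^*\seteq\bigotimes_{i\in[\secp]}\rho_{s_i}$ to $\cA$, receives back a bipartite state $\sigma$ on registers $(R[1],R[2])$ where each $R[j]=(R[j,1],\dots,R[j,\secp])$, and returns the two sub-registers $R[1,I]$ and $R[2,I]$ to its own challenger. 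The first key observation is that $\cB$ skips the $\mathsf{Check}$ step performed inside $\Mint^*$, but by Hoeffding's inequality and the assumed correctness of $\Sigma$ we have $\Pr[0\la\mathsf{Check}(\Sigma)]\leq\negl(\secp)$; moreover, the $\{s_i,\rho_{s_i}\}$ used in $\mathsf{Check}$ are independent of the $\{s_i,\rho_{s_i}\}$ used to form $s^*$ and $\rho_s^*$, so conditioned on $\mathsf{Check}=1$ the distribution of $(s^*,\rho_s^*)$ given to $\cA$ in the real $\Sigma^*$-experiment is identical to that in $\cB$'s simulation. Hence, letting $\mathtt{Win}$ denote the event ``$\cA$ wins the $\Sigma^*$-experiment'', we have $\Pr[\mathtt{Win}\mid\cB\text{'s simulation}]\geq p(\secp)-\negl(\secp)$.

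The main combinatorial step is the averaging argument. For each $i\in[\secp]$ and $j\in\{1,2\}$, let $W_{j,i}$ be the event that $\Vrfy(1^\secp,s_i,\cdot)$ applied to $R[j,i]$ outputs $\top$. By definition of $\Vrfy^*$, $\mathtt{Win}$ is the event that $|\{i:W_{1,i}\}|\geq 3\secp/4$ and $|\{i:W_{2,i}\}|\geq 3\secp/4$; by inclusion--exclusion this implies $|\{i:W_{1,i}\cap W_{2,i}\}|\geq\secp/2$. Since $I$ is uniform in $[\secp]$ and independent of $\cA$'s view (the view is identical for every choice of $I$, because $\cB$'s simulation is symmetric in $i$), we get $\Pr[W_{1,I}\cap W_{2,I}\mid\mathtt{Win}]\geq 1/2$. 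Because the measurements on different sub-registers commute, the joint probability $\Pr[W_{1,I}\cap W_{2,I}]$ is unchanged whether the challenger measures only at index $I$ (as in the $\Sigma$-experiment performed by $\cB$'s challenger) or at all indices. Putting everything together, $\cB$ wins the security experiment of $\Sigma$ with probability at least $(p(\secp)-\negl(\secp))/2$, which is non-negligible, contradicting the security of $\Sigma$.

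The main obstacle I foresee is being careful about quantum subtleties in the averaging argument: one must verify that $W_{1,I}\cap W_{2,I}$ is a well-defined event whose probability does not depend on whether or not the verifications on the other indices are actually performed, and that $\cA$'s view is genuinely symmetric across the choice of $I$. Both facts follow from the product structure of $\rho_s^*$ and the fact that the verification measurements act on disjoint sub-registers, so they commute; but this must be stated explicitly. The rest of the proof is a straightforward concentration bound (for the $\mathsf{Check}$ step) together with the combinatorial inclusion--exclusion above.
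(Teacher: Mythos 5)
Your proof is correct and takes essentially the same approach as the paper: drop the $\mathsf{Check}$ step at a cost of a negligible term (justified by correctness of $\Sigma$), embed the challenge banknote at a uniformly random index $I$, and use the counting/inclusion–exclusion observation that two thresholds of $3\secp/4$ force at least $\secp/2$ indices to pass both verifications, so the random index $I$ passes with conditional probability at least $1/2$. The paper packages the removal of $\mathsf{Check}$ as an explicit hybrid $\mathsf{Hyb}_1$ rather than folding it into the reduction's analysis, but the substance of the argument is identical.
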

We omit the proof of \cref{prop:money_eff_2}.
To show \cref{prop:money_cor_2}, we use the following \cref{lem:hoefd}.
\begin{lemma}[Hoeffding's inequality]\label{lem:hoefd}
    Let $X_i\in\bit$ be a two-outcome independent random variable,
    and let $S_n\seteq\sum_{i\in[n]} X_i$.
    Then, we have
    \begin{align}
        \Pr[\abs{S_n-\Exp[S_n]}\geq t]\leq 2\exp(-2t^2/n).
    \end{align}
\end{lemma}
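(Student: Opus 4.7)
The plan is to establish Hoeffding's inequality via the standard Chernoff-bound technique, i.e., exponentiate, apply Markov, exploit independence to factorize the moment generating function, and then optimize the free parameter. Concretely, for $\lambda > 0$ I would first write
\begin{align}
\Pr[S_n - \Exp[S_n] \geq t] \;=\; \Pr\bigl[e^{\lambda(S_n - \Exp[S_n])} \geq e^{\lambda t}\bigr] \;\leq\; e^{-\lambda t}\,\Exp\bigl[e^{\lambda(S_n - \Exp[S_n])}\bigr]
\end{align}
by Markov's inequality, and then use independence of the $X_i$ to factorize the expectation as $\prod_{i\in[n]} \Exp[e^{\lambda(X_i - \Exp[X_i])}]$.

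The next step is to bound each factor using Hoeffding's lemma: if a random variable $Y$ satisfies $Y \in [a,b]$ and $\Exp[Y] = 0$, then $\Exp[e^{\lambda Y}] \leq e^{\lambda^2(b-a)^2/8}$. Since $X_i \in \{0,1\}$, the centered variable $X_i - \Exp[X_i]$ lies in an interval of length $1$, so each factor is at most $e^{\lambda^2/8}$. Substituting back gives $\Pr[S_n - \Exp[S_n] \geq t] \leq e^{-\lambda t + n\lambda^2/8}$, and choosing the optimal $\lambda = 4t/n$ yields the one-sided bound $e^{-2t^2/n}$. Applying the same reasoning to the variables $-X_i$ gives the symmetric lower-tail bound, and the union bound over the two tails produces the stated factor of $2$.

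The main obstacle, such as it is, is Hoeffding's lemma. I would prove it by observing that $e^{\lambda y}$ is convex in $y$, so for $y \in [a,b]$ one has $e^{\lambda y} \leq \tfrac{b-y}{b-a}e^{\lambda a} + \tfrac{y-a}{b-a}e^{\lambda b}$; taking expectations (and using $\Exp[Y]=0$) reduces the task to bounding a specific one-variable function $\varphi(\lambda) \seteq \log\bigl(\tfrac{b}{b-a}e^{\lambda a} - \tfrac{a}{b-a}e^{\lambda b}\bigr)$ by $\lambda^2(b-a)^2/8$. A Taylor expansion of $\varphi$ around $0$, combined with the elementary inequality $\varphi''(\lambda) \leq (b-a)^2/4$ (which follows from the AM-GM inequality applied to the expression $\varphi''(\lambda) = p(1-p)(b-a)^2$ where $p \in [0,1]$), gives the required bound. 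Since this is a standard textbook result, I would ultimately just cite it rather than reproduce the argument in full.
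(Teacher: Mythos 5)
Your proof is correct and is the standard Chernoff/Hoeffding argument: exponentiate, Markov, factorize by independence, bound each factor via Hoeffding's lemma (which you correctly reduce to a one-variable inequality via convexity of $e^{\lambda y}$ and the bound $\varphi''(\lambda)\leq(b-a)^2/4$), optimize $\lambda = 4t/n$ to obtain the one-sided bound $e^{-2t^2/n}$, and union-bound the two tails. There is no paper proof to compare against: the lemma is quoted as a known textbook result and used without proof (in the correctness amplification argument of \cref{lem:amp_cor_money}), so citing it, as you also suggest at the end, is exactly what the paper does.
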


\begin{proof}[Proof of \cref{prop:money_cor_2}]
First, assume that $\Pr[\top\la\Vrfy(1^\secp,s,\rho_s):(s,\rho_s)\la\Mint(1^\secp)]<5/6$, and compute $\Pr[\top\la\Vrfy^*(1^\secp,s^*,\rho_{s^*}): (s^*,\rho_{s^*})\la\Mint^*(1^\secp)]$.
\begin{align}
    &\Pr[\top\la\Vrfy^*(1^\secp,s^*,\rho_s^*): (s^*,\rho_s^*)\la\Mint^*(1^\secp)]\\
    &=\sum_{b,s^*}\bigg(\Pr[b\la\mathsf{Check}(\Sigma)]\Pr[(s^*,\rho_s^*)\la\Mint^*(1^\secp)\mid b\la\mathsf{Check}(\Sigma)]\\
    &\hspace{15mm}\cdot\Pr[\top\la\Vrfy^*(1^\secp,s^*,\rho_s^*)\mid (s^*,\rho_s^*)\la\Mint^*(1^\secp)\wedge b\la\mathsf{Check}(\Sigma)]\bigg)\\
    &\geq \bigg(\Pr[0\la\mathsf{Check}(\Sigma)] \Pr[(\top,\rho_s^*)\la\Mint^*(1^\secp)\mid 0\la\mathsf{Check}(\Sigma)]\\
    &\hspace{15mm} \cdot \Pr[\top\la\Vrfy^*(1^\secp,\top,\rho_s^*)\mid (\top,\rho_s^*)\la\Mint^*(1^\secp) \wedge 0\la\mathsf{Check}(\Sigma)]\bigg)\\
    &= \Pr[0\la\mathsf{Check}(\Sigma)]\\
    &\geq 1-2\exp(-\secp/72).
\end{align} 
\if0
\begin{align}
    &\Pr[\top\la\Vrfy^*(1^\secp,s^*,\rho_s^*): (s^*,\rho_s^*)\la\Mint^*(1^\secp)]\\
    &\geq\Pr[\top\la\Vrfy^*(1^\secp,s^*,\rho_s^*): 0\la\mathsf{Check}(\Sigma)\wedge (s^*,\rho_s^*)\la\Mint^*(1^\secp)]\Pr[0\la\mathsf{Check}(\Sigma)]  \\
    &\geq \Pr[0\la\mathsf{Check}(\Sigma)]\\
    &\geq 1-2\exp(-\secp/72)).
\end{align} 
\fi 
Here, in the second equation, we have used that $\Vrfy^*(1^\secp,s,\rho)$ always outputs $\top$ if $s=\top$ and $(\top,\rho_s^*)\la\Mint^*(1^\secp)$ if $0\la\mathsf{Check}(\Sigma)$ and in the second inequality, we have used that $\Pr[0\la\mathsf{Check}(\Sigma)]\geq 1-2\exp(-\secp/72)$
when $\Pr[\top\la\Vrfy(1^\secp,s,\rho_s):(s,\rho_s)\la\Mint(1^\secp)]<5/6$, which we prove later.

Next, assume that $\Pr[\top\la\Vrfy(1^\secp,s,\rho_s):(s,\rho_s)\la\Mint(1^\secp)]\geq 5/6$, and compute $\Pr\allowbreak[\top\la\Vrfy^*(1^\secp,s^*,\rho_s^*): (s^*,\rho_s^*)\la\Mint^*(1^\secp)]$.
\begin{align}
    &\Pr[\top\la\Vrfy^*(1^\secp,s^*,\rho_s^*): (s^*,\rho_s^*)\la\Mint^*(1^\secp)]\\
    &=\sum_{b,s^*}\bigg(\Pr[b\la\mathsf{Check}(\Sigma)]\Pr[(s^*,\rho_s^*)\la\Mint^*(1^\secp)\mid b\la\mathsf{Check}(\Sigma)]\\
    &\hspace{15mm}\cdot\Pr[\top\la\Vrfy^*(1^\secp,s^*,\rho_s^*)\mid (s^*,\rho_s^*)\la\Mint^*(1^\secp)\wedge b\la\mathsf{Check}(\Sigma)]\bigg)\\
    &=\bigg(\sum_{s^*\neq \top}\Pr[1\la\mathsf{Check}(\Sigma)]\Pr[(s^*,\rho_s^*)\la\Mint^*(1^\secp)\mid 1\la\mathsf{Check}(\Sigma)]\\
    &\hspace{15mm} \cdot\Pr[\top\la\Vrfy^*(1^\secp,s^*,\rho_s^*)\mid (s^*,\rho_s^*)\la\Mint^*(1^\secp)\wedge 1\la\mathsf{Check}(\Sigma)] \bigg)+\Pr[0\la\mathsf{Check}(\Sigma)]\\
    &=1-\Pr[1\la\mathsf{Check}(\Sigma)]\\
    &\cdot\Bigg(1-\bigg(\sum_{s^*\neq \top}\Pr[(s^*,\rho_s^*)\la\Mint^*(1^\secp)\mid 1\la\mathsf{Check}(\Sigma)]\Pr[\top\la\Vrfy^*(1^\secp,s^*,\rho_s^*)\mid (s^*,\rho_s^*)\la\Mint^*(1^\secp)\wedge 1\la\mathsf{Check}(\Sigma)] \bigg)\Bigg)\\
    &\geq\bigg(\sum_{s^*\neq \top}\Pr[(s^*,\rho_s^*)\la\Mint^*(1^\secp)\mid 1\la\mathsf{Check}(\Sigma)]\Pr[\top\la\Vrfy^*(1^\secp,s^*,\rho_s^*)\mid (s^*,\rho_s^*)\la\Mint^*(1^\secp)\wedge 1\la\mathsf{Check}(\Sigma)] \bigg)\\
    &\geq 1-2\exp(-\secp/72)).
\end{align}
Here, in the second equation, we have used that $\Vrfy^*$ outputs $\top$ if $\mathsf{Check}(\Sigma)$ outputs 0, and
in the final inequality, we have used that 
\begin{align}
&\sum_{s^*\neq \top}\Pr\allowbreak[(s^*,\rho_s^*)\la\Mint^*(1^\secp)\mid 1\la\mathsf{Check}(\Sigma)]\Pr[\top\la\Vrfy^*(1^\secp,s^*,\rho_s^*)\mid (s^*,\rho_s^*)\la\Mint^*(1^\secp)\wedge 1\la\mathsf{Check}(\Sigma)]\\
&\geq 1-2\exp(-\secp/72)
\end{align}
\if0
\ryo{Is the left-hand side $\Pr[\top\la\Vrfy^*(1^\secp,s^*,\rho_s^*) \mid 1\la\mathsf{Check}(\Sigma)\wedge (s^*,\rho_s^*)\la\Mint^*(1^\secp)]$?}
\takashi{Shouldn't that be $\sum_{s^*\neq \top}\Pr[(s^*,\rho_s^*)\la\Mint^*(1^\secp)\mid 1\la\mathsf{Check}(\Sigma)]\Pr[\top\la\Vrfy^*(1^\secp,s^*,\rho_s^*)\mid (s^*,\rho_s^*)\la\Mint^*(1^\secp)\wedge 1\la\mathsf{Check}(\Sigma)]$?}
\fi
when $\Pr\allowbreak[\top\la\Vrfy(1^\secp,s,\rho_s):(s,\rho_s)\la\Mint(1^\secp)]\geq 5/6$, which we will prove later.
Therefore, the $\Sigma^*$ satisfies the correctness.

Now, we will prove the part we skipped.
That is, we show $\Pr[0\la\mathsf{Check}(\Sigma)]\geq 1-2\exp(-\secp/72)$ when $\Pr[\top\la\Vrfy(1^\secp,s,\rho_s):(s,\rho_s)\la\Mint(1^\secp)]<5/6$.
We consider the random variable $X_{i}$ as $1$ if $\top\la\Vrfy(1^\secp,s_i,\rho_{s_i})$ for the $i$-th running of the verification algorithm while running $\mathsf{Check}$, and consider $X_i$ as $0$ if $\bot\la\Vrfy(1^\secp,s_i,\rho_{s_i})$. 
If we denote $S_{\secp}\seteq\sum_{i\in[\secp]}X_{i}$, then
$1\la\mathsf{Check}(\Sigma) $ if and only if $S_{\secp}\geq \frac{11\secp}{12}$.
On the other hand,
we have $\Exp[S_{\secp}]<\frac{5\secp}{6}$ because $\Pr[\top\la\Vrfy(1^\secp,s,\rho_s):(s,\rho_s)\la\Mint(1^\secp)]<5/6$.
Therefore, we need to have $\abs{S_{\secp}-\Exp[S_{\secp}]}\geq \frac{\secp}{12}$ for $S_{\secp}\geq \frac{11\secp}{12}$.
Therefore, by applying \cref{lem:hoefd}, we have
\begin{align}
    \Pr[1\la\mathsf{Check}(\Sigma)]\leq \Pr[\abs{S_{\secp}-\Exp[S_{\secp}]}\geq \frac{\secp}{12}]\leq2\exp(-\frac{\secp}{72}).
\end{align}
In the same way, we can prove that 
\begin{align}
&\bigg(\sum_{s^*\neq \top}\Pr[(s^*,\rho_s^*)\la\Mint^*(1^\secp)\mid 1\la\mathsf{Check}(\Sigma)]\Pr[\top\la\Vrfy^*(1^\secp,s^*,\rho_s^*)\mid (s^*,\rho_s^*)\la\Mint^*(1^\secp)\wedge 1\la\mathsf{Check}(\Sigma)] \bigg)\\
&\geq 1-2\exp(-\secp/72))
\end{align}
when $\Pr[\top\la\Vrfy(1^\secp,s,\rho_s):(s,\rho_s)\la\Mint(1^\secp)]\geq 5/6$.
\end{proof}

\begin{proof}[Proof of \cref{prop:money_sec_2}]
Let us introduce the following sequence of hybrids as follows.

\begin{description}
    \item[$\mathsf{Hyb_0}$:] This is the original security experiment of $\Sigma^*$.
    \begin{enumerate}
        \item The challenger first runs $\mathsf{Check}(\Sigma)$.
        \item The challenger does the following.
        \begin{itemize}
            \item If $1\la\mathsf{Check}(\Sigma)$, then run $(s_i,\rho_{s_i})\la\Mint (1^\secp)$ for all $i\in[\secp]$ and send $(\{s_i\}_{i\in[\secp]},\bigotimes_{i\in[\secp]}\rho_{s_i})$ to $\cA$.
            \item If $0\la\mathsf{Check}(\Sigma)$, then run $(s_i,\rho_{s_i})\la\Mint (1^\secp)$ for all $i\in[\secp]$ and send $(\bot,\bigotimes_{i\in[\secp]}\rho_{s_i})$ to $\cA$.
        \end{itemize}
        \item $\cA$ sends $\sigma$ consisting of $2\secp$ registers $\{R[1]\cdots R[2\secp]\}$.
        \item The challenger does the following.
        \begin{itemize}
            \item If $1\la\mathsf{Check}(\Sigma)$, then run $\Vrfy(1^\secp,s_i,\cdot)$ on the $R[i]$ register and obtain $b[i]$ for all $i\in[\secp]$. Set $A=1$ if the number of $i\in[\secp]$ such that $b[i]=\top$ is at least $3\secp/4$, and set $A=0$ otherwise. 
            Run
            $\Vrfy(1^\secp,s_i,\cdot)$ on the $R[i+\secp]$ register and obtain $b[i+\secp]$ for all $i\in[\secp]$.
            Set $B=1$ if the number of $i\in[\secp]$ such that $b[i+\secp]=\top$ is at least $3\secp/4$, and set $B=0$ otherwise. 
            If $A=B=1$, then the challenge outputs $\top$, and outputs $\bot$ otherwise.
            \item If $0\la\mathsf{Check}(\Sigma)$, then the challenger always outputs $\top$.
        \end{itemize}
    \end{enumerate}
    \item[$\mathsf{Hyb_1}$:] This is the same as $\mathsf{Hyb_0}$ except that the challenger always behaves as the case of $1\la\mathsf{Check}(\Sigma)$.
\end{description}
It is sufficient to show that
\begin{align}
    \Pr[\mathsf{Hyb_0}=1]\leq \negl(\secp).
\end{align}
Because $\Sigma$ satisfies correctness, $1\la\mathsf{Check}(\Sigma)$ occurs with overwhelming probability. Therefore, we have
\begin{align}
    \abs{\Pr[\mathsf{Hyb_0}=1]-\Pr[\mathsf{Hyb_1}=1]}\leq \negl(\secp).
\end{align}

Furthermore, we can show that 
\begin{align}
    \Pr[\mathsf{Hyb_1}=1]\leq \negl(\secp)
\end{align}
as long as $\Sigma$ satisfies security as follows.
For contradiction assume that there exists a QPT adversary $\cA$ such that 
\begin{align}
    \Pr[\mathsf{Hyb_1}=1]
\end{align}
is non-negligible, and then construct a QPT adversary $\cB$ that breaks $\Sigma$ as follows.
\begin{enumerate}
    \item $\cB$ receives $(s,\rho_s)$ from the challenger of $\Sigma$.
    \item $\cB$ samples $i^*\in[\secp]$, and sets $s_{i^*}=s$ and $\rho_{s_{i^*}}\seteq \rho_s$.
    \item $\cB$ generates $(s_i,\rho_{s_i})\la\Mint(1^\secp)$ for all $i\in[\secp]\setminus \{i^*\}$, and sends $\left(\{s_j\}_{j\in[\secp]},\bigotimes_{j\in[\secp]}\rho_{s_j}\right)$ to $\cA$.
    \item $\cB$ receives $\sigma$ consisting of $2\secp$ registers $R[1],\cdots,R[2\secp]$.
    \item  For all $i\in[\secp]\setminus \{i^*\}$, $\cB$ runs $\Vrfy(1^\secp,s_i,\cdot)$ on the $R[i]$ and $R[i+\secp]$ registers, and obtains $b[i]$ and $b[i+\secp]$, respectively.
    \item $\cB$ sends the $R[i^*]$ and $R[i^*+\secp]$ registers to the challenger, and the challenger runs $\Vrfy(1^\secp,s_{i^*},\cdot)$ on the $R[i^*]$ and $R[i^*+\secp]$ registers, and obtains $b[i^*]$ and $b[i^*+\secp]$.
\end{enumerate}
Clearly, $\cB$ simulates the challenger of $\mathsf{Hyb_1}$.
We write $\mathsf{First}$ to mean the event such that the number of $i\in[\secp]$ that satisfies $b[i]=\top$ is at least $3\secp/4$.
Similarly, we write $\mathsf{Second}$ to mean the event such that the number of $i\in[\secp]$ that satisfies $b[i+\secp]=\top$ is at least $3\secp/4$.
Because $\Pr[\mathsf{Hyb_1}=1]$ is non-negligible, both $\mathsf{First}$ and $\mathsf{Second}$ occur at the same time with non-negligible probability. 
This implies that, with non-negligible probability, the number of $i\in[\secp]$ such that $b[i]=b[i+\secp]=\top$ is at least $\secp/2$.
Because $i^*$ is uniformly random and independent from $\cA$, we have $b[i^*]=b[i^*+\secp]=\top$ with non-negligible probability.
This contradicts that $\Sigma$ satisfies security.
Therefore, we have $\Pr[\mathsf{Hyb_1}=1]\leq \negl(\secp)$.
\end{proof}

\end{proof}

\if0
Without loss of generality, $\Vrfy(1^\secp,s,\rho)$ can be considered as the algorithm working in the following way.  

For input $(1^\secp,s,\rho)$, run a classical Turing machine $\cM$ on $(1^\secp,s,\abs{\rho})$, obtain $U_{\Vrfy,k}$, append auxiliary state $\ket{0\cdots0}\bra{0\cdots0}$ to $\rho$, apply a unitary $U_{\Vrfy,s}$ on $\psi\otimes\ket{0\cdots0}\bra{0\cdots0}$, obtain $\rho_{\psi}$, and measure the first qubit of $\rho_\psi$ with the computational basis and output $\top$ if the measurement result is $1$ and $\bot$ otherwise.

We describe the $\Sigma^*\seteq(\Mint^*,\Vrfy^*)$.
\begin{description}
    \item[$\Mint^*(1^\secp)$:]$ $
    \begin{itemize}
        \item Run $(s,\rho_s)\la\Mint(1^\secp)$.
        \item Apply $U_{\Vrfy,s}$ on $\rho_s\otimes\ket{0\cdots0}\bra{0\cdots 0}$, obtain $\rho_k\seteq U_{\Vrfy,s}(\rho_s\otimes \ket{0\cdots 0}\bra{0\cdots0})U_{\Vrfy,s}^{\dagger}$, measures the first bit of $\rho_s$ in the computational basis, and obtains the measurement result $ b$, and post-measurement quantum state $\rho_{b,s}$.
        \begin{itemize}
            \item If the measurement result is $1$, then output $s^*\seteq s$ and $\rho_s^*\seteq U_{\Vrfy,s}^{\dagger}(\ket{1}\bra{1}\otimes \rho_{1,s})U_{\Vrfy,s}$.
            \item If the measurement result is $0$, then output $s^*\seteq\bot$ and $\rho_s^*\seteq U_{\Vrfy,s}^{\dagger}(\ket{0}\bra{0}\otimes \rho_{0,s})U_{\Vrfy,s}$.
        \end{itemize}
    \end{itemize}
    \item[$\Vrfy^*(1^\secp,s^*,\rho)$:]$ $
    \begin{itemize}
        \item If $s^*=\bot$, then output $\top$.
        \item If $s^*\seteq s\neq \bot$, then measure the first qubit of $U_{\Vrfy,s}\rho U_{\Vrfy,s}^{\dagger}$ in the computational basis.
        Output $\top$ if the measurement result is $1$, and output $\bot$ otherwise.
    \end{itemize}
\end{description}
\fi

\section{Proof  of \cref{lem:converting_flavor}}\label{sec:convert}
\begin{proof}[Proof of \cref{lem:converting_flavor}]
We prove that if the commitment $\{Q_0(\secp),Q_1(\secp)\}_{\secp\in\N}$ satisfies $c$-X hiding, then $\{\widetilde{Q}_0(\secp),\widetilde{Q}_1(\secp)\}_{\secp\in\N}$ satisfies $\sqrt{c}$-X binding, where $X\in\{$computational, statistical$\}$.
Because the same argument goes through, we consider the case where $X=\mbox{statistical}$.
Below, we fix a security parameter, and write $(Q_0,Q_1)$ and $(\widetilde{Q_0},\widetilde{Q_1})$ to mean $(Q_0(\secp),Q_1(\secp))$ and $(\widetilde{Q_0}(\secp),\widetilde{Q_1}(\secp))$, respectively.

First, let us introduce the following \cref{thm:swap_distinguish}.
\begin{theorem}[Equivalence between swapping and distinguishing~\cite{AAS20,EC:HMY23}]\label{thm:swap_distinguish}
Let $\ket{x_i}$, $\ket{y_i}$ be orthogonal $n$-qubit states and $\ket{\tau_i}$ be an $m$-qubit state. Let $U$ be a polynomial-time computable unitary over $(n+m)$-qubit states and define $\Gamma$ as
\begin{align}
    \Gamma\seteq\norm{(\bra{y}\otimes I^{\otimes m}) U\ket{x}\ket{\tau}+(\bra{x}\otimes I^{\otimes m})U\ket{y}\ket{\tau}}_1.
\end{align}
Then, there exists a non-uniform QPT distinguisher $\cA$ with advice 
$\ket{\tau'}=\ket{\tau}\otimes \frac{\ket{x}\ket{0}+\ket{y}\ket{1}}{\sqrt{2}}$
that
distinguishes 
$\ket{\psi}=\frac{\ket{x}+\ket{y}}{\sqrt{2}}$
and 
$\ket{\phi}=\frac{\ket{x}-\ket{y}}{\sqrt{2}}$
with advantage $\frac{\Gamma^2}{4}$.
Moreover, if $U$ does not act on some qubits, then $\cA$ also does not act on those qubits.
\end{theorem}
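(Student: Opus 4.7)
The plan is to construct the distinguisher $\cA$ explicitly as a SWAP-test-style circuit that uses $U$ once together with the coherent advice $\ket{\tau'}$, and then verify the $\Gamma^2/4$ bound by a direct amplitude calculation. The high-level intuition is that, after a Hadamard on the flag qubit, the advice $\frac{\ket{x}\ket{0}+\ket{y}\ket{1}}{\sqrt{2}}$ becomes $\frac{\ket{\psi}\ket{0}+\ket{\phi}\ket{1}}{\sqrt{2}}$, which is a coherent encoding of the $\{\ket{\psi},\ket{\phi}\}$ basis. Combining this encoding with a controlled-$\mathsf{SWAP}$ between the input register and the first $n$ qubits of the advice, controlled on the flag, implements a proxy for the Helstrom measurement that perfectly distinguishes $\ket{\psi}$ from $\ket{\phi}$; applying $U$ to the input and $\ket{\tau}$ beforehand converts the ``swap-amplitude'' $\Gamma$ into the distinguishing bias.

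Concretely, I would define $\cA$ by the following five steps: (i) apply $U$ to the input register and the $\ket{\tau}$-part of the advice; (ii) apply $H$ to the flag qubit of the advice; (iii) apply controlled-$\mathsf{SWAP}$ between the input register and the first $n$ qubits of the advice, conditioned on the flag being $1$; (iv) apply $H$ again on the flag qubit; (v) measure the flag in the computational basis and return the outcome as the guess bit. To compute the advantage, I would introduce the shorthands $A \seteq (\bra{x}\otimes I)U\ket{x}\ket{\tau}$, $B \seteq (\bra{y}\otimes I)U\ket{y}\ket{\tau}$, $C \seteq (\bra{x}\otimes I)U\ket{y}\ket{\tau}$, and $D \seteq (\bra{y}\otimes I)U\ket{x}\ket{\tau}$, decompose the post-$U$ input state along $\mathrm{span}(\ket{x},\ket{y})$ and its orthogonal complement, and then track each branch through the controlled-$\mathsf{SWAP}$ and the two Hadamards. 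The contributions from the orthogonal complement and from the diagonal terms $A,B$ cancel when one takes the difference of flag-outcome probabilities on inputs $\ket{\psi}$ and $\ket{\phi}$, leaving a single interference term whose magnitude is $\|C+D\|^2/4 = \Gamma^2/4$. The ``$\cA$ does not act on qubits where $U$ does not'' clause follows from inspection of the circuit: outside of step (i), the only gates touching the input register are the controlled-$\mathsf{SWAP}$s, each of which acts on a single input qubit paired with the corresponding advice qubit; hence if $U$ is identity on some input qubit, the associated controlled-$\mathsf{SWAP}$ can be dropped without affecting the output, and $\cA$ leaves that qubit untouched.

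The main obstacle will be the bookkeeping in the amplitude calculation, specifically (a) verifying that contributions orthogonal to $\mathrm{span}(\ket{x},\ket{y})$ cancel exactly in the difference of flag probabilities, and (b) confirming that the sign convention in the two Hadamards yields $\|C+D\|^2$ rather than $\|C-D\|^2$. I plan to settle both points by first reducing to the two-dimensional input subspace spanned by $\ket{x},\ket{y}$, where the SWAP-test step collapses to a transparent $2\times 2$ calculation that can be checked by hand, and then using the tensor-product structure on the $\tau$ register (and on the orthogonal complement inside the input register) to lift the result to the general setting.
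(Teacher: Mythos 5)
Your circuit does not achieve the claimed advantage, and the gap is structural rather than a matter of bookkeeping. The decisive test case is the ideal swap: take $U$ with $U\ket{x}\ket{\tau}=\ket{y}\ket{\tau}$ and $U\ket{y}\ket{\tau}=\ket{x}\ket{\tau}$, so that $(\bra{x}\otimes I)U\ket{y}\ket{\tau}=(\bra{y}\otimes I)U\ket{x}\ket{\tau}=\ket{\tau}$ and $\Gamma=2$, and the theorem promises advantage $1$. But your step~(i) applies $U$ \emph{unconditionally} to the input: $U\ket{\psi}\ket{\tau}=\ket{\psi}\ket{\tau}$ while $U\ket{\phi}\ket{\tau}=-\ket{\phi}\ket{\tau}$, so after step~(i) the two hypotheses differ only by a global phase, which no subsequent step ($H$, controlled-$\mathsf{SWAP}$, $H$, measurement) can detect; your circuit has advantage exactly $0$ here. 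Tracking amplitudes in general (with $A=(\bra{x}\otimes I)U\ket{x}\ket{\tau}$, $B=(\bra{y}\otimes I)U\ket{y}\ket{\tau}$, $C=(\bra{x}\otimes I)U\ket{y}\ket{\tau}$, $D=(\bra{y}\otimes I)U\ket{x}\ket{\tau}$), your flag-$0$ probability works out to $\tfrac12\bigl(1+\text{Re}\,\langle(\bra{\phi}\otimes I)U\ket{\zeta}\ket{\tau}\,|\,(\bra{\psi}\otimes I)U\ket{\zeta}\ket{\tau}\rangle\bigr)$, and the gap between $\zeta=\psi$ and $\zeta=\phi$ is $\tfrac12\bigl(\text{Re}\,\langle A|C\rangle-\text{Re}\,\langle D|B\rangle\bigr)$. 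This involves the diagonal amplitudes $A,B$ (which do not enter $\Gamma$ at all) and is generically unrelated to $\norm{C+D}^2/4$; the expected ``single interference term of magnitude $\norm{C+D}^2/4$'' never appears.

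The missing idea is that $U$ must be applied in an effectively \emph{controlled} manner, so that the phase it introduces becomes a relative phase between the two flag branches rather than a global phase. The construction in \cite{AAS20,EC:HMY23} sandwiches $U$ between two controlled-$\mathsf{SWAP}$s: c-$\mathsf{SWAP}$(input, advice-$n$) conditioned on the flag; then $U$ on (input, $\ket{\tau}$); then c-$\mathsf{SWAP}$ again; then $H$ on the flag and measure. In the flag-$0$ branch $U$ acts on the input, in the flag-$1$ branch $U$ acts on the advice register, and the final $H$ reads out the interference. Direct calculation gives flag-$0$ probability $\tfrac12\bigl(1+\text{Re}\,\langle(\bra{\zeta}\otimes I)U\ket{\zeta}\ket{\tau}\,|\,C\rangle\bigr)$, hence a probability gap of $\tfrac12\text{Re}\,\langle C+D|C\rangle$; the symmetric variant with advice $\tfrac{\ket{x}\ket{1}+\ket{y}\ket{0}}{\sqrt2}$ gives $\tfrac12\text{Re}\,\langle C+D|D\rangle$, and averaging the two circuits gives $\norm{C+D}^2/4=\Gamma^2/4$ as claimed. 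Your observation that controlled-$\mathsf{SWAP}$ gates can be dropped on qubits $U$ leaves untouched (because they then commute through $U$ and cancel) is the right reason for the ``$\cA$ does not act on those qubits'' clause, but it must be applied to this corrected circuit; as written, your proposal's distinguisher provably fails.
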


Let us assume that $\{\widetilde{Q_{0}}(\secp),\widetilde{Q_{1}}(\secp)\}_{\secp\in\N}$ is not $\sqrt{c}$-statistical biding, and let $d$ be some constant that satisfies $d\geq \sqrt{c}$.
Then, there exists a unitary $U_{\mathbf{\widetilde{R}Z}}$ over $\mathbf{\widetilde{R}}=\mathbf{C}$ and an ancillary register $\mathbf{Z}$ and a state $\ket{\tau}_{\mathbf{Z}}$ such that
\begin{align}
\norm{((\bra{0}\widetilde{Q_{1}}^{\dagger})_{\mathbf{\widetilde{CR}}}\otimes I_{\mathbf{Z}})(I_{\mathbf{\widetilde{C}}}\otimes U_{\mathbf{\widetilde{R}Z}}) ((\widetilde{Q_{0}}\ket{0})_{\mathbf{\widetilde{CR}}}\ket{\tau}_{\mathbf{Z}})}_1\geq d.
\end{align}
We observe that $U$ does not act on $\mathbf{D}$, and thus it cannot cause any interference between states that take $0$ and $1$ in $\mathbf{D}$. 
Therefore, we have
\begin{align}
&((\bra{0}\widetilde{Q}_1^{\dagger})_{\mathbf{\widetilde{CR}}} \otimes I_{\mathbf{Z}} )(I_{\mathbf{\widetilde{C}}}\otimes U_{\mathbf{\widetilde{R}Z}})(\widetilde{Q}_0\ket{0}_{\mathbf{\widetilde{CR}}}\ket{\tau}_{\mathbf{Z}} ) \\
&=\frac{1}{2}
\left(
\begin{array}{ll}
((\bra{0}Q_1^{\dagger})_{\mathbf{CR}}\bra{0}_{\mathbf{D}}\otimes I_{\mathbf{Z}})(I_{\mathbf{R,D}}\otimes U_{\mathbf{C,Z}})(Q_0\ket{0}_{\mathbf{CR}}\ket{0}_{\mathbf{D}}\ket{\tau}_{\mathbf{Z}})\\
-((\bra{0}Q_1^{\dagger})_{\mathbf{CR}}\bra{0}_{\mathbf{D}}\otimes I_{\mathbf{Z}})(I_{\mathbf{R,D}}\otimes U_{\mathbf{C,Z}})(Q_0\ket{0}_{\mathbf{CR}}\ket{0}_{\mathbf{D}}\ket{\tau}_{\mathbf{Z}})
\end{array}
\right).
\end{align}
Similarly, we have
\begin{align}
&((\bra{0}\widetilde{Q}_0^{\dagger})_{\mathbf{\widetilde{CR}}} \otimes I_{\mathbf{Z}} )(I_{\mathbf{\widetilde{C}}}\otimes U_{\mathbf{\widetilde{R}Z}})(\widetilde{Q}_1\ket{0}_{\mathbf{\widetilde{CR}}}\ket{\tau}_{\mathbf{Z}} ) \\
&=\frac{1}{2}
\left(
\begin{array}{ll}
((\bra{0}Q_1^{\dagger})_{\mathbf{CR}}\bra{0}_{\mathbf{D}}\otimes I_{\mathbf{Z}})(I_{\mathbf{R,D}}\otimes U_{\mathbf{C,Z}})(Q_0\ket{0}_{\mathbf{CR}}\ket{0}_{\mathbf{D}}\ket{\tau}_{\mathbf{Z}})\\
-((\bra{0}Q_1^{\dagger})_{\mathbf{CR}}\bra{0}_{\mathbf{D}}\otimes I_{\mathbf{Z}})(I_{\mathbf{R,D}}\otimes U_{\mathbf{C,Z}})(Q_0\ket{0}_{\mathbf{CR}}\ket{0}_{\mathbf{D}}\ket{\tau}_{\mathbf{Z}})
\end{array}
\right).
\end{align}

In particular, we have
\begin{align}
&((\bra{0}\widetilde{Q}_1^{\dagger})_{\mathbf{\widetilde{CR}}} \otimes I_{\mathbf{Z}} )(I_{\mathbf{\widetilde{C}}}\otimes U_{\mathbf{\widetilde{R}Z}})(\widetilde{Q}_0\ket{0}_{\mathbf{\widetilde{CR}}}\ket{\tau}_{\mathbf{Z}} )=((\bra{0}\widetilde{Q}_0^{\dagger})_{\mathbf{\widetilde{CR}}} \otimes I_{\mathbf{Z}} )(I_{\mathbf{\widetilde{C}}}\otimes U_{\mathbf{\widetilde{R}Z}})(\widetilde{Q}_1\ket{0}_{\mathbf{\widetilde{CR}}}\ket{\tau}_{\mathbf{Z}} ).
\end{align}

This implies that
\begin{align}
\norm{
\begin{array}{ll}
((\bra{0}\widetilde{Q}_1^{\dagger})_{\mathbf{\widetilde{CR}}} \otimes I_{\mathbf{Z}} )(I_{\mathbf{\widetilde{C}}}\otimes U_{\mathbf{\widetilde{R}Z}})(\widetilde{Q}_0\ket{0}_{\mathbf{\widetilde{CR}}}\ket{\tau}_{\mathbf{Z}} )+((\bra{0}\widetilde{Q}_0^{\dagger})_{\mathbf{\widetilde{CR}}} \otimes I_{\mathbf{Z}} )(I_{\mathbf{\widetilde{C}}}\otimes U_{\mathbf{\widetilde{R}Z}})(\widetilde{Q}_1\ket{0}_{\mathbf{\widetilde{CR}}}\ket{\tau}_{\mathbf{Z}} )
\end{array}
}_1\geq 2d.
\end{align}

If we set $\ket{x}\seteq \widetilde{Q_{0}}\ket{0}_{\mathbf{\widetilde{CR}}}$ and $\ket{y}\seteq \widetilde{Q_{1}}\ket{0}_{\mathbf{\widetilde{CR}}}$, then $\ket{x}$ and $\ket{y}$ are orthogonal. Then, by \cref{thm:swap_distinguish}, there exists a non-uniform distinguisher $\cA$ with a polynomial-size advice $\ket{\tau'}$ that does not act on $\mathbf{\widetilde{C}}=(\mathbf{R},\mathbf{D})$ and distinguishes
\begin{align}
    \ket{\psi}=\frac{\ket{x}+\ket{y}}{\sqrt{2}}=(Q_0\ket{0}_{\mathbf{CR}})\ket{0}_{\mathbf{D}}
\end{align}
and 
\begin{align}
    \ket{\phi}=\frac{\ket{x}-\ket{y}}{\sqrt{2}}=(Q_1\ket{0}_{\mathbf{CR}})\ket{1}_{\mathbf{D}}
\end{align}
with $d^2\geq c$.
This contradicts that $(Q_0,Q_1)$ satisfies $c$-statistical hiding, and thus $(\widetilde{Q_{0}},\widetilde{Q_{1}})$ satisfies $\sqrt{c}$-statistical binding.
\end{proof}

\section{Proof of \cref{lem:unclone_ske_cor}}\label{sec:app_unclone_comb}

We give the proof of \cref{lem:unclone_ske_cor}.
\begin{proof}[Proof of \cref{lem:unclone_ske_cor}]
For a candidate of one-time unclonable secret-key encryption $\Sigma=(\keygen,\Enc,\Dec)$ with $n(\secp)$-plaintext space, we can assume that $\Dec(1^\secp,\sk,\ct)$ works as follows without loss of generality:

For input $(1^\secp,\sk,\ct) $, run a classical Turing machine $\cM$ on $(1^\secp,\sk,\abs{\ct})$, obtain a unitary $U_{\Dec,\sk}$, append auxiliary state $\ket{0\cdots 0}\bra{0\cdots 0}$ to $\ct$, apply a unitary $U_{\Dec,\sk}$ on $\ct\otimes\ket{0\cdots 0}\bra{0\cdots 0} $, obtain $\rho_{\ct}$, and measure the first $n(\secp)$ qubit of $\rho_{\ct}$ with the computational basis and output its output.

\paragraph{Construction of one-time unclonable secret key encryption:}
We give a construction $\Sigma^*=(\keygen^*,\Enc^*,\Dec^*)$.
\begin{description}
    \item[$\keygen^*(1^\secp)$:]$ $
    \begin{itemize}
        \item Run $\sk\la \keygen(1^{\secp})$.
        \item Output $\sk^*\seteq \sk$.
    \end{itemize}
    \item[$\Enc^*(1^\secp,\sk^*,m)$:]$ $
    \begin{itemize}
        \item Parse $\sk^*=\sk$.
        \item Run $\ct\la\Enc(1^\secp,\sk,m)$.
        \item Measure the first $n(\secp)$-bit of $U_{\Dec,\sk}\left(\ct\otimes \ket{0\cdots 0}\bra{0\cdots 0}\right) U_{\Dec,\sk}^{\dagger}$ in the computational basis, and obtains $m^*$ and post-measurement quantum state $\rho_{m^*,\sk}$.
        \begin{itemize}
            \item If $m=m^*$, then output $\ct^*\seteq U_{\Dec,\sk}^{\dagger}\left(m\otimes \rho_{m,\sk}\right)U_{\Dec,\sk}\otimes \ket{1}\bra{1}$.
            \item If $m\neq m^*$, output $\ct^*\seteq m\otimes \ket{0}\bra{0}$. 
        \end{itemize}
    \end{itemize}
    \item[$\Dec^*(1^\secp,\sk^*,\ct^*)$:]$ $
    \begin{itemize}
        \item Parse $\ct^*=\rho\otimes \ket{b}\bra{b}$ and $\sk^*=\sk$.
        \item Measure the final bit of $\ct^*$ with $\{\ket{1}\bra{1},\ket{0}\bra{0}\}$.
        \begin{itemize}
            \item If the result is $1$, then measure the first $n(\secp)$-bit of $U_{\Dec,\sk}\rho U_{\Dec,\sk}^{\dagger}$ in the computational basis, and outputs its output.
            \item If the result is $0$, then measure the first $n(\secp)$-qubit of $\ct$ in the computational basis and outputs its output.
        \end{itemize}
    \end{itemize}
\end{description}
\end{proof}

\if0
\begin{proof}[Proof of \cref{prop:perfect_unc_2}]
In the same way as proof of \cref{lem:unclone_ske_cor}, we can show that there exists a one-time unclonable secret key encryption for single-plaintext space$\Sigma\seteq(\keygen,\Enc,\Dec)$ with perfect correctness if there exists a one-time unclonable secret-key encryption for single-plaintext space.
We introduce polynomials $s,q,a,d$ and constant $c_1,c_2$ which characterize the $\Sigma$ as follows.
\begin{itemize}
    \item Let $s$ be a polynomial such that $ s(\secp)$ is the size of $\sk_\secp$ for all $\secp\in\N$, where $\sk_\secp\la\keygen(1^\secp)$.
    \item Let $q$ be a polynomial such that $q(\secp)$ is the size of $\ct_{\secp,b}$ for all $\secp\in\N$ and $b\in\bit$, where $\ct_{\secp,b}\la\Enc(1^\secp,\sk,b)$.
    \item Without loss of generality, we can assume that $\Dec(1^\secp,\sk,\ct)$ runs some quantum circuit $D_{\secp}$ on $\sk$, $\ct$, and auxiliary qubits $0^{a(\secp)}$, where $a$ is polynomial.
    Let $d$ be a polynomial such that $d(\secp)$ is the depth of quantum circuits $D_{\secp}$ for all $\secp\in\N$.
    \item Let $c_1,c_2$ be constants such that $s(\secp),q(\secp),a(\secp),d(\secp)\leq c_1\secp^{c_2}$.
\end{itemize}

Next, from the $\Sigma$, we construct a one-time unclonable SKE $\Sigma^*=(\keygen^*,\Enc^*,\Dec^*)$ with the following properties:
\begin{itemize}
    \item The size of $\sk_\secp$ is $\secp$ for all $\secp\in\N$, where $\sk_\secp\la\keygen^*(1^\secp)$. 
    \item The size of $\ct_{\secp,b}$ is $\secp$ for all $\secp\in\N$ and $b\in\bit$, where $\ct_\secp\la\Enc^*(1^\secp,\sk,b)$. 
    \item $\Dec^*(1^\secp,\sk,\ct)$ runs some quantum circuit $D_{\secp}$ with depth $\secp$ on $\ct$, $\sk$, and an auxiliary bit $0^{\secp}$.
\end{itemize}
We can do this by a standard padding trick as follows.
Let $\secp'=\secp^{1/c_2}/c_1$.
$(\keygen^*(1^\secp), \Enc^*(1^\secp,\cdot,\cdot),\Dec^*(1^\secp,\cdot,\cdot))$ runs $(\keygen(1^{\secp'}),\Enc(1^{\secp'},\cdot,\cdot),\Dec(1^{\secp'},\cdot,\cdot))$ and pads dummy bits to $\sk$, $\ct$ and auxiliary bits of circuit $D_{\secp'}$, which $\Dec(1^{\secp'},\cdot,\cdot)$ implements so that the size of $\sk$, $\ct$, and the auxiliary bits are exactly $\secp$, respectively.
Moreover, $\Dec^*(1^\secp,\cdot,\cdot)$ also pads the dummy gate to the circuit $D$ that $\Dec(1^{\secp'},\cdot,\cdot)$ implements so that the depth of circuit is exactly $\secp$.

\end{proof}
\fi

\if0
\section{Proof of \cref{lem:unclone_cor,lem:unclone_rec,lem:unclone_pke}}\label{sec:app_unclone}

\begin{proof}[Proof of \cref{lem:unclone_cor}]
For a candidate of unclonable public-key $\Sigma=(\keygen,\Enc,\Dec)$ for $m$-length plaintext, without loss of generality, $\Dec(\sk,\cdot)$ can be considered the algorithm working as follows:

For some input $\psi$, append auxiliary state $\ket{0}\bra{0}$ to $\psi$, apply a unitary $U_{\sk}$ on $\psi\otimes\ket{0}\bra{0} $, obtain $\rho_{\psi}$, and measure the first $\abs{m}$ qubit of $\rho_{\psi}$ with the computational basis and output its output.

\paragraph{Construction of one-time unclonable secret key encryption:}
\begin{description}
    \item[$\SKE.\keygen(1^\secp)$:]$ $
    \begin{itemize}
        \item Run $(\pk,\sk)\la \keygen(1^{\secp})$.
        \item Output $\ske.\sk\seteq (\pk,\sk)$.
    \end{itemize}
    \item[$\SKE.\Enc(\ske.\sk,m)$:]$ $
    \begin{itemize}
        \item Run $\ct\la\Enc(\pk,m)$.
        \item Measure the first $\abs{m}$-bit of $U_{\ske.\sk}\left(\ct\otimes \ket{0}\bra{0}\right) U_{\ske.\sk}^{\dagger}$ in the computational basis, and obtains $m^*$ and post-measurement quantum state $\rho_{m^*}$.
        \begin{itemize}
            \item If $m=m^*$, then output $\ske.\ct\seteq U_{\ske.\sk}\left(m\otimes \rho_{m}\right)U_{\ske.\sk}^{\dagger}\otimes \ket{1}\bra{1}$.
            \item If $m\neq m^*$, output $\ske.\ct\seteq m\otimes \ket{0^p}\bra{0^p}$. 
        \end{itemize}
    \end{itemize}
    \item[$\SKE.\Dec(\ske.\sk,\ske.\ct)$:]$ $
    \begin{itemize}
        \item Measure the final bit of $\ske.\ct$ in the computational basis
        \begin{itemize}
            \item If the result is $1$, then measure the first $\abs{m}$-bit of $U_{\ske.\sk}(\ske.\ct\otimes\ket{0}\bra{0} )U_{\ske.\sk}^{\dagger}$ in the computational basis, and outputs its output.
            \item If the result is $0$, then measure the first $\abs{m}$-qubit of $\ske.\ct$ in the computational basis and outputs its output.
        \end{itemize}
    \end{itemize}
\end{description}
The proof is almost the same as \cref{lem:amp_cor_money}.

\end{proof}

\begin{proof}[Proof of \cref{lem:unclone_pke}]
Let $\Sigma\seteq(\keygen,\Enc,\Dec)$ be a candidate of unclonable public-key encryption.

\paragraph{Construction of PKE:}
We describe $\Sigma'=(\PKE.\keygen,\PKE.\Enc,\PKE.\Dec)$:
\begin{description}
    \item[$\PKE.\keygen(1^\secp)$:]$ $ 
    \begin{itemize}
        \item Run $b\la \mathsf{Check} (\Sigma)$, where $b\in\bit$ and $\mathsf{Check}$ works as follows:
        \begin{itemize}
            \item Sample $R_j$, run $(\pk_j,\sk_j)\la\keygen(1^\secp)$, $\ct_j\la\Enc(\pk_j,R_j)$ and $R_j^*\la\Dec(\sk_j,\ct_j)$ for $j\in[\secp]$.
            \item If $R_j=R_j^*$ at least $\frac{11\secp}{12}$ times, outputs $1$ as $b$, and outputs $0$ as $b$ otherwise.
        \end{itemize}
        \item If $b=1$, run $(\pk,\sk)\la\keygen(1^\secp)$, and output $\pke.\pk\seteq\pk$ and $\pke.\sk\seteq\sk$.
        \item If $b=0$, output $\pke.\pk=\bot$ and $\pke.\sk=\bot$.
    \end{itemize}
    \item[$\PKE.\Enc\mbox{}(\pke.\pk\mbox{},m)$:]$ $
    \begin{itemize}
        \item Parse $\pke.\pk=\pk$.
        \item If $\pk\neq \bot$, sample $r_j$, run $\ct_j\la\Enc(\pk,r_j)$ for all $j\in[\secp]$, and output $\pke.\ct\seteq(\{m+r_j,\ct_j\}_{j\in[\secp]},\top) $.
        \item If $\pk=\bot$, output $\pke.\ct=(m^\secp,0^{\ell\secp},\bot)$, where $\ell$ is the length of the output of $\Enc$.
    \end{itemize}
    \item[$\PKE.\Dec\mbox{}(\pke.\sk\mbox{},\pke.\ct\mbox{} )$:]$ $
    \begin{itemize}
        \item Parse $\pke.\sk\seteq\sk$ and $\pke.\ct=(\{m^*_j,\ct_j^*\}_{j\in[\secp]},b)$.
        \item If the final bit of $\pke.\ct$ is $\bot$, output the first $\abs{m}$-bit of $\pke.\ct$.
        \item If the final bit of $\pke.\ct$ is $\top$, do the followings:
        \begin{itemize}
            \item Run $r_j^*\la\Dec(\sk,\ct_j\*)$ for all $j\in[\secp]$.
            \item Compute $M^*_j\seteq m^*_j+r_j^*$ for all $j\in[\secp]$.
            \item Output the $m$ such that the number of $m=M^*_j$ is at least $\frac{3\secp}{4}$. Otherwise, output $\bot$.
        \end{itemize}
    \end{itemize}
\end{description}
\paragraph{Correctness.}
First, we show that the construction above satisfies correctness.
We compute the probability $\Pr[1\la\mathsf{Check}(\Sigma)]$ when 
\begin{align}
\Pr[r\la\Dec(\sk,\ct):(\sk,\pk)\la\keygen(1^\secp),\ct\la\Enc(\pk,r),r\la\Ms] \leq \frac{5}{6}.
\end{align}

We consider the random variable $X_{j}\in\bit$ as $1$ if $R_j^*=R_j$, and $X_{j}\in\bit$ as $0$ otherwise, where $R_j$ and $R_j^*$ are generated as follows:
$R_j$ is uniformly randomly sampled from $\Ms$, and run $(\sk_j,\pk_j)\la\keygen(1^\secp)$, $\ct_j\la\Enc(\pk_j,R_j)$ and $R_j^*\la\Dec(\sk_j,\ct_j)$.

If we denote $S\seteq\sum_{j\in[\secp]}X_{j}$, then $\Exp[S]\leq\frac{5\secp}{6}$ because we assume that
\begin{align}
\Pr[r\la\Dec(\sk,\ct):(\sk,\pk)\la\keygen(1^\secp),\ct\la\Enc(\pk,r),r\la\Ms] \leq \frac{5}{6}.
\end{align}
Therefore, we need to have $\abs{S-\Exp[S]}\geq \frac{\secp}{12}$ for $S\geq \frac{11\secp}{12}$, which is necessary for $1\la\mathsf{Check}(\Sigma)$.
Therefore, from \cref{lem:hoefd}, we have
\begin{align}
    \Pr[1\la\mathsf{Check}(\Sigma)]\leq \Pr[\abs{S-\Exp[S]}\geq \frac{\secp}{12}]\leq2\exp(-\frac{\secp}{72}).
\end{align}

For $\Sigma$, we consider the random variable $Y_j\in\bit$ as $1$ if $r_j^*=r_j$, and $Y_j$ as $0$ otherwise, where $r_j^*$ and $r_j$ are generated as follows:
$r_j$ is uniformly randomly sampled from $\Ms$, and run $(\sk_j,\pk_j)\la\keygen(1^\secp)$, $\ct_j\la\Enc(\pk_j,r_j)$ and $r_j^*\la\Dec(\sk_j,\ct_j)$.
We denote $T\seteq\sum_{j\in[\secp]}Y_j$ and write $1\la\mathsf{Success}(\Sigma)$ to mean $T\geq \frac{3\secp}{4}$.
We compute the probability $\Pr[\mathsf{Success}]$ when 
\begin{align}
\Pr[r\la\Dec(\sk,\ct):(\sk,\pk)\la\keygen(1^\secp),\ct\la\Enc(\pk,r),r\la\Ms] \geq \frac{5}{6}.
\end{align}
When the inequality above holds, we have $\Exp[T]\geq \frac{5\secp}{6}$.
Therefore, we need to have
$\abs{T-\Exp[T]}\geq \frac{\secp}{12}$ for $T< \frac{3\secp}{4}$.
Therefore, from \cref{lem:hoefd}, we have
\begin{align}
    \Pr[T< \frac{3\secp}{4}]\leq \Pr[\abs{T-\Exp[T]}\geq \frac{\secp}{12}]\leq2\exp(-\frac{\secp}{72}).
\end{align}
Therefore, we have
\begin{align}
    \Pr[1\la\mathsf{Success}(\Sigma)]\geq 1-2\exp(-\frac{\secp}{72}).
\end{align}

Now, we compute
\begin{align}
    \Pr[m\la\PKE.\Dec(\pke.\sk,\pke.\ct):
    \begin{array}{l}
    (\pke.\sk,\pke.\pk)\la \PKE.\keygen(1^\secp)\\
    \pke.\ct\la\PKE.\Enc(\pke.\pk,m)      
    \end{array}
    ]
\end{align}
when
\begin{align}
    \Pr[r\la\Dec(\sk,\ct):(\sk,\pk)\la\keygen(1^\secp),\ct\la\Enc(\pk,r),r\la\Ms] \leq \frac{5}{6}.
\end{align}

We have
\begin{align}
    &\Pr[m\la\PKE.\Dec(\pke.\sk,\pke.\ct):
    \begin{array}{l}
    (\pke.\sk,\pke.\pk)\la \PKE.\keygen(1^\secp)\\
    \pke.\ct\la\PKE.\Enc(\pke.\pk,m)      
    \end{array}
    ]\\
    &\geq\Pr[m\la\PKE.\Dec(\pke.\sk,\pke.\ct):
    \begin{array}{l}
    0\la\mathsf{Check}(\Sigma)\\
    (\pke.\sk,\pke.\pk)\la \PKE.\keygen(1^\secp)\\
    \pke.\ct\la\PKE.\Enc(\pke.\pk,m)      
    \end{array}
    ]\Pr[0\la\mathsf{Check}(\Sigma)]\\
    &=\Pr[0\la\mathsf{Check}(\Sigma)]\geq 1-2\exp(-\frac{\secp}{72}),
\end{align}
where we have used the fact that
\begin{align}
    1=\Pr[m\la\PKE.\Dec(\pke.\sk,\pke.\ct):
    \begin{array}{l}
    0\la\mathsf{Check}(\Sigma)\\
    (\pke.\sk,\pke.\pk)\la \PKE.\keygen(1^\secp)\\
    \pke.\ct\la\PKE.\Enc(\pke.\pk,m)      
    \end{array}
    ],
\end{align}
which is followed by our construction of public-key encryption $\Sigma'$.

Next, we compute
\begin{align}
    \Pr[m\la\PKE.\Dec(\pke.\sk,\pke.\ct):
    \begin{array}{l}
    (\pke.\sk,\pke.\pk)\la \PKE.\keygen(1^\secp)\\
    \pke.\ct\la\PKE.\Enc(\pke.\pk,m)      
    \end{array}
    ]
\end{align}
when
\begin{align}
    \Pr[r\la\Dec(\sk,\ct):(\sk,\pk)\la\keygen(1^\secp),\ct\la\Enc(\pk,r),r\la\Ms] \geq \frac{5}{6}.
\end{align}

\begin{align}
    &\Pr[m\la\PKE.\Dec(\pke.\sk,\pke.\ct):
    \begin{array}{l}
    (\pke.\sk,\pke.\pk)\la \PKE.\keygen(1^\secp)\\
    \pke.\ct\la\PKE.\Enc(\pke.\pk,m)      
    \end{array}
    ]\\
    &=\Pr[m\la\PKE.\Dec(\pke.\sk,\pke.\ct):
    \begin{array}{l}
    0\la\mathsf{Check}(\Sigma)\\
    (\pke.\sk,\pke.\pk)\la \PKE.\keygen(1^\secp)\\
    \pke.\ct\la\PKE.\Enc(\pke.\pk,m)      
    \end{array}
    ]\Pr[0\la\mathsf{Check}(\Sigma)]\\
    &+\Pr[m\la\PKE.\Dec(\pke.\sk,\pke.\ct):
    \begin{array}{l}
    1\la\mathsf{Check}(\Sigma)\\
    (\pke.\sk,\pke.\pk)\la \PKE.\keygen(1^\secp)\\
    \pke.\ct\la\PKE.\Enc(\pke.\pk,m)      
    \end{array}
    ]\Pr[1\la\mathsf{Check}(\Sigma)]\\
    &=\Pr[0\la\mathsf{Check}(\Sigma)]
    +\Pr[1\la\mathsf{Success}(\Sigma)]\Pr[1\la\mathsf{Check}(\Sigma)]\\
    &\geq 1-2\exp(-\frac{\secp}{72}),
\end{align}
where in the second inequality we have used
\begin{align}
    \Pr[1\la\mathsf{Success}(\Sigma)]=\Pr[m\la\PKE.\Dec(\pke.\sk,\pke.\ct):
    \begin{array}{l}
    1\la\mathsf{Check}(\Sigma)\\
    (\pke.\sk,\pke.\pk)\la \PKE.\keygen(1^\secp)\\
    \pke.\ct\la\PKE.\Enc(\pke.\pk,m)      
    \end{array}
    ].
\end{align}
This completes that $\Sigma'$ satisfies correctness.

\paragraph{Security.}
Now, we prove that $\Sigma'$ satisfies IND-CPA security when the underlying unclonable public-key encryption $\Sigma$ satisfies correctness and IND-CPA security.

Assume that $\cA$ breaks the IND-CPA security of $\Sigma'$, and then construct $\cB$ that breaks the IND-CPA security of $\Sigma$.
\begin{enumerate}
    \item $\cB$ receives $\pk$ from the challenger of $\Sigma$, and sends $\pk$ to $\cA$.
    \item $\cB$ receives $(m[0],m[1])$ from $\cA$, samples $j\in[\secp]$ and $r_i[b]\la\Ms$ for all $i\in[\secp]$ and $b\in\bit$, and sends $(r_j[0],r_j[1])$ to $\cA$.
    \item $\cB$ receives $(\ct_j[0],\ct_j[1])$ from $\cA$.
    At the same time, $\cB$ runs $\ct_i[b]\la\Enc(\pk,r_i[b]) $ for all $i\in[\secp]\backslash j$ and $b\in\bit$, and sends $\left(\left(\{m[0]+r_i[0],\ct_i[0]\}_{i\in[\secp]},\top\right),\left(\{m[1]+r_i[1],\ct_i[1]\}_{i\in[\secp]},\top\right) \right)$ to $\cA$.
    \item $\cB$ outputs the output of $\cA$.
\end{enumerate}
Because we assume that $\Sigma$ satisfies correctness $\cB$ almost perfectly simulates the IND-CPA security experiment of $\Sigma'$. Therefore, if $\cA$ breaks the IND-CPA security of $\Sigma'$, there is a contradiction. This completes the proof.

\end{proof}

\begin{proof}[Proof of \cref{lem:unclone_rec}]

\end{proof}
\fi

\section{Unclonable PKE from One-Time Unclonable SKE and PKE with Quantum Ciphertexts}\label{sec:app_unc_pke}
It was shown that unclonable PKE can be constructed from one-time unclonable SKE and PKE with ``classical'' ciphertexts \cite{TCC:AK21}.
However, it is unclear whether we can construct unclonable PKE from one-time unclonable SKE and PKE with ``quantum'' ciphertexts based on their technique.
This is because their security proof relies on the existence of OWFs, but it is an open problem whether PKE with quantum ciphertexts implies OWFs or not.
Therefore, for the reader's convenience, we construct unclonable PKE from one-time unclonable SKE and PKE with quantum ciphertexts.

Our construction is based on the technique of \cite{Asia:HMNY21}.
First, let us introduce receiver non-committing encryption with quantum ciphertexts.
Note that in the same way as \cite{C:KNTY19,Eprint:HKMNPY23}, we can construct receiver non-committing encryption with quantum ciphertexts from PKE with quantum ciphertexts.
\begin{definition}[Receiver Non-Committing Encryption with Quantum Ciphertexts.]
    An receiver non-committing encryption is a set of algorithms $\Sigma\seteq(\keygen,\Enc,\Dec,\Fake,\Reveal)$ such that:
\begin{itemize}
    \item[$\Setup(1^\secp)$:]It takes $1^\secp$, and outputs a classical key pair $(\pk,\MSK)$.
    \item[$\keygen(1^\secp,\MSK):$]It takes $1^\secp$ and $\MSK$, and outputs a classical key $\sk$.
    \item[$\Enc(1^\secp,\pk,m)$:]It takes $1^\secp$, $\pk$ and $m$, and outputs a quantum ciphertext $\ct$.
    \item[$\Dec(1^\secp,\sk,\ct)$:] It takes $1^\secp$, $\sk$ and $\ct$, and outputs $m$.
    \item[$\Fake(1^\secp,\pk)$:]It takes $1^\secp$ and $\pk$, and outputs a fake quantum ciphertext $\widetilde{\ct}$ and an auxiliary state $\aux$.
    \item[$\Reveal(1^\secp,\pk,\MSK,\aux,m)$:]
    It takes $1^\secp$, $\pk$, $\MSK$, $\aux$, and $m$, and outputs a secret key $\widetilde{\sk}$.
\end{itemize}
\paragraph{Efficiency.}
The algorithms $(\Setup,\keygen,\Enc,\Dec,\Fake,\Reveal)$ are uniform QPT algorithms.
\paragraph{Correctness.}
\begin{align}
    \Pr[m\la\Dec(1^\secp,\sk,\ct):(\pk,\MSK)\la\Setup(1^{\secp}), \sk\la\keygen(1^\secp,\MSK),\ct\la\Enc(1^\secp,\pk,m)]\geq 1-\negl(\secp).
\end{align}
\paragraph{Security.}
Given a receiver non-committing encryption $\Sigma$, we consider a security experiment $\mathsf{Exp_{\Sigma,\cA}^{rec\mbox{-}nc}}(\secp,b)$ against $\cA$.
\begin{enumerate}
    \item The challenger runs $(\pk,\MSK)\la\Setup(1^{\secp})$ and sends $\pk$ to $\cA$.
    \item $\cA$ sends $m$ to the challenger.
    \item The challenger does the following:
    \begin{itemize}
        \item If $b=0$, the challenger generates $\ct\la\Enc(1^\secp,\pk,m)$ and $\sk\la\keygen(1^\secp,\MSK)$, and sends $(\ct,\sk)$ to $\cA$.
        \item If $b=1$, the challenger generates $(\widetilde{\ct},\aux)\la\Fake(1^\secp,\pk)$ and $\widetilde{\sk}\la\Reveal(1^\secp,\pk,\MSK,\aux,m)$, and sends $(\widetilde{\ct},\widetilde{\sk})$ to $\cA$.
    \end{itemize}
    \item $\cA$ outputs $b'\in\bit$, and the experiment outputs $1$ if $b'=b$.
\end{enumerate}
We say that $\Sigma$ is RNC secure if for all sufficiently large security parameters $\secp\in\N$, for any QPT adversary $\cA$, it holds that
\begin{align}
    \abs{\Pr[\mathsf{Exp_{\Sigma,\cA}^{rec\mbox{-}nc}}(\secp,0)=1]-\Pr[\mathsf{Exp_{\Sigma,\cA}^{rec\mbox{-}nc}}(\secp,1)=1]}\leq \negl(\secp).
\end{align}
\end{definition}

\paragraph{Construction}
We construct unclonable PKE $\Sigma=(\keygen,\Enc,\Dec)$ from one-time unclonable SKE $\Sigma_{\SKE}=\SKE.(\keygen,\Enc,\Dec)$ and receiver non-committing encryption with quantum ciphertexts $\Sigma_{\NCE}=\NCE.(\Setup,\keygen,\Enc,\allowbreak\Dec,\Fake,\Reveal)$:
\begin{description}
    \item[$\keygen(1^\secp)$:] $ $
    \begin{itemize}
        \item Run $(\nce.\pk,\nce.\MSK)\la\NCE.\Setup(1^\secp)$ and $\nce.\sk\la\NCE.\keygen(1^\secp,\nce.\MSK)$.
        \item Output $\pk\seteq\nce.\pk$ and $\sk\seteq\nce.\sk$.
    \end{itemize}
    \item[$\Enc(1^\secp,\pk,m)$:]$ $
    \begin{itemize}
        \item Parse $\pk=\nce.\pk$.
        \item Run $\ske.\sk\la\SKE.\keygen(1^\secp)$ and $\ske.\ct\la\SKE.\Enc(1^\secp,\ske.\sk,m)$.
        \item Run $\nce.\ct\la\NCE.\Enc(1^\secp,\nce.\pk,\ske,\sk)$.
        \item Output $\ct\seteq (\nce.\ct,\ske.\ct)$.
    \end{itemize}
    \item[$\Dec(1^\secp,\sk,\ct)$:]$ $
    \begin{itemize}
        \item Parse $\sk=\nce.\sk$ and $\ct=(\nce.\ct,\ske.\ct)$.
        \item Run $\ske.\sk\la\NCE.\Dec(1^\secp,\nce.\sk,\nce.\ct)$.
        \item Run $\SKE.\Dec(1^\secp,\ske.\sk,\ske.\ct)$ and outputs its output.
    \end{itemize}
\end{description}
Obviously, $\Sigma$ satisfies efficiency, correctness, and IND-CPA security.
\begin{lemma}
    $\Sigma$ satisfies unclonable IND-CPA security.
\end{lemma}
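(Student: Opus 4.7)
The plan is a two-step hybrid argument that exploits the receiver non-committing property of $\Sigma_\NCE$ to reduce the unclonable IND-CPA security of $\Sigma$ to the one-time unclonable IND-CPA security of $\Sigma_\SKE$. Concretely, I would define the following two experiments against any QPT triple $(\cA,\cB,\cC)$: $\mathsf{Hyb}_0$ is the real unclonable IND-CPA game for $\Sigma$ where the challenger samples $b\la\bit$, $\ske.\sk\la\SKE.\keygen(1^\secp)$, $\ske.\ct\la\SKE.\Enc(1^\secp,\ske.\sk,m_b)$, and $\nce.\ct\la\NCE.\Enc(1^\secp,\nce.\pk,\ske.\sk)$, and later hands $\sk=\nce.\sk$ to both $\cB$ and $\cC$; $\mathsf{Hyb}_1$ is identical except that the challenger now produces $(\widetilde{\nce.\ct},\aux)\la\NCE.\Fake(1^\secp,\nce.\pk)$, includes $\widetilde{\nce.\ct}$ in the ciphertext, and later computes $\widetilde{\nce.\sk}\la\NCE.\Reveal(1^\secp,\nce.\pk,\nce.\MSK,\aux,\ske.\sk)$ which it forwards to $\cB$ and $\cC$ in place of $\sk$.

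For the first step, I would argue $\abs{\Pr[\mathsf{Hyb}_0=1]-\Pr[\mathsf{Hyb}_1=1]}\leq\negl(\secp)$ by reduction to the RNC security of $\Sigma_\NCE$. A distinguisher $\cD$ for RNC receives $\nce.\pk$, internally samples $b,\ske.\sk$, and $\ske.\ct$, sends $\ske.\sk$ as its message to the RNC challenger, and then receives either the honest pair $(\nce.\ct,\nce.\sk)$ or the fake pair $(\widetilde{\nce.\ct},\widetilde{\nce.\sk})$. $\cD$ perfectly simulates the rest of $\mathsf{Hyb}_0$ or $\mathsf{Hyb}_1$ (including running $\cA,\cB,\cC$ and checking $b_\cB=b_\cC=b$) and outputs the indicator of win; any non-negligible hybrid gap immediately becomes an RNC distinguishing advantage.

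For the second step, I would show $\Pr[\mathsf{Hyb}_1=1]\leq 1/2+\negl(\secp)$ by reduction to the one-time unclonable IND-CPA security of $\Sigma_\SKE$. The SKE adversary $(\widetilde{\cA},\widetilde{\cB},\widetilde{\cC})$ proceeds as follows: $\widetilde{\cA}$ generates $(\nce.\pk,\nce.\MSK)\la\NCE.\Setup(1^\secp)$, simulates $\cA$ on $\nce.\pk$ to obtain $(m_0,m_1)$, forwards $(m_0,m_1)$ to the SKE challenger, receives $\ske.\ct_b$, computes $(\widetilde{\nce.\ct},\aux)\la\NCE.\Fake(1^\secp,\nce.\pk)$, and gives $(\widetilde{\nce.\ct},\ske.\ct_b)$ to $\cA$; once $\cA$ produces $\rho_{\cB,\cC}$, $\widetilde{\cA}$ hands $(\nce.\pk,\nce.\MSK,\aux)$ together with the appropriate register to each of $\widetilde{\cB}$ and $\widetilde{\cC}$. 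After receiving $\ske.\sk$ from the SKE challenger, each of $\widetilde{\cB}$ and $\widetilde{\cC}$ independently runs $\NCE.\Reveal$ with its copy of $\aux$, feeds the resulting $\widetilde{\nce.\sk}$ to $\cB$ or $\cC$ respectively, and outputs its guess. This perfectly simulates $\mathsf{Hyb}_1$, so any advantage of $(\cA,\cB,\cC)$ transfers to $(\widetilde{\cA},\widetilde{\cB},\widetilde{\cC})$.

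The main obstacle is the last step: it requires both $\widetilde{\cB}$ and $\widetilde{\cC}$ to use the auxiliary state $\aux$ of $\NCE.\Fake$, which in the quantum ciphertext setting may be a quantum state and hence not freely copyable. To handle this, I would first observe that for the RNC schemes obtained from PKE with quantum ciphertexts via the transformation of \cite{C:KNTY19,Eprint:HKMNPY23}, the auxiliary state $\aux$ is in fact classical (it consists of the randomness used by $\Fake$ together with classical trapdoor information), so the two copies required by $\widetilde{\cB}$ and $\widetilde{\cC}$ can simply be duplicated. If one wishes to work with an abstract RNC scheme that may have a quantum $\aux$, the fix is to modify $\mathsf{Hyb}_1$ so that $\widetilde{\nce.\sk}$ is computed a single time by the challenger (since $\NCE.\Reveal$ only needs $\ske.\sk$ which is known to the challenger in the hybrid but not in the reduction); then the reduction requires the slightly stronger property that $\aux$ is classical, which can be imposed without loss of generality by having $\NCE.\Fake$ measure its internal coins. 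Once this is settled, combining the two hybrid bounds yields $\Pr[\mathsf{Exp}^{\mathsf{unclone}}_{\Sigma,(\cA,\cB,\cC)}(\secp)=1]\leq 1/2+\negl(\secp)$, completing the proof.
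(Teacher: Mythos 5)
Your proposal follows exactly the same two-hybrid argument as the paper, with the same $\mathsf{Hyb}_0$ and $\mathsf{Hyb}_1$ and the same two reductions: the RNC reduction for $\mathsf{Hyb}_0 \approx \mathsf{Hyb}_1$ and the one-time unclonable IND-CPA reduction for $\mathsf{Hyb}_1$. The one place where you go beyond the paper is in flagging that $\widetilde{\cB}$ and $\widetilde{\cC}$ each need a copy of $\aux$ to run $\NCE.\Reveal$; the paper silently hands $\aux$ to both without comment, which is only sound because $\aux$ is classical for the NCE schemes obtained from PKE with quantum ciphertexts via \cite{C:KNTY19}. Your explicit observation that this needs to be justified, and your two proposed resolutions (rely on the concrete instantiation having a classical $\aux$, or measure $\Fake$'s internal coins), are a genuine and correct refinement of the paper's argument rather than a different route.
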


\begin{proof}
    We describe the sequence of hybrids against QPT adversaries $(\cA,\cB,\cC)$.
   \begin{description}
        \item[$\mathsf{Hyb_0}$:]This is the original security experiment of $\Sigma$.
        \begin{enumerate}
            \item The challenger samples $b\la\bit$.
            \item $\cA$ receives $\nce.\pk$, where $(\nce.\pk,\nce.\MSK)\la\NCE.\Setup(1^\secp)$.
            \item $\cA$ sends $(m_0,m_1)$ to the challenger.
            \item $\cA$ receives $(\nce.\ct,\ske.\ct_b)$ from the challenger, where $\ske.\sk\la\SKE.\keygen(1^\secp)$, $\nce.\ct\la\NCE.\Enc(1^\secp,\nce.\pk,\ske.\sk)$ and $\ske.\ct_b\la\SKE.\Enc(1^\secp,\ske.\sk,m_b)$. 
            \item $\cA$ generates $\rho_{\cB,\cC}$ and sends the $\cB$ and $\cC$ register to $\cB$ and $\cC$, respectively.
            \item $\cB$ and $\cC$ receives $\nce.\sk$ and outputs $b_\cB$ and $b_\cC$, respectively, where $\nce.\sk\la\NCE.\keygen(1^\secp,\nce.\MSK)$.
            \item The experiment outputs $1$ if $b=b_\cB=b_\cC$.
        \end{enumerate}
        \item[$\mathsf{Hyb_1}$:] This is the same as $\mathsf{Hyb_0}$ except that  $(\widetilde{\nce.\ct},\widetilde{\nce.\sk})$ is used instead of $(\nce.\ct,\nce.\sk)$, where $(\widetilde{\nce.\ct},\aux)\la\Fake(1^\secp,\nce.\pk)$ and $\widetilde{\nce.\sk}\la\Reveal(1^\secp,\nce.\pk,\nce.\MSK,\aux,\ske.\sk)$.
    \end{description}
    We have the following \cref{prop:hyb_0_hyb_1_pke,prop:hyb_1_pke}.
    \begin{proposition}\label{prop:hyb_0_hyb_1_pke}
        If $\Sigma_{\NCE}$ is RNC secure, then
        \begin{align}
            \abs{\Pr[\mathsf{Hyb_0}=1]-\Pr[\mathsf{Hyb_1}=1]}\leq \negl(\secp).
        \end{align}
    \end{proposition}
    \begin{proposition}\label{prop:hyb_1_pke}
        If $\Sigma_{\SKE}$ is one-time unclonable IND-CPA secure, then
        \begin{align}
            \Pr[\mathsf{Hyb_1}=1 ]\leq 1/2+\negl(\secp).
        \end{align}
    \end{proposition}
    \end{proof}

    \begin{proof}[Proof of \cref{prop:hyb_0_hyb_1_pke}]
    This can be shown by a standard hybrid argument.
    Assume that there a QPT adversary $(\cA,\cB,\cC)$ such that
    \begin{align}
        \abs{\Pr[\mathsf{Hyb_0}=1]-\Pr[\mathsf{Hyb_1}=1]}
    \end{align}
    is non-negligible.
    Then, construct a QPT adversary $\widetilde{\cA}$ that breaks the RNC security of $\Sigma_{\NCE}$ as follows.
    \begin{enumerate}
        \item $\widetilde{\cA}$ samples $b\la\bit$.
        \item $\widetilde{\cA}$ receives $\nce.\pk$ from the challenger of $\mathsf{Exp_{\Sigma_{\NCE},\widetilde{\cA}}^{rec\mbox{-}nc}}(\secp,b^*)$, where $(\nce.\pk,\nce.\MSK)\la\NCE.\Setup(1^\secp)$.
        \item $\widetilde{\cA}$ sends $\nce.\pk$ to $\cA$, and receives $(m_0,m_1)$ from $\cA$.
        \item $\widetilde{\cA}$ samples $\ske.\sk\la\SKE.\keygen(1^\secp)$, computes $\ske.\ct_b\la\SKE.\Enc(1^\secp,\ske.\sk,m_b)$, and sends $\ske.\sk$ to the challenger of $\mathsf{Exp_{\Sigma_{\NCE},\widetilde{\cA}}^{rec\mbox{-}nc}}(\secp,b^*)$.
        \item The challenger of $\mathsf{Exp_{\Sigma_{\NCE},\widetilde{\cA}}^{rec\mbox{-}nc}}(\secp,b^*)$ works as follows:
        \begin{itemize}
            \item If $b^*=0$, then runs $\nce.\ct^*\la\NCE.\Enc(1^\secp,\nce.\pk,\ske.\sk)$ and $\nce.\sk^*\la\NCE.\keygen(1^\secp,\nce.\MSK)$, and sends $(\nce.\ct^*,\nce.\sk^*)$ to $\widetilde{\cA}$.
            \item If $b^*=1$, then runs $(\nce.\ct^*,\aux)\la\Fake(1^\secp,\nce.\pk)$ and $\nce.\sk^*\la \Reveal(1^\secp,\nce.\pk,\nce.\MSK,\aux,\ske.\sk)$, and sends $(\nce.\ct^*,\nce.\sk^*)$ to $\widetilde{\cA}$.
        \end{itemize}
        \item $\widetilde{\cA}$ runs $\cA$ on $(\ske.\ct_b,\nce.\ct^*)$, and obtains $\rho_{\cB,\cC}$.
        \item $\widetilde{\cA}$ sends $\nce.\sk^*$ and the $\cB$ register (resp. the $\cC$ register) to $\cB$ (resp. $\cC$).
        \item $\cB$ and $\cC$ outputs $b_\cB$ and $b_\cC$, respectively.
        \item $\widetilde{\cA}$ outputs $1$ if $b=b_\cB=b_\cC$, and $0$ otherwise.
    \end{enumerate}
    From the construction of $\widetilde{\cA}$, 
    \begin{itemize}
        \item If $b^*=0$, $\widetilde{\cA}$ perfectly simulates the challenger of $\mathsf{Hyb_0}$ and thus it outputs the output of $\mathsf{Hyb_0}$.
        \item If $b^*=1$, $\widetilde{\cA}$ perfectly simulates the challenger of $\mathsf{Hyb_1}$ and thus it outputs the output of $\mathsf{Hyb_1}$.
    \end{itemize}
    Therefore, we have
    \begin{align}
        \abs{\Pr[\mathsf{Exp_{\Sigma_{\NCE},\widetilde{\cA}}^{rec\mbox{-}nc}}(\secp,0)=1]-\Pr[\mathsf{Exp_{\Sigma_{\NCE},\widetilde{\cA}}^{rec\mbox{-}nc}}(\secp,1)=1]}=\abs{\Pr[\mathsf{Hyb_0}=1]-\Pr[\mathsf{Hyb_1}=1]},
    \end{align}
    which contradicts that $\Sigma_{\NCE}$ satisfies RNC security.
    \end{proof}

    \begin{proof}[Proof of \cref{prop:hyb_1_pke}]
        This can be shown by a standard hybrid argument.
        Assume that there there exists a constant $C$ and QPT adversaries $(\cA,\cB,\cC)$ such that
        \begin{align}
            \Pr[\mathsf{Hyb_1}=1]\geq 1/2+1/\secp^C
        \end{align}
        for infinitely many security parameters $\secp\in\N$.
        Then, construct a set of QPT adversaries $(\widetilde{\cA},\widetilde{\cB},\widetilde{\cC})$ that breaks the unclonable IND-CPA security of $\Sigma_{\SKE}$ as follows.
        \begin{enumerate}
            \item The challenge of $\Sigma_{\SKE}$ samples $b\la\bit$.
            \item $\widetilde{\cA}$ samples $(\nce.\pk,\nce.\MSK)\la\NCE.\Setup(1^\secp)$ and sends $\nce.\pk$ to $\cA$.
            \item $\widetilde{\cA}$ receives $(m_0,m_1)$ from $\cA$, and sends $(m_0,m_1)$ to the challenger.
            \item $\widetilde{\cA}$ receives $\ske.\ct_b$, where $\ske.\ct_b\la\SKE.\Enc(1^\secp,\ske.\sk,m_b)$ and $\ske.\sk\la\SKE.\keygen(1^\secp)$.
            \item $\widetilde{\cA}$ runs $(\widetilde{\nce.\ct},\aux)\la\Fake(1^\secp,\nce.\pk)$, and runs $\cA$ on $(\widetilde{\nce.\ct},\ske.\ct_b)$, and obtains $\rho_{\cB,\cC}$.
            \item $\widetilde{\cA}$ sends $\aux$, $\nce.\MSK$ and the $\cB$ (resp. $\cC$) register to $\widetilde{\cB}$ (resp. $\widetilde{\cC}$).
            \item $\widetilde{\cB}$ (resp. $\widetilde{\cC}$) receives $\ske.\sk$ and runs $\widetilde{\nce.\sk}\la\Reveal(1^\secp,\nce.\pk,\nce.\MSK,\aux,\ske.\sk)$, and sends 
            $\widetilde{\nce.\sk}$ and the $\cB$ (resp. $\cC$) register to $\cB$ (resp. $\cC$).
            \item $\cB$ and $\cC$ outputs $b_\cB$ and $b_\cC$, respectively.
            \item $\widetilde{\cB}$ and $\widetilde{\cC}$ outputs $b_\cB$ and $b_\cC$ as the guess for $b$, respectively.
        \end{enumerate}
        From the construction of $(\widetilde{\cA},\widetilde{\cB},\widetilde{\cC})$, it perfectly simulates the challenger of $\mathsf{Hyb_1}$.
        Therefore, we have $b=b_\cB=b_\cC$ with non-negligible probability, which implies that $(\widetilde{\cA},\widetilde{\cB},\widetilde{\cC})$ break one-time unclonable IND-CPA security of $\Sigma_{\SKE}$.
        This is a contradiction.
        Therefore, we have
        \begin{align}
                        \Pr[\mathsf{Hyb_1}=1 ]\leq 1/2+\negl(\secp).
        \end{align}
    \end{proof}

\section{Proof of \cref{prop:perfect_unc}}\label{sec:app_perfect_unc}
We give the proof of \cref{prop:perfect_unc}.
\begin{proof}[Proof of \cref{prop:perfect_unc}]
In the same way as proof of \cref{lem:unclone_ske_cor}, we can show that if there exists a one-time unclonable secret-key encryption for single-bit plaintexts, then there exists a scheme $\Sigma^*=(\keygen^*,\Enc^*,\Dec^*)$ that satisfies perfect correctness.

Now, we construct one-time unclonable secret key encryption $\overline{\Sigma}\seteq(\overline{\keygen},\overline{\Enc},\overline{\Dec})$ with uniformly random secret-key and perfect correctness from one-time unclonable secret key encryption $(\keygen^*,\Enc^*,\Dec^*)$ with perfect correctness.
\begin{description}
    \item[$\overline{\keygen}(1^\secp)$:]$ $
    \begin{itemize}
        \item Sample $s\la\bit^{s(\secp)}$, where $s(\secp)$ is the length of the secret-key $\sk$ that $\keygen^*(1^\secp)$ generates 
        \item Output $\overline{\sk}\seteq s$.
    \end{itemize}
    \item[$\overline{\Enc}(1^\secp,\overline{\sk},m)$:]$ $
    \begin{itemize}
        \item Parse $\overline{\sk}\seteq s$.
        \item Run $\sk\la\keygen^*(1^\secp)$.
        \item Run $\ct\la\Enc^*(1^\secp,\sk,m)$.
        \item Output $\overline{\ct}\seteq(\ct,\sk+s)$.
    \end{itemize}
    \item[$\overline{\Dec}(1^\secp,\overline{\sk},\overline{\ct})$:]$ $
    \begin{itemize}
        \item Parse $\overline{\sk}=s$ and $\overline{\ct}=(\ct,\sk^*)$.
        \item Compute $\sk=\sk^*+s$.
        \item Run $\Dec^*(1^\secp,\sk,\ct)$ and output its output.
    \end{itemize}
\end{description}
From the construction, the secret key of $\Sigma^*$ is uniformly random.
Efficiency and perfect correctness of $\overline{\Sigma}$ straightforwardly follow that of $\Sigma^*$.
We can show that $\overline{\Sigma}$ satisfies unclonable IND-CPA security by a standard hybrid argument.
For clarity, we describe the proof of security.

Assume that there exists a QPT adversary $(\cA,\cB,\cC)$ that breaks the unclonable IND-CPA security of $\overline{\Sigma}$.
Then, construct a QPT adversary $(\widetilde{\cA},\widetilde{\cB},\widetilde{\cC})$ that breaks the unclonable IND-CPA security of $\Sigma^*$.
\begin{enumerate}
    \item The challenger of $\Sigma^*$ samples $b\la\bit$.
    \item $\widetilde{\cA}$ samples $s\la\bit^{s(\secp)}$.
    \item $\widetilde{\cA}$ receives $(m_0,m_1)$ from $\cA$.
    \item $\widetilde{\cA}$ sends $(m_0,m_1)$ to the challenger of $\Sigma^*$.
    \item $\widetilde{\cA}$ receives $\ct_b$, where $\sk\la\keygen^*(1^\secp)$ and $\ct_b\la\Enc^*(1^\secp,\sk,m_b)$.
    \item $\widetilde{\cA}$ runs $\cA$ on $(\ct_b,s)$, obtain $\rho_{\cB,\cC}$, and sends $s$ and the $\cB$ register (resp. $\cC$ register) to $\widetilde{\cB}$ (resp. $\widetilde{\cC}$).
    \item $\widetilde{\cB}$ (resp. $\widetilde{\cC}$) receives $\sk$ from the challenger of $\Sigma^*$, and sends $\sk+s$ and the $\cB$ register (resp. $\cC$ register) to $\cB$ (resp.$\cC$).
    \item The experiment outputs $1$ if $b=b_\cB=b_\cC$, where $b_\cB$ and $b_\cC$ are the output of $\cB$ and $\cC$,respectively.
\end{enumerate}
From the construction of $(\widetilde{\cA},\widetilde{\cB},\widetilde{\cC})$, it perfectly simulates the challenger of $\Sigma^*$. Therefore, if $(\cA,\cB,\cC)$ breaks the unclonable IND-CPA security of $\overline{\Sigma}$, it contradicts that $\Sigma^*$ satisfies unclonable IND-CPA security.

In the construction $\overline{\Sigma}$, the size of $\overline{\sk_{\secp}}$ and $\overline{\ct_{\secp,b}}$ are not necessarily equal to $\secp$, where $\overline{\sk_\secp}\la\overline{\keygen}(1^\secp)$ and $\overline{\ct_{\secp,b}}\la\overline{\Enc}(1^\secp,\overline{\sk_\secp},b)$.
By wisely choosing a security parameter and a standard padding argument, from $\overline{\Sigma}$, we can construct $\Sigma=(\keygen,\Enc,\Dec)$ such that $\abs{\sk_\secp}=\abs{\ct_{\secp,b}}=\secp$ for all $\secp\in\N$ and $b$ where $\sk_\secp\la\keygen(1^\secp)$ and $\ct_{\secp,b}\la\Enc(1^\secp,\sk_\secp,b)$.

For clarity, we describe the construction of $\Sigma$.
To describe our construction, let $c$ be a constant such that $\abs{\overline{\sk_\secp}}\leq \abs{\overline{\ct_{\secp,b}}}\leq \secp^c$ for all security parameters $\secp\in\N$ and $b\in\bit$, where $\overline{\sk_\secp}\la\overline{\keygen}(1^\secp)$ and $\overline{\ct_{\secp,b}}\la\overline{\Enc}(1^\secp,\overline{\sk_\secp},b)$.
\begin{description}
    \item[$\keygen(1^\secp)$:]$ $
    \begin{itemize}
        \item Sample $x\la \bit^{\secp}$.
        \item Output $\sk\seteq x$.
    \end{itemize}
    \item[$\Enc(1^\secp,\sk,b)$:]$ $
    \begin{itemize}
        \item Parse $\sk=x$.
        \item Let $\secp'$ be the largest integer such that $\secp'^c\leq \secp$.
        \item Let $\overline{x}$ be the first $\abs{\overline{\sk_{\secp'}}}$-bits of $x$, where $\overline{\sk_{\secp'}}\la\overline{\keygen}(1^{\secp'})$.
        \item Run $\overline{\ct}\la\overline{\Enc}(1^{\secp'},\overline{x},b)$. Note that since $\secp'^c\leq \secp$, the size of $\overline{\ct}$ is smaller than $\secp$.
        \item Output $\ct=(\overline{\ct},0^{\secp-\abs{\overline{\ct}}})$.
    \end{itemize}
    \item[$\Dec(1^\secp,\sk,\ct)$:]$ $
    \begin{itemize}
        \item Parse $\sk=x$ and $\ct=(\overline{\ct},0^{\secp-\abs{\overline{\ct}}})$.
        \item Let $\secp'$ be the largest integer such that $\secp'^c\leq \secp$.
        \item Let $\overline{x}$ be the first $\abs{\overline{\sk_{\secp'}}}$-bits of $x$, where $\overline{\sk_{\secp'}}\la\overline{\keygen}(1^{\secp'})$.
        \item Compute $\overline{\Dec}(1^{\secp'},\overline{x},\overline{\ct})$, and outputs its output.
    \end{itemize}
\end{description}
Efficiency and perfect correctness straightforwardly follow.
From the construction, it is obvious that $\sk_\secp$ is uniformly randomly sampled and $\abs{\sk_\secp}=\abs{\ct_{\secp,b}}=\secp$ for all $\secp\in\N$ and $b\in\bit$, where $\sk_{\secp}\la\keygen(1^\secp)$ and $\Enc(1^\secp,\sk_\secp,b)$.
Furthermore, we can prove its security by a standard hybrid argument.
\end{proof}

\if0
\color{red}
\section{The proof of \cref{cor:unclone_pke}}

In this section, we give a rough sketch of the construction of a robust combiner for unclonable public-key encryption \cref{cor:unclone_pke}.
For that, let us introduce some notations and observations.
We consider the fixed constant $n$.
\paragraph{Observatios and Notations.}
\begin{enumerate}
    \item Let $\Sigma_i\seteq (\keygen,\Enc_i,\Dec_i)$ be a candidate of unclonable public-key encryption which satisfies efficiency for all $i\in[n]$ 
    \item There exists a classical Turing machine $\cM_{\SKE}$ that takes as input unclonable public-key encryption $\Sigma_i$, and outputs one-time unclonable secret-key encryption $\Sigma_{i,\SKE}(\keygen,\Enc,\Dec)$ which is correct and secure as long as $\Sigma_i$ is correct and secure.
    \item We can construct receiver non-committing encryption with quantum ciphertext from public-key encryption with quantum ciphertext in the same way as the construction of standard receiver non-committing encryption from standard PKE.
    This means that there exists a classical Turing machine $\cM_{\NCE}$ that takes as input public-key encryption with quantum ciphertext $\Sigma_i$, and outputs receiver non-committing encryption with quantum ciphertext $\Sigma_{i,\SKE}(\keygen,\Enc,\Dec)$ which is correct and secure as long as $\Sigma_i$ is correct and secure.
    \item We can construct unclonable public-key encryption from one-time secret key encryption $\Sigma_{\SKE}$ and receiver non-committing encryption with quantum ciphertexts $\Sigma_{\NCE}$.
    This means that there exists a classical Turing machine  $\cM_{\Unc}$ that takes as input one-time unclonable secret-key encryption $\Sigma_\SKE$ and receiver non-committing encryption $\Sigma_{\NCE}$ with quantum ciphertext, and 
    outputs unclonable public-key encryption $\Sigma_{\Unc}$ which is correct and secure as long as both $\Sigma_\SKE$ and $\Sigma_{\NCE}$ are correct and secure.
    \item We can construct a robust combiner for public-key encryption with quantum ciphertext in the same way as the robust combiner for standard public-key encryption.
    We write $\cM_{\Comb.\PKE}$ is a robust combiner for public-key encryption.
    \item We have a robust combiner for one-time unclonable secret-key encryption by \cref{thm:unclone_ske_comb}.
    We write $\cM_{\Comb.\SKE}$ to mean a robust combiner for unclonable one-time secret-key encryption.
\end{enumerate}
Now, the robust combiner for unclonable public-key encryption is a classical Turing machine $\cM$ working in the following way.

\begin{description}
    \item[$\cM(1^n,\{\Sigma_i\}_{i\in[n]})$:]$ $ 
    \begin{itemize}
    \item $\cM$ runs $\Sigma_{i,\SKE}\la\cM_{\SKE}(\Sigma_i)$ for all $i\in[n]$.
    Here, $\Sigma_{i,\SKE}$ is a candidate of one-time unclonable SKE that is correct and secure as long as $\Sigma_i$ is correct and secure.
    \item $\cM$ runs $\Comb.\Sigma_{\SKE}\la\cM_{\Comb.\SKE}(1^n,\{\Sigma_{i,\SKE}\}_{i\in[n]})$.
    Here, $\Comb.\Sigma_\SKE$ is a candidate of one-time unclonable SKE that is correct and secure as long as one of $\{\Sigma_{i,\SKE}\}_{i\in[n]}$ is correct and secure.
    \item $\cM$ runs $\Comb.\Sigma_{\PKE}\la \cM_{\Comb.\PKE}(1^n,\{\Sigma_i\}_{i\in[n]})$.
    Here, $\Comb.\Sigma_{\PKE}$ is a candidate of public-key encryption with quantum ciphertext that is correct and secure as long as one of $\{\Sigma_i\}_{i\in[n]}$ is correct and secure.
    \item $\cM$ runs $\Comb.\Sigma_{\NCE}\la\cM_{\NCE}(\Comb.\Sigma_\PKE) $.
    Here, $\Comb.\Sigma_{\NCE}$ is a candidate of receiver non-committing encryption with quantum ciphertext that is correct and secure as long as $\Comb.\Sigma_{\PKE}$ is correct and secure.
    \item $\cM$ runs $\Comb.\Sigma_{\Unc}\la\cM_{\Unc}(\Comb.\Sigma_{\NCE},\Comb.\Sigma_{\SKE})$.
    Here, $\Comb.\Sigma_{\Unc}$ is a candidate of unclonable public-key encryption that is correct and secure as long as both $\Comb.\Sigma_{\NCE}$ and $\Comb.\Sigma_\SKE$ are correct and secure.
    \end{itemize}
\end{description}

\if0
In this section, we give a rough sketch of proof of \cref{cor:unclone_pke}.
We observe the following things:
\begin{enumerate}
    \item  
    We can construct one-time unclonable secret-key encryption and public-key encryption with quantum ciphertext from unclonable public-key encryption.
    \item We can construct receiver non-committing encryption with quantum ciphertext using public-key encryption with quantum ciphertext.   
    \item We can construct unclonable public-key encryption via one-time unclonable secret-key encryption and receiver non-committing encryption with quantum ciphertext.
    \item We can construct a robust combiner for public-key encryption for quantum ciphertext in the same way as public-key encryption.
\end{enumerate}
By combining them, we can construct a robust unclonable public-key encryption combiner.
For clarity, we describe a robust combiner for unclonable public-key encryption $\Sigma\seteq \Comb.(\keygen,\Enc,\Dec)$ with $\ell(\secp)$-plaintext space.
For that, let us introduce some notations:
\begin{itemize}
    \item For a candidate of unclonable public-key encryption $\Sigma_i\seteq (\keygen_i,\Enc_i,\Dec_i)$,
    let $\Sigma_{i,\SKE}\seteq\SKE.(\keygen_i,\Enc_i,\Dec_i)$ be a candidate for one-time unclonable secret-key encryption such that
    \begin{itemize}
        \item $\Sigma_{i,\SKE}$ is a set of uniform QPT algorithms if $\Sigma_i$ is a set of uniform QPT algorithms.
        \item $\Sigma_{i,\SKE}$ is correct.
        \item $\Sigma_{i,\SKE}$ is one-time IND-CPA secure and one-time unclonable IND-CPA secure if $\Sigma_i$ is correct and IND-CPA secure and unclonable IND-CPA secure.
    \end{itemize}
    Note that such $\Sigma_{i,\SKE}$ can be constructed in the same way as \cref{lem:amp_OWSG_cor}.
    \item For a candidate of unclonable public-key encryption $\Sigma_i\seteq (\keygen_i,\Enc_i,\Dec_i)$,
    let $\Sigma_{i,\PKE}\seteq\PKE.(\keygen_i,\Enc_i,\Dec_i)$ be a candidate for public-key encryption with quantum ciphertext such that
    \begin{itemize}
        \item $\Sigma_{i,\PKE}$ is a set of uniform QPT algorithms if $\Sigma_i$ is a set of uniform QPT algorithms.
        \item $\Sigma_{i,\PKE}$ is correct.
        \item $\Sigma_{i,\PKE}$ is IND-CPA secure if $\Sigma_i$ is correct and IND-CPA secure.
    \end{itemize}
    Note that such $\Sigma_{i,\PKE}$ can be constructed in the same way as \cref{lem:amp_cor_money}.
    \item Let $\Comb\SKE.\Sigma\seteq\Comb\SKE.(\keygen,\Enc,\Dec)$ be a $(1,n)$-robust combiner for one-time unclonable secret-key encryption with $\ell(\secp)$-plaintext space.
    \item Let $\Comb\PKE.\Sigma\seteq\Comb\PKE.(\keygen,\Enc,\Dec)$ be a $(1,n)$-robust combiner for unclonable public-key encryption with quantum ciphretext.
    Note that such a combiner can be obtained in the same way as standard PKE.
\end{itemize}

\begin{description}
\item[$\Comb.\keygen(1^\secp,\{\Sigma_i\}_{i\in[n(\secp)]} )$:] $ $
\begin{itemize}
    \item Run $(\pke.\pk_{j,b},\pke.\sk_{j,b})\la\Comb\PKE.\keygen(1^\secp, \{\Sigma_{i,\PKE}\}_{i\in[n(\secp)]})$ for all $j\in[s(\secp)]$ and $b\in\bit$. Here, $s(\secp)$ is the length of $\ske.\sk$ where $\ske.\sk\la\keygen(1^\secp)$.
    \item Output $\pk\seteq \{\pke.\pk_{j,b}\}_{j\in[s(\secp)],b\in\bit} $ and $\sk\seteq\{\pke.\sk_{j,b}\}_{j\in[s(\secp)],b\in\bit}$.
\end{itemize}
\item[$\Comb.\Enc(1^\secp,\pk,\{\Sigma_i\}_{i\in[n(\secp)]})$:]$ $
\begin{itemize}
    \item Run $\ske.\sk\la\Comb\SKE.\keygen(1^\secp,\{\Sigma_{i,\SKE}\}_{i\in[n(\secp)]})$.
    \item Run $\ske.\ct\la \Comb\SKE.\Enc(1^\secp,\ske.\sk,m,\{\Sigma_{i,\SKE}\}_{i\in[n(\secp)]})$.
    \item Run $\pke.\ct_{i,b}\la\Comb\PKE.\Enc(1^\secp,\pke.\pk_{i,b},\ske.\sk[i], \{\Sigma_{i,\PKE}\}_{i\in[n(\secp)]})$ for all $i\in [s(\secp)]$ and $b\in\bit$.
    \item Sample $x\la\bit^{s(\secp)}$.
    \item Output $\ct\seteq\left(x,\ske.\ct , \{\pke.\ct_{i,b} \}_{i\in[s(\secp)],b\in\bit}\right)$.
\end{itemize}
\item[$\Comb.\Dec(1^\secp,\sk,\ct,\{\Sigma_i\}_{i\in[n(\secp)]})$:]$ $
\begin{itemize}
    \item Run $\ske.\sk[i]\la\Comb\PKE.\Dec(1^\secp,\pke.\sk_{i,x[i]},\pke.\ct_{i,x[i]}, \{\Sigma_{i,\PKE}\}_{i\in[n(\secp)]})$ for all $i\in[s(\secp)]$.
    \item Parse $\ske.\sk\seteq\ske.\sk[1]||\cdots||\ske.\sk[s(\secp)]$.
    \item Run $\SKE.\Dec(1^\secp,\ske.\sk,\ske.\ct,\{\Sigma_{i,\SKE}\}_{i\in[n(\secp)]})$ and output its output.
\end{itemize}
\end{description}
\fi

\color{black}
\fi

	\ifnum\llncs=1
	\newpage
	 	\setcounter{page}{1}
 	{
	\noindent
 	\begin{center}
	{\Large SUPPLEMENTAL MATERIALS}
	\end{center}
 	}
	\setcounter{tocdepth}{2}
	\fi
\fi

\ifnum\cameraready=1
\else
\ifnum\submission=1
\newpage
\setcounter{tocdepth}{1}
\tableofcontents
\else
\fi
\fi

\end{document}